\setlist[enumerate]{nosep}
\newfont{\bbb}{bbm10 scaled 1100}                       
\newfont{\bbbs}{bbm10 scaled 900}                       
\newcommand{\IN}{\mbox{\bbb N}}                         
\newcommand{\INs}{\mbox{\bbbs N}}                       
\newcommand{\IO}{\mbox{\bbb O}}                         
\newcommand{\IP}{\mbox{\bbb P}}                         
\newcommand{\IQ}{\mbox{\bbb Q}}                         
\newcommand{\IT}{\mbox{\bbb E}}                         
\DeclareSymbolFont{frenchscript}{OMS}{ztmcm}{m}{n}
\DeclareMathSymbol{\Pow}{\mathord}{frenchscript}{80}    
\DeclareMathSymbol{\C}{\mathord}{frenchscript}{67}      
\DeclareMathSymbol{\B}{\mathrel}{frenchscript}{66}      
\DeclareMathSymbol{\R}{\mathord}{frenchscript}{82}      
\DeclareMathSymbol{\BS}{\mathord}{frenchscript}{83}     
\newcommand{\I}{\mathcal{I}}                            
\newcommand{\Rn}{\mathcal{R}}                           
\newcommand{\BR}{\mathrel{\color{blue}\B}}              
\newcommand{\BRB}{\mathrel{\color{blue}\B\,\bis{\,}}}   
\renewcommand{\phi}{\varphi}
\renewcommand{\epsilon}{\varepsilon}
\newcommand{\E}{\mathcal{E}}                            
\newif\if@qeded
\def\qed{\hfill$\Box$\global\@qededtrue}
\def\qedneeded{\global\@qededfalse}
\def\qedifneeded{\if@qeded\else\qed\fi}
\newtheorem{defi}{Definition}
\newtheorem{theo}[defi]{Theorem}
\newtheorem{prop}[defi]{Proposition}
\newtheorem{lemm}{Lemma}
\newtheorem{coro}[defi]{Corollary}
\newtheorem{exam}[defi]{Example}
\newenvironment{definition}[1]{\begin{defi} \rm \label{df:#1} }{\end{defi}}
\newenvironment{definitionq}[2]{\begin{defi}[#2] \rm \label{df:#1} }{\end{defi}}
\newenvironment{theorem}[1]{\begin{theo} \rm \label{thm:#1} }{\end{theo}}
\newenvironment{proposition}[1]{\begin{prop} \rm \label{pr:#1} }{\end{prop}}
\newenvironment{propositionq}[2]{\begin{prop}[#2] \rm \label{pr:#1} }{\end{prop}}
\newenvironment{lemma}[1]{\begin{lemm} \rm \label{lem:#1} }{\end{lemm}}
\newenvironment{corollary}[1]{\begin{coro} \rm \label{cor:#1} }{\end{coro}}
\newenvironment{example}[1]{\begin{exam} \rm \label{ex:#1} }{\end{exam}}
\newenvironment{proof}{\qedneeded\begin{trivlist} \item[\hspace{\labelsep}\bf Proof:]}{\qedifneeded\end{trivlist}}
\newcommand{\df}[1]{Definition~\ref{df:#1}}
\newcommand{\thm}[1]{Theorem~\ref{thm:#1}}
\newcommand{\pr}[1]{Proposition~\ref{pr:#1}}
\newcommand{\lem}[1]{Lemma~\ref{lem:#1}}
\newcommand{\cor}[1]{Corollary~\ref{cor:#1}}
\newcommand{\tab}[1]{Table~\ref{tab:#1}}
\newcommand{\fig}[1]{Figure~\ref{fig:#1}}
\newcommand{\Sec}[1]{Section~\ref{sec:#1}}
\def\comesfrom{\@transition\leftarrowfill}
\def\goesto{\@transition\rightarrowfill}
\def\ngoesto{\@transition\nrightarrowfill}
\def\Goesto{\@transition\Rightarrowfill}
\def\nGoesto{\@transition\nRightarrowfill}
\def\xmapsto{\@transition\mapstofill}
\def\nxmapsto{\@transition\nmapstofill}
\def\@transition#1{\@@transition{#1}}
\newbox\@transbox
\newbox\@arrowbox
\def\@@transition#1#2%
\wd\@transbox{#1}
\@transbox\hbox{$\mathop{\box\@arrowbox}\limits^{\box\@transbox}$}
\def\nrightarrowfill{$\m@th\mathord-\mkern-6mu%
  \cleaders\hbox{$\mkern-2mu\mathord-\mkern-2mu$}\hfill
  \mkern-6mu\mathord\not\mkern-2mu\mathord\rightarrow$}
\def\Rightarrowfill{$\m@th\mathord=\mkern-6mu%
  \cleaders\hbox{$\mkern-2mu\mathord=\mkern-2mu$}\hfill
  \mkern-6mu\mathord\Rightarrow$}
\def\nRightarrowfill{$\m@th\mathord=\mkern-6mu%
  \cleaders\hbox{$\mkern-2mu\mathord=\mkern-2mu$}\hfill
  \mkern-6mu\mathord\not\mathord\Rightarrow$}
\def\mapstofill{$\m@th\mathord\mapstochar\mathord-\mkern-6mu%
  \cleaders\hbox{$\mkern-2mu\mathord-\mkern-2mu$}\hfill
  \mkern-6mu\mathord\rightarrow$}
\def\nmapstofill{$\m@th\mathord\mapstochar\mathord-\mkern-6mu%
  \cleaders\hbox{$\mkern-2mu\mathord-\mkern-2mu$}\hfill
  \mkern-6mu\mathord\not\mkern-2mu\mathord\rightarrow$}
\newcommand{\plat}[1]{\raisebox{0pt}[0pt][0pt]{#1}}     
\newcommand{\rec}[1]{\plat{$                            
        \stackrel{\mbox{\tiny $/$}}
        {\raisebox{-.3ex}[.3ex]{\tiny $\backslash$}}
        \!\!#1\!\!
        \stackrel{\mbox{\tiny $\backslash$}}
        {\raisebox{-.3ex}[.3ex]{\tiny $/$}}$}}
\newcommand{\rbis}[2]{\mathrel{\,                       
        \raisebox{.3ex}{$\underline{\makebox[.7em]{$\leftrightarrow$}}$}
                  \,^{#1}_{#2}\!}}
\newcommand{\bis}[1]{\rbis{}{#1}}                       
\newcommand{\rt}{{\rm t}}                         
\newcommand{\Var}{{\it Var}}                      
\newcommand{\spar}[1]{\mathbin{\|^{}_{#1}}}        
\newcommand{\RS}{{\cal S}}                        
\newcommand{\CCSP}{\mbox{CCSP}_\rt^\theta}         
\newcommand{\diam}[1]{\langle#1\rangle}           
\newcommand{\rep}[1]{\textbf{[}#1\textbf{]}}      
\title{Reactive Bisimulation Semantics\\ for a Process Algebra with Time-Outs}
\author{Rob van Glabbeek
\institute{Data61, CSIRO, Sydney, Australia}
\institute{School of Computer Science and Engineering,
University of New South Wales, Sydney, Australia}
\email{rvg@cs.stanford.edu}
}
\begin{document}
\maketitle

\begin{abstract}
This paper introduces the counterpart of strong bisimilarity for labelled transition systems extended
with time-out transitions. It supports this concept through a modal characterisation, congruence
results for a standard process algebra with recursion, and a complete axiomatisation.
\end{abstract}

\section{Introduction}

This is a contribution to classic untimed non-probabilistic process algebra, modelling
systems that move from state to state by performing discrete, uninterpreted actions.
A system is modelled as a process-algebraic expression, whose standard semantics is a state in a labelled
transition system (LTS).  An LTS consists of a set of states, with action-labelled transitions between them.
The execution of an action is assumed to be instantaneous, so when any time elapses the system must
be in one of its states. With ``untimed'' I mean that I will refrain from quantifying the passage of
time; however, whether a system can pause in some state or not will be part of my model.

Following \cite{Mi90ccs}, I consider \emph{reactive} systems that interact with their environments
through the synchronous execution of visible actions $a$, $b$, $c$, ... taken from an alphabet $A$.
At any time, the environment \emph{allows} a set of actions $X\subseteq A$, while \emph{blocking}
all other actions. At discrete moments the environment can change the set of actions it allows.
In a metaphor from \cite{Mi90ccs}, the environment of a system can be seen as a
user interacting with it. This user has a button for each action $a\in A$, on which it can exercise
pressure. When the user exercises pressure \emph{and} the system is in a state where it can perform
action $a$, the action occurs. For the system this involves taking an $a$-labelled transition to a
following state; for the environment it entails the button going down, thus making the
action occurrence observable. This can trigger the user to alter the set of buttons on which it
exercises pressure.

The current paper considers two special actions that can occur as transition labels: the traditional
\emph{hidden action} $\tau$ \cite{Mi90ccs}, modelling the occurrence of an instantaneous action from
which we abstract, and the \emph{time-out} action $\rt$, modelling the end of a time-consuming 
activity from which we abstract. The latter was introduced in \cite{vG21} and constitutes the main novelty
of the present paper with respect to \cite{Mi90ccs} and forty years of research in process algebra.
Both special actions are assumed to be unobservable, in the sense that their occurrence cannot trigger
any state-change in the environment. Conversely, the environment cannot cause or block the occurrence
of these actions.

Following \cite{vG21}, I model the passage of time in the following way.
When a system arrives in a state $P$, and at that time $X$ is the set of actions allowed by the
environment, there are two possibilities.
If $P$ has an outgoing transition $P \goesto\alpha Q$ with $\alpha \in X\cup \{\tau\}$,
the system immediately takes one of the outgoing transitions $P \goesto\alpha Q$ with $\alpha \in X\cup \{\tau\}$,
without spending any time in state $P$. The choice between these actions is entirely nondeterministic.
The system cannot immediately take a transition $\goesto b$ with $b\in A{\setminus}X$, because the
action $b$ is blocked by the environment. Neither can it immediately take a transition $P\goesto\rt Q$,
because such transitions model the end of an activity with a finite but positive duration that
started when reaching state $P$.

In case $P$ has no outgoing transition $P \goesto\alpha Q$ with $\alpha \in X\cup \{\tau\}$,
the system idles in state $P$ for a positive amount of time. This idling can end in two possible ways.
Either one of the time-out transitions  $P \goesto\rt Q$ occurs, or the environment spontaneously
changes the set of actions it allows into a different set $Y$ with the property that
$P \goesto a Q$ for some $a \in Y$. In the latter case a transition $P \goesto a Q$ occurs, with $a \in Y$.
The choice between the various ways to end a period of idling is entirely nondeterministic.
It is possible to stay forever in state $P$ only if there are no outgoing time-out transitions $P \goesto\rt Q$.

The addition of time-outs enhances the expressive power of LTSs and process algebras.
The process $a.P + \rt .b.Q$, for instance, models a choice between $a.P$ and $b.Q$ where the former
has priority. In an environment where $a$ is allowed it will always choose $a.P$ and never $b.Q$;
but in an environment that blocks $a$ the process will, after some delay, proceed with $b.Q$.
Such a priority mechanism cannot be modelled in standard process algebras without time-outs,
such as CCS \cite{Mi90ccs}, CSP \cite{BHR84,Ho85} and ACP \cite{BW90,Fok00}.
Additionally, mutual exclusion cannot be correctly modelled in any of these standard process
algebras \cite{GH15b}, but adding time-outs makes it possible---see \Sec{conclusion}
for a more precise statement.

In \cite{vG21} I characterised the coarsest reasonable semantic equivalence on LTSs with
time-outs---the one induced by \emph{may testing}, as proposed by De Nicola \& Hennessy \cite{DH84}.
In the absence of time-outs, may testing yields \emph{weak trace equivalence}, where two processes are
defined equivalent iff they have the same \emph{weak traces}: sequence of actions the system
can perform, while eliding hidden actions. In the presence of time-outs weak trace equivalence fails
to be a congruence for common process algebraic operators, and may testing yields its congruence
closure, characterised in \cite{vG21} as \emph{(rooted) failure trace equivalence}.

The present paper aims to characterise one of the finest reasonable semantic equivalences on LTSs
with time-outs---the counterpart of strong bisimilarity for LTSs without time-outs.
Naturally, strong bisimilarity can be applied verbatim to LTSs with time-outs---and has been in
\cite{vG21}---by treating $\rt$ exactly like any visible action. Here, however, I aim to take into
account the essence of time-outs, and propose an equivalence that satisfies some natural laws
discussed in \cite{vG21}, such as $\tau.P + \rt.Q = \tau.P$ and
\mbox{$a.P + \rt.(Q + \tau.R + a.S) = a.P + \rt.(Q + \tau.R)$}.
To motivate the last law, note that the time-out transition
$a.P + \rt.(Q + \tau.R + a.S)  \goesto\rt Q + \tau.R + a.S$
can occur only in an environment that blocks the action $a$, for otherwise $a$ would have taken
place before the time-out went off. The occurrence of this transition is not observable by the
environment, so right afterwards the state of the environment is unchanged, and the action $a$ is
still blocked. Therefore, the process $Q + \tau.R + a.S$ will, without further ado, proceed with the
$\tau$-transition to $R$, or any action from $Q$, just as if the $a.S$ summand were not present.

Standard process algebras and LTSs without time-outs can model systems whose behaviour is
triggered by input signals from the environment in which they operate. This is why they are 
called ``reactive systems''. By means of time-outs one can additionally model systems whose
behaviour is triggered by the \emph{absence} of input signals from the environment, during a
sufficiently long period. This creates a greater symmetry between a system and its environment, as
it has always been understood that the environment or user of a system can change its behaviour as
a result of sustained inactivity of the system it is interacting with.
Hence one could say that process algebras and LTSs enriched with time-outs form a more faithful
model of reactivity. It is for this reason that I use the name \emph{reactive bisimilarity}
for the appropriate form of bisimilarity on systems modelled in this fashion.

\Sec{reactive} introduces strong reactive bisimilarity as the proper counterpart of strong
bisimilarity in the presence of time-out transitions. Naturally, it coincides with strong
bisimilarity when there are no time-out transitions.
\Sec{modal} derives a modal characterisation; a reactive variant of the Hennessy-Milner logic.
\Sec{timeout bisimulations} offers an alternative characterisation of strong reactive bisimilarity
that will be more convenient in proofs, although it is lacks the intuitive appeal to be used as the
initial definition. Appendix~\ref{reduction}, reporting on work by Max Pohlmann~\cite{Po21}, offers
yet another characterisation of strong reactive bisimilarity; one that reduces it to strong
bisimilarity in a context that models a system together with its environment.

\Sec{ccsp} recalls the process algebra CCSP, a common mix of CCS and CSP, and adds the time-out
action, as well as two auxiliary operators that will be used in the forthcoming axiomatisation.
\Sec{guarded} states that in this process algebra one can express all countably branching
transition systems, and only those, or all and only the finitely branching ones when restricting to
guarded recursion.

\Sec{congruence} recalls the concept of a congruence, focusing on the congruence property for the
recursion operator, which is commonly the hardest to establish. It then shows that the simple
\emph{initials equivalence}, as well as Milner's strong bisimilarity, are congruences.
Due to the presence of negative premises in the operational rules for the auxiliary operators, these
proofs are not entirely trivial. Using these results as a stepping stone, \Sec{full congruence}
shows that strong reactive bisimilarity is a congruence for my extension of CCSP\@.
Here the congruence property for one of the auxiliary operators with negative premises is needed in
establishing the result for the common CCSP operators, such as parallel composition.

\Sec{RSP} shows that guarded recursive specifications have unique solutions up to strong reactive bisimilarity.
Using this, \Sec{axioms} provides a sound and complete axiomatisation for processes with guarded recursion.
My completeness proof combines three innovations in establishing completeness of process algebraic axiomatisations.
First of all, following \cite{GM20}, it applies to \emph{all} processes in a Turing powerful
language like guarded CCSP, rather than the more common fragment merely employing finite sets of
recursion equations featuring only choice and action prefixing.
Secondly, instead of the classic technique of \emph{merging guarded recursive equations}
\cite{Mi84,Mi89a,Wa90,vG93a,LDH05}, which in essence proves two bisimilar systems $P$ and $Q$
equivalent by equating both to an intermediate variant that is essentially a \emph{product} of $P$ and $Q$,
I employ the novel method of \emph{canonical solutions} \cite{GF20,LY20}, which equates both $P$ and $Q$
to a canonical representative within the bisimulation equivalence class of $P$ and $Q$---one that has
only one reachable state for each bisimulation equivalence class of states of $P$ and $Q$.
In fact I tried so hard, and in vain, to apply the traditional technique of merging guarded
recursive equations, that I came to believe that it fundamentally does not work for this axiomatisation.
The third innovation is the use of the axiom of choice \cite{Zermelo08} in defining the transition
relation on my canonical representative, in order to keep this process finitely branching.

\Sec{conclusion} describes a worthwhile gain in expressiveness caused by the addition of
time-outs, and presents an agenda for future work.

\section{Reactive bisimilarity}\label{sec:reactive}

A \emph{labelled transition system} (LTS) is a triple $(\IP,Act,\rightarrow)$ with $\IP$ a set (of
\emph{states} or \emph{processes}), $Act$ a set (of \emph{actions}) and ${\rightarrow}\in\IP\times Act\times\IP$.
In this paper I consider LTSs with $Act:= A\uplus\{\tau,\rt\}$, where $A$ is a set of \emph{visible actions},
$\tau$ is the \emph{hidden action}, and $\rt$ the \emph{time-out} action.
The set of \emph{initial} actions of a process $P \in\IP$ is $\I(P):=\{\alpha\in A\cup\{\tau\} \mid P{\goesto \alpha}\}$.
Here $P{\goesto \alpha}$ means that there is a $Q$ with $P \goesto\alpha Q$.

\begin{definition}{reactive bisimilarity}
A \emph{strong reactive bisimulation} is a symmetric relation
${\R} \subseteq (\IP \times \Pow(A) \times \IP) \cup (\IP \times \IP)$
(meaning that $(P,X,Q)\in \R \Leftrightarrow (Q,X,P)\in \R$ and  $(P,Q)\in \R \Leftrightarrow (Q,P)\in \R$),
such that,
\begin{itemize}\itemsep 0pt \parsep 0pt
\item if $(P,Q)\in\R$ and $P \goesto{\tau} P'$, then there exists a $Q'$ such that $Q\goesto{\tau} Q'$ and $(P',Q')\in\R$,
\item if $(P,Q)\in\R$ then $(P,X,Q)\in\R$ for all $X\subseteq A$,
\end{itemize}
and for all $(P,X,Q)\in\R$,
\begin{itemize}\itemsep 0pt \parsep 0pt
\item if $P \goesto{a} P'$ with $a\mathbin\in X$, then there exists a $Q'$ such that $Q\goesto{a} Q'$ and
  $(P',Q')\in\R$,
\item if $P \goesto{\tau} P'$, then there exists a $Q'$ such that $Q\goesto{\tau} Q'$ and $(P',X,Q')\in\R$,
\item if $\I(P)\cap (X\cup\{\tau\})=\emptyset$, then $(P,Q)\in\R$, and
\item if $\I(P)\cap (X\cup\{\tau\})=\emptyset$ and $P \goesto{\rt} P'$, then
  $\exists Q'$ such that $Q\goesto{\rt} Q'$ and $(P',X,Q')\in\R$.
\end{itemize}
Processes $P,Q\mathbin\in\IP$ are \emph{strongly $X$-bisimilar}, denoted $P \rbis{X}{r} Q$, if
$(P,X,Q)\in\R$ for some strong reactive bisimulation $\R$.
They are \emph{strongly reactive bisimilar}, denoted $P \bis{r} Q$, if $(P, Q)\in\R$ for
some strong reactive bisimulation $\R$.
\end{definition}
Intuitively, $(P,X,Q)\in\R$ says that processes $P$ and $Q$ behave the same way, as witnessed by the
relation $\R$, when placed in the environment $X$---meaning any environment that allows exactly the
actions in $X$ to occur---whereas $(P,Q)\in\R$ says they behave the same way in an environment that
has just been triggered to change.
An environment can be thought of as an unknown process placed in parallel with $P$ and $Q$, using
the operator $\spar{A}$, enforcing synchronisation on all visible actions.
The environment $X$ can be seen as a process $\sum_{i\in I}a_i.R_i + \rt.R$ where $\{a_i \mid i\in I\} = X$.
A triggered environment, on the other hand, can execute a sequence of instantaneous hidden actions before
stabilising as an environment $Y$, for $Y\subseteq A$. During this execution, actions can be blocked
and allowed in rapid succession. Since the environment is unknown, the bisimulation should be robust
under any such environment.

The first clause for $(P,X,Q)\in\R$ is like the common transfer property of strong bisimilarity \cite{Mi90ccs}:
a visible $a$-transition of $P$ can be matched by one of $Q$, such that the resulting processes $P'$
and $Q'$ are related again.
However, I require it only for actions $a\in X$, because actions $b\in A{\setminus}X$ cannot happen
at all in the environment $X$, and thus need not be matched by $Q$.
Since the occurrence of $a$ is observable by the environment, this can trigger the environment to
change the set of actions it allows, so $P'$ and $Q'$ ought to be related in a triggered environment.

The second clause is the transfer property for $\tau$-transitions. Since these are not observable by
the environment, they cannot trigger a change in the set of actions allowed by it, so the 
resulting processes $P'$ and $Q'$ should be related only in the same environment $X$.

The first clause for $(P,Q)\in\R$ expresses the transfer property for $\tau$-transitions in a
triggered environment. Here it may happen that the $\tau$-transition occurs before the environment stabilises,
and hence $P'$ and $Q'$ will still be related in a triggered environment.
A similar transfer property for $a$-transitions is already implied by the next two clauses.

The second clause allows a triggered environment to stabilise into any environment $X$.

The first two clauses for $(P,X,Q)\in \R$ imply that if $(P,X,Q)\in \R$ then
$\I(P)\cap (X\cup\{\tau\}) = \I(Q) \cap (X\cup\{\tau\})$. So $P \bis{r} Q$ implies $\I(P)=\I(Q)$. The condition
$\I(P)\cap (X\cup\{\tau\})=\emptyset$ is necessary and sufficient for the system to remain a positive
amount of time in state $P$ when $X$ is the set of allowed actions.  The next clause says that
during this time the environment may be triggered to change the set of actions it allows by an event
outside our model, that is, by a time-out in the environment. So $P$ and $Q$ should be related in a triggered environment.

The last clause says that also a $\rt$-transition of $P$ should be matched by one of $Q$.
Naturally, the $\rt$-transition of $P$ can be taken only when the system is idling in $P$, i.e.,
when $\I(P)\cap (X\cup\{\tau\})=\emptyset$. The resulting processes $P'$
and $Q'$ should be related again, but only in the same environment allowing~$X$.

\begin{proposition}{equivalence}
Strong $X$-bisimilarity and strong reactive bisimilarity are equivalence relations.
\end{proposition}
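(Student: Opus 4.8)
The plan is to handle all three axioms through the single observation that the class of strong reactive bisimulations is closed under arbitrary unions: every clause of \df{reactive bisimilarity} has the form ``whenever a tuple of a prescribed shape lies in $\R$, a suitably related tuple lies in $\R$'', and symmetry is likewise preserved by unions. Hence the union $\Rn$ of all strong reactive bisimulations is itself one, and it is the largest; consequently $P \rbis{X}{r} Q \Leftrightarrow (P,X,Q)\in\Rn$ and $P \bis{r} Q \Leftrightarrow (P,Q)\in\Rn$.

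For reflexivity I would check that $\{(P,X,P)\mid P\in\IP,\ X\subseteq A\}\cup\{(P,P)\mid P\in\IP\}$ is a strong reactive bisimulation: it is symmetric, and every transition of $P$ is matched by the very same transition of $P$, with all side conditions transferring trivially. Thus $P \rbis{X}{r} P$ and $P \bis{r} P$. Symmetry is immediate, since a strong reactive bisimulation is by definition a symmetric relation: if $(P,X,Q)$ (resp.\ $(P,Q)$) lies in some strong reactive bisimulation $\R$, then so does $(Q,X,P)$ (resp.\ $(Q,P)$).

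The real work is transitivity. Consider the relational composition
\[
  \Rn;\Rn \;:=\; \{(P,X,S) \mid \exists Q.\ (P,X,Q)\in\Rn \wedge (Q,X,S)\in\Rn\}
    \,\cup\, \{(P,S) \mid \exists Q.\ (P,Q)\in\Rn \wedge (Q,S)\in\Rn\}.
\]
I would show $\Rn;\Rn$ is a strong reactive bisimulation; then $\Rn;\Rn\subseteq\Rn$ by maximality of $\Rn$, which is exactly transitivity of both $\rbis{X}{r}$ and $\bis{r}$. Symmetry of $\Rn;\Rn$ holds because $\Rn$ is symmetric. Each transfer clause is verified by a diagram chase: a transition of $P$ is first matched via the $\Rn$-witness $Q$ and then matched again on to $S$, re-using the relevant clause of \df{reactive bisimilarity} twice, the pair or triple resulting at each step being of the right shape by the very construction of $\Rn;\Rn$.

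The one point that needs care --- and the step I expect to be the main obstacle --- is the two clauses that turn a triple $(P,X,Q)$ into a pair (the idling clause and the $\rt$-clause), whose applicability is governed by the side condition $\I(P)\cap(X\cup\{\tau\})=\emptyset$. To chase such a clause through the intermediate state $Q$ one also needs $\I(Q)\cap(X\cup\{\tau\})=\emptyset$; this is supplied by the remark following \df{reactive bisimilarity}, namely $\I(P)\cap(X\cup\{\tau\})=\I(Q)\cap(X\cup\{\tau\})$ whenever $(P,X,Q)\in\Rn$. With that in hand the second application of the clause is legitimate and the chase goes through; keeping straight which clauses produce triples and which produce pairs is the only remaining bookkeeping.
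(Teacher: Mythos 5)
Your proof is correct and follows essentially the same route as the paper's: reflexivity via the identity-shaped relation, symmetry by definition, and transitivity by showing that relational composition of strong reactive bisimulations is again one, with the side conditions $\I(Q)\cap(X\cup\{\tau\})=\emptyset$ at the intermediate state discharged by the observation that $(P,X,Q)\in\R$ forces $\I(P)\cap(X\cup\{\tau\})=\I(Q)\cap(X\cup\{\tau\})$. The only cosmetic difference is that you compose the maximal bisimulation with itself and invoke maximality, whereas the paper composes two arbitrary strong reactive bisimulations directly; the underlying diagram chase is identical.
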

\begin{proof}
$\rbis{X}{r}$, $\bis{r}$ are reflexive, as $\{(P,X,P), (P,P)\mid P\mathbin\in\IP \wedge X\mathbin\subseteq A\}$ is
  a strong reactive bisimulation.
\\
$\rbis{X}{r}$ and $\bis{r}$ are symmetric, since strong reactive bisimulations are symmetric by definition.
\\
$\rbis{X}{r}$ and $\bis{r}$ are transitive, for if $\R$ and $\BS$ are strong reactive bisimulations, then so is\\[1ex]
\small\mbox{}\hfill$\R;\BS = \{(P,X,R)\mid \exists Q.~(P,X,Q)\in \R \wedge (Q,X,R)\in\BS\}
 \cup \{(P,R)\mid \exists Q.~ (P,Q)\in \R \wedge (Q,R)\in\BS\}$.\hfill
\end{proof}
Note that the union of arbitrarily many strong reactive bisimulations is itself a strong reactive bisimulation.
Therefore the family of relations $\bis{r}$\,, $\rbis{X}{r}$ for $X\subseteq A$ can be seen as
a strong reactive bisimulation. 

To get a firm grasp on strong reactive bisimilarity, the reader is invited to check the two laws
mentioned in the introduction, and then to construct a strong reactive bisimulation between the two
systems depicted in \fig{L3}. Here $P$, $Q$, $R$ and $S$ are arbitrary subprocesses.
\begin{figure}
\input{L3}
\centerline{\box\graph}\vspace{1ex}
\caption{Two strongly reactive bisimilar processes}
\label{fig:L3}
\end{figure}
The four processes that are targets of $\rt$-transitions always run in an environment that
blocks $b$. In an environment that allows $a$, the branch $b.R$ disappears, so that the left branch of
the first process can be matched with the left branch of the second process, and similarly for the two
right branches. In an environment that blocks $a$, this matching won't fly, as the branch $b.R$ now
survives. However, the branches $a.Q$ will disappear, so that the left branch of the first process
can be matched with the right branch of the second, and vice versa.
\begin{figure}[h]
\input{pairs}
\centerline{\box\graph}\vspace{1ex}
\caption{Reactive bisimilarity is not fully determined by reactive $X$-bisimilarity}
\label{fig:pairs}
\end{figure}

\noindent
The processes $U$ and $V$ of \fig{pairs} show that the pairs that occur in a strong reactive
bisimulation are not completely determined by the triples. One has $U \rbis{X}{r} V$ for any $X\subseteq A$,
yet $U \not\hspace{-.7pt}\bis{r} V$. In particular, when $a \in X$ the branch $\rt.R$ is redundant,
and when $a \notin X$ the branch $a.Q$ is redundant.

Appendix~\ref{reduction}, reporting on work by Max Pohlmann \cite{Po21}, offers a context $\C$ with the property
that $P \bis{r} Q$ iff $\C(P) \bis{} \C(Q)$, thereby reducing strong reactive bisimilarity to 
strong bisimilarity. The context $\C$ places a system in a most general environment in which it
could be running. This result allows any toolset for checking strong bisimilarity to be applicable
for checking strong reactive bisimilarity.

\subsection{A more general form of reactive bisimulation}

The following notion of a \emph{generalised strong reactive bisimulation} (gsrb) generalises that of
a strong reactive bisimulation; yet it induces the same concept of strong reactive bisimilarity.
This makes the relation convenient to use for further analysis. I did not introduce it as the
original definition, because it lacks a strong motivation.

\begin{definition}{reactive bisimulation}
A \emph{gsrb} is a symmetric relation
${\R} \mathbin\subseteq (\IP \times \Pow(A) \times \IP) \cup (\IP \times \IP)$
such that, for all $(P,Q)\mathbin\in\R$,
\begin{itemize}\itemsep 0pt \parsep 0pt
\item if $P \goesto{\alpha} P'$ with $\alpha\mathbin\in A\cup\{\tau\}$, then there exists a $Q'$ such
  that $Q\goesto{\alpha} Q'$ and $(P',Q')\in\R$,
\item if $\I(P)\cap (X\cup\{\tau\})\mathbin=\emptyset$ with $X\subseteq A$ and $P \goesto{\rt} P'$, then
  $\exists Q'$ with $Q\goesto{\rt} Q'$ and $(P',X,Q')\in\R$,
\end{itemize}
and for all $(P,Y,Q)\in\R$,
\begin{itemize}\itemsep 0pt \parsep 0pt
\item if $P \goesto{a} P'$ with either $a\mathbin\in Y$ or $\I(P)\cap (Y\cup\{\tau\})\mathbin=\emptyset$,
  then $\exists Q'$ with $Q\goesto{a} Q'$ and $(P',Q')\in\R$,
\item if $P \goesto{\tau} P'$, then there exists a $Q'$ such that $Q\goesto{\tau} Q'$ and $(P',Y,Q')\in\R$,
\item if $\I(P)\cap (X\cup Y \cup\{\tau\})\mathbin=\emptyset$ with $X\mathbin\subseteq A$ and $P \goesto{\rt} P'$
  then $\exists Q'$ with $Q\goesto{\rt} Q'$ and $(P',X,Q')\mathbin\in\R$.
\end{itemize}
\end{definition}
Unlike \df{reactive bisimilarity}, a gsrb needs the triples $(P,X,Q)$ only after
encountering a $\rt$-transition; two systems without $\rt$-transitions can be related without using
these triples at all.

\begin{proposition}{reactive bisimulation}
$P \bis{r} Q$ iff there exists a gsrb $\R$ with  $(P, Q)\in\R$.\\
Likewise, $P \rbis{X}{r} Q$ iff there exists a gsrb $\R$ with $(P,X,Q)\in\R$.
\end{proposition}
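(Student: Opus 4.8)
The plan is to prove both directions by exhibiting explicit witnessing relations. The key observation is that the two notions differ only in \emph{bookkeeping}: a gsrb folds several of the transfer obligations of a strong reactive bisimulation (srb) into fewer clauses, at the cost of being less intuitive. So for the ``if'' direction one must show that a gsrb can be \emph{saturated} into a genuine srb, and for the ``only if'' direction one must show that a strong reactive bisimulation already \emph{is} (essentially) a gsrb.

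For the ``only if'' direction (every srb gives a gsrb): given a strong reactive bisimulation $\R$, I would check that $\R$ itself is a gsrb. The two clauses for pairs $(P,Q)\in\R$ of a gsrb are exactly the first clause for $(P,Q)\in\R$ of an srb (for $\alpha=\tau$; for $\alpha=a\in A$ there is no obligation on bare pairs in an srb, but a gsrb only demands it when $P\goesto{\alpha}P'$ with $\alpha\in A\cup\{\tau\}$ --- wait, it does demand it) --- so here I need to be careful. The gsrb clause ``if $P\goesto{\alpha}P'$ with $\alpha\in A\cup\{\tau\}$ then $\exists Q'.\,Q\goesto{\alpha}Q'\wedge(P',Q')\in\R$'' for $(P,Q)\in\R$ is \emph{stronger} than what an srb guarantees directly. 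The resolution is that for an srb, from $(P,Q)\in\R$ we get $(P,X,Q)\in\R$ for all $X$; taking $X=\I(P)\cap A$ (or any $X\supseteq\I(P)\cap A$) makes every initial visible action of $P$ lie in $X$, so the first clause for $(P,X,Q)\in\R$ of an srb supplies the needed matching $Q\goesto{a}Q'$ with $(P',Q')\in\R$. Similarly the second gsrb clause for $(P,Q)\in\R$ follows by combining the second clause for bare pairs (yielding $(P,X,Q)\in\R$) with the last clause for triples of an srb. The three gsrb clauses for $(P,Y,Q)\in\R$ are obtained the same way: the $a\in Y$ case is immediate; the $\I(P)\cap(Y\cup\{\tau\})=\emptyset$ case uses the fifth srb clause to get $(P,Q)\in\R$ and then the argument just given; the $\tau$-clause is verbatim; and the last $\rt$-clause with $X\cup Y$ uses the fifth srb clause (since $\I(P)\cap(Y\cup\{\tau\})=\emptyset$ as $X\cup Y\supseteq Y$) to pass to $(P,Q)\in\R$, then $(P,X,Q)\in\R$, then the last srb clause. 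Thus $\R$ is already a gsrb, and in particular $P\bis{r}Q$ or $P\rbis{X}{r}Q$ implies the existence of a gsrb containing the corresponding pair.

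For the ``if'' direction (every gsrb gives an srb witnessing the same relation), given a gsrb $\R$ I would define
\[
\R' := \R \cup \{(P,X,Q)\mid (P,Q)\in\R,\ X\subseteq A\}
\]
(and close under symmetry, which is automatic since $\R$ is symmetric) and verify that $\R'$ is a strong reactive bisimulation. The pair-clauses of an srb for $(P,Q)\in\R'=\R$ follow from the first gsrb clause (with $\alpha=\tau$) and from the construction of $\R'$ (second srb clause is built in). For a triple $(P,X,Q)\in\R'$ there are two cases: either $(P,X,Q)\in\R$, or $(P,Q)\in\R$ and the triple was added. In the first case the four triple-clauses of an srb follow from the three triple-clauses of a gsrb --- the first srb clause ($a\in X$) is the $a\in Y$ disjunct of the first gsrb clause; the $\tau$-clause is verbatim; the fifth srb clause ($\I(P)\cap(X\cup\{\tau\})=\emptyset\Rightarrow(P,Q)\in\R'$) needs an argument (see below); and the last srb $\rt$-clause is the last gsrb clause with the trivial choice $X:=X$ and $Y:=X$, i.e. instantiating the gsrb's ``$X\cup Y$'' with $Y=X$ and $X=\emptyset$ --- more precisely, from $\I(P)\cap(X\cup\{\tau\})=\emptyset$ we invoke the last gsrb clause with its $X$ set to $\emptyset$ and its $Y$ set to our $X$. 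In the second case ($(P,Q)\in\R$, triple added) the first srb clause follows from the first gsrb clause for pairs; the $\tau$-clause likewise; the fifth srb clause is immediate since $(P,Q)\in\R\subseteq\R'$; and the last srb $\rt$-clause follows from the second gsrb clause for pairs. The remaining obligation --- and the one place deserving care --- is the fifth srb clause in the first subcase: from $(P,X,Q)\in\R$ and $\I(P)\cap(X\cup\{\tau\})=\emptyset$ I must derive $(P,Q)\in\R'$, equivalently $(P,Q)\in\R$. This does \emph{not} follow from any single gsrb clause directly; instead I would argue it via the modal/behavioural characterisation, \emph{or}, more cleanly, enlarge the construction: set
\[
\R' := \R \cup \{(P,X,Q)\mid (P,Q)\in\R\} \cup \{(P,Q)\mid (P,X,Q)\in\R,\ \I(P)\cap(X\cup\{\tau\})=\emptyset\}.
\]
One then must re-check the pair-clauses of an srb for the newly added pairs $(P,Q)$ with witness $(P,X,Q)\in\R$ and $\I(P)\cap(X\cup\{\tau\})=\emptyset$: the $\tau$-transfer is vacuous since $P$ has no $\tau$-transition, and the ``$(P,Q)\in\R\Rightarrow(P,X',Q)\in\R'$ for all $X'$'' clause holds because the middle set in the union of $\R'$ adds exactly those triples. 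One also re-checks the triple-clauses for the originally-added triples $(P,X',Q)$ coming from such a newly-added pair, which again reduce to vacuous $\tau$- and $a$-transfer (as $\I(P)=\emptyset$ on $A\cup\{\tau\}$... careful: $\I(P)\cap(X\cup\{\tau\})=\emptyset$, not $\I(P)=\emptyset$ --- but then for $a\in X'$ with $a\in\I(P)$ we'd need a match, so this needs $X'$ handled through $\R$; in fact the correct move is to only add triples $(P,X',Q)$ for $X'$ with $\I(P)\cap X'=\emptyset$, matching how $(P,X,Q)\in\R$ behaved). I expect the main obstacle to be precisely this circularity: showing the added elements of $\R'$ again satisfy all srb clauses without the construction blowing up, i.e. finding the right closure so that $\R'$ is genuinely closed under all srb transfer requirements. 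The cleanest fix is to take $\R'$ to be the \emph{smallest} relation containing $\R$ and closed under the two rules ``$(P,Q)\in\R'\Rightarrow(P,X,Q)\in\R'$'' and ``$(P,X,Q)\in\R'\wedge\I(P)\cap(X\cup\{\tau\})=\emptyset\Rightarrow(P,Q)\in\R'$'', then verify by a straightforward case analysis on how each element entered $\R'$ that the srb clauses hold, using the gsrb clauses for the base elements and the two closure rules for the rest. Since $P\bis{r}Q$ (resp. $P\rbis{X}{r}Q$) is defined via existence of an srb and the constructed $\R'$ contains $(P,Q)$ (resp. $(P,X,Q)$) whenever $\R$ does, both equivalences follow. \qed
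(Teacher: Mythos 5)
Your proposal is correct and takes essentially the same route as the paper. The ``only if'' direction is exactly the paper's (unspelled-out) observation that every strong reactive bisimulation already satisfies the five gsrb clauses; your derivations, routing visible actions of a bare pair through a triple $(P,X,Q)$ with $a\in X$, are the right way to make that explicit. For the ``if'' direction, the closure you settle on at the end --- the smallest relation containing the gsrb $\B$ and closed under ``$(P,Q)\Rightarrow(P,X,Q)$ for all $X$'' and ``$(P,X,Q)\wedge\I(P)\cap(X\cup\{\tau\})=\emptyset\Rightarrow(P,Q)$'' --- coincides with the paper's explicit witness
$\R := \B \cup \{(P,X,Q)\mid (P,Q)\in\B \wedge X\subseteq A\} \cup \{(P,Q),(P,X,Q)\mid (P,Y,Q)\in\B \wedge \I(P)\cap(Y\cup\{\tau\})=\emptyset\}$,
which is already a fixed point of your two rules, so the closure stabilises after two steps. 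Two local repairs are needed. First, in verifying the last srb clause for a triple $(P,X,Q)\in\B$, your initial instantiation of the last gsrb clause (both its $X$ and its $Y$ set to the triple's environment $X$) is the correct one; your ``more precise'' re-instantiation with the gsrb's $X:=\emptyset$ only yields $(P',\emptyset,Q')$ rather than the required $(P',X,Q')$, and should be discarded. Second, your worry that triples $(P,X',Q)$ generated from a newly added pair must be restricted to $X'$ with $\I(P)\cap X'=\emptyset$ is unfounded: for $a\in X'\cap\I(P)$ the match is supplied by the disjunct $\I(P)\cap(Y\cup\{\tau\})=\emptyset$ of the first gsrb triple clause applied to the originating triple $(P,Y,Q)\in\B$, so all $X'\subseteq A$ may (and, for the second srb clause, must) be added, just as in the paper's construction.
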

\begin{proof}
Clearly, each strong reactive bisimulation satisfies the five clauses of \df{reactive bisimulation}
and thus is a gsrb.
In the other direction, given a gsrb $\B$, let\vspace{-2pt}
\[\R := {\B} \begin{array}[t]{l}\cup~ \{(P,X,Q) \mid (P,Q)\in{\B} \wedge X \subseteq A\}\\
  \cup~ \{(P,Q),(P,X,Q) \mid \exists Y.~ (P,Y,Q)\in{\B} \wedge \I(P)\cap (Y\cup\{\tau\})\mathbin=\emptyset
  \wedge X \subseteq A\}\,.
\vspace{-2pt}
\end{array}\]
It is straightforward to check that $\R$ satisfies the six clauses of \df{reactive bisimilarity}.
\end{proof}
The above proof has been formalised in \cite{Po21}, using the interactive proof assistant Isabelle. 
The formalisation takes up around 250 lines of code.

\section{A modal characterisation of strong reactive bisimilarity}\label{sec:modal}

The Hennessy-Milner logic \cite{HM85} expresses properties of the behaviour of processes in an LTS\@.

\begin{definition}{formulas}
The class $\IO$ of \emph{infinitary HML formulas} is defined as follows, where $I$ ranges over all index sets and $\alpha$ over $A\cup\{\tau\}$:\vspace{-5pt}
\[
  \phi ~~::=~~ {\displaystyle\bigwedge_{i\in I}}\,\phi_i ~~|~~ \neg\phi ~~|~~ \diam{\alpha}\phi\vspace{-2pt}
\]
$\top$ abbreviates the empty conjunction, and $\phi_1\land\phi_2$ stands for $\bigwedge_{i\in\{1,2\}}\phi_i$.
\end{definition}

\noindent
$P\models\phi$ denotes that process $P$ satisfies formula $\phi$. The
first two operators represent the standard Boolean operators
conjunction and negation. By definition, $P\models\diam{\alpha}\phi$
iff \plat{$P\goesto{\alpha}P'$} for some $P'$ with $P'\models\phi$.

A famous result stemming from \cite{HM85} states that\vspace{-3pt}
\[ P \bis{} Q ~~\Leftrightarrow~~ \forall \phi\in\IO.~(P \models \phi \Leftrightarrow Q \models \phi)\vspace{-3pt} \]
where $\bis{\,}$ denotes strong bisimilarity \cite{Mi90ccs,HM85}, formally defined in \Sec{strong}.
It states that the Hennessy-Milner logic yields a \emph{modal characterisation} of strong bisimilarity.
I will now adapt this result to obtain a modal characterisation of strong reactive bisimilarity.

To this end I extend the Hennessy-Milner logic with a new modality $\diam{X}$, for $X\subseteq A$,
and auxiliary satisfaction relations  ${\models}_X \subseteq \IP \times \IO$ for each $X \subseteq A$.
The formula $P \models \diam{X}\phi$ says that in an environment $X$, allowing exactly the actions in $X$,
process $P$ can perform a time-out transition to a process that satisfies $\phi$.
$P \models_X \phi$ says that $P$ satisfies $\phi$ when placed in environment $X$.
The relations $\models$ and $\models_X$ are the smallest ones satisfying:
\[\begin{array}{l@{\qquad\mbox{if}\qquad}l}
  P \models \bigwedge_{i\in I} \phi_i & \forall i\in I.~ P \models \phi_i \\
  P \models \neg \phi & P \not\models \phi \\
  P \models \diam{\alpha} \phi  \mbox{~~~with~} \alpha\in A\cup\{\tau\} & \exists P'.~ P \goesto{\alpha} P' \wedge P' \models \phi \\
  P \models \diam{X}\phi \mbox{~~~with~} X\subseteq A & \I(P)\cap (X\cup\{\tau\}) = \emptyset 
                  \wedge \exists P'.~ P\goesto{\rt}P' \wedge P' \models_X \phi \\[2ex]

  P \models_X \bigwedge_{i\in I} \phi_i & \forall i\in I.~ P \models_X \phi_i \\
  P \models_X \neg \phi & P \not\models_X \phi \\
  P \models_X \diam{a} \phi  \mbox{~~~with~} a\in A & a \in X \wedge \exists P'.~ P \goesto{a} P' \wedge P' \models \phi \\
  P \models_X \diam{\tau} \phi & \exists P'.~ P \goesto\tau P' \wedge P' \models_X \phi \\
  P \models_X \phi & \I(P)\cap (X\cup\{\tau\}) = \emptyset \wedge P \models \phi \\
\end{array}\]
Note that a formula $\diam{a}\phi$ is less often true under $\models_X$ than under $\models$, due to
the side condition $a \in X$. This reflects the fact that $a$ cannot happen in an environment that
blocks it.
The last clause in the above definition reflects the fifth clause of \df{reactive bisimilarity}.
If $\I(P)\cap (X\cup\{\tau\}) = \emptyset$, then process $P$, operating in environment $X$, idles for a
while, during which the environment can change. This ends the blocking of actions $a \notin X$
and makes any formula valid under $\models$ also valid under $\models_X$.

\begin{example}{reactive HML}
Both systems from \fig{L3} satisfy
$\diam{\emptyset} \diam\tau \diam{b}\top \wedge \diam{\emptyset} \diam\tau \neg \diam{b}\top \wedge
 \diam{\{a\}} \diam{a}\top \wedge \diam{\{a\}} \neg \diam{a}\top$
and neither satisfies $\diam{\emptyset} \textcolor{red}(\diam{a}\top \wedge \diam{\tau}\diam{b}\top\textcolor{red})$ or 
$\diam{\{a\}} \textcolor{red}(\diam{a}\top \wedge \diam{\tau}\diam{b}\top\textcolor{red})$.
\end{example}

\begin{theorem}{modal char}
Let $P,Q\in\IP$ and $X\subseteq A$. Then
$P \bis{r} Q ~~\Leftrightarrow~~ \forall \phi\in\IO.~(P \models \phi \Leftrightarrow Q \models \phi)$ \\
and\hspace{180pt}
$P \rbis{X}{r} Q ~~\Leftrightarrow~~ \forall \phi\in\IO.~(P \models_X \phi \Leftrightarrow Q \models_X \phi)$.
\end{theorem}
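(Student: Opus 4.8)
The plan is to prove both biconditionals simultaneously by establishing two directions, exploiting the mutual recursion between $\models$ and $\models_X$ (and between $\bis{r}$ and $\rbis{X}{r}$) in parallel.

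\medskip

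\textbf{Soundness ($\Rightarrow$).} First I would assume $P \bis{r} Q$ and $P \rbis{X}{r} Q$ — more precisely, fix a single strong reactive bisimulation $\R$ witnessing all the relevant pairs — and prove by induction on the structure of $\phi\in\IO$ that: (a) if $(P,Q)\in\R$ then $P\models\phi \Leftrightarrow Q\models\phi$, and (b) if $(P,X,Q)\in\R$ then $P\models_X\phi \Leftrightarrow Q\models_X\phi$. The conjunction and negation cases are routine. For (a) with $\phi=\diam\alpha\psi$, I use the $\tau$-transfer clause (when $\alpha=\tau$) together with the second clause of \df{reactive bisimilarity} ($(P,Q)\in\R\Rightarrow(P,X,Q)\in\R$), falling back to the $X$-clauses for visible $\alpha$; for $\phi=\diam{X'}\psi$ I use the facts that $\I(P)=\I(Q)$ (noted after \df{reactive bisimilarity}), the fifth clause, and the sixth clause, and then invoke (b) at the smaller formula $\psi$. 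For (b) with $\phi=\diam a\psi$ I split on whether $a\in X$; the side condition in the $\models_X$-clause for $\diam a$ means only the $a\in X$ case is interesting, and there the first $X$-clause applies. For $\phi=\diam\tau\psi$ the fourth clause applies. For the final "$P\models_X\phi$ if $\I(P)\cap(X\cup\{\tau\})=\emptyset$ and $P\models\phi$" clause, I use the fifth clause of \df{reactive bisimilarity} to get $(P,Q)\in\R$ and apply (a). By symmetry of $\R$ each "$\Leftrightarrow$" follows from one implication.

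\medskip

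\textbf{Completeness ($\Leftarrow$).} Here I would define the relation $\R$ on $\IP\times\Pow(A)\times\IP \cup \IP\times\IP$ by
\[
  \R := \{(P,Q) \mid \forall\phi.\,(P\models\phi\Leftrightarrow Q\models\phi)\}
         \cup \{(P,X,Q) \mid \forall\phi.\,(P\models_X\phi\Leftrightarrow Q\models_X\phi)\},
\]
and show $\R$ is a strong reactive bisimulation; the theorem is then immediate. Symmetry is clear. For each of the six clauses I argue by contradiction: if the matching transition does not exist, I build a distinguishing formula. For the $\tau$-clause in the $(P,Q)$ case: if $P\goesto\tau P'$ but no $\tau$-successor $Q'$ of $Q$ is logically equivalent to $P'$, then for each $\tau$-successor $Q'_j$ of $Q$ pick $\psi_j$ with $P'\models\psi_j$, $Q'_j\not\models\psi_j$; then $P\models\diam\tau\bigwedge_j\psi_j$ but $Q\not\models$ it. This needs $Q$ to have \emph{set-many} $\tau$-successors so the conjunction is a legitimate $\IO$-formula — which is fine since $\IO$ allows arbitrary index sets, so no branching restriction is required. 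The clause "$(P,Q)\in\R\Rightarrow(P,X,Q)\in\R$" reduces to: $P\models_X\phi$ is, by the $\models_X$-clauses, expressible in terms of $\models$-satisfaction of derived formulas, so logical $\models$-equivalence of $P,Q$ forces $\models_X$-equivalence; this can be made precise by a straightforward induction building, for each $\phi$, a formula $\phi^X$ with $R\models_X\phi \Leftrightarrow R\models\phi^X$ for all $R$. The remaining clauses for $(P,X,Q)\in\R$ are handled the same way, using $\diam a$ with the $a\in X$ proviso, $\diam\tau$, and $\diam X$. The clauses involving $\I(P)\cap(X\cup\{\tau\})=\emptyset$ use that this condition is itself expressible: $\I(P)\cap(X\cup\{\tau\})=\emptyset$ iff $P\not\models_X \bigl(\diam\tau\top \vee \bigvee_{a\in X}\diam a\top\bigr)$, so if it holds for $P$ it holds for $Q$, and then to get $(P,Q)\in\R$ from $(P,X,Q)\in\R$ one shows $\models$-equivalence follows from $\models_X$-equivalence under this emptiness hypothesis (the last $\models_X$-clause says $\models_X$ collapses onto $\models$ exactly then).

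\medskip

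\textbf{Main obstacle.} The delicate point is the interleaving of the two satisfaction relations and the two families of pairs: the modality $\diam X\phi$ is defined via $\models_X$ on $\phi$, while $\models_X$ feeds back into $\models$, so the structural induction on $\phi$ in the soundness direction must be set up to carry both statements (a) and (b) together, and in the completeness direction the "translation" formulas $\phi^X$ (and the dual extraction of $\models$ from $\models_X$ under the emptiness condition) must be defined carefully so that the induction goes through without circularity — $\diam X$ inside $\phi$ recurses through $\models_X$, which under the emptiness side condition drops back to $\models$, and one must check this nesting terminates, which it does because each step strictly decreases formula size. Once that bookkeeping is in place, every individual clause is a short argument of the kind sketched above.
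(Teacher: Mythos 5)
Your proposal is correct. The soundness direction is essentially the paper's proof: a simultaneous structural induction on $\phi$, carrying the statements for $\models$ and $\models_X$ together and matching each formula constructor against the corresponding clause of the bisimulation. (One small slip there: in the case $\phi=\diam{X'}\psi$ with $(P,Q)\in\R$, the clause you need to pass from $(P,Q)\in\R$ to $(P,X',Q)\in\R$ is the \emph{second} clause of \df{reactive bisimilarity}, not the fifth, which goes in the opposite direction; similarly, matching a visible $\diam{a}$ under $(P,Q)\in\R$ goes via $(P,A,Q)\in\R$. With those clause names corrected the cases are exactly as in the paper.) The completeness direction, however, takes a genuinely different route. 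The paper does not verify the six clauses of \df{reactive bisimilarity} for the logical-equivalence relations $\equiv$ and $\equiv_X$; it shows instead that this family is a gsrb in the sense of \df{reactive bisimulation} and appeals to \pr{reactive bisimulation}. Since a gsrb has no closure clause ``$(P,Q)\in\R\Rightarrow(P,X,Q)\in\R$'' and no clause ``$\I(P)\cap(X\cup\{\tau\})=\emptyset\Rightarrow(P,Q)\in\R$'', the paper never has to show that $\models$-equivalence entails $\models_X$-equivalence or the converse under the emptiness condition. You do have to show both, and you correctly isolate the two facts that make this possible: a translation $\phi\mapsto\phi^X$ with $R\models_X\phi\Leftrightarrow R\models\phi^X$, definable by structural induction because the condition $\I(R)\cap(X\cup\{\tau\})=\emptyset$ is itself expressible by an infinitary formula (under both $\models$ and $\models_X$), and the collapse of $\models_X$ onto $\models$ precisely when that condition holds. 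Both facts are true, their inductions terminate as you argue, and each of the six clauses then follows by the same distinguishing-formula construction the paper uses. Your route is somewhat longer but self-contained; the paper's is shorter because the bookkeeping has been amortised into \pr{reactive bisimulation}.
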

\begin{proof}
``$\Rightarrow$'': I prove by simultaneous structural induction on $\phi \in \IO$ that,
for all $P,Q\in\IP$ and $X\subseteq A$,
$P \bis{r} Q \wedge P \models \phi ~\Rightarrow~ Q \models \phi$ and
$P \rbis{X}{r} Q \wedge P \models_X \phi ~\Rightarrow~ Q \models_X \phi$.
For each $\phi$, the converse implications ($Q \models \phi \Rightarrow P \models \phi$ and
$Q \models_X \phi \Rightarrow P \models_X \phi$) follow by symmetry.
In particular, these converse directions may be used when invoking the induction hypothesis.
\begin{itemize}
\item Let $P \bis{r} Q \wedge P \models \phi$.
\begin{itemize}
\item Let $\phi = \bigwedge_{i\in I}\phi_i$. Then $P \models \phi_i$ for all $i\mathbin\in I$. By induction 
  $Q \models \phi_i$ for all $i$, so $Q \models \bigwedge_{i\in I}\phi_i$.
\item Let $\phi = \neg \psi$. Then $P \not\models \psi$. By induction $Q \not\models \psi$, so $Q \models \neg \psi$.
\item Let $\phi = \diam{\alpha}\psi$ with $\alpha\mathbin\in A\cup\{\tau\}$.
  Then $P \goesto{\alpha} P'$ for some $P'$ with $P' \models \psi$.
  By \df{reactive bisimulation}, $Q \goesto{\alpha} Q'$ for some $Q'$ with $P' \bis{r} Q'$.
  So by induction $Q' \models \psi$, and thus $Q \models \diam{\alpha} \psi$.
\item Let $\phi = \diam{X}\psi$ for some $X\subseteq A$. Then $\I(P)\cap (X\cup\{\tau\}) = \emptyset$
  and $P \goesto{\rt} P'$ for some $P'$ with $P' \models_X \psi$. By \df{reactive bisimulation},
  $Q \goesto{\rt} Q'$ for some $Q'$ with $P' \rbis{X}{r} Q'$.
  So by induction $Q' \models_X \psi$. Moreover, $\I(Q)\mathbin=\I(P)$, as $P \bis{r} Q$, so $\I(Q)\cap (X\cup\{\tau\}) \mathbin= \emptyset$.
  Thus $Q \models \diam{X}\psi$.
\end{itemize}
\item Let $P \rbis{X}{r} Q \wedge P \models_X \phi$.
\begin{itemize}
\item Let $\phi \mathbin= \bigwedge_{i\in I}\phi_i$, and $P \models_X \phi_i$ for all $i\in I$. By induction 
  $Q \models_X \phi_i$ for all $i\in I$, so $q \models_X \bigwedge_{i\in I}\phi_i$.
\item Let $\phi = \neg \psi$, and $P \not\models_X \psi$. By induction $Q \not\models_X \psi$, so $Q \models_X \neg \psi$.
\item Let $\phi = \diam{a}\psi$ with
  $a\mathbin\in X$ and $P \goesto{a} P'$ for some $P'$ with $P' \models \psi$.
  By \df{reactive bisimilarity}, $Q \goesto{a} Q'$ for some $Q'$ with $P' \bis{r} Q'$.
  By induction $Q' \models \psi$, so $Q \models_X \diam{a} \psi$.
\item Let $\phi = \diam{\tau}\psi$, and
  $P \goesto{\tau} P'$ for some $P'$ with $P' \models_X \psi$.
  By \df{reactive bisimilarity}, $Q \goesto{\tau} Q'$ for some $Q'$ with $P' \rbis{X}{r} Q'$.
  By induction $Q' \models_X \psi$, so $Q \models_X \diam{\tau} \psi$.
\item Let $\I(P)\cap (X\cup\{\tau\}) = \emptyset$ and $P \models \phi$.
  By the fifth clause of \df{reactive bisimilarity}, $P \bis{r} Q$.
  Hence, by the previous case in this proof, $Q \models \phi$. Moreover,
  $\I(Q)\cap (X\cup\{\tau\}) =\I(P) \cap (X\cup\{\tau\})$, since $P \rbis{X}{r} Q$.  Thus $Q \models_X \phi$.
\end{itemize}
\end{itemize}
``$\Leftarrow$'': Write $P \equiv Q$ for $\forall \phi\mathbin\in\IO.~(P \models \phi \Leftrightarrow Q \models \phi)$,
and $P \equiv_X Q$ for $\forall \phi\mathbin\in\IO.~(P \models_X \phi \Leftrightarrow Q \models_X \phi)$.
I show that the family of relations  $\equiv$, $\equiv_X$ for $X\subseteq A$ constitutes a gsrb.
\begin{itemize}
\item Suppose $P \equiv Q$ and $P \goesto\alpha P'$ with $\alpha\in A \cup \{\tau\}$. Let
$\mathcal{Q}^\dagger := \{Q^\dagger \in \IP \mid Q \goesto\alpha Q^\dagger \wedge P' \not\equiv Q^\dagger\}$.
For each $Q^\dagger \in \mathcal{Q}^\dagger$, let $\phi_{Q^\dagger}\in\IO$ be a formula such that $P'\models \phi_{Q^\dagger}$
and $Q^\dagger\not\models \phi_{Q^\dagger}$. (Such a formula always exists because $\IO$ is closed under negation.)
Define \plat{$\phi := \bigwedge_{Q^\dagger \in \mathcal{Q}^\dagger} \phi_{Q^\dagger}$}. Then $P' \models \phi$, so $P \models \diam{a}\phi$.
Consequently, also $Q \models \diam{a}\phi$. Hence there is a $Q'$ with $Q \goesto\alpha Q'$ and
$Q' \models \phi$. Since none of the $Q^\dagger\in\mathcal{Q}^\dagger$ satisfies $\phi$, one obtains
$Q' \notin \mathcal{Q}^\dagger$ and thus $P' \equiv Q'$.
\item Suppose $P \equiv Q$,~ $X\subseteq A$,~ $\I(P)\cap (X\cup\{\tau\}) = \emptyset$ and $P \goesto\rt P'$.
Let $$\mathcal{Q}^\dagger := \{Q^\dagger \in \IP \mid Q \goesto\rt Q^\dagger \wedge P' \not\equiv_X Q^\dagger\}.$$
For each $Q^\dagger \in \mathcal{Q}^\dagger$, let $\phi_{Q^\dagger}\in\IO$ be a formula such that $P'\models_X \phi_{Q^\dagger}$
and $Q^\dagger\not\models_X \phi_{Q^\dagger}$.
Define $\phi := \bigwedge_{Q^\dagger \in \mathcal{Q}^\dagger} \phi_{Q^\dagger}$. Then $P' \models_X \phi$, so $P \models \diam{X}\phi$.
Consequently, also $Q \models \diam{X}\phi$. Hence there is a $Q'$ with $Q \goesto\rt Q'$ and
$Q' \models_X \phi$. Again $Q' \notin \mathcal{Q}^\dagger$ and thus $P' \equiv_X Q'$.
\item Suppose $P \equiv_Y Q$ and $P \goesto\alpha P'$ with $a\in A$ and either $a\mathbin\in Y$ or
$\I(P)\cap (Y\cup\{\tau\})\mathbin=\emptyset$.

Let $\mathcal{Q}^\dagger := \{Q^\dagger \in \IP \mid Q \goesto\alpha Q^\dagger \wedge P' \not\equiv Q^\dagger\}$.
For each $Q^\dagger \in \mathcal{Q}^\dagger$, let $\phi_{Q^\dagger}\in\IO$ be a formula such that $P'\models \phi_{Q^\dagger}$
and $Q^\dagger\not\models \phi_{Q^\dagger}$.
Define \plat{$\phi := \bigwedge_{Q^\dagger \in \mathcal{Q}^\dagger} \phi_{Q^\dagger}$}. Then $P' \models \phi$, so $P \models \diam{a}\phi$,
and also $P \models_Y \diam{a}\phi$, using either the third or last clause in the definition of $\models_X$.
Hence also $Q \models_Y \diam{a}\phi$. Therefore there is a $Q'$ with $Q \goesto\alpha Q'$ and
$Q' \models \phi$, using the third clause of either $\models_X$ or $\models$.
Since none of the $Q^\dagger\in\mathcal{Q}^\dagger$ satisfies $\phi$, one obtains
$Q' \notin \mathcal{Q}^\dagger$ and thus $P' \equiv Q'$.
\item The fourth clause of \df{reactive bisimulation} is obtained exactly like the first,
  but using $\models_Y$ instead of $\models$.
\item Suppose $P \equiv_Y Q$, $P \goesto\rt P'$ and $\I(P)\cap (X \cup Y\cup\{\tau\}) = \emptyset$, with $X \subseteq A$.
Let $$\mathcal{Q}^\dagger := \{Q^\dagger \in \IP \mid Q \goesto\rt Q^\dagger \wedge P' \not\equiv_X Q^\dagger\}.$$
For each $Q^\dagger \in \mathcal{Q}^\dagger$, let $\phi_{Q^\dagger}\in\IO$ be a formula such that $P'\models_X \phi_{Q^\dagger}$
and $Q^\dagger\not\models_X \phi_{Q^\dagger}$.
Define $\phi := \bigwedge_{Q^\dagger \in \mathcal{Q}^\dagger} \phi_{Q^\dagger}$. Then $P' \models_X \phi$, so $P \models \diam{X}\phi$,
and thus $P \models_Y \diam{X}\phi$.
Consequently, also $Q \models_Y \diam{X}\phi$ and therefore $Q \models \diam{X}\phi$.
Hence there is a $Q'$ with $Q \goesto\rt Q'$ and
$Q' \models_X \phi$. Again $Q' \notin \mathcal{Q}^\dagger$ and thus $P' \equiv_X Q'$.
\qed
\end{itemize}
\end{proof}

\section{Time-out bisimulations}\label{sec:timeout bisimulations}

I will now present a characterisation of strong reactive bisimilarity in terms of a binary relation $\B$ on
processes---a \emph{strong time-out bisimulation}---not parametrised by the set of allowed actions $X$.
To this end I need a family of unary operators $\theta_X$ on processes, for $X\subseteq A$.
These \emph{environment} operators place a process in an environment that allows exactly the actions
in $X$ to occur.
They are defined by the following structural operational rules.
$$\frac{x \goesto{\tau} y}{\theta_X(x) \goesto{\tau} \theta_X(y)} \qquad
\frac{x \goesto{a} y}{\theta_X(x) \goesto{a} y}~(a\in X) \qquad
\frac{x \goesto{\alpha} y \quad x {\ngoesto\beta}~\mbox{for all}~\beta\in X\cup\{\tau\}}
{\theta_X(x) \goesto{\alpha} y}~(\alpha\in A\cup\{\rt\})$$
The operator $\theta_X$ modifies its argument by inhibiting all initial transitions (here
including also those that occur after a $\tau$-transition) that cannot occur in the specified
environment. When an observable transition does occur, the environment may be triggered to change,
and the inhibiting effect of the $\theta_X$-operator comes to an end. The premises
$x {\ngoesto\beta}~\mbox{for all}~\beta\in X\cup\{\tau\}$ in the third rule guarantee that the
process $x$ will idle for a positive amount of time in its current state. During this time,
the environment may be triggered to change, and again the inhibiting effect of the
$\theta_X$-operator comes to an end.

Below I assume that $\IP$ is closed under $\theta$, that is, if $P\in \IP$ and $X\subseteq A$ then $\theta_X(P)\in\IP$.
\begin{definition}{time-out bisimulation}
A \emph{strong time-out bisimulation} is a symmetric relation ${\B} \subseteq \IP \times \IP$, such
that, for $P\B Q$,
\begin{itemize}\itemsep 0pt \parsep 0pt
\item if $P \goesto{\alpha} P'$ with $\alpha\mathbin\in A\cup\{\tau\}$, then $\exists Q'$ such that
  $Q\goesto{\alpha} Q'$ and $P'\B Q'$,
\item if $\I(P)\cap (X\cup \{\tau\})=\emptyset$ and $P \goesto{\rt} P'$, then $\exists Q'$ such that $Q\goesto{\rt} Q'$
  and $\theta_{X}(P')\B \theta_X(Q')$.
\end{itemize}
\end{definition}

\begin{proposition}{time-out bisimulation}
$P \bis{r} Q$ iff there exists a strong time-out bisimulation $\B$ with $P \B Q$.
\end{proposition}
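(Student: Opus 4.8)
The plan is to prove each implication by exhibiting an explicit witnessing relation, using \pr{reactive bisimulation} to pass between strong reactive bisimilarity and the gsrb characterisation, and the operational rules for $\theta_X$ to translate between the parametrised and the unparametrised setting. Two facts about $\theta_X$ will be used throughout. (i) By inspection of the rules, a step $\theta_X(P)\goesto{\tau}R$ can only arise from the first rule, so $R=\theta_X(P')$ with $P\goesto{\tau}P'$; a step $\theta_X(P)\goesto{a}R$ with $a\in A$ arises from the second or third rule, so $R=P'$ with $P\goesto{a}P'$ and ($a\in X$ or $\I(P)\cap(X\cup\{\tau\})=\emptyset$); and a step $\theta_X(P)\goesto{\rt}R$ can only arise from the third rule, so $R=P'$ with $P\goesto{\rt}P'$ and $\I(P)\cap(X\cup\{\tau\})=\emptyset$. (ii) If $\I(P)\cap(X\cup\{\tau\})=\emptyset$ then $\I(\theta_X(P))=\I(P)$, since then the third rule is enabled for each transition of $P$. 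Finally, the first two clauses for triples in \df{reactive bisimulation} at once give that $(P,Y,Q)\in\R$ implies $\I(P)\cap(Y\cup\{\tau\})=\I(Q)\cap(Y\cup\{\tau\})$; call this the $\I$-agreement.

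For ``$\Leftarrow$'', given a strong time-out bisimulation $\B$, I would set ${\R} := {\B}\cup\{(P,X,Q)\mid X\subseteq A\wedge\theta_X(P)\B\theta_X(Q)\}$ and verify that $\R$ is a gsrb; \pr{reactive bisimulation} then gives $P\bis{r}Q$ from $P\B Q$. For a pair $(P,Q)\in\R$ the two gsrb clauses are literally the two clauses of \df{time-out bisimulation}. For a triple $(P,Y,Q)\in\R$, that is $\theta_Y(P)\B\theta_Y(Q)$, each premise transition of $P$ is first lifted to a transition of $\theta_Y(P)$ by the appropriate rule — the second or third rule for the $a$-step in the first clause for triples, the first rule for a $\tau$-step, the third rule for a $\rt$-step, where the disjointness hypotheses in the gsrb clauses supply the negative premises of the third rule and, via (ii), the side condition needed to apply the second clause of \df{time-out bisimulation}. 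The matching clause of \df{time-out bisimulation} is then applied, and (i) turns the matching transition of $\theta_Y(Q)$ into one of $Q$ whose target is either raw, landing in ${\B}\subseteq\R$, or of the form $\theta_Z(\cdot)$ for the relevant $Z$, landing in the ternary part of $\R$.

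For ``$\Rightarrow$'', I would take a gsrb $\R$ with $(P,Q)\in\R$ (by \pr{reactive bisimulation}) and set ${\B} := \{(P',Q')\mid (P',Q')\in\R\}\cup\{(\theta_X(P'),\theta_X(Q'))\mid (P',X,Q')\in\R\}$, which is symmetric because $\R$ is, and check that it is a strong time-out bisimulation. Pairs of the first kind are discharged directly by the two gsrb clauses for pairs. For $(\theta_Y(P'),\theta_Y(Q'))\in\B$ arising from $(P',Y,Q')\in\R$, I classify a transition of $\theta_Y(P')$ using (i). A $\tau$-step gives $P'\goesto{\tau}P''$, matched by the $\tau$-clause for triples to $(P'',Y,Q'')\in\R$, hence $\theta_Y(Q')\goesto{\tau}\theta_Y(Q'')$ and $(\theta_Y(P''),\theta_Y(Q''))\in\B$. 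A visible step gives $P'\goesto{a}P''$ with $a\in Y$ or $\I(P')\cap(Y\cup\{\tau\})=\emptyset$ — exactly the disjunction in the first clause for triples — so $(P'',Q'')\in\R\subseteq\B$, and $\theta_Y(Q')\goesto{a}Q''$ follows from the second rule when $a\in Y$ and from the third rule when $\I(P')\cap(Y\cup\{\tau\})=\emptyset$, transferring this last condition to $Q'$ by the $\I$-agreement. A $\rt$-step forces $\I(P')\cap(Y\cup\{\tau\})=\emptyset$, and for any $X$ with $\I(\theta_Y(P'))\cap(X\cup\{\tau\})=\emptyset$ fact (ii) gives $\I(P')\cap(X\cup Y\cup\{\tau\})=\emptyset$, so the third clause for triples yields $Q'\goesto{\rt}Q''$ with $(P'',X,Q'')\in\R$, whence $\theta_X(P'')\B\theta_X(Q'')$, while $\theta_Y(Q')\goesto{\rt}Q''$ holds by the third rule and the $\I$-agreement.

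I expect the only genuine obstacle to be the bookkeeping around $\theta_X$. The structural point that makes it work is that $\tau$-steps keep the $\theta_X$-wrapper while visible and $\rt$-steps discard it, which is exactly why a single layer of $\theta$ suffices and the relations above need no nesting of environment operators. The care required is, in each case, to select the correct rule in the forward direction, to transfer every side condition — the negative premises and the various $X$-disjointness conditions — between $P$ and $\theta_X(P)$ using $\I(\theta_X(P))=\I(P)$ and the $\I$-agreement of gsrb triples, and to read off from the shape of the matching transition whether the resulting pair belongs in the binary or the ternary part of the relation under construction.
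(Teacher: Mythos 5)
Your proof is correct and follows essentially the same route as the paper's: both directions use exactly the witnessing relations the paper constructs (from a time-out bisimulation $\B$, the gsrb with triples $(P,X,Q)$ whenever $\theta_X(P)\B\theta_X(Q)$; from a gsrb $\R$, the binary relation pairing $\theta_X(P^\dagger)$ with $\theta_X(Q^\dagger)$ for $(P^\dagger,X,Q^\dagger)\in\R$), mediated by \pr{reactive bisimulation}. The facts you isolate up front --- the inversion of the $\theta_X$ rules, $\I(\theta_X(P))=\I(P)$ under the disjointness hypothesis, and the $\I$-agreement of gsrb triples --- are precisely the observations the paper uses inline in its case analysis.
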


\begin{proof}
Let $\R$ be a gsrb on $\IP$. Define ${\B} \subseteq \IP \times \IP$ by
$P \B Q$ iff either $(P,Q)\mathbin\in \R$ or $P=\theta_{X}(P^\dagger)$, $Q=\theta_{X}(Q^\dagger)$ and $(P^\dagger,X,Q^\dagger)\in\R$.
I show that $\B$ is a strong time-out bisimulation.
\begin{itemize}\itemsep 0pt \parsep 0pt
\item Let $P \B Q$ and $P \goesto{a} P'$ with $a\mathbin\in A$.
  First suppose $(P,Q)\in\R$. Then, by the first clause of \df{reactive bisimulation},
  there exists a $Q'$ such that $Q\goesto{a} Q'$ and $(P', Q') \mathbin\in\R$.  So $P' \B Q'$.

  Next suppose $P=\theta_{X}(P^\dagger)$, $Q=\theta_{X}(Q^\dagger)$ and $(P^\dagger,X,Q^\dagger)\in\R$.
  Since $\theta_{X}(P^\dagger)\goesto{a} P'$ it must be that $P^\dagger\goesto{a} P'$ and 
  either $a\in X$ or $P^\dagger {\ngoesto\beta}$ for all $\beta\in X\cup\{\tau\}$.
  Hence there exists a $Q'$ such that $Q^\dagger\goesto{a} Q'$ and $(P', Q') \mathbin\in\R$,
  using the third clause of \df{reactive bisimulation}.
  Recall that $P^\dagger \rbis{X}r Q^\dagger$ implies $I(P^\dagger)\cap(X\cup\{\tau\}) \mathbin= I(Q^\dagger)\cap(X\cup\{\tau\})$,
  and thus either $a\mathbin\in X$ or $Q^\dagger {\ngoesto\beta}$ for all $\beta\mathbin\in X\cup\{\tau\}$.
  It follows that $Q=\theta_{X}(Q^\dagger)\goesto{a} Q'$ and $P' \B Q'$.

\item Let $P \B Q$ and $P \goesto{\tau} P'$.
  First suppose $(P,Q)\in\R$. Then, using the first clause of \df{reactive bisimulation},
  there is a $Q'$ with $Q\goesto{\tau} Q'$ and $(P', Q') \mathbin\in\R$.  So $P' \B Q'$.

  Next suppose $P=\theta_{X}(P^\dagger)$, $Q=\theta_{X}(Q^\dagger)$ and $(P^\dagger,X,Q^\dagger)\in\R$.
  Since $\theta_{X}(P^\dagger)\goesto{\tau} P'$, it must be that $P'$ has the form $\theta_{X}(P^\ddagger)$,
  and $P^\dagger\goesto{\tau} P^\ddagger$.
  Thus, by the fourth clause of  \df{reactive bisimulation},
  there is a $Q^\ddagger$ with $Q^\dagger\goesto{\tau} Q^\ddagger$ and $(P^\ddagger,X, Q^\ddagger) \mathbin\in\R$.
  Now $Q = \theta_{X}(Q^\dagger) \goesto{\tau} \theta_{X}(Q^\ddagger) =: Q'$ and $P' \B Q'$.

\item Let $P\B Q$, $\I(P)\cap (X\cup\{\tau\})=\emptyset$ and $P \goesto{\rt} P'$.
  First suppose $(P,Q)\in\R$. Then, by the second clause of \df{reactive bisimulation},
  there is a $Q'$ with $Q\goesto{\rt} Q'$ and $(P',X,Q')\in\R$.
  So $\theta_{X}(P')\B \theta_X(Q')$.

  Next suppose $P=\theta_{Y}(P^\dagger)$, $Q=\theta_{Y}(Q^\dagger)$ and $(P^\dagger,Y,Q^\dagger)\in\R$.
  Since $\theta_{Y}(P^\dagger)\goesto{\rt} P'$, it must be that $P^\dagger\goesto{\rt} P'$
  and $P^\dagger {\ngoesto\beta}$ for all $\beta\in Y\cup\{\tau\}$.
  Consequently, $\I(P^\dagger)=\I(P)$ and thus $\I(P^\dagger) \cap (X\cup Y\cup\{\tau\}) = \emptyset$.
  By the last clause of \df{reactive bisimulation} there is a $Q'$ such that $Q^\dagger\goesto{\rt} Q'$ and
  $(P,X,Q')\in\R$. So
  $\theta_{X}(P')\B \theta_X(Q')$. From $(P^\dagger,Y,Q^\dagger)\in\R$ and $\I(P^\dagger)\cap (Y\cup\{\tau\})=\emptyset$,
  I infer $\I(Q^\dagger)\cap (Y\cup\{\tau\})=\emptyset$. 
  So $Q^\dagger {\ngoesto\beta}$ for all $\beta\in Y\cup\{\tau\}$.
  This yields $Q = \theta_{Y}(Q^\dagger) \goesto{\rt} Q'$.
\end{itemize}
Now let $\B$ be a time-out bisimulation. Define $\R\subseteq \IP \times \Pow(A) \times \IP$ by
$(P,Q)\in \R$ iff $P \B Q$, and
$(P,X,Q)\in \R$ iff $\theta_X(P) \B \theta_X(Q)$.
I need to show that $\R$ is a gsrb.
\begin{itemize}\itemsep 0pt \parsep 0pt
\item Suppose $(P,Q)\in\R$ and $P \goesto{\alpha} P'$ with $\alpha \in A\cup\{\tau\}$.
  Then $P \B Q$, so there is a $Q'$ such that $Q\goesto{\alpha} Q'$ and $P' \B Q'$. Hence $(P',Q')\in\R$.
\item Suppose $(P,Q)\in\R$, $X\subseteq A$, $\I(P)\cap (X\cup\{\tau\})=\emptyset$ and $P \goesto{\rt} P'$.
  Then $P \B Q$, so $\exists Q'$ such that $Q\goesto{\rt} Q'$ and $\theta_{X}(P')\B \theta_X(Q')$.
  Thus $(P',X,Q')\in\R$.
\item Suppose $(P,X,Q)\in\R$ and $P \goesto{a} P'$ with either $a\mathbin\in X$ or $\I(P)\cap (X\cup\{\tau\})=\emptyset$.
  Then $\theta_X(P) \B \theta_X(Q)$.
  Moreover, $\theta_X(P) \goesto{a} P'$.
  Hence there is a $Q'$ such that $\theta_X(Q)\goesto{a} Q'$ and $P'\B Q'$.
  It must be that $Q\goesto a Q'$.
  Moreover, $(P',Q') \in\R$.
\item Suppose $(P,X,Q)\in\R$ and $P \goesto{\tau} P'$.
  Then $\theta_X(P) \B \theta_X(Q)$.
  Since $P \goesto{\tau} P'$, one has $\theta_X(P) \goesto{\tau} \theta_X(P')$.
  Hence there is an $R$ such that $\theta_X(Q)\goesto{\tau} R$ and $\theta_X(P')\B R$.
  The process $R$ must have the form $\theta_X(Q')$ for some $Q'$ with $Q\goesto\tau Q'$.
  It follows that $(P',X,Q')\in\R$.
\item Suppose $(P,Y,Q)\in\R$, $X\subseteq A$, $\I(P)\cap (X\cup Y \cup \{\tau\})=\emptyset$ and $P \goesto{\rt} P'$.
  Then $\theta_Y(P) \B \theta_Y(Q)$ and $\theta_Y(P) \goesto{\rt} P'$.
  Moreover, $\I(\theta_Y(P))=\I(P)$, so by the second clause of \df{time-out bisimulation}
  there exists a $Q'$ such that $\theta_Y(Q)\goesto{\rt} Q'$ and $\theta_{X}(P')\B \theta_X(Q')$.
  So $Q \goesto{\rt} Q'$ and $(P',X,Q')\mathbin\in\R$. 
\qed
\end{itemize}
\end{proof}
%
Note that the union of arbitrarily many strong time-out bisimulations is itself a strong time-out bisimulation.
Consequently, the relation $\bis{r}$ is a strong time-out bisimulation.

\section[The process algebra CCSP]{The process algebra $\CCSP$}\label{sec:ccsp}

Let $A$ be a set of \emph{visible actions} and $\Var$ an infinite set of \emph{variables}.
The syntax of $\CCSP$ is given by\vspace{-1pt}
$$E ::= 0 ~\mbox{\Large $\,\mid\,$}~ \alpha.E ~\mbox{\Large $\,\mid\,$}~ E+E
~\mbox{\Large $\,\mid\,$}~ E \spar{S} E ~\mbox{\Large $\,\mid\,$}~ \tau_I(E) ~\mbox{\Large $\,\mid\,$}~\Rn(E) \mbox{\Large
~$\,\mid\,$}~ \theta_L^U(E) ~\mbox{\Large $\,\mid\,$}~ \psi_X(E) ~\mbox{\Large $\,\mid\,$}~ x ~\mbox{\Large $\,\mid\,$}~\rec{x|\RS}\mbox{ (with }x \mathbin\in V_\RS)$$
with $\alpha \mathbin\in Act := A \uplus\{\tau,\rt\}$, $S,I,U,L,X\mathbin\subseteq A$, $L \subseteq U$,
$\Rn \mathbin\subseteq A \mathop\times A$,
$x \mathbin\in \Var$ and $\RS$ a {\em recursive specification}: a set of equations
$\{y = \RS_{y} \mid y \mathbin\in V_\RS\}$ with $V_\RS \subseteq \Var$
(the {\em bound variables} of $\RS$) and each $\RS_{y}$ a $\CCSP$ expression.
I require that all sets ${\{b\mid (a,b)\in \Rn\}}$ are finite.

The constant $0$ represents a process that is unable to perform any
action. The process $\alpha.E$ first performs the action $\alpha$ and then
proceeds as $E$. The process $E+F$ behaves as either $E$ or $F$.
$\spar{S}$ is a partially synchronous parallel composition operator; 
actions $a\in S$ must synchronise---they can occur only when both arguments
are ready to perform them---whereas actions $\alpha\notin S$ from both arguments are interleaved.
$\tau_I$ is an abstraction operator; it conceals the actions in $I$ by renaming them into the hidden
action $\tau$.
The operator $\Rn$ is a relational renaming: it renames a given action $a\in A$ into a choice between
all actions $b$ with $(a,b)\mathbin\in \Rn$.
The \emph{environment operators} $\theta_L^U$ and $\psi_X$ are new in this paper and explained below.
Finally, $\rec{x|\RS}$ represents the $x$-component of a solution of the system of recursive equations $\RS$.

The language CCSP is a common mix of the process algebras CCS \cite{Mi90ccs} and CSP \cite{BHR84,Ho85}.
It first appeared in \cite{Ol87}, where it was named following a suggestion by M. Nielsen.
The family of parallel composition operators $\|_S$ stems from \cite{OH86}, and incorporates
the two CSP parallel composition operators from \cite{BHR84}.
The relation renaming operators $\Rn(\_\!\_)$ stem from \cite{Va93}; they combine both the
(functional) renaming operators that are common to CCS and CSP, and the inverse image operators of CSP\@.
The choice operator $+$ stems from CCS, and the abstraction operator from CSP, while the inaction
constant $0$, action prefixing operators $a.\_\!\_\,$ for $a\in A$, and the recursion construct are common to CCS and CSP\@.
The time-out prefixing operator $\rt.\_\!\_\,$ was added by me in \cite{vG21}.
The syntactic form of inaction $0$, action prefixing $\alpha.E$ and choice $E+F$ follows CCS,
whereas the syntax of abstraction $\tau_I(\_\!\_)$ and recursion $\rec{x|\RS}$ follows ACP
\cite{BW90,Fok00}.
The fragment of $\CCSP$ without $\theta_L^U$ and $\psi_X$ is called CCSP$_\rt$ \cite{vG21}.

An occurrence of a variable $x$ in a $\CCSP$ expression $E$ is \emph{bound} iff it occurs in a
subexpression $\rec{y|\RS}$ of $E$ with $x \mathbin\in V_\RS$; otherwise it is \emph{free}.
Here each $\RS_y$ for $y \mathbin\in V_\RS$ counts as a subexpression of $\rec{x|\RS}$.
An expression $E$ is \emph{invalid} if it has a subexpression $\theta_L^U(F)$ or $\psi_X(F)$
such that a variable occurrence in $F$ is free in $F$ but bound in $E$.
Let $\IT$ be the set of valid $\CCSP$ expressions.
Furthermore, $\IP\subseteq\IT$ is the set of {\em closed} valid $\CCSP$ expressions,
or \emph{processes}; those in which every variable occurrence is bound.

A substitution is a partial function $\rho\!:\!\Var \mathbin\rightharpoonup \IT$.
The application $E[\rho]$ of a substitution $\rho$ to an expression $E\mathbin\in\IT$ is the result of
simultaneous replacement, for all $x\mathbin\in \textrm{dom}(\rho)$, of each free occurrence of $x$\linebreak
in $E$ by the expression $\rho(x)$, while renaming bound variables in $E$ if necessary to prevent name clashes.

The semantics of $\CCSP$ is given by the labelled transition relation
$\mathord\rightarrow \subseteq \IP\times Act \times\IP$, where the transitions 
{$P\goesto{\alpha}Q$} are derived from the rules of \tab{sos CCSP}.
Here $\rec{E|\RS}$ for $E \in \IT$ and $\RS$ a recursive specification
denotes the result of substituting $\rec{y|\RS}$ for $y$ in $E$, for all $y \mathbin\in V_\RS$.

\begin{table}[t]
\vspace{-6pt}
\caption{Structural operational interleaving semantics of $\CCSP$}
\label{tab:sos CCSP}
\begin{center}
\framebox{$\begin{array}{c@{\qquad}c@{\qquad}c}
\multicolumn{3}{c}{
\alpha.x \goesto{\alpha} x \qquad
\displaystyle\frac{x \goesto{\alpha} x'}{x+y \goesto{\alpha} x'} \qquad
\displaystyle\frac{y \goesto{\alpha} y'}{x+y \goesto{\alpha} y'} \qquad
\displaystyle\frac{x \goesto{\alpha} x'} {\Rn(x) \goesto{\beta} \Rn(x')}
~\left( \begin{array}{@{}r@{}} \scriptstyle\alpha=\beta=\tau \\[-3pt]
\scriptstyle \vee~~ \alpha=\beta=\rt\\[-3pt]\scriptstyle \vee~(\alpha,\beta)\in \Rn\!\end{array}\right)}\\[1.5em]

\displaystyle\frac{x \goesto{\alpha} x'}{x\spar{S} y \goesto{\alpha} x'\spar{S} y}~(\alpha\not\in S) &
\displaystyle\frac{x \goesto{a} x'\quad y \goesto{a} y'}{x\spar{S} y \goesto{a} x'\spar{S} y'} ~(a\in S) &
\displaystyle\frac{y \goesto{\alpha} y'}{x\spar{S} y \goesto{\alpha} x\spar{S} y'}~(\alpha\not\in S) \\[1.5em]

\displaystyle\frac{x \goesto{\alpha} x'}{\tau_I(x) \goesto{\alpha} \tau_I(x')}~(\alpha\not\in I) &
\displaystyle\frac{x \goesto{a} x'}{\tau_I(x) \goesto{\tau} \tau_I(x')}~(a\in I) &
\displaystyle\frac{\rec{\RS_{x}|\RS} \goesto{\alpha} y}{\rec{x|\RS}\goesto{\alpha}y} \\[1.5em]

\multicolumn{3}{c@{}}{\displaystyle
\frac{x \goesto{\tau} y}{\theta_L^U(x) \goesto{\tau} \theta_L^U(y)} \qquad
\frac{x \goesto{a} y}{\theta_L^U(x) \goesto{a} y}~(a\mathbin{\in} U) \qquad
\frac{x \goesto{\alpha} y \quad x {\ngoesto\beta}~\mbox{for all}~\beta\mathbin\in L\mathord\cup\{\tau\}}
{\theta_L^U(x) \goesto{\alpha} y}~(\alpha\mathbin\in A\mathop{\cup}\{\rt\})}\\[1.5em]

\multicolumn{3}{c@{}}{\displaystyle
\frac{x \goesto{\alpha} y}{\psi_X(x) \goesto{\alpha} y}~(\alpha\in A\cup\{\tau\}) \qquad
\frac{x \goesto{\rt} y \quad x {\ngoesto\beta}~\mbox{for all}~\beta\mathbin{\in} X \cup\{\tau\}}
{\psi_X(x) \goesto{\rt} \theta_X(y)}}\\[1em]

\end{array}$}
\vspace{-3pt}
\end{center}
\end{table}

The auxiliary operators $\theta_L^U$ and $\psi_X$ are added here to facilitate complete
axiomatisation, similar to the left merge and communication merge of ACP \cite{BW90,Fok00}.  The operator
$\theta_X^X$ is the same as what was called $\theta_X$ in \Sec{timeout bisimulations}.  It inhibits
those transitions of its argument that are blocked in the environment $X$, allowing only the
actions from $X\subseteq A$. It stops inhibiting as soon as the system performs a visible action or
takes a break, as this may trigger a change in the environment.  The operator $\theta_L^U$ preserves
those transitions that are allowed in some environment $X$ with $L\subseteq X \subseteq U$. The
letters $L$ and $U$ stand for \emph{lower} and \emph{upper} bound.
The operator $\psi_X$ places a process in the environment $X$ when
a time-out transition occurs; it is inert if any other transition occurs.
If $P {\goesto\beta}$ for $\beta\in A\cup\{\tau\}$, then a time-out transition $P \goesto\rt Q$ cannot
occur in an environment that allows $\beta$. Thus the transition $P \goesto\rt Q$ survives only when
considering an environments that blocks $\beta$, meaning $\beta \notin X\cup\{\tau\}$. Taking the
contrapositive, $\beta \in X\cup\{\tau\}$ implies $P {\ngoesto\beta}$.

The operator $\theta^U_\emptyset$ features in the forthcoming law \hyperlink{L3}{L3}, which is a convenient addition 
to my axiomatisation, although only $\psi_X$ and $\theta_X$ ($= \theta_X^X$) are necessary for completeness.

\paragraph{Stratification.}\label{sec:stratification}

Even though negative premises occur in \tab{sos CCSP}, the meaning of this transition system
specification is well-defined, for instance by the method of \emph{stratification} explained in \cite{Gr93,vG04}.
Assign inductively to each expression $E\in \IT$ an ordinal $\lambda_E$ that counts the nesting depth of recursive
specifications: if $E = \rec{x|\RS}$ then $\lambda_E$ is 1 more than the supremum of the
$\lambda_{S_y}$ for $y \in V_\RS$; otherwise $\lambda_E$ is the supremum of $\lambda_{\rec{x|\RS}}$
for all subterms $\rec{x|\RS}$ of $E$. Moreover $\kappa_E \in \IN$ is the nesting depth of $\theta_L^U$ and
$\psi_X$ operators in $E$ that remain after replacing any subterm $F$ of $E$ with $\lambda_F < \lambda_E$ by $0$.
Now the ordered pair $(\lambda_P,\kappa_P)$ constitutes a valid stratification for closed literals
$P \goesto\alpha P'$. Namely, whenever a transition $P \goesto\alpha P'$ depends on a transition $Q\goesto \beta Q'$,
in the sense that that there is a closed substitution instance $\mathfrak{r}$ of a rule from \tab{sos CCSP}
with conclusion $P \goesto\alpha P'$, and $Q\goesto \beta Q'$ occurring in its premises,
then either $\lambda_Q < \lambda_P$, or $\lambda_Q = \lambda_P$ and $\kappa_Q \mathbin\leq \kappa_P$.
Moreover, when $P \goesto\alpha P'$ depends on a negative literal $Q {\ngoesto\beta}$, then 
$\lambda_Q \mathbin= \lambda_P$ and $\kappa_Q \mathbin< \kappa_P$.

The above argument hinges on the exclusion of invalid $\CCSP$ expressions.
The invalid expression $P:= \rec{x \mid \{x = \theta^{\{a\}}_{\{a\}} (b.0 + \Rn(x))\}}$ for instance, with
$\Rn = \{(b,a)\}$, does not have a well-defined meaning, since the transition $P \goesto{b} 0$ is derivable
iff one has the premise $P {\ngoesto{b}}$:\vspace{-5pt}
\[
\frac{\displaystyle
 \frac{\displaystyle
  \frac{}{b.0\goesto{b} 0\phantom{()}}
 }{b.0 + \Rn(P) \goesto{b} 0}
 \qquad
 \frac{\displaystyle
  \frac{P {\ngoesto{b}}}{\Rn(P){\ngoesto{a}}}
 }{b.0 + \Rn(P) {\ngoesto{a}}
 }
 \qquad
 \frac{\displaystyle
  \frac{P {\ngoesto{\tau} \makebox[0pt][l]{\quad \scriptsize(OK)}}}{\Rn(P){\ngoesto{\tau}}}
 }{b.0 + \Rn(P) {\ngoesto{\tau}}
 }
}{\displaystyle
 \frac{ \theta^{\{a\}}_{\{a\}} (b.0 + \Rn(P)) \goesto{b} 0}{P \goesto{b} 0}
}
\]
However, the meaning of the valid expression
$\rec{x \mid \{x = \theta^{\{a\}}_{\{a\}} (\rec{y|\{y= b.y\}}) \spar{\emptyset} \Rn(x)\}}$, for
instance, is entirely unproblematic.

\section{Guarded recursion and finitely branching processes}\label{sec:guarded}

In many process algebraic specification approaches, only guarded recursive specifications are allowed.
\begin{definition}{guarded}
An occurrence of a variable $x$ in an expression $E$ is \emph{guarded} if $x$ occurs in
a subexpression $\alpha.F$ of $E$, with $\alpha\mathbin\in Act$.
An expression $E$ is \emph{guarded} if all free occurrences of variables in $E$ are guarded.
A recursive specification $\RS$ is \emph{manifestly guarded} if all expressions $\RS_y$ for $y\in V_\RS$ are guarded.
It is \emph{guarded} if it can be converted into a manifestly guarded recursive specification by
repeated substitution of expressions $\RS_y$ for variables $y\in V_\RS$ occurring in the expressions
$\RS_z$ for $z\in V_\RS$.
Let \emph{guarded} $\CCSP$ be the fragment of $\CCSP$ allowing only guarded recursion.
\end{definition}

\begin{definition}{finitely branching}
The set of processes \emph{reachable} from a given process $P\in\IP$ is inductively defined by
\begin{enumerate}[(i)]
\item $P$ is reachable from $P$, and
\item if $Q$ is reachable from $P$ and $Q \goesto\alpha R$ for
some $\alpha\in Act$ then $R$ is reachable from $P$.
\end{enumerate}
A process $P$ is \emph{finitely branching} if for all $Q\in\IP$ reachable from $P$ there are only
finitely many pairs $(\alpha,R)$ such that $Q \goesto\alpha R$.
Likewise, $P$ is \emph{countably branching} if there are countably many such pairs.
A process is \emph{finite} iff it is finitely branching, has finitely many reachable states, and is loop-free,
in the sense that there are no $Q_0 \goesto{\alpha_1} Q_1 \goesto{\alpha_2} \cdots \goesto{\alpha_n} Q_n$
with $n>0$ and $Q_0=Q_n$ reachable from $P$.
\end{definition}

\begin{proposition}{countably branching}
Each $\CCSP$ process is countably branching.
\end{proposition}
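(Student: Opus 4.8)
The plan is to prove that every $\CCSP$ process is countably branching by structural induction on the syntax, showing the stronger statement that for each process $P$ and each action $\alpha \in Act$, the set $\{Q \mid P \goesto\alpha Q\}$ is countable, and then that countable branching is closed under reachability. The delicate point is that the set of processes reachable from $P$ could in principle be uncountable even if each individual state has countably many outgoing transitions, so I would first argue: if each reachable state is ``locally countably branching'' (countably many outgoing transitions), then, since the reachable set is obtained from the singleton $\{P\}$ by countably iterating a countable-union operation, the reachable set is countable, and hence $P$ is countably branching in the sense of \df{finitely branching}. This reduces the proposition to establishing local countable branching for every $\CCSP$ process.

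For local countable branching I would proceed by induction on the structure of the process $P$, analysing each operator via the rules of \tab{sos CCSP}. The cases $0$ (no transitions), $\alpha.E$ (one transition), and $E+F$ (union of the transition sets of $E$ and $F$, hence countable by the induction hypothesis) are immediate. For $\Rn(E)$: each transition $\Rn(E)\goesto\beta \Rn(E')$ comes from a transition $E \goesto\alpha E'$, and since the induction hypothesis gives countably many such $E'$ for each of the countably many $\alpha$, and the side condition restricts $\beta$ to $\tau$, $\rt$, or one of the finitely many $b$ with $(\alpha,b)\in\Rn$ (using the standing requirement that each set $\{b \mid (a,b)\in\Rn\}$ is finite), the set of outgoing transitions of $\Rn(E)$ is a countable union of countable sets. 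For $E \spar{S} F$: every successor has one of the three forms $E'\spar{S}F$, $E\spar{S}F'$, or $E'\spar{S}F'$, and in each form the successor is determined by a transition of $E$ and/or of $F$, so by the induction hypothesis there are countably many. The abstraction $\tau_I(E)$, and the two environment operators $\theta_L^U(E)$ and $\psi_X(E)$, are handled the same way: each successor of the compound process corresponds to a successor of $E$ (the environment operators even discard the context in the visible/$\rt$ rules), so the bound transfers directly; the negative premises only cut transitions down, they never add any.

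The genuinely load-bearing case is recursion, $\rec{x \mid \RS}$. Here the rule is $\frac{\rec{\RS_x \mid \RS}\goesto\alpha y}{\rec{x \mid \RS}\goesto\alpha y}$, so the outgoing transitions of $\rec{x \mid \RS}$ are exactly those of $\rec{\RS_x \mid \RS}$. This is not a subterm in the structural sense, so a naive structural induction does not close. The standard fix, which I would adopt, is to use the stratification $(\lambda_E,\kappa_E)$ described in the Stratification paragraph: since $\lambda_{\rec{\RS_x\mid\RS}} < \lambda_{\rec{x\mid\RS}}$ (unfolding strictly decreases the nesting depth of recursive specifications), I can perform the induction on the ordinal $\lambda_P$ with an inner structural induction on $P$ for fixed $\lambda_P$ — or, equivalently, a well-founded induction on the lexicographic pair $(\lambda_P, \text{size of } P)$. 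Under this ordering, $\rec{x\mid\RS}$ reduces to $\rec{\RS_x\mid\RS}$ with strictly smaller $\lambda$, and every other operator reduces to proper subterms of no greater $\lambda$; the argument for each operator is exactly as above, now appealing to the induction hypothesis at the smaller ordinal-and-size pair. I expect this recursion case — in particular, making precise that $\rec{\RS_x\mid\RS}$ has strictly smaller $\lambda$-value while every non-recursion operator preserves $\lambda$ and decreases size — to be the main obstacle; once the right well-founded measure is fixed, the remaining casework is routine.
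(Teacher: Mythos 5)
Your reduction to ``each $\CCSP$ process has countably many outgoing transitions'' is the right first move (given \df{finitely branching} you do not even need your auxiliary claim that the reachable set is countable), and your casework for the non-recursive operators is fine. The gap is in the recursion case, and it is fatal to the proposed induction. The measure $(\lambda_P,\text{size of }P)$ does not decrease from $\rec{x|\RS}$ to $\rec{\RS_x|\RS}$: since $\rec{\RS_x|\RS}$ is $\RS_x$ with $\rec{y|\RS}$ substituted for each free $y\in V_\RS$, it contains $\rec{y|\RS}$ as a subterm whenever $\RS_x$ mentions a bound variable, and therefore $\lambda_{\rec{\RS_x|\RS}}\geq\lambda_{\rec{x|\RS}}$ --- typically with equality, while the term size grows. (The stratification paragraph of the paper only claims a \emph{non-strict} decrease of $(\lambda,\kappa)$ along positive premises; strictness is claimed, and needed, only across negative ones. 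That is enough to give the rules a well-defined meaning, but it is not a well-founded induction principle for positive dependencies.)

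Moreover, no term-based measure can be made to work here, because the proposition covers \emph{unguarded} recursion. For $\RS=\{z_i=\alpha_i.P_i+z_{i+1}\mid i\in\IN\}$ (used in \pr{absolute expressiveness}) the process $\rec{z_0|\RS}$ has infinitely many outgoing transitions, each obtained by unfolding the recursion a different number of times; the dependency chain $\rec{z_0|\RS}\rightarrow\rec{\RS_{z_0}|\RS}\rightarrow\rec{z_1|\RS}\rightarrow\cdots$ never terminates, so a structural induction establishing ``countably many successors'' never bottoms out. The paper sidesteps this by inducting on the \emph{depth of the derivation tree} rather than on the term: since every rule of \tab{sos CCSP} has finitely many positive premises, derivations are finite trees, and one shows by induction on $n$ that each source has only finitely many derivations of depth $n$; countable branching then falls out as a countable union of finite sets. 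That change of induction variable is the missing idea. Your structural approach can be salvaged only for the guarded fragment, via the measure $e(Q)$ used in the proof of \pr{finitely branching} --- which is exactly why that proposition is restricted to guarded recursion and yields \emph{finite} branching there.
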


\begin{proof}
I show that for each $\CCSP$ process $Q$ there are only countably many transitions $Q \goesto\alpha R$.
Each such transition must be derivable from the rules of \tab{sos CCSP}.  So it suffices to
show that for each $Q$ there are only countably many derivations of transitions $Q \goesto\alpha R$.

A derivation of a transition is a well-founded, upwardly branching tree, in which each node
models an application of one of the rules of \tab{sos CCSP}. Since each of these rules has finitely
many positive premises, such a proof tree is finitely branching, and thus finite.
Let $d(\pi)$, the \emph{depth} of $\pi$, be the length of the longest branch in a derivation $\pi$.
If $\pi$ derives a transition $Q \goesto\alpha R$, then I call $Q$ the \emph{source} of $\pi$.

It suffices to show that for each $n\in\IN$
there are only finitely many derivations of depth $n$ with a given source.
This I do by induction on $n$. 

In case $Q=f(Q_1,\dots,Q_k)$, with $f$ an $k$-ary $\CCSP$ operator, a derivation $\pi$ of depth $n$
is completely determined by the concluding rule from \tab{sos CCSP}, deriving a transition
$Q\goesto\beta R$, the subderivations of $\pi$ with source $Q_i$ for some of the $i\in\{1,\dots,k\}$,
and the transition label $\beta$. (For the purposes of this proof, \tab{sos CCSP} is understood to
have only 15 rules, even if each of them can be seen as a template, with an instance for each
choice of $\Rn$, $S$, $I$, $\RS$ etc., and for each fitting choice of a transition labels $a$,
$\alpha$ and/or $\beta$.) The choice of the concluding rule depends on $f$, and for each $f$
there are at most three choices. The subderivations of $\pi$ with source $Q_i$ have depth $< n$, so
by induction there are only finitely many. When $f$ is not a renaming operator $\Rn$, there is no further
choice for the transition label $\beta$, as it is completely determined by the premises of the rule, and
thus by the subderivations of those premises. In case $f=\Rn$, there are finitely many choices for $\beta$
when faced with a given transition label $\alpha$ contributed by the premise of the rule for renaming.
Here I use the requirement of \Sec{ccsp} that all sets ${\{b\mid (a,b)\in \Rn\}}$ are finite.
This shows there are only finitely many choices for $\pi$.

In case $Q=\rec{x|\RS}$, the last step in $\pi$ must be application of the rule for recursion, so
$\pi$ is completely determined by a subderivation $\pi'$ of a transition with source $\rec{\RS_x|\RS}$.
By induction there are only finitely many choices for $\pi'$, and hence also for $\pi$.
\end{proof}

\begin{proposition}{finitely branching}
Each $\CCSP$ process with guarded recursion is finitely branching.
\end{proposition}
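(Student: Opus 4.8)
The plan is to build directly on the proof of \pr{countably branching}. That proof establishes that a transition holds iff it has a finite \emph{derivation} --- a finitely branching, well-founded tree of instances of the rules of \tab{sos CCSP}, linked through their positive premises, all negative premises being satisfied --- and, crucially, that for every $n\in\IN$ and every process $Q$ there are only finitely many derivations of depth $n$ with source $Q$. To sharpen ``countably'' to ``finitely'' in the guarded case, I would show that for every $\CCSP$ process $Q$ with guarded recursion there is a bound $N_Q\in\IN$ such that every transition $Q\goesto\alpha R$ has a derivation of depth at most $N_Q$; combined with the quoted fact this yields only finitely many such derivations, hence only finitely many transitions out of $Q$. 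Since every state reachable from a $\CCSP$ process with guarded recursion is again such a process --- an easy induction on the derivation, as the rules of \tab{sos CCSP} build the target of a conclusion from subterms of its source, at worst wrapping a $\theta_X$ around one (in the rule for $\psi_X$), and never turn a guarded recursion into an unguarded one --- it then follows that such processes are finitely branching.

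I would obtain $N_Q$ by induction on the structure of $Q$, organised for nested recursion by the recursion-nesting measure used for the stratification in \Sec{ccsp}. The cases $Q=0$ and $Q=\alpha.E$ are immediate; for $Q$ of the form $E+F$, $E\spar S F$, $\tau_I(E)$, $\Rn(E)$, $\theta_L^U(E)$ or $\psi_X(E)$, every rule with such a conclusion has only transitions of the proper subterms $E,F$ as positive premises, so $N_Q:=\max(N_E,N_F)+1$ works (dropping $N_F$ when $F$ does not occur); the negative premises and side conditions in the rules for $\theta_L^U$ and $\psi_X$ carry no subderivations and so are irrelevant to the depth. Only the case $Q=\rec{x|\RS}$, with $\RS$ guarded, needs real work: since $Q\goesto\alpha R$ iff $\rec{\RS_x|\RS}\goesto\alpha R$, and the recursion rule adds one to the depth, it suffices to bound the depth of derivations with source $\rec{\RS_x|\RS}$.

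Here guardedness enters. For $y\in V_\RS$ write $y\succ z$ iff the variable $z$ has an occurrence in $\RS_y$ that is not guarded there. As each $\RS_y$ is a finite expression, $\succ$ is finitely branching; and as $\RS$ is guarded, there is no infinite chain $x=y_0\succ y_1\succ\cdots$, since such a chain would make it impossible to remove all unguarded variable occurrences from $\RS_x$ by repeated substitution. By König's Lemma the lengths of $\succ$-chains starting at $x$ are therefore bounded; let $r(x)$ be their supremum. I would then bound the depth of derivations with source $\rec{\RS_x|\RS}$ by a secondary induction on $r(x)$, with an inner induction on the structure of $\RS_x$: such a derivation descends through the syntax tree of $\RS_x$, and since the rule $\alpha.x\goesto\alpha x$ for action prefixing has no premises, the descent never passes a prefix; hence it terminates either at a prefix (a depth-$1$ leaf) or at an occurrence of some $z\in V_\RS$ that is unguarded in $\RS_x$ --- which in $\rec{\RS_x|\RS}$ has been replaced by $\rec{z|\RS}$. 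In the latter case $x\succ z$, so $r(z)<r(x)$, and by the secondary induction hypothesis the derivations with source $\rec{\RS_z|\RS}$, hence those with source $\rec{z|\RS}$, have bounded depth. Putting the pieces together yields the desired finite $N_{\rec{x|\RS}}$.

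The main obstacle is exactly this recursion case: making the guardedness hypothesis do the work of bounding how far a derivation can keep re-entering a recursion before it is forced past an action prefix. Two points demand care: that ``guarded'' (rather than ``manifestly guarded'') still yields a finite bound, which is precisely what the finiteness of $r(x)$ --- via König's Lemma and the finiteness of each $\RS_y$ --- provides; and that the three nested inductions (recursion-nesting depth, structure, and $r(x)$) are interleaved so as to be well-founded, in the same layered fashion as the stratification of \Sec{ccsp}, since unfolding a recursion decreases $r$ but neither the structural size nor the nesting depth. The remaining cases are routine bookkeeping.
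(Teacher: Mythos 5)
Your proof is correct and rests on the same key insight as the paper's: guardedness makes the unfolding of a term well-founded, so the counting argument of \pr{countably branching} can be sharpened to a finite bound per source. The organisation differs, though. The paper introduces a single auxiliary relation $\goesto{u}$ (relating $f(P_1,\dots,P_k)$ to each $P_i$ for every operator $f$ other than action prefixing, and $\rec{x|\RS}$ to $\rec{\RS_x|\RS}$), asserts that on processes with guarded recursion it is finitely branching and admits no infinite forward chains, and then performs one induction on the length $e(Q)$ of the longest $\goesto{u}$-chain from $Q$, re-running the derivation-counting case analysis of \pr{countably branching} so that the induction hypothesis now yields finiteness of \emph{all} derivations with the smaller sources rather than only those of smaller depth. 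You instead bound the \emph{depth} of derivations with source $Q$ and combine this with the per-depth finiteness already established, handling the recursion case by a separate well-foundedness argument for the relation $\succ$ on recursion variables via K\"onig's Lemma. The two routes buy different things: the paper's single measure $e(Q)$ absorbs your three interleaved inductions and sidesteps the bookkeeping for recursive specifications nested inside the bodies $\RS_y$, but leaves the well-foundedness of $\goesto{u}$ as an unproved assertion (``this could have been used as an alternative definition of guarded recursion''); your $\succ$/K\"onig argument is in effect a proof of that assertion, at the cost of having to check carefully that the structural, nesting-depth and $r(x)$ inductions mesh.
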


\begin{proof}
A trivial structural induction shows that if $P$ is a $\CCSP$ process with guarded recursion and $Q$
is reachable from $P$, then also $Q$ has guarded recursion.  Hence it suffices to show that for
each $\CCSP$ process $Q$ with guarded recursion there are only finitely many derivations with source $Q$.

Let $\rightsquigarrow$ be the smallest binary relation on $\IP$ such that (i) $f(P_1,\dots,P_k) \rightsquigarrow P_i$
for each $k$-ary $\CCSP$ operator $f$ except action prefixing, and each $i\in\{1,\dots,k\}$, and
(ii) $\rec{x|\RS} \rightsquigarrow \rec{\RS_x|\RS}$. This relation is finitely branching.
Moreover, on processes with guarded recursion, $\rightsquigarrow$ has no forward infinite chains
$P_0 \rightsquigarrow P_1 \rightsquigarrow \dots$.
In fact, this could have been used as an alternative definition of guarded recursion.
Let, for any process $Q$ with guarded recursion, $e(Q)$ be the length of the longest forward chain
$Q \rightsquigarrow P_1 \rightsquigarrow \dots \rightsquigarrow P_{e(Q)}$.
I show with induction on $e(Q)$ that there are only finitely many derivations with source $Q$.
In fact, this proceeds exactly as in the previous proof.
\end{proof}

\begin{propositionq}{absolute expressiveness guarded}{\cite{vG94a}}
Each finitely branching processes in an LTS can be denoted by a closed CCSP$_\rt$ expression with guarded recursion.
Here I only need the operations inaction ($0$), action prefixing ($\alpha.\_\!\_\,$) and choice ($+$),
as well as recursion $\rec{x|\RS}$.
\end{propositionq}

\begin{proof}
Let $P$ be a finitely branching process in an LTS $(\IP',Act,\rightarrow)$.
Let $$V_\RS := \{x_Q \mid Q\in\IP' \mbox{~is reachable from~} P\} \subseteq\Var.$$
For each $Q$ reachable from $P$, let $\textit{next}(Q)$ be the finite set of pairs
$(\alpha,R)\in Act \times \IP'$ such that there is a transition $Q \goesto\alpha R$.
Define the recursive specification $\RS$ as
$\{x_Q = \sum_{(\alpha,R)\in\textit{next}(Q)} \alpha.x_R \mid  x_Q \in V_\RS\}$.
Here the finite choice operator $\sum_{i\in I}\alpha_i.P_i$ can easily be expressed in terms of
inaction, action prefixing and choice. Now the CCSP$_\rt$ process $\rec{x_P|\RS}$ denotes $P$.
\end{proof}
In fact, $\rec{x_P|\RS} \bis{\,} P$, where $\bis{\,}$ denotes strong bisimilarity \cite{Mi90ccs},
formally defined in the next section.

Likewise, recursion-free $\CCSP$ processes are finite, and, up to strong bisimilarity, each finite process is
denotable by a closed recursion-free $\CCSP$ expression, using only $0$, $\alpha.\_\!\_\,$ and $+$.

\begin{propositionq}{absolute expressiveness}{\cite{vG94a}}
Each countably branching processes in an LTS can be denoted by a closed CCSP$_\rt$ expression.
Again I only need the CCSP$_\rt$ operations inaction, action prefixing, choice and recursion.
\end{propositionq}
\begin{proof}
The proof is the same as the previous one, except that $\textit{next}(Q)$ now is a countable set,
rather than a finite one, and consequently I need a countable choice operator $\sum_{i\in \INs}\alpha_i.P_i$.
The latter can be expressed in CCSP$_\rt$ with unguarded recursion by
$\sum_{i\in \INs}\alpha_i.P_i := \rec{z_o | \{z_i = \alpha_i.P_i + z_{i+1} \mid i\in\IN\}}$.
\end{proof}

\section{Congruence}\label{sec:congruence}

Given an arbitrary process algebra with a collection of operators $f$, each with an arity $n$,
and a recursion construct $\rec{x|\RS}$ as in \Sec{ccsp}, let $\IP$ and $\IT$ be the sets of [closed] valid
expressions, and let a substitution instance $E[\rho]\in \IT$ for $E\in \IT$ and $\rho:\Var\rightharpoonup \IT$
be defined as in \Sec{ccsp}.
Any semantic equivalence ${\sim} \subseteq \IP\times\IP$ extends to 
${\sim} \subseteq \IT\times\IT$ by defining $E \sim F$ iff $E[\rho] \sim F[\rho]$ for each closed
substitution $\rho:\Var \rightarrow\IP$.
It extends to substitutions $\rho,\nu:\Var\rightharpoonup \IT$
by $\rho\sim \nu$ iff $\textrm{dom}(\rho) = \textrm{dom}(\nu)$ and
$\rho(x) \sim \nu(x)$ for each $x\in \textrm{dom}(\rho)$.

\begin{definitionq}{congruence}{\cite{vG17b}}
A semantic equivalence ${\sim}$ is a \emph{lean congruence} if
$E[\rho] \sim E[\nu]$ for any expression $E\in\IT$ and any substitutions $\rho$ and $\nu$
with $\rho \sim \nu$.
It is a \emph{full congruence} if it satisfies\vspace{-1ex}%
\begin{equation}\label{comp-operators-closed}
P_i \sim Q_i ~\mbox{for all}~i=1,...,n ~~\Rightarrow~~
 f(P_1,...,P_n) \sim f(Q_1,...,Q_n)
\vspace{-4ex}
\end{equation}
\begin{equation}\label{comp-recursion-closed}
\RS_y \sim \RS_y' ~\mbox{for all}~y\mathbin\in V_\RS
~~\Rightarrow~~ \rec{x|\RS} \sim \rec{x|\RS'}
\end{equation}
for all functions $f$ of arity $n$, processes $P_i,Q_i\mathbin\in \IP$, and
recursive specifications $\RS,\RS'$ with $x \mathbin\in V_\RS \mathbin= V_{\RS'}$
and $\rec{x|\RS},\rec{x|\RS'}\in\IP$.
\end{definitionq}
Clearly, each full congruence is also a lean congruence, and each lean congruence satisfies
(\ref{comp-operators-closed}) above. Both implications are strict, as illustrated in \cite{vG17b}.

A main result of the present paper will be that strong reactive bisimilarity is a full congruence
for the process algebra $\CCSP$. To achieve it I need to establish first that
strong bisimilarity \cite{Mi90ccs}, $\bis{}$, and initials equivalence
\cite[Section~16]{vG01}, $=_\I$, are full congruences for $\CCSP$.

\subsection{Initials equivalence}

\begin{definition}{initials equivalence}
Two $\CCSP$ processes $P$ and $Q$ are \emph{initials equivalent}, denoted $P =_\I Q$, if $\I(P)=\I(Q)$.
\end{definition}

\begin{theorem}{initials congruence}
Initials equivalence is a full congruence for $\CCSP$.
\end{theorem}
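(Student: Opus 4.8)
The plan is to show the two closure properties of \df{congruence} separately, namely \eqref{comp-operators-closed} for each CCSP operator $f$ and \eqref{comp-recursion-closed} for the recursion construct. For \eqref{comp-operators-closed} the key observation is that for every operator $f$ of $\CCSP$ and every argument position $i$, whether $f(P_1,\dots,P_n){\goesto\alpha}$ holds can be determined from the \emph{sets of initials} of those arguments $P_j$ on which the relevant SOS rules actually depend---both through positive premises and, crucially, through the negative premises $x{\ngoesto\beta}$ occurring in the rules for $\theta_L^U$ and $\psi_X$. Indeed, a negative premise $x{\ngoesto\beta}$ is decided precisely by whether $\beta\in\I(x)$ (for $\beta\in A\cup\{\tau\}$). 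So I would go through the rules of \tab{sos CCSP} operator by operator and, for each, give an explicit description of $\I(f(P_1,\dots,P_n))$ as a function of $\I(P_1),\dots,\I(P_n)$ (and possibly of whether $P_j{\goesto\rt}$, which is again an ``initials''-style predicate that these rules also consult). From such descriptions, $\I(P_i)=\I(Q_i)$ for all $i$ immediately yields $\I(f(\vec P))=\I(f(\vec Q))$, i.e.\ $f(\vec P)=_\I f(\vec Q)$. I would also note that $P=_\I Q$ implies that $P{\goesto\rt}$ iff $Q{\goesto\rt}$ need \emph{not} hold in general---so if any rule's applicability genuinely depends on the presence of a $\rt$-transition of an argument, I must be careful: checking the table, the $\psi_X$ rule has a conclusion labelled $\rt$ but its \emph{applicability} (hence the initials it contributes, which are none, since $\rt\notin\I(\cdot)$) is governed only by negative premises over $A\cup\{\tau\}$, so initials of the arguments suffice after all.

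For \eqref{comp-recursion-closed} the plan is the standard one: assume $\RS_y =_\I \RS_y'$ for all $y\in V_\RS$, and show $\I(\rec{x|\RS})=\I(\rec{x|\RS'})$. By the recursion rule, $\I(\rec{x|\RS}) = \I(\rec{\RS_x|\RS})$, and now I would peel off the outermost CCSP operator of the expression $\RS_x$ and argue by structural induction on $\RS_x$ (simultaneously for all $y\in V_\RS$) that $\I(\rec{\RS_x|\RS})$ depends only on the data $\I(\rec{\RS_y|\RS})$ for the finitely many variables $y$ syntactically occurring at the ``head'' of $\RS_x$, via exactly the per-operator formulas established in the first part. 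The hypothesis $\RS_x =_\I \RS_x'$ (which, extended to open terms, means $\RS_x[\sigma] =_\I \RS_x'[\sigma]$ for all closed $\sigma$, in particular for $\sigma$ mapping $y\mapsto\rec{\RS_y|\RS}$ versus $y\mapsto\rec{\RS_y'|\RS'}$) together with these formulas then closes the induction. One technical wrinkle to handle explicitly is the presence of the auxiliary operators $\theta_L^U,\psi_X$ inside recursive specifications and the validity restriction of \Sec{ccsp}; the stratification discussion already guarantees the transition relation is well defined, so $\I$ is well defined, and I would just invoke that.

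The main obstacle, and the reason \thm{initials congruence} is not ``entirely trivial'' (as the introduction warns), is precisely the negative premises in the rules for $\theta_L^U$ and $\psi_X$: one must verify that everything a rule can ``observe'' about an argument---including these negative premises---is an invariant of $=_\I$, i.e.\ is a property of $\I(\text{argument})$ alone and not of its finer branching structure. I would spell out the formula for $\I(\theta_L^U(P))$ (roughly: $\{\tau\}\cap\I(P)$, plus $(\I(P)\cap U)$, plus --- only if $\I(P)\cap(L\cup\{\tau\})=\emptyset$ --- the whole of $\I(P)\cap A$ together with the presence of $\rt$-transitions being irrelevant to $\I$), and similarly for $\psi_X(P)$ (which is $\I(P)\cap(A\cup\{\tau\})$ since its only non-inert rule produces a $\rt$-transition, contributing nothing to $\I$), and check that each side condition $\I(P)\cap(L\cup\{\tau\})=\emptyset$ etc.\ is decided by $\I(P)$. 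Once these few formulas are in hand, both \eqref{comp-operators-closed} and the inductive step for \eqref{comp-recursion-closed} are routine bookkeeping.
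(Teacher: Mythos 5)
Your first part---the observation that for every $\CCSP$ operator $f$ the set $\I(f(P_1,\dots,P_n))$ is a function of $\I(P_1),\dots,\I(P_n)$ alone, because even the negative premises $x\ngoesto\beta$ in the rules for $\theta_L^U$ and $\psi_X$ are decided by $\I(x)$---is correct and gives property (\ref{comp-operators-closed}) cleanly. The gap is in the recursion case (\ref{comp-recursion-closed}), and it is a real one. The hypothesis $\RS_y =_\I \RS'_y$ means $\RS_y[\sigma] =_\I \RS'_y[\sigma]$ for every closed substitution $\sigma$ applied \emph{to both sides}. You propose to instantiate it with $y\mapsto\rec{\RS_y|\RS}$ on the left against $y\mapsto\rec{\RS'_y|\RS'}$ on the right; that step additionally needs $E[\sigma]=_\I E[\sigma']$ for $\sigma =_\I \sigma'$ (the lean congruence property) together with $\rec{y|\RS} =_\I \rec{y|\RS'}$ --- which is precisely the statement being proved. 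Moreover, the ``structural induction on $\RS_x$'' is not well-founded: for unguarded specifications the induction bottoms out at a variable $y\in V_\RS$, at which point one must pass to $\RS_y$, which is not structurally smaller and may loop back to $x$. So ``closes the induction'' hides exactly the hard part of the theorem.

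The paper breaks this circularity differently: it defines $\B$ as the smallest relation closed under \emph{all} the congruence clauses simultaneously (including one relating $\rec{x|\RS}$ to $\rec{x|\RS'}$ when $\RS_y=_\I\RS'_y$, and one for applying $=_\I$-related substitutions to the same body), and proves $\B\subseteq{=_\I}$ by showing $P\B Q\wedge P\goesto\alpha\Rightarrow Q\goesto\alpha$ by induction on the \emph{derivation} of the transition $P\goesto\alpha$. The recursion rule then causes no trouble, since $\rec{\RS_x|\RS}\goesto\alpha$ has a strictly smaller derivation than $\rec{x|\RS}\goesto\alpha$; and this inner induction is wrapped in an outer induction on the stratification level of \Sec{stratification}, which is what licenses concluding $P^\dagger=_\I Q^\dagger$ (and hence the transfer of the negative premises) in the $\theta_L^U$ case. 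Your per-operator $\I$-formulas would slot nicely into such a proof as the operator cases, but without an induction on transition derivations (or an equivalent fixed-point argument, which would itself have to be justified against the stratified semantics) the recursion clause does not go through.
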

\begin{proof}
In Appendix \ref{initials congruence}.
\end{proof}

\subsection{Strong bisimilarity}\label{sec:strong}

\begin{definition}{bisimulation}
A \emph{strong bisimulation} is a symmetric relation $\B$ on $\IP$, such that, whenever
$P \B Q$,\vspace{-2pt}
\begin{itemize}
\item if $P \goesto{\alpha} P'$ with $\alpha\mathbin\in Act$ then $Q\goesto{\alpha} Q'$ for some
  $Q'$ with $P'\B Q'$.\vspace{-2pt}
\end{itemize}
Two processes $P,Q\mathbin\in\IP$ are \emph{strongly bisimilar}, $P \bis{} Q$, if $P \mathrel\R Q$ for some strong bisimulation $\B$.
\end{definition}
Contrary to reactive bisimilarity, strong bisimilarity treats the time-out action $\rt$, as well as
the hidden action $\tau$, just like any visible action. In the absence of time-out actions, there is
no difference between a strong bisimulation and a time-out bisimulation, so $\bis{r}$ and $\bis{}\,$ coincide.
In general, strong bisimulation is a finer equivalence relation than strong reactive bisimilarity and initials equivalence:
$P \bis{\,} Q \Rightarrow P \bis{r} Q\linebreak[3] \Rightarrow P =_\I Q$, and both implications are strict.

\begin{lemma}{absolute expressiveness}
For each $\CCSP$ process $P$ there exists a CCSP$_\rt$ process $Q$ only built using
inaction, action prefixing, choice and recursion, such that $P \bis{\,} Q$.
\end{lemma}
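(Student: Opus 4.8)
The plan is to combine the earlier expressiveness result for finitely branching processes (\pr{absolute expressiveness guarded}) and for countably branching processes (\pr{absolute expressiveness}) with the fact (\pr{countably branching}) that every $\CCSP$ process is countably branching. By \pr{countably branching}, the given process $P$, viewed as a state in the LTS $(\IP,Act,\rightarrow)$ induced by the operational semantics of \tab{sos CCSP}, generates a countably branching sub-LTS consisting of the states reachable from $P$. I would then apply \pr{absolute expressiveness} to the process $P$ \emph{as a state in that LTS}: it yields a closed CCSP$_\rt$ expression $Q$ — built using only inaction, action prefixing, choice and (possibly unguarded) recursion — that denotes $P$. Since these four operations are among the operations of $\CCSP$, $Q$ is in particular a $\CCSP$ process of the required syntactic shape.

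The one gap to fill is that \pr{absolute expressiveness} as stated says $Q$ ``denotes'' $P$, while here I need the sharper statement $P \bis{\,} Q$. This is supplied by the remark immediately after the proof of \pr{absolute expressiveness guarded}, which notes that the construction used there (and, mutatis mutandis, in \pr{absolute expressiveness}) in fact produces an expression strongly bisimilar to the original process: the recursive specification $\RS$ with $x_R = \sum_{(\alpha,R')\in\textit{next}(R)}\alpha.x_{R'}$ has the property that $\rec{x_R|\RS} \goesto\alpha \rec{x_{R'}|\RS}$ exactly when $R \goesto\alpha R'$, so the relation $\{(R,\rec{x_R|\RS}) \mid R \text{ reachable from } P\}$ is a strong bisimulation. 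Hence $P \bis{\,} \rec{x_P|\RS} = Q$. I would make this explicit, citing the relevant remark, rather than re-deriving it.

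I do not expect a serious obstacle here; the lemma is essentially a corollary of results already in place. The only point demanding a line of care is the implicit appeal to the countable choice construction $\sum_{i\in\INs}\alpha_i.P_i := \rec{z_0 \mid \{z_i = \alpha_i.P_i + z_{i+1} \mid i\in\IN\}}$ from the proof of \pr{absolute expressiveness}: one must check this gadget is itself built only from inaction, action prefixing, choice and recursion, and that it too is strongly bisimilar to a countable choice (again via the obvious bisimulation relating $\rec{z_i|\dots}$ to the tail $\sum_{j\geq i}\alpha_j.P_j$), so that the syntactic restriction on $Q$ is genuinely respected. With that observation in hand the proof is complete.
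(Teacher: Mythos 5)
Your proposal is correct and follows exactly the route the paper takes: the paper's proof of this lemma is literally ``Immediately from Propositions~\ref{pr:countably branching} and~\ref{pr:absolute expressiveness}'', relying on the remark after \pr{absolute expressiveness guarded} that the construction yields $\rec{x_P|\RS} \bis{\,} P$. Your additional care about the countable-choice gadget being built only from the four permitted operations is a reasonable sanity check but raises no real issue.
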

\begin{proof}
Immediately from Propositions~\ref{pr:countably branching} and~\ref{pr:absolute expressiveness}.
\pagebreak[3]
\end{proof}

\begin{theorem}{bisimilarity congruence}
Strong bisimilarity is a full congruence for $\CCSP$.
\end{theorem}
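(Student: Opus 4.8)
The plan is to prove \thm{bisimilarity congruence} by establishing the two defining properties of a full congruence from \df{congruence}: the operator property~(\ref{comp-operators-closed}) and the recursion property~(\ref{comp-recursion-closed}). For the operator property, the standard strategy is: for each $n$-ary operator $f$ of $\CCSP$, given strong bisimulations $\B_i$ with $P_i \B_i Q_i$, exhibit a strong bisimulation relating $f(P_1,\dots,P_n)$ to $f(Q_1,\dots,Q_n)$. The natural candidate is the relation $\B$ containing all pairs $\bigl(f(P_1,\dots,P_n),\,f(Q_1,\dots,Q_n)\bigr)$ with $P_i \B_i Q_i$, suitably closed under the operational rules --- in fact, it is cleanest to show in one go that the relation
\[
  \B := \{(C[\vec P\,],\,C[\vec Q\,]) \mid C \text{ a context},\ P_i \bis{} Q_i \text{ for all } i\}
\]
(equivalently, the substitutive closure of $\bis{}$) is a strong bisimulation, which immediately yields~(\ref{comp-operators-closed}). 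First I would verify the transfer property of \df{bisimulation} for $\B$ by analysing, for each $\CCSP$ operator, which operational rule in \tab{sos CCSP} could have produced a transition $C[\vec P\,] \goesto{\alpha} R$, and then matching it from $C[\vec Q\,]$.

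The one genuine subtlety --- and the step I expect to be the main obstacle --- is the treatment of the operators $\theta_L^U$ and $\psi_X$ with their \emph{negative} premises $x {\ngoesto\beta}$ for all $\beta \in L\cup\{\tau\}$ (resp.\ $\beta \in X \cup\{\tau\}$). Matching a positive transition out of $\theta_L^U(P)$ requires knowing that $P$ and $Q$ have the same blocked actions, i.e.\ $\I(P) = \I(Q)$; this is exactly where \thm{initials congruence} is needed as a stepping stone, as the excerpt announces. Concretely: from $P \bis{} Q$ one gets $P =_\I Q$ (the implication $\bis{} \Rightarrow =_\I$ noted just before \lem{absolute expressiveness}), and since $=_\I$ is a full congruence one also gets $\theta_L^U(P) =_\I \theta_L^U(Q)$, $\psi_X(P) =_\I \psi_X(Q)$, and so on for every context built over bisimilar processes --- so the negative premises are preserved when passing from the $P$-side to the $Q$-side. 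Thus in the case analysis, whenever a rule with a negative premise fires for $C[\vec P\,]$, I invoke the corresponding initials equality to conclude the same negative premise holds for $C[\vec Q\,]$, and the matching transition follows; the extra bookkeeping is that a $\rt$- or $\tau$-transition through $\psi_X$ or $\theta_X$ produces a target wrapped in $\theta_X$, which stays inside $\B$ because $\B$ is closed under contexts.

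For the recursion property~(\ref{comp-recursion-closed}), the approach is the usual one: given $\RS_y \bis{} \RS_y'$ for all $y \in V_\RS$, I would show that the relation
\[
  \B := \{(\,\langle E \mid \RS\rangle,\ \langle E \mid \RS'\rangle\,) \mid E \in \IT,\ E \text{ has free variables only in } V_\RS\}
\]
(with both components closed), together with the identity on all of $\IP$, or more precisely its closure under the rule for recursion and under contexts, is a strong bisimulation up to $\bis{}$. The key lemma is that any transition $\langle E\mid\RS\rangle \goesto{\alpha} R$ can be traced --- via the recursion rule unfolding $\langle x\mid\RS\rangle$ to $\langle \RS_x\mid\RS\rangle$ --- to a transition derivable from a context filled with the $\langle y\mid\RS\rangle$, and then the hypothesis $\RS_y \bis{} \RS_y'$ together with the (already-proved) operator part of the congruence lets one match it from $\langle E\mid\RS'\rangle$. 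Because the operational rules for $\CCSP$ are in a well-behaved format --- the excerpt's stratification argument (relying on the exclusion of invalid expressions) guarantees the transition relation is well-defined and that transitions of $\langle E\mid\RS\rangle$ are derived from finitely deep proof trees that decompose through $E$'s structure --- this induction on proof-tree depth (or on the stratification rank $(\lambda,\kappa)$) goes through; the negative premises again require the initials argument but nothing new beyond what was handled for the operator case. Assembling the operator part and the recursion part gives that $\bis{}$ is a full congruence for $\CCSP$. \qed
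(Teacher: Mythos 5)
Your plan is sound and would yield a correct proof, but it takes a genuinely different route from the one in the paper. The paper does not verify the transfer property by hand at all: it obtains property (\ref{comp-operators-closed}), and even the lean congruence property for all of $\CCSP$, by citing rule-format metatheorems (the operational rules fit the \emph{ntyft/ntyxt format with recursion}), and it obtains the \emph{full} congruence---i.e.\ property (\ref{comp-recursion-closed})---for the sublanguage CCSP$_\rt$ from the \emph{tyft/tyxt} format result. The only hand-work in the paper is a reduction of (\ref{comp-recursion-closed}) for full $\CCSP$ to the CCSP$_\rt$ case: since validity of $\rec{x|\RS}$ forces every subterm $\theta_L^U(F)$ or $\psi_X(F)$ of a body $\RS_y$ to be closed, these subterms can be abstracted into fresh variables and then replaced, via \lem{absolute expressiveness}, by strongly bisimilar CCSP$_\rt$ processes; the lean congruence for $\CCSP$ and the full congruence for CCSP$_\rt$ then combine to give the result. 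By contrast, you verify the bisimulation property directly for the substitutive closure of $\bis{\,}$, breaking the circularity created by the negative premises of $\theta_L^U$ and $\psi_X$ by appeal to \thm{initials congruence}---a device the paper's proof of \emph{this} theorem does not use, since it outsources the negative premises to the ntyft/ntyxt metatheorem and eliminates them from recursion bodies altogether. Your route is more self-contained and is essentially the machinery the paper later deploys in full for $\bis{r}$ in \thm{congruence} and \thm{full congruence} (the initials-preservation property established by induction on the derivation of the relation, followed by induction on the proof of the transition), so nothing in it should fail once the case analysis is written out; what the paper's route buys is brevity and reuse of established metatheory, at the price of the somewhat delicate observation about closed $\theta_L^U$/$\psi_X$-subterms and a dependence on \lem{absolute expressiveness} that your argument avoids.
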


\begin{proof}
The structural operational rules for CCSP$_\rt$ (that is, $\CCSP$ without the operators $\theta_L^U$ and $\psi_X$)
fit the \emph{tyft/tyxt format with recursion} of \cite{vG17b}. By \cite[Theorem~3]{vG17b} this
implies that $\bis{\,}$ is a full congruence for CCSP$_\rt$. (In fact, when omitting the recursion
construct, the operational rules for CCSP$_\rt$ fit the  \emph{tyft/tyxt format} of \cite{GrV92}, and by 
the main theorem of \cite{GrV92}, $\bis{\,}$ is a congruence \emph{for the operators of} CCSP$_\rt$, that
is, it satisfies (\ref{comp-operators-closed}) in \df{congruence}. The work of \cite{vG17b} extends
this result of \cite{GrV92} with recursion.)

The structural operational rules for all of $\CCSP$ fit the \emph{ntyft/ntyxt format with recursion} of \cite{vG17b}.
By \cite[Theorem~2]{vG17b} this implies that $\bis{\,}$ is a lean congruence for $\CCSP$.
(In fact, when omitting the recursion construct, the operational rules for $\CCSP$ fit the
\emph{ntyft/ntyxt format} of \cite{Gr93}, and by the main theorem of \cite{Gr93}, $\bis{\,}$ is a
congruence for the operators of $\CCSP$. The work of \cite{vG17b} extends this result of \cite{Gr93} with recursion.)

To verify (\ref{comp-recursion-closed}) for the whole language $\CCSP$, let $\RS$ and $\RS'$ be
recursive specifications with $x \in V_\RS = V_{\RS'}$, such that $\rec{x|\RS},\rec{x|\RS'}\in\IP$
and $\RS_y \bis{\,} \RS_y'$ for all $y\mathbin\in V_\RS$.
Let $\{P_i \mid i \in I\}$ be the collection of processes of the form \plat{$\theta_L^U(Q)$} or
\plat{$\psi_X(Q)$}, for some $L$, $U$, $X$, that occur as a closed subexpression of $\RS_y$ or $\RS'_y$
for one of the $y\in V_\RS$, not counting strict subexpressions of a closed subexpression $R$ of $\RS_y$ or
$\RS'_y$ that is itself of the form \plat{$\theta_L^U(Q)$} or \plat{$\psi_X(Q)$}.
Pick a fresh variable $z_i\notin V_\RS$ for each $i\in I$, and let, for $y\in V_\RS$, \plat{$\widehat \RS_y$}
be the result of replacing each occurrence of $P_i$ in $\RS_y$ by $z_i$.
Then \plat{$\widehat \RS_y$} does not contain the operators $\theta_L^U(Q)$ or $\psi_X(Q)$.
In deriving this conclusion it is essential that $\rec{x|\RS}$ is a valid expression,
for this implies that the term $\RS_y\in \IT$, which may contain free occurrences of the variables $y\in V_\RS$,
does not have a subterm of the form \plat{$\theta_L^U(F)$} or \plat{$\psi_X(F)$} that contains 
free occurrences of these variables. Let \plat{$\widehat \RS := \{y = \widehat \RS_y \mid y \in V_\RS\}$};
it is a recursive specification in the language CCSP$_\rt$.
The recursive specification \plat{$\widehat \RS'$} is defined in the same way.

For each $i\in I$ there is, by \lem{absolute expressiveness}, a process $Q_i$ in the language CCSP$_\rt$
such that $P_i \bis{\,} Q_i$.
Now let $\rho,\eta:\{z_i \mid i \in I\} \rightarrow \IP$ be the substitutions defined by $\rho(z_i)=P_i$
and $\eta(z_i) = Q_i$ for all $i \in I$. Then $\rho \bis{\,} \eta$.
Since $\bis{\,}$ is a lean congruence for $\CCSP$, one has
$\rec{x| \widehat \RS\,}[\rho] \bis{\,} \rec{x| \widehat \RS\,}[\eta]$ and likewise
$\rec{x| \widehat \RS'\,}[\rho] \bis{\,} \rec{x| \widehat \RS'\,}[\eta]$.
For the same reason one has $\widehat\RS_y[\eta] \bis{\,} \widehat\RS_y[\rho] = \RS_y \bis{\,} \RS'_y 
 \bis{\,} \widehat\RS'_y[\rho] \bis{\,} \widehat\RS'_y[\eta]$ for all $y \in V_\RS$.
Since $\widehat\RS[\eta]$ and $\widehat\RS'[\eta]$ are recursive specifications over CCSP$_\rt$,
$\rec{x| \widehat \RS[\eta]} \bis{\,} \rec{x| \widehat \RS'[\eta]}$.
Hence\mbox{}\hfill
$\rec{x|\RS} = \rec{x|\widehat\RS[\rho]} = \rec{x|\widehat\RS}[\rho]  \bis{\,} \rec{x|\widehat\RS}[\eta]
 = \rec{x|\widehat\RS[\eta]} \bis{\,} \rec{x|\widehat\RS'[\eta]} \bis{\,} \rec{x|\widehat\RS'[\rho]} = \rec{x|\RS'}$.
\end{proof}

\noindent
The following lemmas on the relation between $\theta_X$ and the other operators of $\CCSP$ deal with
strong bisimilarity, but are needed in the congruence proof for strong reactive bisimilarity.
Their proofs can be found in Appendix~\ref{strong proofs}.

\begin{lemma}{theta-Par}
If $P{\ngoesto\tau}$, $\I(P) \cap X \subseteq S$ and $Y\mathbin=X\setminus(S\setminus \I(P))$,
then $\theta_X(P \spar{S} Q) \bis{} \theta_X(P \spar{S}\theta_Y(Q))$.
\end{lemma}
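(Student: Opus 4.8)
The plan is to exhibit a strong bisimulation containing the pair $(\theta_X(P \spar{S} Q), \theta_X(P \spar{S}\theta_Y(Q)))$. Concretely I would take
\[
\mathcal{B} := \{(\theta_X(P' \spar{S} Q'),\, \theta_X(P' \spar{S}\theta_Y(Q'))) \mid P',Q' \in \IP,\ P' {\ngoesto\tau},\ \I(P')\cap X \subseteq S,\ Y = X\setminus(S\setminus \I(P'))\}\,,
\]
closed under inverse and unioned with $\bis{}$ itself (legitimate, since $\bis{}$ is an equivalence and every transition out of a $\bis{}$-pair is matched within $\bis{}$). Throwing $\bis{}$ into the relation is what absorbs the situation in which the outer $\theta_X$ has stopped inhibiting: there $\theta_Y(Q')$ and $Q'$ will already be provably bisimilar, so transitions can be matched coarsely rather than chased through the rules.

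Before the main verification I would record the relevant set identities for $Y := X\setminus(S\setminus\I(P'))$, namely $Y\subseteq X$, $Y\setminus S = X\setminus S$, and $Y\cap S = X\cap S\cap\I(P')$; using $\I(P')\cap X\subseteq S$ this also gives $X\cap\I(P') = Y\cap S$. From these I would prove two facts. \emph{(1)} If $\I(P'\spar{S}Q')\cap(X\cup\{\tau\}) = \emptyset$ then $\I(Q')\cap(Y\cup\{\tau\}) = \emptyset$: any $\beta$ in the latter intersection would, whether $\beta\in S$ (so $\beta\in Y\cap S\subseteq\I(P')$, enabling a synchronisation) or $\beta\notin S$, give a transition $P'\spar{S}Q'\goesto\beta$ with $\beta\in X\cup\{\tau\}$. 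Hence by \lem{theta} $\theta_Y(Q')\bis{}Q'$, and then by \thm{bisimilarity congruence} $\theta_X(P'\spar{S}\theta_Y(Q'))\bis{}\theta_X(P'\spar{S}Q')$. \emph{(2)} Whenever $P'\spar{S}Q'\goesto\beta$ with $\beta\in X\cup\{\tau\}$, also $P'\spar{S}\theta_Y(Q')\goesto\beta$: such a transition cannot come from $P'$ alone (that would need $\beta=\tau$, impossible as $P'{\ngoesto\tau}$, or $\beta\in\I(P')\cap X\subseteq S$ together with $\beta\notin S$), so it comes from $Q'$ alone with $\beta\in X\setminus S = Y\setminus S$ or $\beta=\tau$, or from a synchronisation with $\beta\in X\cap\I(P') = Y\cap S$; in every case the contributing $Q'$-transition has a label in $Y\cup\{\tau\}$ and therefore survives $\theta_Y$. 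Thus $P'\spar{S}Q'$ idles in $X$ iff $P'\spar{S}\theta_Y(Q')$ does.

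For the transfer check on a pair from $\mathcal{B}$ I would split on whether $\I(P'\spar{S}Q')\cap(X\cup\{\tau\}) = \emptyset$. If it is, fact (1) makes the two components strongly bisimilar and the transfer conditions hold with targets in $\bis{}$. If it is not, then by (2) neither $P'\spar{S}Q'$ nor $P'\spar{S}\theta_Y(Q')$ idles in $X$, so the third ("idle") rule for the outer $\theta_X$ never fires on either side, and the only moves are $\tau$-moves and $a$-moves with $a\in X$. A $\tau$-move must arise from $Q'\goesto\tau Q''$ (again since $P'{\ngoesto\tau}$), is matched via $\theta_Y(Q')\goesto\tau\theta_Y(Q'')$, and lands at a pair still in $\mathcal{B}$ ($P'$ and the side conditions being unchanged). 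An $a$-move with $a\in X$ cannot come from $P'$ alone (as $\I(P')\cap X\subseteq S$), so it is a $Q'$-move with $a\in X\setminus S\subseteq Y$ or a synchronisation with $a\in X\cap\I(P')\subseteq Y$; in both cases $\theta_Y$ does not block the $Q'$-transition, the two sides reach the \emph{same} target, and reflexivity of $\bis{}$ closes it. The inverse direction is symmetric, using additionally that $\theta_Y(Q')\goesto\beta Q''$ always implies $Q'\goesto\beta Q''$.

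I expect the main obstacle to be the bookkeeping in facts (1) and (2): pinning down the set identities for $Y$ and checking, rule by rule over the operational rules for $\spar{S}$ and $\theta$, that every transition of $P'\spar{S}Q'$ labelled in $X\cup\{\tau\}$ is mirrored by $P'\spar{S}\theta_Y(Q')$ and conversely, since this is precisely what licenses the "idle"-rule dichotomy and the appeal to \lem{theta}. The remainder is routine case analysis.
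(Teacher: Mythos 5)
Your proposal is correct and follows essentially the same route as the paper's own proof: the same candidate relation (the $\theta_X$-pairs unioned with $\bis{}$ and closed under symmetry), the same set identities $Y\setminus S = X\setminus S$ and $Y\cap S = X\cap S\cap\I(P)$, and the same appeal to \lem{theta} plus congruence of $\bis{}$ to handle the situation where the composite idles in $X$. The only difference is organizational — you case-split on idling first and dispatch the entire idling case at once by observing the two sides are already bisimilar, whereas the paper re-runs the rule-by-rule analysis of $\spar{S}$ there and only invokes \lem{theta} for the one sub-case where $P$ moves alone; this is a mild streamlining, not a different argument.
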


\begin{lemma}{theta-tau}
$\theta_X(\tau_I(P)) \bis{} \theta_X(\tau_I(\theta_{X\cup I}(P)))$.
\end{lemma}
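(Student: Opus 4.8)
The plan is to give an explicit strong bisimulation, closed up under $\bis{}$ and symmetry, exactly as in the proof of \lem{theta-Par}. Fix $X,I\subseteq A$ and put
$${\B} := {\bis{}}~\cup~\{(\theta_X(\tau_I(P)),\,\theta_X(\tau_I(\theta_{X\cup I}(P)))) \mid P\in\IP\};$$
I would show that the symmetric closure $\widetilde\B$ of $\B$ is a strong bisimulation. Since $(\theta_X(\tau_I(P)),\theta_X(\tau_I(\theta_{X\cup I}(P))))\in\B$, this establishes the lemma, and the pairs already lying in $\bis{}$ are matched trivially.

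Two bookkeeping facts will drive the argument. First, side conditions translate across the renaming: for every process $Q$,
$$\tau_I(Q){\ngoesto\beta}~\mbox{for all}~\beta\in X\cup\{\tau\} \quad\Longleftrightarrow\quad \I(Q)\cap (X\cup I\cup\{\tau\})=\emptyset,$$
because $\tau_I(Q)\goesto\tau$ exactly when $Q\goesto\tau$ or $Q\goesto a$ for some $a\in I$, while for $b\in X\setminus I$ one has $\tau_I(Q)\goesto b$ iff $Q\goesto b$, and $\tau_I(Q)$ can never perform an action in $X\cap I$. Reading off the rules for $\theta$ one also gets $\I(\theta_{X\cup I}(Q))\cap(X\cup I\cup\{\tau\})=\emptyset$ iff $\I(Q)\cap(X\cup I\cup\{\tau\})=\emptyset$, and when this holds \lem{theta} gives $\theta_{X\cup I}(Q)\bis{}Q$. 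Second, every transition of $\theta_{X\cup I}(y)$ is justified by one of its three rules: a $\tau$-move is inherited from a $\tau$-move of $y$ and keeps the $\theta_{X\cup I}$-wrapper, whereas a visible or $\rt$-move is inherited either because its label lies in $X\cup I$ or because $y$ is idling under $X\cup I$, and in both of these cases the $\theta_{X\cup I}$-wrapper is discarded.

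With these observations the case analysis is direct rule-chasing. Take a pair $(R,T)=(\theta_X(\tau_I(P)),\theta_X(\tau_I(\theta_{X\cup I}(P))))$. A move $R\goesto\tau R'$ comes from $\tau_I(P)\goesto\tau$, hence either from $P\goesto\tau P'$ — then $T\goesto\tau\theta_X(\tau_I(\theta_{X\cup I}(P')))$ and $R'=\theta_X(\tau_I(P'))$, a pair again in $\B$ — or from $P\goesto a P'$ with $a\in I$ — then $\theta_{X\cup I}(P)\goesto a P'$, so $T\goesto\tau\theta_X(\tau_I(P'))=R'$, an identical pair, hence in $\bis{}$. A move $R\goesto\alpha R'$ with $\alpha\in A\cup\{\rt\}$ uses the second rule of $\theta_X$ (if $\alpha\in X$) or the third (otherwise, which by the translation fact forces $\I(P)\cap(X\cup I\cup\{\tau\})=\emptyset$); in either case $R'=\tau_I(P')$ with $P\goesto\alpha P'$ and $\alpha\notin I$, and then $\theta_{X\cup I}(P)\goesto\alpha P'$ (via the second or third $\theta$-rule), $\tau_I(\theta_{X\cup I}(P))\goesto\alpha\tau_I(P')$, and — checking the same side condition for the outer $\theta_X$ — $T\goesto\alpha\tau_I(P')=R'$. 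Moves of $T$ are matched symmetrically; the only extra point is that a $\tau$-move of $\tau_I(\theta_{X\cup I}(P))$ routed through a move $\theta_{X\cup I}(P)\goesto a W$ with $a\in I$ must, by the classification of $\theta_{X\cup I}$-transitions, have $W=P'$ with $P\goesto a P'$, so the $R$-side reaches the identical target.

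I expect the only real obstacle to be the disciplined handling of the negative side conditions on the third $\theta$-rule: one must pass repeatedly between ``$\tau_I(Q){\ngoesto\beta}$ for all $\beta\in X\cup\{\tau\}$'' and ``$\I(Q)\cap(X\cup I\cup\{\tau\})=\emptyset$'', both for $Q=P$ and for $Q=\theta_{X\cup I}(P)$, and decide in each subcase whether the inner $\theta_{X\cup I}$-wrapper is retained (after a $\tau$-step) or stripped (after a visible action or a time-out). Once this is organised, every subcase is a one-line application of the rules of \tab{sos CCSP}, with \lem{theta} supplying the only non-syntactic ingredient.
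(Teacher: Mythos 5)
Your proposal is correct and follows essentially the same route as the paper's own proof: the same relation (the paper seeds it with $\textit{Id}$ where you use $\bis{}$, an immaterial difference), the same translation of the negative premises between ``$\tau_I(Q){\ngoesto\beta}$ for all $\beta\in X\cup\{\tau\}$'' and ``$\I(Q)\cap(X\cup I\cup\{\tau\})=\emptyset$'', and the same case split on whether the step originates from a $\tau$, an action in $I$, or an action outside $I$. The one subtlety you flag in the symmetric direction (a $\tau$-move routed through $\theta_{X\cup I}(P)\goesto a P'$ with $a\in I$ strips the wrapper and yields identical targets) is exactly the point the paper's proof also handles explicitly.
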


\begin{lemma}{theta-R}
$\theta_X(\Rn(P)) \bis{} \theta_X(\Rn(\theta_{\Rn^{-1}(X)}(P)))$.
\end{lemma}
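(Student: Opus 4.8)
The plan is to exhibit a strong bisimulation, in the same style as the proofs of \lem{theta-Par} and \lem{theta-tau}. Abbreviate $Y := \Rn^{-1}(X) = \{a\in A \mid \exists b\in X.~(a,b)\in\Rn\}$ and put
$${\B} := {\bis{}} \cup \{(\theta_X(\Rn(P)),\, \theta_X(\Rn(\theta_{Y}(P)))) \mid P \in \IP\},$$
writing $\widetilde\B$ for its symmetric closure; it suffices to show that $\widetilde\B$ is a strong bisimulation. The one structural fact that makes the choice $Y=\Rn^{-1}(X)$ work is
$$\Rn(P)\ngoesto\beta \text{ for all }\beta\in X\cup\{\tau\} \quad\Longleftrightarrow\quad \I(P)\cap(Y\cup\{\tau\})=\emptyset.$$
This follows directly from the rule for $\Rn$: $\Rn(P)\goesto\tau$ iff $P\goesto\tau$, while for $b\in X$ one has $\Rn(P)\goesto b$ iff $P\goesto a$ for some $a$ with $(a,b)\in\Rn$, and any such $a$ belongs to $Y$ exactly because $b\in X$. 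By \lem{theta} the right-hand side moreover yields $\theta_Y(P)\bis{}P$, hence $\theta_X(\Rn(\theta_Y(P)))\bis{}\theta_X(\Rn(P))$ since $\bis{}$ is a congruence (\thm{bisimilarity congruence}). Applying the same observation with $\theta_Y(P)$ in place of $P$ shows that $\Rn(\theta_Y(P))\ngoesto\beta$ for all $\beta\in X\cup\{\tau\}$ likewise forces $\I(P)\cap(Y\cup\{\tau\})=\emptyset$, since $\theta_Y$ leaves the transitions of $P$ on labels in $Y\cup\{\tau\}$ untouched.

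Now take $R\mathrel{\widetilde\B}T$ with $R\goesto\alpha R'$, $\alpha\in A\cup\{\tau,\rt\}$. The case $R\bis{}T$ is immediate. Suppose $R=\theta_X(\Rn(P))$ and $T=\theta_X(\Rn(\theta_Y(P)))$. If $\alpha=\tau$, then $P\goesto\tau P'$ and $R'=\theta_X(\Rn(P'))$, and from $\theta_Y(P)\goesto\tau\theta_Y(P')$ one gets $T\goesto\tau\theta_X(\Rn(\theta_Y(P')))=:T'$ with $R'\mathrel\B T'$. If $\alpha=b\in X$, then $R'=\Rn(P')$ where $P\goesto a P'$ with $(a,b)\in\Rn$, so $a\in Y$, whence $\theta_Y(P)\goesto a P'$, $\Rn(\theta_Y(P))\goesto b\Rn(P')$ and $T\goesto b\Rn(P')=R'$, matched by the identity. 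Otherwise $\alpha\notin X\cup\{\tau\}$, so the transition $R\goesto\alpha R'$ can only come from the third rule for $\theta_X$; then $\Rn(P)\ngoesto\beta$ for all $\beta\in X\cup\{\tau\}$, the structural fact gives $R\bis{}T$, and the transition is matched via $\bis{}$.

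The symmetric situation, $R=\theta_X(\Rn(\theta_Y(P)))$ and $T=\theta_X(\Rn(P))$, is analogous. For $\alpha=\tau$ one uses that $\theta_Y(P)\goesto\tau\theta_Y(P')$ forces $P\goesto\tau P'$. For $\alpha=b\in X$, the transition $R\goesto b R'$ arises from some $\theta_Y(P)\goesto a P''$ with $(a,b)\in\Rn$ and $R'=\Rn(P'')$; whichever rule for $\theta_Y$ produced this transition, one has $P\goesto a P''$, whence $\Rn(P)\goesto b\Rn(P'')$ and $T\goesto b\Rn(P'')=R'$, again matched by the identity. For $\alpha\notin X\cup\{\tau\}$ the transition must use the third rule for $\theta_X$, so $\Rn(\theta_Y(P))\ngoesto\beta$ for all $\beta\in X\cup\{\tau\}$; by the remark at the end of the first paragraph this gives $\I(P)\cap(Y\cup\{\tau\})=\emptyset$, hence $R\bis{}T$, and the transition is matched through $\bis{}$. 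I expect the only place requiring genuine care is the structural fact above — in particular the observation that a witness $a$ with $(a,b)\in\Rn$ for a transition of $\Rn(P)$ on some $b\in X$ necessarily lies in $\Rn^{-1}(X)$, which is precisely why $\Rn^{-1}(X)$, and not a larger or smaller set, is the bound that may be pushed inside — together with keeping track, in the reverse direction, of which operational rule produced each transition of $\theta_Y(P)$.
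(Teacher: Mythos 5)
Your proof is correct and takes essentially the same route as the paper's: the identical candidate bisimulation (the paper seeds it with $\textit{Id}$ where you use $\bis{}$) built on the same key observation that $\Rn(P)$ is blocked in environment $X$ exactly when $P$ is blocked in $\Rn^{-1}(X)$. The only divergence is in the case $\alpha\notin X\cup\{\tau\}$, where the paper still matches by identity by pushing the transition through the third operational rule for $\theta_{\Rn^{-1}(X)}$, whereas you short-circuit via \lem{theta} and the congruence property of $\bis{}$; both arguments are sound.
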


\section[Strong reactive bisimilarity is a full congruence for CCSP]
        {Strong reactive bisimilarity is a full congruence for $\CCSP$}\label{sec:full congruence}

The forthcoming proofs showing that $\bis{r}$ is a full congruence for $\CCSP$ follow the lines of
Milner \cite{Mi90ccs}, but are more complicated due to the nature of reactive bisimilarity.
A crucial tool is Milner's notion of \emph{bisimilarity up-to}.
The above three lemmas play an essential r\^ole. Even if we would not be interested in the
operators $\theta_L^U$ and $\psi_X$, the proof needs to take the operator $\theta_X$ ($=
\theta_X^X$) along in order to deal with the other operators. This is a consequence of the
occurrence of $\theta_X$ in \df{time-out bisimulation}.

\begin{definition}{upto}
Given a relation ${\sim} \subseteq \IP \times \IP$, a \emph{strong time-out bisimulation up to $\sim$}
is a symmetric relation ${\B} \subseteq \IP \times \IP$, such that, for $P\B Q$,
\begin{itemize}\itemsep 0pt \parsep 0pt
\item if $P \goesto{\alpha} P'$ with $\alpha\mathbin\in A\cup\{\tau\}$, then $\exists Q'$ such that
  $Q\goesto{\alpha} Q'$ and $P' \sim \B \sim Q'$,
\item if $\I(P)\cap (X\cup\{\tau\})=\emptyset$ and $P \goesto{\rt} P'$, then $\exists Q'$ with $Q\goesto{\rt} Q'$
  and $\theta_{X}(P') \sim \B \sim \theta_X(Q')$.
\end{itemize}
Here ${\sim \B \sim}  := \{(R,T) \mid \exists R',T'.~ R \sim R' \B T' \sim T\}$.
\end{definition}

\begin{proposition}{upto}
If $P \B Q$ for some strong time-out bisimulation $\B$ up to $\bis{}$\,, then $P \bis{r} Q$.
\end{proposition}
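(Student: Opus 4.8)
The plan is to verify that the relational composition $\mathcal{C} := {\bis{}}\,;\,{\B}\,;\,{\bis{}}$, i.e.\ the relation $\{(R,T) \mid \exists R',T'.~ R \bis{} R' \mathrel\B T' \bis{} T\}$, is itself a strong time-out bisimulation. Since $\bis{}$ is reflexive, $\B \subseteq \mathcal{C}$, so in particular $(P,Q) \in \mathcal{C}$, and then $P \bis{r} Q$ follows from \pr{time-out bisimulation}. Symmetry of $\mathcal{C}$ is immediate from symmetry of $\B$ and of $\bis{}$.

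For the first clause of \df{time-out bisimulation}, I would take $(R,T)\in\mathcal{C}$, witnessed by $R \bis{} R' \mathrel\B T' \bis{} T$, together with a transition $R \goesto{\alpha} R_1$ where $\alpha \in A\cup\{\tau\}$, and chase this transition through the three relations in turn: matching along $R\bis{} R'$ yields $R'\goesto{\alpha} R_1'$ with $R_1\bis{} R_1'$; the up-to-$\bis{}$ clause for $\B$ yields $T'\goesto{\alpha} T_1'$ with $R_1' \mathrel{\bis{}\,\B\,\bis{}} T_1'$; and matching along $T'\bis{} T$ yields $T\goesto{\alpha} T_1$ with $T_1'\bis{} T_1$. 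Composing these and collapsing consecutive $\bis{}$-steps by transitivity gives $R_1 \mathrel{\bis{}\,\B\,\bis{}} T_1$, i.e.\ $(R_1,T_1)\in\mathcal{C}$. The second clause is the same chase applied to a transition $R\goesto{\rt} R_1$ under the side condition $\I(R)\cap(X\cup\{\tau\})=\emptyset$. Here I additionally use that $\bis{}$ preserves initials, so $\I(R')=\I(R)$ and the side condition transfers from $R$ to $R'$, licensing the use of the $\rt$-clause of $\B$; and I use that $\bis{}$ is a congruence for $\theta_X = \theta_X^X$ (by \thm{bisimilarity congruence}), so that $R_1\bis{} R_1'$ and $T_1'\bis{} T_1$ lift to $\theta_X(R_1)\bis{}\theta_X(R_1')$ and $\theta_X(T_1')\bis{}\theta_X(T_1)$ before being composed with $\theta_X(R_1')\mathrel{\bis{}\,\B\,\bis{}}\theta_X(T_1')$, again yielding a member of $\mathcal{C}$.

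I expect no real obstacle: the argument is routine once $\mathcal{C}$ is chosen correctly. The only load-bearing ingredients are the two facts just mentioned — preservation of initials by $\bis{}$ and congruence of $\bis{}$ for $\theta_X$ — the latter being precisely the reason the environment operator $\theta_X$ has to be carried along in \Sec{lemmas} and \thm{bisimilarity congruence}, even though it is not ultimately needed for the axiomatisation.
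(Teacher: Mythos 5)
Your proof is correct and follows essentially the same route as the paper's: both establish that ${\bis{}}\,\B\,{\bis{}}$ is a strong time-out bisimulation, using reflexivity of $\bis{}$ to get $\B$ inside it, transitivity of $\bis{}$ to collapse the composed chain, preservation of initials to transfer the side condition $\I(R)\cap(X\cup\{\tau\})=\emptyset$ across $\bis{}$, and the congruence of $\bis{}$ for $\theta_X$ in the time-out clause. No gaps.
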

\begin{proof}
Using the reflexivity of $\bis{}$ it suffices to show that $\bis{}\, \B \,\bis{}$ is a strong time-out bisimulation.
Clearly this relation is symmetric, and that it satisfies the first clause of \df{time-out bisimulation} is
straightforward, using transitivity of $\bis{}$\,.
So assume $P \bis{} R \B T \bis{} Q$, $\I(P)\cap (X\cup\{\tau\})=\emptyset$ and $P \goesto{\rt} P'$.
Then $\I(R)\cap (X\cup\{\tau\})=\emptyset$. By the transfer property of $\bis{}$\,,
there exists an $R'$ with $R\goesto{\rt} R'$ and $P' \bis{} R'$. Since $\bis{}$ is a congruence for
$\theta_X$ it follows that $\theta_{X}(P') \bis{} \theta_X(R')$. By \df{upto}, 
there exists a $T'$ with $T\goesto{\rt} T'$ and $\theta_{X}(R') \bis{}\, \B \,\bis{} \theta_X(T')$.
Again using the transfer property of $\bis{}$\,,
there exists a $Q'$ with $Q\goesto{\rt} Q'$ and $\theta_{X}(T') \bis{} \theta_{X}(Q')$.
Thus, $\theta_{X}(P') \bis{}\, \B \,\bis{} \theta_X(Q')$.
\end{proof}

\begin{theorem}{congruence}
Strong reactive bisimilarity is a lean congruence for $\CCSP$.
In other words, if $\rho,\nu\!:\!\Var\mathbin\rightharpoonup\IT$ are substitutions with
$\rho \rbis{}{r} \nu$, then $E[\rho] \rbis{}{r} E[\nu]$ for any expression $E\in\IT$.
\end{theorem}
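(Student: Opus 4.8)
The plan is to reduce the statement to closed substitutions and then witness $E[\rho]\bis{r}E[\nu]$ by a single relation, exploiting the \emph{time-out bisimulation up to $\bis{}$} technology of \df{upto} and \pr{upto}. First, it suffices to prove: for every $E\in\IT$ and all \emph{closed} $\rho,\nu:\Var\to\IP$ with $\rho(x)\bis{r}\nu(x)$ for all $x$, one has $E[\rho]\bis{r}E[\nu]$. Indeed, for arbitrary $\rho,\nu:\Var\rightharpoonup\IT$ with $\rho\bis{r}\nu$ and any closing $\sigma:\Var\to\IP$, the substitutions $x\mapsto\rho(x)[\sigma]$ and $x\mapsto\nu(x)[\sigma]$ are closed and pointwise $\bis{r}$-related (this is exactly what $\rho(x)\bis{r}\nu(x)$ in $\IT$ means), while $E[\rho][\sigma]=E[\,x\mapsto\rho(x)[\sigma]\,]$; letting $\sigma$ range over all closing substitutions, the closed case delivers $E[\rho]\bis{r}E[\nu]$. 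Now set
\[\B:=\{(E[\rho],E[\nu])\mid E\in\IT,\ \rho,\nu:\Var\to\IP,\ \rho(x)\bis{r}\nu(x)\ \text{for all}\ x\}.\]
This relation is symmetric, it contains $\bis{r}$ (take $E=x$), and — after renaming substituted variables apart — it is closed under the operators of $\CCSP$: plugging $\B$-related processes into a common context yields a $\B$-related pair. Hence $\bis{}\,\B\,\bis{}$ is closed under $\CCSP$ operators, combining the congruence of $\bis{}$ for $\CCSP$ (\thm{bisimilarity congruence}) with the closure of $\B$. By \pr{time-out bisimulation} and \pr{upto} it then suffices to show that $\B$ is a strong time-out bisimulation up to $\bis{}$; this yields $E[\rho]\bis{r}E[\nu]$ for each $(E[\rho],E[\nu])\in\B$.

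To verify the two clauses of \df{upto} for a pair $(E[\rho],E[\nu])\in\B$ I argue by induction on the height of a derivation (from \tab{sos CCSP}) of a transition $E[\rho]\goesto\alpha P'$; the converse direction follows by symmetry. If $E=x$, the transition is one of $\rho(x)$, and since $\bis{r}$ is itself a strong time-out bisimulation it is matched, with the resulting pair, respectively its $\theta_X$-image, lying in $\bis{r}\subseteq\B$. The case $E=0$ is vacuous. If $E$ has any other operator at its head, the last rule applied is a rule for that operator, with strictly shorter sub-derivations of transitions of the arguments $E_i[\rho]$; the induction hypothesis matches those, and I reassemble a matching transition of $E[\nu]$, using \thm{initials congruence} ($=_\I$ is a lean congruence, so $\I(E_i[\rho])=\I(E_i[\nu])$) to transfer the negative premises in the rules for $\theta_L^U$ and $\psi_X$, and the closure of $\bis{}\,\B\,\bis{}$ under contexts to keep the pair of targets there. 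Finally, for $E=\rec{x|\RS}$ the last rule is the recursion rule, so the derivation contains a strictly shorter one of $\rec{\RS_x|\RS}[\rho]\goesto\alpha P'$; since $\rec{\RS_x|\RS}\in\IT$ and $\rec{\RS_x|\RS}[\rho]=\rec{\RS[\rho]_x|\RS[\rho]}$, the induction hypothesis applied to the \emph{expression} $\rec{\RS_x|\RS}$ supplies a match $\rec{\RS_x|\RS}[\nu]\goesto\alpha Q'$, which one step of the recursion rule turns into $\rec{x|\RS}[\nu]\goesto\alpha Q'$. This is where the usual circularity — one would otherwise need $\rec{y|\RS[\rho]}\bis{r}\rec{y|\RS[\nu]}$ before being entitled to use it — is sidestepped: the induction is on derivation height, which unfolding strictly decreases, rather than on the structure of $E$.

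I expect the genuine obstacle to be the transfer of $\rt$-transitions in the cases of parallel composition, abstraction and renaming, since there \df{upto} demands a relation between $\theta_X$ \emph{applied to} the targets, and the lemmas of \Sec{lemmas} are needed precisely to move that $\theta_X$ past the operator. For $E=E_1\spar{S}E_2$ and a transition $E_1[\rho]\spar{S}E_2[\rho]\goesto{\rt}P_1'\spar{S}E_2[\rho]$ under the hypothesis $\I(E_1[\rho]\spar{S}E_2[\rho])\cap(X\cup\{\tau\})=\emptyset$, one only obtains $\I(E_1[\rho])\cap X\subseteq S\setminus\I(E_2[\rho])$, not that it is empty; to apply the induction hypothesis to $E_1[\rho]$ one must pass to the environment $Y:=X\setminus(S\setminus\I(E_2[\rho]))$, for which $\I(E_1[\rho])\cap(Y\cup\{\tau\})=\emptyset$, and reglue with $\theta_X(P_1'\spar{S}E_2[\rho])\bis{}\theta_X(\theta_Y(P_1')\spar{S}E_2[\rho])$ — which is \lem{theta-Par} applied after commuting the arguments of $\spar{S}$ — while \thm{initials congruence} guarantees $\I(E_2[\rho])=\I(E_2[\nu])$ so the same $Y$ also works on the $\nu$-side. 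The abstraction and renaming cases run the same way but more smoothly, using \lem{theta-tau} (with environment $X\cup I$) and \lem{theta-R} (with environment $\Rn^{-1}(X)$), and \lem{theta}, in place of \lem{theta-Par}. This confirms the remark preceding \df{upto}: even a proof aimed solely at the common $\CCSP$ operators must carry the operator $\theta_X$ ($=\theta_X^X$) along, because it appears in \df{time-out bisimulation}.
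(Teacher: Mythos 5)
Your proposal is correct and follows essentially the same route as the paper's own proof: reduce to closed substitutions, take the context closure of $\bis{r}$, and show by induction on the transition derivation that it is a strong time-out bisimulation up to $\bis{}$, transferring negative premises via initials equivalence and handling the $\rt$-cases of $\spar{S}$, $\tau_I$ and $\Rn$ with \lem{theta-Par}, \lem{theta-tau} and \lem{theta-R} exactly as the paper does. The only cosmetic difference is that you present the closure as a single set of context instances $(E[\rho],E[\nu])$ rather than as an inductively defined relation.
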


\begin{proof}
It suffices to prove this theorem for the special case that $\rho,\nu\!:\!\Var\rightarrow\IP$ are
closed substitutions; the general case then follows by means of composition of substitutions.
Let $\mathord{\B}\subseteq \IP \times \IP$ be the smallest relation satisfying
\begin{itemize} 
\item if $P \bis{r} Q$, then $P \B Q$,
\item if $P \B Q$ and $\alpha\in A\cup\{\tau,\rt\}$, then $\alpha.P \B \alpha.Q$,
\item if $P_1 \B Q_1$ and $P_2 \B Q_2$, then $P_1 + P_2 \B Q_1 + Q_2$,
\item if $P \B Q$, $L\subseteq U \subseteq A$ and $X \subseteq A$,
  then $\theta_L^U(P) \B \theta_L^U(Q)$ and $\psi_X(P) \B \psi_X(Q)$,
\item if $P_1 \B Q_1$, $P_2 \B Q_2$ and $S\subseteq A$, then $P_1\spar{S}P_2 \B Q_1\spar{S}Q_2$,
\item if $P \B Q$ and $I\subseteq A$, then $\tau_I(P) \B \tau_I(Q)$,
\item if $P \B Q$ and $\Rn \subseteq A \times A$, then $\Rn(P) \B \Rn(Q)$,
\item if $\RS$ is a recursive specification with $z \in V_\RS$,
      and $\rho,\nu:\Var\setminus V_\RS \rightarrow\IP$ are substitutions satisfying $\rho(x) \B \nu(x)$ for all
      $x\in\Var\setminus V_\RS$, then $\rec{z|\RS}[\rho] \B \rec{z|\RS}[\nu]$.
\end{itemize}
A straightforward induction on the derivation of $P \B Q$, employing \thm{initials congruence}, yields that\\[5pt]
\mbox{}\hfill if $P \B Q$ then $\I(P)=\I(Q)$, i.e., $P =_\I Q$.\hfill(\verb#@#)\\[5pt]
(For the last case, the assumption that $\rho(x) \B \nu(x)$ for all $x\in\Var\setminus V_\RS$
implies $\rho =_\I \nu$ by induction. Since $=_\I$ is a lean congruence by \thm{initials congruence}, this implies 
$\rec{z|\RS}[\rho] =_\I \rec{z|\RS}[\nu]$.)
\\
A trivial structural induction on $E\in\IT$ shows that\\[5pt]\mbox{}\hfill if $\rho,\nu:\Var \rightarrow\IP$ satisfy
$\rho(x) \mathrel\B \nu(x)$ for all $x\in\Var$, then $E[\rho] \mathbin\B E[\nu]$.\hfill ({\color{red}*})\\[5pt]
For $\RS$ a recursive specification and $\rho:\Var\setminus V_\RS \rightarrow\IP$, let $\rho_\RS: \Var\rightarrow\IP$ be the
closed substitution given by $\rho_\RS(x):= \rec{x|\RS}[\rho]$ if $x\in V_\RS$ and $\rho_\RS(x):=\rho(x)$ otherwise.
Then $\rec{E|\RS}[\rho] = E[\rho_\RS]$ for all $E\in\IT$.\linebreak[3]
Hence an application of ({\color{red}*}) with $\rho_\RS$ and $\nu_\RS$ yields that under the
conditions of the last clause for $\B$ above one even has 
$\rec{E|\RS}[\rho] \mathrel\B \rec{E|\RS}[\nu]$ for all expressions $E\in\IT$.\hfill (\$)

\noindent
It suffices to show that $\B$ is a strong time-out bisimulation up to $\bis{}$\,, because then
$P \rbis{}r Q \Leftrightarrow P \mathrel\B Q$, and ({\color{red}*}) implies that $\B$ is a lean congruence.
Because $\bis{r}$ is symmetric, so is $\B$. So I need to show that
$\B$ satisfies the two clauses of \df{upto}.
\begin{itemize}
\item Let $P \B Q$ and $P \goesto{\alpha} P'$ with $\alpha\mathbin\in A \cup \{\tau\}$.
I have to find a $Q'$ with $Q \goesto{\alpha} Q'$ and $P' \bis{}\, \B \,\bis{} Q'\!$.
In fact, I show that even $P' \B Q'$.
This I will do by structural induction on the proof $\pi$ of $P \goesto{\alpha} P'$ from the rules of
\tab{sos CCSP}. I make a case distinction based on the derivation of $P \B Q$.

\begin{itemize}
\item
Let $P\bis{r} Q$. Using that the relation $\bis{r}$ is a strong time-out bisimulation, there must
be a process $Q'$ such that $Q \goesto{\alpha}Q'$ and $P' \bis{r} Q'$. Hence $P'\BR Q'$.

\item Let $P = \beta.P^\dagger$ and $Q = \beta.Q^\dagger$ with $\beta\in A\cup\{\tau,\rt\}$ and $P^\dagger \B Q^\dagger$.
  Then $\alpha=\beta$ and $P'=P^\dagger$. Take $Q':=Q^\dagger$. Then $Q \goesto{\alpha}Q'$ and $P'\BR Q'$.

\item Let $P = P_1 + P_2$ and $Q = Q_1 + Q_2$ with $P_1 \B Q_1$ and $P_2 \B Q_2$.
  I consider the first rule from \tab{sos CCSP} that could have been responsible for the derivation
  of $P \goesto{\alpha} P'$; the other proceeds symmetrically.
  So suppose that $P_1 \goesto{\alpha} P'$. Then by induction $Q_1 \goesto{\alpha} Q'$
  for some $Q'$ with $P'\BR Q'$. By the same rule from \tab{sos CCSP}, $Q \goesto{\alpha}Q'$.

\item Let $P=\theta_L^U(P^\dagger)$, $Q=\theta_L^U(Q^\dagger)$ and $P^\dagger \B Q^\dagger$.
  First suppose $\alpha\in A$.
  Since $\theta_L^U(P^\dagger)\goesto{\alpha} P'$, it must be that $P^\dagger\goesto{\alpha} P'$ and
  either $\alpha\in U$ or $P^\dagger {\ngoesto\beta}$ for all $\beta\in L\cup\{\tau\}$.
  In the latter case, (\verb#@#) yields $\I(P^\dagger) \mathbin= \I(Q^\dagger)$,
  and thus $Q^\dagger {\ngoesto\beta}$ for all $\beta\mathbin\in L\mathop\cup\{\tau\}$.\linebreak[3]
  By induction there exists a $Q'$ such that $Q^\dagger\goesto{\alpha} Q'$ and $P'\BR Q'$.
  So, in both cases, $Q=\theta_L^U(Q^\dagger)\goesto{\alpha} Q'$.

  Now suppose $\alpha=\tau$.
  Since $\theta_L^U(P^\dagger)\goesto{\tau} P'$ it must be that $P'$ has the form $\theta_L^U(P^\ddagger)$,
  and $P^\dagger\goesto{\tau} P^\ddagger$.
  By induction, there exists a $Q^\ddagger$ such that $Q^\dagger\goesto{\tau} Q^\ddagger$ and $P^\ddagger \BR Q^\ddagger$.
  Now $Q = \theta_L^U(Q^\dagger) \goesto{\tau} \theta_L^U(Q^\ddagger) =: Q'$ and $P' \BR Q'$.

\item Let $P=\psi_X(P^\dagger)$, $Q=\psi_X(Q^\dagger)$ and $P^\dagger \B Q^\dagger$.
  Since $\psi_X(P^\dagger)\goesto{\alpha} P'$, one has $P^\dagger\goesto{\alpha} P'$.
  By induction there exists a $Q'$ with $Q^\dagger\goesto{\alpha} Q'$ and $P'\BR Q'$.
  So $Q=\psi_X(Q^\dagger)\goesto{\alpha} Q'$.

\item Let $P = P_1 \spar{S} P_2$ and $Q = Q_1 \spar{S} Q_2$ with $P_1 \B Q_1$ and $P_2 \B Q_2$.
  I consider the three rules from \tab{sos CCSP} that could have been responsible for the derivation
  of $P \goesto{\alpha} P'$.

First suppose that $\alpha \notin S$, $P_1 \goesto{\alpha} P'_1$ and $P' =  P'_1 \spar{S} P_2$.
By induction, $Q_1 \goesto{\alpha} Q'_1$ for some $Q'_1$ with $P'_1 \BR Q'_1$.
Consequently, $Q_1 \spar{S} Q_2 \goesto{\alpha} Q'_1 \spar{S} Q_2$, and $P' =  P'_1 \spar{S} P_2 \BR  Q'_1 \spar{S} Q_2$.

Next suppose that $\alpha \in S$, $P_1 \goesto{\alpha} P'_1$, $P_2 \goesto{\alpha} P'_2$ and $P' =  P'_1 \spar{S} P'_2$.
By induction, $Q_1 \goesto{\alpha} Q'_1$ for some $Q'_1$ with $P'_1 \BR Q'_1$, and
$Q_2 \goesto{\alpha} Q'_2$ for some $Q'_2$ with $P'_2 \BR Q'_2$.
Consequently, $Q_1 \spar{S} Q_2 \goesto{\alpha} Q'_1 \spar{S} Q'_2$, and $P' =  P'_1 \spar{S} P'_2 \BR  Q'_1 \spar{S} Q'_2$.

The remaining case proceeds symmetrically to the first.

\item Let $P = \tau_I(P^\dagger)$ and $Q= \tau_I(Q^\dagger)$ with $I\subseteq A$ and $P^\dagger \B Q^\dagger$.
Then $P^\dagger\goesto{\beta} P^\ddagger$ for some $P^\ddagger$ with $P'= \tau_I(P^\ddagger)$, and
either $\beta = \alpha \notin I$, or $\beta\in I$ and $\alpha=\tau$.
By induction, $Q^\dagger \goesto{\beta} Q^\ddagger$ for some $Q^\ddagger$ with $P^\ddagger \BR Q^\ddagger$.
Consequently, $Q = \tau_I(Q^\dagger)\goesto{\alpha}\tau_I(Q^\ddagger) =: Q'$ and $P' \BR Q'$.

\item Let $P = \Rn(P^\dagger)$ and $Q= \Rn(Q^\dagger)$ with $\Rn \subseteq A \times A$ and $P^\dagger \B Q^\dagger$.
Then $P^\dagger\goesto{\beta} P^\ddagger$ for some $P^\ddagger$ with $P'= \Rn(P^\ddagger)$, and
either $(\beta,\alpha) \in \Rn$ or $\beta=\alpha = \tau$.
By induction, $Q^\dagger \goesto{\beta} Q^\ddagger$ for some $Q^\ddagger$ with $P^\ddagger \BR Q^\ddagger$.
Consequently, $Q = \Rn(Q^\dagger)\goesto{\alpha}\Rn(Q^\ddagger) =: Q'$ and $P' \BR Q'$.

\item
Let $P\mathbin=\rec{z|\RS}[\rho]\mathbin=\rec{z|\RS[\rho]}$ and
$Q\mathbin=\rec{z|\RS}[\nu]\mathbin=\rec{z|\RS[\nu]}$ where
$\RS$ is a recursive specification with $z \in V_\RS$,
and $\rho,\nu:\Var\setminus V_\RS \rightarrow\IP$ satisfy $\rho(x) \B \nu(x)$ for all $x\in\Var\setminus V_\RS$.
By \tab{sos CCSP} the transition $\rec{\RS_z|\RS[\rho]} \goesto{\alpha} P'$ is provable by means of a strict subproof of the
proof $\pi$ of $\rec{z|\RS}[\rho]\mathbin{\goesto{\alpha}}P'$.
By (\$) above one has $\rec{\RS_z|\RS[\rho]} \B \rec{\RS_z|\RS[\nu]}$.
So by induction there is a $Q'$ such that $\rec{\RS_z|\RS[\nu]} \goesto{a} Q'$ and $P' \BR Q'$.
By \tab{sos CCSP}, $Q = \rec{z|\RS[\nu]} \goesto{\alpha} Q'$.
\end{itemize}

\item Let $P \B Q$, $\I(P)\cap (X\cup\{\tau\})=\emptyset$ and $P \goesto{\rt} P'$.
I have to find a $Q'$ such that $Q\goesto{\rt} Q'$ and $\theta_{X}(P') \bis{}\, \B \,\bis{} \theta_X(Q')$.
This I will do by structural induction on the proof $\pi$ of $P \goesto{\rt} P'$ from the rules of
\tab{sos CCSP}. I make a case distinction based on the derivation of $P \B Q$.

\begin{itemize}
\item
Let $P\bis{r} Q$. Using that the relation $\bis{r}$ is a strong time-out bisimulation, there must
be a process $Q'$ such that $Q \goesto{\rt}Q'$ and $\theta_{X}(P') \bis{r} \theta_X(Q')$.
Thus $\theta_{X}(P') \bis{}\, \BRB \theta_X(Q')$.

\item Let $P = \beta.P^\dagger$ and $Q = \beta.Q^\dagger$ with $\beta\in A\cup\{\tau,\rt\}$ and $P^\dagger \B Q^\dagger$.
  Then $\beta =\rt$ and $P'=P^\dagger$. Take $Q':=Q^\dagger$. Then $Q \goesto{\rt}Q'$ and $P'\B Q'$.
  Thus $\theta_{X}(P') \bis{}\, \BRB \theta_X(Q')$.

\item Let $P = P_1 + P_2$ and $Q = Q_1 + Q_2$ with $P_1 \B Q_1$ and $P_2 \B Q_2$.
  I consider the first rule from \tab{sos CCSP} that could have been responsible for the derivation
  of $P \goesto{\rt} P'$; the other proceeds symmetrically.
  So suppose that $P_1 \goesto{\rt} P'$. Since $\I(P_1)\cap (X\cup\{\tau\})\subseteq \I(P)\cap (X\cup\{\tau\})=\emptyset$,
  by induction $Q_1 \goesto{\rt} Q'$ for some $Q'$ with $P' \bis{}\, \BRB Q'$. Hence $Q \goesto{\rt}Q'$.

\item Let $P=\theta_L^U(P^\dagger)$, $Q=\theta_L^U(Q^\dagger)$ and $P^\dagger \B Q^\dagger$.
  Since $\theta_L^U(P^\dagger)\goesto{\rt} P'$ it must be that $P^\dagger\goesto{\rt} P'$ and
  $P^\dagger {\ngoesto\beta}$ for all $\beta\in L\cup\{\tau\}$.
  Consequently, $P\goesto{\alpha} P^\ddagger$ iff $P^\dagger\goesto{\alpha} P^\ddagger$, for all
  $\alpha\in A \cup\{\rt\}$. So $\I(P^\dagger)\cap (X\cup\{\tau\}) = \emptyset$.
  By induction, $Q^\dagger\goesto{\rt} Q'$ for some $Q'$ with $\theta_{X}(P') \bis{\,}\, \BRB \theta_X(Q')$.
  By (\verb#@#), $Q^\dagger {\ngoesto\beta}$ for all $\beta\in L\cup\{\tau\}$.
  Hence $Q = \theta_L^U(Q^\dagger)\goesto{\rt} Q'$.

\item Let $P=\psi_Y(P^\dagger)$, $Q=\psi_Y(Q^\dagger)$ and $P^\dagger \B Q^\dagger$.
  Since $\psi_Y(P^\dagger)\goesto{\rt} P'$ one has $P^\dagger\goesto{\rt} P^\ddagger$ for
  some $P^\ddagger$ with $P'=\theta_Y(P^\ddagger)$, and $P^\dagger {\ngoesto\beta}$ for all $\beta\in Y\cup\{\tau\}$,
  i.e., $\I(P^\dagger) \cap (Y\cup\{\tau\}) = \emptyset$.
  By induction, $Q^\dagger\goesto{\rt} Q^\ddagger$ for a $Q^\dagger$ with
  $\theta_{Y}(P^\ddagger) \bis{\,}\, \BRB \theta_{Y}(Q^\ddagger)$.
  By (\verb#@#), $\I(P^\dagger)=\I(Q^\dagger)$, so $Q^\dagger {\ngoesto\beta}$ for all $\beta\in Y\cup\{\tau\}$.
  Let $Q' := \theta_Y(Q^\ddagger)$, so that $Q = \psi_Y(Q^\dagger)\goesto{\rt} \theta_Y(Q^\ddagger) =Q'$.
  From $\theta_{Y}(P^\ddagger) \bis{}\, P'' \BR Q'' \mathrel{\color{blue}\bis{\,}} \theta_{Y}(Q^\ddagger)$ one obtains
  $$\theta_X(\theta_{Y}(P^\ddagger))
  \bis{\,} \theta_X(P'') \BR \theta_X(Q'') \mathrel{\color{blue}\bis{\,}} \theta_X(\theta_Y(Q^\ddagger)),$$ 
  using that $\mathrel{\color{blue}\bis{}}$ is a congruence for $\theta_X$ ($= \theta_X^X$).
  Thus $\theta_X(P') \bis{}\, \BRB \theta_X(Q')$.

\item Let $P = P_1 \spar{S} P_2$ and $Q = Q_1 \spar{S} Q_2$ with $P_1 \B Q_1$ and $P_2 \B Q_2$.
  I consider the last rule from \tab{sos CCSP} that could have been responsible for the derivation of $P \goesto{\rt} P'$.
  The other proceeds symmetrically. So suppose that $P_2 \goesto{\rt} P'_2$ and $P' =  P_1 \spar{S} P'_2$.
  Let $Y:=X\setminus(S\setminus \I(P_1))=(X\setminus S) \cup (X \cap S \cap \I(P_1))$.
  Then $\I(P_2)\cap (Y\cup\{\tau\})=\emptyset$.
  By induction, $Q_2 \goesto{\rt} Q'_2$ for some $Q'_2$ with $\theta_{Y}(P'_2) \bis{}\, \BRB \theta_Y(Q'_2)$.
  Let $Q':= Q_1 \spar{S} Q'_2$, so that $Q = Q_1 \spar{S} Q_2 \goesto{\rt} Q_1 \spar{S} Q'_2 = Q'$.
  From $\theta_{Y}(P'_2) \bis{}\, P''_2 \BR Q''_2 \,\mathrel{\color{blue}\bis{}} \theta_Y(Q'_2)$ and $P_1 \B Q_1$ one obtains
  $P_1 \spar{S} \theta_{Y}(P'_2) \bis{} P_1 \spar{S}P''_2 \BR Q_1 \spar{S}Q''_2 \mathrel{\color{blue}\bis{}} Q_1 \spar{S} \theta_{Y}(Q'_2)$,
  using that $\mathrel{\color{blue}\bis{}}$ is a congruence for $\spar{S}$.
  Therefore, since $\color{blue}\bis{}$ is also a congruence for $\theta_X$ ($= \theta_X^X$),
  $$\theta_X(P_1 \spar{S} \theta_{Y}(P'_2)) \bis{\,} \theta_X(P_1 \spar{S}P''_2)
  \BR \theta_X(Q_1 \spar{S}Q''_2) \mathrel{\color{blue}\bis{}}\, \theta_X(Q_1 \spar{S} \theta_{Y}(Q'_2)).$$
  Since $\I(P_1\spar{S}P_2)\cap (X\cup\{\tau\})=\emptyset$,
  one has $P_1 {\ngoesto\tau}$ and $\I(P_1)\cap X \subseteq S$.
  Moreover, since $P_1 \B Q_1$, one has $\I(P_1) = \I(Q_1)$.
  Hence $\theta_X(P') = \theta_X(P_1 \spar{S} P'_2)\bis{} \theta_X(P_1 \spar{S} \theta_Y(P'_2)) 
  \bis{}\, \BRB \theta_X(Q_1 \spar{S} \theta_Y(Q'_2)) \bis{} \theta_X(Q_1 \spar{S} Q'_2)  =
  \theta_X(Q')$ by \lem{theta-Par}.

\item Let $P = \tau_I(P^\dagger)$ and $Q= \tau_I(Q^\dagger)$ with $I\subseteq A$ and $P^\dagger \B Q^\dagger$.
Then $P^\dagger\goesto{\rt} P^\ddagger$ for some $P^\ddagger$ with $P'= \tau_I(P^\ddagger)$.
Moreover, $\I(P^\dagger) \cap (X \cup I \cup\{\tau\}) = \emptyset$.
By induction, $Q^\dagger \goesto{\rt} Q^\ddagger$ for some $Q^\ddagger$ with
$\theta_{X\cup I}(P^\ddagger) \bis{}\, \BRB \theta_{X\cup I}(Q^\ddagger)$.
Let $Q' := \tau_I(Q^\ddagger)$, so that $Q = \tau_I(Q^\dagger)\goesto{\rt}\tau_I(Q^\ddagger) = Q'$.
  From $\theta_{X\cup I}(P^\ddagger) \bis{}\, P'' \BR Q'' \mathrel{\color{blue}\bis{\,}} \theta_{X\cup I}(Q^\ddagger)$ one obtains
  $$\theta_X(\tau_I(P^\ddagger)) \bis{\,} \theta_X(\tau_I(\theta_{X\cup I}(P^\ddagger))) \bis{} \theta_X(\tau_I(P'')) \BR
  \theta_X(\tau_I(Q'')) \mathrel{\color{blue}\bis{\,}} \dots
  \bis{} \theta_X(\tau_I(Q^\ddagger)),$$ 
  using \lem{theta-tau} and that $\mathrel{\color{blue}\bis{}}$ is a congruence for $\tau_I$ and $\theta_X$.
  Thus $\theta_X(P') \bis{}\, \BRB \theta_X(Q')$.

\item Let $P = \Rn(P^\dagger)$ and $Q= \Rn(Q^\dagger)$ with $\Rn\subseteq A \times A$ and $P^\dagger \B Q^\dagger$.
Then $P^\dagger\goesto{\rt} P^\ddagger$ for some $P^\ddagger$ with $P'= \Rn(P^\ddagger)$.
Moreover, $\I(P^\dagger) \cap (\Rn^{-1}(X) \cup\{\tau\}) = \emptyset$.
By induction, $Q^\dagger \goesto{\rt} Q^\ddagger$ for some $Q^\ddagger$ with
$\theta_{\Rn^{-1}(X)}(P^\ddagger) \bis{}\, \BRB \theta_{\Rn^{-1}(X)}(Q^\ddagger)$.
Let $Q' := \Rn(Q^\ddagger)$, so that $Q \mathbin= \Rn(Q^\dagger)\goesto{\rt}\Rn(Q^\ddagger) \mathbin= Q'$.
  From $\theta_{\Rn^{\mbox{\scriptsize-}1}(X)}(P^\ddagger) \bis{} P'' \BR Q'' \mathrel{\color{blue}\bis{\,}} \theta_{\Rn^{\mbox{\scriptsize-}1}(X)}(Q^\ddagger)$ one obtains
  $$\theta_X(\Rn(P^\ddagger)) \bis{} \theta_X(\Rn(\theta_{\Rn^{\mbox{\scriptsize-}1}(X)}(P^\ddagger))) \bis{} \theta_X(\Rn(P'')) \BR
  \theta_X(\Rn(Q'')) \mathrel{\color{blue}\bis{}} \dots \bis{} \theta_X(\Rn(Q^\ddagger)),$$ 
  using \lem{theta-R} and that $\mathrel{\color{blue}\bis{}}$ is a congruence for $\Rn$ and $\theta_X$.
  Thus $\theta_X(P') \bis{}\, \BRB\theta_X(Q')$.

\item
Let $P\mathbin=\rec{z|\RS}[\rho]\mathbin=\rec{z|\RS[\rho]}$ and
$Q\mathbin=\rec{z|\RS}[\nu]\mathbin=\rec{z|\RS[\nu]}$ where
$\RS$ is a recursive specification with $z \in V_\RS$,
and $\rho,\nu:\Var\setminus V_\RS \rightarrow\IP$ satisfy $\rho(x) \B \nu(x)$ for all $x\in\Var\setminus V_\RS$.
By \tab{sos CCSP} the transition $\rec{\RS_z|\RS[\rho]} \goesto{\rt} P'$ is provable by means of a strict subproof of the
proof $\pi$ of $\rec{z|\RS}[\rho]\mathbin{\goesto{\rt}}P'$.
The rule for recursion in \tab{sos CCSP} also implies that
$\I(\rec{z|\RS}[\rho]) = \I(\rec{\RS_z|\RS}[\rho])$.
Therefore, $\I(\rec{\RS_z|\RS}[\rho])\cap (X\cup\{\tau\}) = \emptyset$.
By (\$) above one has $\rec{\RS_z|\RS[\rho]} \B \rec{\RS_z|\RS[\nu]}$.\linebreak[3]
So by induction there is a $Q'$ such that $\rec{\RS_z|\RS[\nu]} \goesto{\rt} Q'$ and $\theta_X(P') \bis{}\, \BRB \theta_X(Q')$.
By \tab{sos CCSP}, $Q = \rec{z|\RS[\nu]} \goesto{\rt} Q'$.
\qed
\end{itemize}
\end{itemize}
\end{proof}

\begin{proposition}{upto r}
If $P \B Q$ for some strong time-out bisimulation $\B$ up to $\bis{r}$\,, then $P \bis{r} Q$.
\end{proposition}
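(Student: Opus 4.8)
The plan is to mimic the proof of \pr{upto}, replacing the role played there by strong bisimilarity with that of strong reactive bisimilarity, and exploiting the fact (noted right after the proof of \pr{time-out bisimulation}) that $\bis{r}$ is itself a strong time-out bisimulation. Concretely, given a strong time-out bisimulation $\B$ up to $\bis{r}$, I would show that the composed relation $\bis{r}\, \B \,\bis{r}$, defined as in \df{upto} by $\{(R,T) \mid \exists R',T'.\ R \bis{r} R' \B T' \bis{r} T\}$, is an ordinary strong time-out bisimulation. Since $\bis{r}$ is reflexive, $P \B Q$ then gives $(P,Q) \in {\bis{r}}\, \B \,{\bis{r}}$, so \pr{time-out bisimulation} delivers $P \bis{r} Q$. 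Symmetry of the composed relation is immediate from the symmetry of $\bis{r}$ and of $\B$.

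For the first clause of \df{time-out bisimulation}: given $P \bis{r} R \B T \bis{r} Q$ and $P \goesto{\alpha} P'$ with $\alpha\in A\cup\{\tau\}$, I would apply the first clause of \df{time-out bisimulation} to the strong time-out bisimulation $\bis{r}$ to get $R \goesto{\alpha} R'$ with $P' \bis{r} R'$, then \df{upto} to get $T \goesto{\alpha} T'$ with $R' \bis{r} \B \bis{r} T'$, and then the first clause for $\bis{r}$ once more to get $Q \goesto{\alpha} Q'$ with $T' \bis{r} Q'$. Transitivity of $\bis{r}$ then yields $P' \bis{r}\, \B \,\bis{r} Q'$.

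For the second clause: given $P \bis{r} R \B T \bis{r} Q$, $\I(P)\cap(X\cup\{\tau\}) = \emptyset$ and $P \goesto{\rt} P'$, I would first note that $P \bis{r} R$ implies $\I(P) = \I(R)$, and that the first clause of \df{upto}, applied to $R \B T$ in both directions (using symmetry of $\B$), gives $\I(R) = \I(T)$; hence each of $P$, $R$, $T$ has its set of initials disjoint from $X\cup\{\tau\}$. Applying the second clause of \df{time-out bisimulation} to $\bis{r}$ at environment $X$ yields $R \goesto{\rt} R'$ with $\theta_X(P') \bis{r} \theta_X(R')$; \df{upto} yields $T \goesto{\rt} T'$ with $\theta_X(R') \bis{r} \B \bis{r} \theta_X(T')$; and that clause for $\bis{r}$ once more yields $Q \goesto{\rt} Q'$ with $\theta_X(T') \bis{r} \theta_X(Q')$. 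Transitivity of $\bis{r}$ then gives $\theta_X(P') \bis{r}\, \B \,\bis{r} \theta_X(Q')$, as required.

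I expect no serious obstacle. The single point needing a little care is propagating the side condition $\I(\cdot)\cap(X\cup\{\tau\}) = \emptyset$ from $P$ to $R$ and to $T$ in the second clause, since that is precisely what licenses each successive appeal to the second clause of \df{time-out bisimulation}. Note that, in contrast to \pr{upto}, no appeal to congruence of the relating equivalence for $\theta_X$ is needed here, because the operator $\theta_X$ already appears inside the relevant clause of \df{time-out bisimulation}.
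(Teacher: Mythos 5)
Your proof is correct and takes essentially the same route as the paper, whose proof of this proposition simply instructs the reader to repeat the proof of \pr{upto} with $\bis{r}$ in place of $\bis{\,}$: both arguments show that ${\bis{r}}\,\B\,{\bis{r}}$ is a strong time-out bisimulation, propagating the side condition $\I(\cdot)\cap(X\cup\{\tau\})=\emptyset$ along the chain $P \bis{r} R \B T \bis{r} Q$ exactly as you do. Your closing observation is also apt: where the paper cites the congruence of $\bis{r}$ for $\theta_X$, your appeal to the second clause of \df{time-out bisimulation} (which already builds $\theta_X$ into the transfer property for $\rt$-transitions) makes that congruence unnecessary.
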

\begin{proof}
Exactly as the proof of \pr{upto}, now using that $\bis{r}$ is a congruence for $\theta_X$.
\end{proof}

\begin{theorem}{full congruence}
Strong reactive bisimilarity is a full congruence for $\CCSP$.
\end{theorem}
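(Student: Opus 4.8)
By \thm{congruence}, strong reactive bisimilarity is a lean congruence for $\CCSP$, and hence---as noted just after \df{congruence}---it satisfies (\ref{comp-operators-closed}). It thus remains to establish (\ref{comp-recursion-closed}): given recursive specifications $\RS,\RS'$ with $x\in V_\RS=V_{\RS'}$, $\rec{x|\RS},\rec{x|\RS'}\in\IP$, and $\RS_y\bis{r}\RS'_y$ for all $y\in V_\RS$, I must show $\rec{x|\RS}\bis{r}\rec{x|\RS'}$. The plan is to mimic the congruence proof of \thm{congruence}, but to allow the two recursive specifications occurring in the recursion case to have \emph{different} bodies; every case of that argument except recursion then carries over verbatim.

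Concretely, I would let $\B\subseteq\IP\times\IP$ be the smallest relation such that ${\bis{r}}\subseteq\B$; $\B$ is closed under every operator of $\CCSP$, i.e.\ $f(P_1,\dots,P_n)\B f(Q_1,\dots,Q_n)$ whenever $P_i\B Q_i$ for $i=1,\dots,n$; and $\rec{z|\mathcal{T}}[\rho]\B\rec{z|\mathcal{T}'}[\nu]$ whenever $\mathcal{T},\mathcal{T}'$ are recursive specifications with $V_{\mathcal{T}}=V_{\mathcal{T}'}$ and $\mathcal{T}_y\bis{r}\mathcal{T}'_y$ for all $y\in V_{\mathcal{T}}$, $z\in V_{\mathcal{T}}$, and $\rho,\nu:\Var\setminus V_{\mathcal{T}}\to\IP$ satisfy $\rho(w)\B\nu(w)$ for all $w$. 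With $\mathcal{T}=\mathcal{T}'$ (so $\mathcal{T}_y\bis{r}\mathcal{T}'_y$ holds by reflexivity) this subsumes the recursion clause of the relation built in \thm{congruence}; with $\mathcal{T}=\RS$, $\mathcal{T}'=\RS'$ and $\rho,\nu$ empty it gives $\rec{x|\RS}\B\rec{x|\RS'}$. The aim is then to prove $\B$ a strong time-out bisimulation up to $\bis{r}$, so that \pr{upto r} yields ${\B}\subseteq{\bis{r}}$, and hence $\rec{x|\RS}\bis{r}\rec{x|\RS'}$. The relation $\B$ is visibly symmetric, and, exactly as in \thm{congruence}, one first records by induction on the derivation of $P\B Q$ that $P\B Q$ implies $P=_\I Q$---the recursion clause here needing that $=_\I$ is a \emph{full} congruence (\thm{initials congruence}), to get $\rec{z|\mathcal{T}}[\rho]=_\I\rec{z|\mathcal{T}'}[\nu]$ from $\mathcal{T}_y=_\I\mathcal{T}'_y$ and $\rho=_\I\nu$---and by a trivial induction on $E$ that $E[\rho]\B E[\nu]$ whenever $\rho(w)\B\nu(w)$ for all $w\in\Var$.

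Next I would verify the two clauses of \df{upto} by induction on the derivation $\pi$ of the transition of $P$, with a case analysis on the rule generating $P\B Q$. The cases $P\bis{r}Q$ and $P=f(P_1,\dots,P_n)$, $Q=f(Q_1,\dots,Q_n)$ run exactly as in \thm{congruence} (the $=_\I$-fact handling the negative premises of $\theta_L^U$ and $\psi_X$, and ${\bis{r}}\subseteq\B$ used where the transfer property of $\bis{r}$ is invoked). For the recursion clause, $P=\rec{z|\mathcal{T}}[\rho]$ and $Q=\rec{z|\mathcal{T}'}[\nu]$; writing $\rho_{\mathcal{T}}$ for the closed substitution with $\rho_{\mathcal{T}}(w)=\rec{w|\mathcal{T}}[\rho]$ for $w\in V_{\mathcal{T}}$ and $\rho_{\mathcal{T}}(w)=\rho(w)$ otherwise (so $\rec{E|\mathcal{T}}[\rho]=E[\rho_{\mathcal{T}}]$, and similarly $\nu_{\mathcal{T}'}$), the recursion clause of $\B$ gives $\rho_{\mathcal{T}}(w)\B\nu_{\mathcal{T}'}(w)$ for all $w$, so $\mathcal{T}_z[\rho_{\mathcal{T}}]\B\mathcal{T}_z[\nu_{\mathcal{T}'}]$ by the substitution-closure fact. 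Any transition $P\goesto\alpha P'$ with $\alpha\in A\cup\{\tau\}$, or $P\goesto\rt P'$ with $\I(P)\cap(X\cup\{\tau\})=\emptyset$ for some $X\subseteq A$, is by the recursion rule of \tab{sos CCSP} derived by a \emph{strict} subderivation of $\mathcal{T}_z[\rho_{\mathcal{T}}]\goesto\alpha P'$; the induction hypothesis, applied to that subderivation and to $\mathcal{T}_z[\rho_{\mathcal{T}}]\B\mathcal{T}_z[\nu_{\mathcal{T}'}]$, supplies $P''$ with $\mathcal{T}_z[\nu_{\mathcal{T}'}]\goesto\alpha P''$ and $P'\bis{r}\, \B \,\bis{r}P''$ (respectively $\theta_X(P')\bis{r}\, \B \,\bis{r}\theta_X(P'')$). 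Since $\mathcal{T}_z\bis{r}\mathcal{T}'_z$ and $\nu_{\mathcal{T}'}$ is a closed substitution, $\mathcal{T}_z[\nu_{\mathcal{T}'}]\bis{r}\mathcal{T}'_z[\nu_{\mathcal{T}'}]$; as $\bis{r}$ is itself a strong time-out bisimulation (and, by \thm{congruence}, a congruence for $\theta_X$), there is a $Q'$ with $\mathcal{T}'_z[\nu_{\mathcal{T}'}]\goesto\alpha Q'$ and $P''\bis{r}Q'$ (resp.\ $\theta_X(P'')\bis{r}\theta_X(Q')$), the idling side-condition on the right following from $P=_\I Q$ and $\mathcal{T}_z\bis{r}\mathcal{T}'_z$. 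Finally $\rec{z|\mathcal{T}'}[\nu]=\rec{\mathcal{T}'_z|\mathcal{T}'}[\nu]\goesto\alpha Q'$ by the recursion rule, and $P'\bis{r}\, \B \,\bis{r}Q'$ (resp.\ $\theta_X(P')\bis{r}\, \B \,\bis{r}\theta_X(Q')$) by transitivity of $\bis{r}$.

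I expect the recursion clause to be the main obstacle: a transition of $\rec{z|\mathcal{T}}[\rho]$ must be pushed through the unfolding $\mathcal{T}_z[\rho_{\mathcal{T}}]$---whose derivation is strictly shorter, which is what makes the induction well-founded---then across the change of substitution from $\rho_{\mathcal{T}}$ to $\nu_{\mathcal{T}'}$, which is absorbed by the lean-congruence machinery (the substitution-closure of $\B$), and then across the change of body from $\mathcal{T}_z$ to $\mathcal{T}'_z$, which is absorbed by the hypothesis $\mathcal{T}_z\bis{r}\mathcal{T}'_z$ together with the fact that $\bis{r}$ is a strong time-out bisimulation; one must check that the residuals land back in $\bis{r}\, \B \,\bis{r}$ and, in the $\rt$-clause, that the idling side-condition is preserved on the simulating side---the latter being the one point where the \emph{full} (and not merely lean) congruence of $=_\I$ from \thm{initials congruence} is genuinely needed. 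The remaining work is bookkeeping identical to that in the proof of \thm{congruence}.
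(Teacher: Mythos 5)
Your proof is correct and follows essentially the same route as the paper: extend the relation $\B$ from \thm{congruence} with a recursion clause covering $\bis{r}$-related bodies, show it is a strong time-out bisimulation up to $\bis{r}$ via \pr{upto r}, and handle the recursion case by unfolding to a strict subderivation, crossing the substitution change with the substitution-closure fact, and crossing the body change using that $\bis{r}$ is itself a strong time-out bisimulation and a congruence for $\theta_X$. The only (harmless, slightly more general) deviation is that you merge the paper's two separate recursion clauses---same body with $\B$-related substitutions, and closed recursions with $\bis{r}$-related bodies---into a single clause allowing both to vary at once.
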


\begin{proof}
Let $\mathord{\B}\subseteq \IP \times \IP$ be the smallest relation satisfying
\begin{itemize} 
\item if $\RS$ and $\RS'$ are recursive specifications with $x \in V_\RS = V_{\RS'}$ and
      $\rec{x|\RS},\rec{x|\RS'}\in\IP$, such that $\RS_y \bis{\,} \RS'_y$ for all $y\in V_\RS$,
      then $\rec{x|\RS} \B \rec{x|\RS'}$,
\end{itemize}
in addition to the eight or nine clauses listed in the proof of \thm{congruence}.
Again, a straightforward induction on the derivation of $P \B Q$, employing \thm{initials congruence}, yields that\\[5pt]
\mbox{}\hfill if $P \B Q$ then $\I(P)=\I(Q)$, i.e., $P =_\I Q$.\hfill(\verb#@#)\\[5pt]
(For the new case, the assumption that $\RS_y \bis{\,} \RS'_y$ for all $y\in V_\RS$
implies  $\RS_y =_\I \RS'_y$ for all $y\in V_\RS$. So by \thm{initials congruence},  $\rec{x|\RS} =_\I \rec{x|\RS'}$.)
A trivial structural induction on $E\in\IT$ shows again that\\[5pt]\mbox{}\hfill if $\rho,\nu:\Var \rightarrow\IP$ satisfy
$\rho(x) \mathrel\B \nu(x)$ for all $x\in\Var$, then $E[\rho] \mathbin\B E[\nu]$.\hfill ({\color{red}*})\\[5pt]
This again implies that in the last clause for $\B$ one even has
$\rec{E|\RS}[\rho] \B \rec{E|\RS'}[\nu]$ for all $E\in\IT$,\hfill (\$)\\
and likewise, in the new clause, $\rec{E|\RS} \B \rec{E|\RS'}$ for all $E\in\IT$ with variables from $V_\RS$.\hfill (\#)

It suffices to show that $\B$ is a strong time-out bisimulation up to $\bis{r}$\,, because then
${\B}\subseteq{\bis{r}}\,$ with \pr{upto r}, and the new clause for $\B$ implies (\ref{comp-recursion-closed}).
By construction $\B$ is symmetric.
\begin{itemize}
\item Let $P \B Q$ and $P \goesto{\alpha} P'$ with $\alpha\mathbin\in A \cup \{\tau\}$.
I have to find a $Q'$ with $Q \goesto{\alpha} Q'$ and $P' \bis{r}\, \B \,\bis{r} Q'\!$.
In fact, I show that even $P' \B \,\bis{r} Q'$.
This I will do by structural induction on the proof $\pi$ of $P \goesto{\alpha} P'$ from the rules of
\tab{sos CCSP}. I make a case distinction based on the derivation of $P \B Q$.

\begin{itemize}
\item
Let $P=\rec{x|\RS}\in\IP$ and $Q=\rec{x|\RS'}\in \IP$ where
$\RS$ and $\RS'$ are recursive specifications with $x \in V_\RS = V_{\RS'}$,
such that $\RS_y \bis{\,} \RS'_y$ for all $y\in V_\RS$, meaning that
for all $y\in W$ and $\sigma:V_\RS\rightarrow \IP$ one has
$\RS_y[\sigma] \bis{r} \RS'_y[\sigma]$.
\\
By \tab{sos CCSP} the transition $\rec{\RS_x|\RS} \goesto{\alpha} P'$ is provable by means of a strict subproof of $\pi$.
By (\#) above one has $\rec{\RS_x|\RS} \B \rec{\RS_x|\RS'}$.
So by induction there is an $R'\in\IP$ such that $\rec{\RS_x|\RS'} \goesto{\alpha} R'$ and $P' \B\, \bis{r} R'$.
Since $\rec{\_\!\_\, | \RS'}$ is the application of a substitution of the form $\sigma:V_{\RS'} \rightarrow\IP$,
one has $\rec{\RS_x|\RS'} \bis{r} \rec{\RS'_x|\RS'}$.
Hence there is a $Q'$ with $P\vdash \rec{\RS'_x|\RS'} \mathbin{\goesto{\alpha}} Q'$ and $R' \bis{r} Q'$.
So $P' \B \,\bis{r} Q'$.
By \tab{sos CCSP}, $Q \mathbin= \rec{x|\RS'} \mathbin{\goesto{\alpha}} Q'$.
\item
The remaining nine cases proceed just as in the proof of \thm{congruence}, but with $\B\,\bis{r}$ substituted
for the blue occurrences of $\B$. In the case for $\theta_L^U$ with $\alpha=\tau$,
I conclude from $P^\ddagger \B\,\bis{r} Q^\ddagger$ that $\theta_L^U(P^\ddagger) \B\,\bis{r} \theta_L^U(Q^\ddagger)$.
Besides applying the definition of $\B$, this also involves the application of \thm{congruence} that
$\bis{r}$ is already known to be a congruence for $\theta_L^U$. The same reasoning applies in the
cases for $\spar{S}$, $\tau_I$ and $\Rn$.
\end{itemize}

\item Let $P \B Q$, $\I(P)\cap (X\cup\{\tau\})=\emptyset$ and $P \goesto{\rt} P'$.
I will find a $Q'$ such that $Q\goesto{\rt} Q'$ and $\theta_{X}(P') \bis{\,}\, \B \,\bis{r} \theta_X(Q')$.
This I will do by structural induction on the proof $\pi$ of $P \goesto{\rt} P'$ from the rules of
\tab{sos CCSP}. I make a case distinction based on the derivation of $P \B Q$.

\begin{itemize}
\item
Let $P=\rec{x|\RS}\in\IP$ and $Q=\rec{x|\RS'}\in \IP$ where
$\RS$ and $\RS'$ are recursive specifications with $x \in V_\RS = V_{\RS'}$,
such that 
for all $y\in W$ and $\sigma:V_\RS\rightarrow \IP$ one has
$\RS_y[\sigma] \bis{r} \RS'_y[\sigma]$.
By \tab{sos CCSP} the transition \plat{$\rec{\RS_x|\RS} \goesto{\rt} P'$} is provable by means of a strict subproof of the
proof $\pi$ of $\rec{x|\RS}\goesto{\rt}P'$.
The rule for recursion in \tab{sos CCSP} also implies that
$\I(\rec{x|\RS}) = \I(\rec{\RS_x|\RS})$.
Therefore, $\I(\rec{\RS_x|\RS})\cap (X\cup\{\tau\}) = \emptyset$.
By (\#) above one has $\rec{\RS_x|\RS} \B \rec{\RS_x|\RS'}$.
So by induction there is an $R'\in\IP$ such that $\rec{\RS_x|\RS'} \goesto{\rt} R'$ and $\theta_X(P') \bis{\,}\,\B\, \bis{r} \theta_X(R')$.
Since $\rec{\_\!\_\, | \RS'}$ is the application of a substitution of the form $\sigma:V_{\RS'} \rightarrow\IP$,
$\rec{\RS_x|\RS'} \bis{r} \rec{\RS'_x|\RS'}$.
Using (\verb#@#), $\I(\rec{\RS_x|\RS'})\cap (X\cup\{\tau\})=\emptyset$.
Hence $\exists Q'$ with $P\vdash \rec{\RS'_x|\RS'} \mathbin{\goesto{\rt}} Q'$ and $R' \bis{r} Q'$,
and thus $\theta_X(R') \bis{r} \theta_X(Q')$, using \thm{congruence}. So $\theta_X(P') \bis{\,}\, \B \,\bis{r} \theta_X(Q')$.
By \tab{sos CCSP}, $Q \mathbin= \rec{x|\RS'} \mathbin{\goesto{\rt}} Q'$.
\item
The remaining eight cases proceed just as in the proof of \thm{congruence}, but with $\B\,\bis{r}$ substituted
for the blue occurrences of $\B\,\bis{\,}$.
\qed
\end{itemize}
\end{itemize}
\end{proof}

\section{The Recursive Specification Principle}\label{sec:RSP}

For $W \subseteq \Var$ a set of variables, a $W$-tuple of expressions is a function $\vec
E\in\IT^{W}$.  It has a component $\vec{E}(x)$ for each variable $x\in W$. Note that a $W$-tuple of
expressions is nothing else than a substitution. Let $\textit{id}_W$ be the identity function, given
by $\textit{id}_W(x)\mathbin=x$ for all $x\mathbin\in W$.
If $G\in\IT$ and $\vec{E}\in\IT^W$ then $G[\vec{E}]$ denotes the result of simultaneous substitution of $\vec{E}(x)$ for
$x$ in $G$, for all $x\in W$. Likewise, if $\vec{G}\in\IT^{V}$ and $\vec{E}\in\IT^W$ then
$\vec{G}[\vec{E}]\in\IT^V$ denotes the $V$-tuple with components $G(y)[\vec{E}]$ for $y\in V$.
Henceforth, I regard a recursive specification $\RS$ as a $V_\RS$-tuple with components
$\RS(y)=\RS_y$ for $y\in V_\RS$.
If $\vec{E}\mathbin\in\IT^W$ and $\RS\mathbin\in\IT^V$, then $\rec{\vec{E}|\RS}\mathbin\in \IT^W$
is the $W$-tuple with components $\rec{\vec{E}(x)|\RS}\mathbin\in \IT^W$ for $x\mathbin\in W$.

For $\RS$ a recursive specification and $\vec E\mathbin\in\IT^{V_\RS}\!$ a $V_\RS$-tuple of
expressions, $\vec{E} \bis{r} \RS[\vec{E}]$ states that $\vec E$ is a \emph{solution} of $\RS$, up
to strong reactive bisimilarity. The tuple $\rec{\textit{id}_{V_\RS}|\RS} \in \IT^{V_\RS}$ is
called the \emph{default solution}.

In \cite{BW90,Fok00} two requirements occur for process algebras with recursion.
The \emph{recursive definition principle} (RDP) says that each recursive specification must have a solution,
and the \emph{recursive specification principle} (RSP) says that guarded recursive specifications
have at most one solution. When dealing with process algebras where the meaning of a closed expression is
a semantic equivalence class of processes, these principles become requirements on the semantic equivalence employed.

\begin{proposition}{RDP}
Let $\RS$ be a recursive specification, and $x\in V_\RS$. Then $\rec{x|\RS} \bis{r} \rec{\RS_x|\RS}$.
\end{proposition}
\begin{proof}
Let $\sigma:\Var\rightarrow\IP$ be a closed substitution. I have to show that
$\rec{x|\RS}[\sigma] \bis{r} \rec{\RS_x|\RS}[\sigma]$.
Equivalently I may show this for $\sigma\!:\!\Var{\setminus} V_\RS\rightarrow\IP\!$.
Now $\rec{x|\RS}[\sigma] = \rec{x|\RS[\sigma]} \mathbin\in\IP$
and $\rec{\RS_x|\RS}[\sigma] = \rec{\RS_x[\sigma]|\RS[\sigma]}\linebreak\in\IP$.
Consequently, it suffices to prove the proposition under the assumption that $\rec{x|\RS}, \rec{\RS_x|\RS} \in\IP$.
This follows immediately from the rule for recursion in \tab{sos CCSP} and \df{time-out bisimulation}.
\end{proof}
\pr{RDP} says that the recursive definition principle holds for strong reactive bisimulation semantics.
The ``default solution'' of a recursive specification is in fact a solution.
Note that the conclusion of \pr{RDP} can be restated as $\rec{\textit{id}_{V_\RS}|\RS} \bis{r} \rec{\RS|\RS}$,
and that $\RS[\rec{\textit{id}_{V_\RS}|\RS}] = \rec{\RS|\RS}$.

The following theorem establishes the recursive specification principle for strong reactive bisimulation semantics.
Some aspects of the proof that are independent of the notion of bisimilarity employed are delegated
to the following two lemmas.

\begin{lemma}{8}
Let $H\in\IT$ be guarded and have free variables from $W\subseteq \Var$ only, and let $\vec{P},\vec{Q}\in\IP^W$.
Then $\I(H[\vec{P}]) = \I(H[\vec{Q}])$.
\end{lemma}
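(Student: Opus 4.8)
The plan is to prove the identity $\I(H[\vec P]) = \I(H[\vec Q])$ by induction on $H$, exploiting the fact that for every operator $f$ of $\CCSP$ the rules of \tab{sos CCSP} determine $\I(f(P_1,\dots,P_k))$ purely as a function of $\I(P_1),\dots,\I(P_k)$ and of the operator's parameters, never of the arguments $P_i$ themselves. Reading this off the rules one gets, for instance, $\I(\alpha.P)=\{\alpha\}\cap(A\cup\{\tau\})$ (independent of $P$ altogether), $\I(P_1+P_2)=\I(P_1)\cup\I(P_2)$, $\I(\psi_X(P))=\I(P)$, $\I(\tau_I(P))$ equal to $\I(P)\setminus I$ extended with $\tau$ exactly when $\I(P)\cap I\neq\emptyset$, and likewise slightly longer closed-form expressions for $\spar S$, $\Rn$ and $\theta_L^U$; for $\theta_L^U$ one uses that the negative premise of its third rule amounts to $\I(P)\cap(L\cup\{\tau\})=\emptyset$. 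Consequently, if $H=f(H_1,\dots,H_k)$ with $f$ not the prefixing operator, then each $H_i$ is again guarded with free variables in $W$, the induction hypothesis gives $\I(H_i[\vec P])=\I(H_i[\vec Q])$, and the displayed identities close the case; if $H=\alpha.H'$ then $\I(H[\vec P])=\{\alpha\}\cap(A\cup\{\tau\})=\I(H[\vec Q])$ with no appeal to the hypothesis, and this is precisely where guardedness is used, since the remaining possibility $H=x$ is excluded (a bare variable is unguarded) and $H=0$ is trivial.

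The one case not covered by a plain structural induction is $H=\rec{x|\RS}$, because the rule for recursion only yields $\I(\rec{x|\RS}[\vec P])=\I(\rec{\RS_x|\RS}[\vec P])$ and $\rec{\RS_x|\RS}$ is not a subterm of $\rec{x|\RS}$. The key auxiliary observation is that $\rec{\RS_x|\RS}$ is again guarded with free variables contained in $W$: any free variable occurrence in it is either one already present in some body $\RS_y$ — hence guarded there, inside a prefix subterm $\beta.F$ that is preserved by the unfolding substitution — or one contributed by a substituted term $\rec{y|\RS}$, which is itself guarded with the same free variables as $\rec{x|\RS}$, its guarding prefixes again surviving as subterms. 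With this in hand I would take the induction measure to be not the syntactic size of $H$ but the length of the longest chain issuing from $H$ under the relation $\goesto{u}$ used in the proof of \pr{finitely branching} — namely $f(H_1,\dots,H_k)\goesto{u}H_i$ for $f$ other than prefixing, and $\rec{x|\RS}\goesto{u}\rec{\RS_x|\RS}$ — which strictly decreases in every case above and is well-founded on expressions built using guarded recursion. Its base value is attained only when $H$ is $0$, a bare variable (excluded), or of the form $\alpha.H'$, all already disposed of.

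The main obstacle is exactly this recursion case, and specifically the well-foundedness of the chosen measure: what must be ruled out is an infinite sequence of unfoldings $\rec{x|\RS}\goesto{u}\rec{\RS_x|\RS}\goesto{u}\cdots$, which is guaranteed as long as $H$ only uses guarded recursion (this is essentially the alternative characterisation of guarded recursion noted in the proof of \pr{finitely branching}). Should the statement be intended for arbitrary valid $H\in\IT$, in which unguarded recursion may occur nested below guarding prefixes, one would instead fix $\alpha\in A\cup\{\tau\}$ and prove that a transition $H[\vec P]\goesto{\alpha}R$ is matched by some transition $H[\vec Q]\goesto{\alpha}R'$ — the two directions together giving the equality of initials — arguing by induction on the derivation of the transition and discharging the negative premises in the rules for $\theta_L^U$ and $\psi_X$ by a superimposed induction on the stratification rank of the transition, so that the instances of the lemma needed to verify those premises for $\vec Q$ are available at a strictly lower rank. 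All the remaining verifications, namely that subterms and unfoldings inherit guardedness and the free-variable bound, are routine.
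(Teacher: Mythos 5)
Your primary induction measure does not cover the lemma as it is actually stated and used. ``Guarded'' for an expression $H$ (\df{guarded}) only requires that all \emph{free} occurrences of variables be guarded; it places no restriction whatsoever on the recursive specifications occurring inside $H$. Hence a guarded $H$ may contain unguarded recursion: $H=\rec{x \mid \{x = x + a.w\}}$ with $w\in W$ is guarded (its only free occurrence, of $w$, sits under the prefix $a.\_\!\_\,$), yet $\goesto{u}$ admits an infinite forward chain $\rec{x\mid\RS}\goesto{u}\rec{x\mid\RS}+a.w\goesto{u}\rec{x\mid\RS}\goesto{u}\cdots$, so the measure $e(H)$ you borrow from the proof of \pr{finitely branching} is undefined. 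This is not a corner case you may set aside: in \thm{RSP} the lemma is applied to $H[\RS]$ where $H$ is an \emph{arbitrary} term of $\IT$ with free variables in $V_\RS$, so closed subterms with unguarded recursion (not necessarily below any prefix) are entirely possible. You do flag the issue, but you treat the general situation as an optional alternative reading, whereas it is the one that has to be handled; your fallback --- matching transition derivations, with a superimposed induction on the stratification rank to discharge the negative premises of $\theta_L^U$ and $\psi_X$ --- is viable but is only sketched, and it amounts to re-running the proof of \thm{initials congruence} from scratch.

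The paper takes a different and much lighter route that you could adopt. It strengthens the statement to: if all free occurrences of variables from $W$ in $H$ are guarded, then $H[\vec P]=_\I H[\vec Q]$, and proves this by plain \emph{structural} induction on $H$. The recursion case needs no unfolding, because each body $\RS_y$ counts as a subexpression of $\rec{x\mid\RS}$: one applies the induction hypothesis to the $\RS_y$ (whose free occurrences of variables from $W\setminus V_\RS$ are still guarded) and then invokes \thm{initials congruence} --- the already-established fact that $=_\I$ is a full congruence, including for the recursion construct --- to reassemble $\rec{x\mid\RS[\vec P^\dagger]}=_\I\rec{x\mid\RS[\vec Q^\dagger]}$. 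All the stratification work is thereby delegated to that theorem instead of being redone, and your first paragraph, which in effect re-derives the congruence property of $=_\I$ operator by operator, becomes a one-line appeal to it.
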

\begin{proof}
In Appendix \ref{initials congruence}.
\end{proof}

\begin{lemma}{9}
Let $H\in\IT$ be guarded and have free variables from $W\subseteq \Var$ only, and let $\vec{P},\vec{Q}\in\IP^W$.
If $H[\vec{P}] \goesto\alpha R'$ with $\alpha\in Act$, then $R'$ has the form $H'[\vec{P}]$ for some term
$H'\in\IT$ with free variables in $W$ only. Moreover $H[\vec{Q}] \goesto\alpha H'[\vec{Q}]$.
\end{lemma}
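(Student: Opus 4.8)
The plan is to prove the lemma by well-founded induction on the derivation $\pi$ of $H[\vec{P}]\goesto\alpha R'$ from the rules of \tab{sos CCSP}; by the argument in the proof of \pr{countably branching} every such derivation is a finite tree, so this is a legitimate induction. The induction hypothesis is universally quantified over all guarded $H\in\IT$ with free variables in $W$, all $\vec{P},\vec{Q}\in\IP^W$, and all strict subderivations of $\pi$, and the proof proceeds by a case distinction on the outermost operator of $H$, equivalently on the last rule of $\pi$. Throughout I use the observation that if $f$ is any of $+$, $\spar{S}$, $\tau_I$, $\Rn$, $\theta_L^U$, $\psi_X$ and $f(\dots,H_i,\dots)$ is guarded, then each $H_i$ is guarded: none of these operators binds into its arguments or places an action prefix above them, so a guarded free-variable occurrence stays guarded in the subterm containing it. The case $H=0$ is vacuous, and $H=x$ cannot arise, since a bare variable occurrence is free (so $x\in W$) and unguarded, contradicting guardedness of $H$.

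For $H=\beta.G$ one has $\alpha=\beta$ and $R'=G[\vec{P}]$; take $H':=G$, whose free variables lie in $W$, and note $H[\vec{Q}]=\beta.G[\vec{Q}]\goesto\beta G[\vec{Q}]$. For $H=H_1+H_2$, $H=H_1\spar{S}H_2$, $H=\tau_I(G)$, $H=\Rn(G)$, and $H=\psi_X(G)$ taken via the first $\psi_X$-rule, the transition of $H[\vec{P}]$ is obtained from one or two positive premises about immediate (hence guarded) subterms of $H$; applying the induction hypothesis to the corresponding strict subderivations produces terms realising those premises, $H'$ is reassembled with the same operator (e.g.\ $H':=H_1'\spar{S}H_2'$ for the synchronisation rule, $H':=\tau_I(G')$, $H':=\Rn(G')$, $H':=G'$ for $\psi_X$), its free variables lie in $W$, and re-applying the same rule of \tab{sos CCSP} to $H[\vec{Q}]$ — the side conditions on transition labels being unchanged — yields $H[\vec{Q}]\goesto\alpha H'[\vec{Q}]$. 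These cases are routine.

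The first case carrying real content is $H=\rec{x|\RS}$ with $x\in V_\RS$: here $\pi$ ends with the recursion rule applied to a strict subderivation $\pi'$ of $\rec{\RS_x|\RS}[\vec{P}]\goesto\alpha R'$. To invoke the induction hypothesis on $\pi'$ I must check that the term $\rec{\RS_x|\RS}$ is again guarded with free variables in $W$. Its free variables form a subset of those of $\rec{x|\RS}$, hence of $W$. For guardedness, note that $\rec{x|\RS}$ is guarded exactly when every free $W$-variable occurrence in every body $\RS_y$ ($y\in V_\RS$) is guarded within that body, a condition that does not mention the head variable $x$; hence each $\rec{y|\RS}$ ($y\in V_\RS$) is guarded as well. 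Substituting the guarded terms $\rec{y|\RS}$ for the bound variables $V_\RS$ in $\RS_x$ places no new operators above any surviving free occurrence — neither above a $W$-variable occurrence already present in $\RS_x$, nor above one inside a substituted $\rec{y|\RS}$ — so every free $W$-variable occurrence of $\rec{\RS_x|\RS}$ remains guarded. The induction hypothesis now gives $H'\in\IT$ with free variables in $W$ such that $R'=H'[\vec{P}]$ and $\rec{\RS_x|\RS}[\vec{Q}]\goesto\alpha H'[\vec{Q}]$; since $\rec{\RS_x|\RS}[\vec{Q}]=\rec{(\RS[\vec{Q}])_x\,|\,\RS[\vec{Q}]}$, re-applying the recursion rule gives $H[\vec{Q}]=\rec{x|\RS[\vec{Q}]}\goesto\alpha H'[\vec{Q}]$.

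The remaining cases involve negative premises: $H=\theta_L^U(G)$ under any of the three $\theta_L^U$-rules, and $H=\psi_X(G)$ under the $\rt$-rule. Besides a positive premise $G[\vec{P}]\goesto\gamma R''$, the derivation of $H[\vec{P}]\goesto\alpha R'$ rests on side conditions of the form ``$G[\vec{P}]\ngoesto\beta$ for all $\beta$ in $L\cup\{\tau\}$'' (respectively $X\cup\{\tau\}$). Since $G$ is guarded with free variables in $W$, \lem{8} gives $\I(G[\vec{P}])=\I(G[\vec{Q}])$, and as $L,X\subseteq A$ this transfers exactly these negative premises to $\vec{Q}$. Together with the term and transition $G[\vec{Q}]\goesto\gamma G'[\vec{Q}]$ supplied by the induction hypothesis on the positive subderivation, the same rule fires for $H[\vec{Q}]$; one takes $H':=\theta_L^U(G')$ under the first $\theta_L^U$-rule, $H':=G'$ under the other two, and $H':=\theta_X(G')$ under the $\psi_X$-rule (using $\theta_X(G')[\vec{P}]=\theta_X(G'[\vec{P}])$, as $\theta_X$ binds nothing and $\vec{P}$ is closed). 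A final routine check gives $H'\in\IT$: the only capture-sensitive operators introduced afresh are $\theta_L^U$ and $\theta_X$, applied to terms whose free variables lie in $W$ and therefore, under the usual bound-variable hygiene, are not captured. I expect the main obstacle to be the recursion case — specifically the verification that $\rec{\RS_x|\RS}$ stays guarded so that the induction hypothesis applies, together with the routine bookkeeping that the closed substitutions commute with the recursion construct (as in the identity $\rec{z|\RS}[\rho]=\rec{z|\RS[\rho]}$ used elsewhere in the paper); the transfer of negative premises via \lem{8} is the other point that needs care.
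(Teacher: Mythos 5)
Your proposal is correct and follows essentially the same route as the paper's proof: induction on the derivation with a case distinction on the outermost operator of $H$, reassembling $H'$ from the subterms delivered by the induction hypothesis, transferring the negative premises of the $\theta_L^U$- and $\psi_X$-rules via \lem{8}, and handling recursion by unfolding to $\rec{\RS_x|\RS}[\vec{P}]$. Your explicit verification that $\rec{\RS_x|\RS}$ remains guarded with free variables in $W$ (so that the induction hypothesis applies) is a point the paper leaves implicit, and is a worthwhile addition rather than a deviation.
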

\begin{proof}
By induction on the derivation of $H[\vec{P}] \goesto\alpha R'$, making a case distinction on the shape of $H$.

Let $H=\alpha.G$, so that $H[\vec{P}] = \alpha.G[\vec{P}]$.
Then $R' = G[\vec{P}]$ and $H[\vec{Q}] \goesto\alpha G[\vec{Q}]$.

The case $H=0$ cannot occur. Nor can the case $H=x\in \Var$, as $H$ is guarded.

Let $H = H_1 \spar{S} H_2$, so that $H[\vec{P}] = H_1[\vec{P}]\spar{S} H_2[\vec{P}]$.
Note that $H_1$ and $H_2$ are guarded and have free variables in $W$ only.
One possibility is that $a\notin S$, $H_1[\vec{P}]\goesto\alpha R_1$ and $R'= R_1 \spar{S} H_2[\vec{P}]$.
By induction, $R'_1$ has the form $H'_1[\vec{P}]$ for some term
$H'_1\in\IT$ with free variables in $W$ only. Moreover, $H_1[\vec{Q}] \goesto\alpha H'_1[\vec{Q}]$.
Thus $R' = (H'_1 \spar{S} H_2)[\vec{P}]$, and $H':= H'_1 \spar{S} H_2$ has free variables in $W$ only.
Moreover, $H[\vec{Q}] =  H_1[\vec{Q}]\spar{S} H_2[\vec{Q}] \goesto\alpha  H'_1[\vec{Q}]\spar{S} H_2[\vec{Q}] = H'[\vec{Q}]$.

The other two cases for $\spar{S}$, and the cases for the operators $+$, $\tau_I$ and $\Rn$, are equally trivial.

Let $H= \theta_L^U(H^\dagger)$, so that $H[\vec{P}] = \theta_L^U(H^\dagger[\vec{P}])$.
Note that $H^\dagger$ is guarded and has free variables in $W$ only.
The case $\alpha = \tau$ is again trivial, so assume $\alpha\neq\tau$.
Then \plat{$H^\dagger[\vec{P}] \goesto\alpha R'$} and either $\alpha\in X$ or
$H^\dagger[\vec{P}] {\ngoesto\beta}$ for all $\beta\in L\cup\{\tau\}$.
By induction, $R'$ has the form $H'[\vec{P}]$ for some term
$H'\in\IT$ with free variables in $W$ only. Moreover, $H^\dagger[\vec{Q}] \goesto\alpha H'[\vec{Q}]$.
Since $\I(H^\dagger[\vec{P}]) = \I(H^\dagger[\vec{Q}])$ by \lem{8}, either $\alpha\in X$ or
$H^\dagger[\vec{Q}] {\ngoesto\beta}$ for all $\beta\in L\cup\{\tau\}$.
Consequently, $H[\vec{Q}] = \theta_L^U(H^\dagger[\vec{Q}])\goesto\alpha H'[\vec{Q}]$.

Let $H= \psi_X(H^\dagger)$, so that $H[\vec{P}] = \psi_X(H^\dagger[\vec{P}])$.
Note that $H^\dagger$ is guarded and has free variables in $W$ only.
The case $\alpha \in A\cup\{\tau\}$ is trivial, so assume $\alpha=\rt$.
Then \plat{$H^\dagger[\vec{P}] \goesto\rt R^\dagger$} for some $R^\dagger$ such that $R'=\theta_X(R^\dagger)$.
Moreover, $H^\dagger[\vec{P}] {\ngoesto\beta}$ for all $\beta\in X\cup\{\tau\}$.
By induction, $R^\dagger$ has the form $H'[\vec{P}]$ for some term
$H'\in\IT$ with free variables in $W$ only. Moreover, $H^\dagger[\vec{Q}] \goesto\rt H'[\vec{Q}]$.
Since $\I(H^\dagger[\vec{P}]) = \I(H^\dagger[\vec{Q}])$ by \lem{8},
$H^\dagger[\vec{Q}] {\ngoesto\beta}$ for all $\beta\in X\cup\{\tau\}$.
Consequently, $H[\vec{Q}] = \psi_X(H^\dagger[\vec{Q}])\goesto\rt \theta_X(H'[\vec{Q}])$.

Finally, let $H = \rec{x|\RS}$, so that $H[\vec{P}] = \rec{x|\RS[\vec{P}^\dagger]}$, where $\vec{P}^\dagger$ is the
$W {\setminus} V_\RS$-tuple that is left of $\vec{P}$ after deleting the $y$-components, for $y\in V_\RS$.
The transition $\rec{\RS_x[\vec{P}^\dagger]|\RS[\vec{P}^\dagger]} \goesto\alpha R'$ is derivable through a
subderivation of the one for $\rec{x|\RS[\vec{P}^\dagger]}\goesto\alpha R'$.
Moreover, $\rec{\RS_x[\vec{P}^\dagger]|\RS[\vec{P}^\dagger]} = \rec{\RS_x|\RS}[\vec{P}]$.
So by induction, $R'$ has the form $H'[\vec{P}]$ for some term $H'\mathbin\in\IT$ with free variables in $W$
only, and $\rec{\RS_x|\RS}[\vec{Q}] \goesto\alpha H'[\vec{Q}]$. 
Since $\rec{\RS_x|\RS}[\vec{Q}] = \rec{\RS_x[\vec{Q}^\dagger]|\RS[\vec{Q}^\dagger]}$, it follows that 
$H[\vec{Q}] = \rec{x|\RS}[\vec{Q}]= \rec{x|\RS[\vec{Q}^\dagger]}\goesto\alpha H'[\vec{Q}]$. 
\end{proof}
 
\begin{theorem}{RSP}
Let $\RS$ be a guarded recursive specification.
If $\vec{E} \bis{r} \RS[\vec{E}]$ and $\vec{F} \bis{r} \RS[\vec{F}]$ with $\vec E, \vec F\mathbin\in\IT^{V_\RS}$, then 
$\vec{E} \bis{r} \vec{F}$.
\end{theorem}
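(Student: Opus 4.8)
The plan is to follow the classical route to RSP: construct a strong time-out bisimulation \emph{up to} $\bis{r}$ that relates $H[\vec E]$ and $H[\vec F]$ for every guarded context $H$, and then invoke \pr{upto r}. Before that I would make two routine reductions. Since $\bis{r}$ on tuples of terms is defined via all closed substitutions, it suffices to prove the claim when $\vec E,\vec F\in\IP^{W}$ are closed solutions of a guarded recursive specification $\RS$ all of whose equations $\RS_y$ have free variables in $W:=V_\RS$ only (the variables outside $W$ being fixed by the closing substitution and absorbed into the $\RS_y$). Secondly, since $\RS$ is guarded it can be converted into a manifestly guarded specification by repeatedly substituting the $\RS_y$ for variables $y\in W$; each such step preserves being a solution, because $\vec E(y)\bis{r}\RS_y[\vec E]$ together with \thm{congruence} (strong reactive bisimilarity is a lean congruence) lets one replace an occurrence of $\vec E(y)$ inside $\RS_z[\vec E]$ by $\RS_y[\vec E]$ while remaining $\bis{r}$-equivalent. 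So I may assume henceforth that each $\RS_y$ is guarded.

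The key device is the substitution $\sigma:x\mapsto\RS_x$ ($x\in W$). For any $G\in\IT$ with free variables in $W$, the term $G[\sigma]$ again has free variables in $W$ and, crucially, is guarded, since every free variable occurrence of $G$ has been replaced by one of the guarded terms $\RS_x$ and within $\RS_x$ all variables are guarded. Moreover, because $\vec E$ is a solution, the closed substitutions $\vec E$ and $x\mapsto\RS_x[\vec E]$ agree up to $\bis{r}$, so \thm{congruence} gives $G[\vec E]\bis{r}G[\sigma][\vec E]$, and symmetrically $G[\vec F]\bis{r}G[\sigma][\vec F]$. This is exactly what is needed to ``re-guard'' a context.

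Now I would take $\B$ to be the symmetric closure of $\{(H[\vec E],H[\vec F])\mid H\in\IT \text{ guarded, with free variables in } W\}$ and check the two clauses of \df{upto} with $\sim$ taken to be $\bis{r}$. For the first clause, suppose $H[\vec E]\goesto{\alpha}R'$ with $\alpha\in A\cup\{\tau\}$; by \lem{9}, $R'=H'[\vec E]$ for some term $H'\in\IT$ with free variables in $W$ (not necessarily guarded) and $H[\vec F]\goesto{\alpha}H'[\vec F]$. Setting $G:=H'[\sigma]$, which \emph{is} guarded, one obtains $R'=H'[\vec E]\bis{r}G[\vec E]\B G[\vec F]\bis{r}H'[\vec F]$, the required $R'\bis{r}\B\bis{r}H'[\vec F]$. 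For the second clause, suppose $\I(H[\vec E])\cap(X\cup\{\tau\})=\emptyset$ and $H[\vec E]\goesto{\rt}R'$; again \lem{9} gives $R'=H'[\vec E]$ with free variables in $W$ and $H[\vec F]\goesto{\rt}H'[\vec F]$. Take $G:=\theta_X(H'[\sigma])$, still guarded with free variables in $W$; using that $\bis{r}$ is a congruence for $\theta_X$ ($=\theta_X^{X}$) by \thm{congruence} and that $\theta_X$ commutes with substitution, $\theta_X(R')\bis{r}G[\vec E]\B G[\vec F]\bis{r}\theta_X(H'[\vec F])$. The symmetric pairs are handled by interchanging the roles of $\vec E$ and $\vec F$. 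By \pr{upto r} we conclude ${\B}\subseteq{\bis{r}}$; instantiating with $H:=\RS_x$ (guarded, free variables in $W$) yields $\RS_x[\vec E]\bis{r}\RS_x[\vec F]$, hence $\vec E(x)\bis{r}\RS_x[\vec E]\bis{r}\RS_x[\vec F]\bis{r}\vec F(x)$ for every $x\in W$, i.e.\ $\vec E\bis{r}\vec F$.

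The step I expect to be the main obstacle is the handling of the residual context $H'$ delivered by \lem{9}: a transition out of a guarded context generically lands in an \emph{unguarded} one, and the only way to recover guardedness is to unfold $\RS$ once via $\sigma$. It is this that forces the whole argument to be carried out up to $\bis{r}$ and to depend on the previously established congruence \thm{congruence}; and the need to push the unfolding through a $\theta_X$ in the time-out clause is precisely why the operator $\theta_X$ cannot be avoided here, as anticipated at the start of \Sec{full congruence}. Getting the initial reduction to a manifestly guarded, closed specification right, and checking that the terms $H'[\sigma]$ and $\theta_X(H'[\sigma])$ remain valid $\CCSP$ expressions, are the remaining but routine points of care.
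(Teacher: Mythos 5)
Your proposal is correct and follows essentially the same route as the paper: the same two preliminary reductions (to closed solutions and to a manifestly guarded $\RS$), the same reliance on \lem{9} and on the up-to-$\bis{r}$ technique of \pr{upto r}, and the same re-guarding of the residual context by unfolding $\RS$ once, with $\theta_X$ pushed through the substitution in the time-out clause. The only (cosmetic) difference is the parametrisation of the relation: the paper relates $H[\RS[\vec P]]$ and $H[\RS[\vec Q]]$ for arbitrary contexts $H$ with free variables in $V_\RS$, which is your relation restricted to guarded contexts of the form $H[\sigma]$.
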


\begin{proof}
It suffices to prove \thm{RSP} under the assumptions that $\vec E, \vec F\mathbin\in\IP^{V_\RS}$ and
only the variables from $V_\RS$ occur free in the expressions $\RS_x$ for $x \in V_\RS$.
For in the general case I have to establish that $\vec{E}[\sigma] \bis{r} \vec{F}[\sigma]$
for an arbitrary closed substitution $\sigma\!:\!\Var\rightarrow\IP$.
Let $\hat\sigma:\Var{\setminus}V_\RS\rightarrow\IP$
be given by $\hat\sigma(x)=\sigma(x)$ for all $x\in \Var{\setminus}V_\RS$.
Then $\vec{E} \bis{r} \RS[\vec{E}]$ implies $\vec{E}[\sigma] \bis{r} \RS[\vec{E}][\sigma] = \RS[\hat\sigma][\vec{E}[\sigma]]$.
Hence, I merely have to prove the theorem with $\vec{E}[\sigma]$, $\vec{F}[\sigma]$and
$\RS[\hat\sigma]$ in place of $\vec{E}$, $\vec{F}$ and $\RS$.

It also suffices to prove \thm{RSP} under the assumption that $\RS$ is a manifestly guarded recursive specification.
Namely, for a general guarded recursive specification $\RS$, let $\RS'$ be the manifestly guarded 
specification into which $\RS$ can be converted. Then $\vec{E} \bis{r} \RS[\vec{E}]$ implies $\vec{E} \bis{r} \RS'[\vec{E}]$
by \thm{congruence}.

So let $\RS$ be manifestly guarded with free variables from $V_\RS$ only,
and let $\vec{P},\vec{Q}\in \IP^{V_\RS}$ be two of its solutions, that
is, $\vec{P} \bis{r} \RS[\vec{P}]$ and $\vec{Q} \bis{r} \RS[\vec{Q}]$.
I will show that the symmetric closure of $${\B} := \{H [\RS[\vec{P}]], H[\RS[\vec{Q}]] \mid
H\mathbin\in\IT \mbox{~has free variables in $V_\RS$ only}\}$$ is
a strong time-out bisimulation up to $\bis{r}\,$.
Once I have that, taking $H:= x\in V_\RS$ yields $\RS_x[\vec{P}] \bis{r} \RS_x[\vec{Q}]$ by \pr{upto r},
and thus $P(x) \bis{r} \RS_x[\vec{P}] \bis{r} \RS_x[\vec{Q}] \bis{r} Q(x)$ for all $x\in V_\RS$. So $\vec{P} \bis{r} \vec{Q}$.

\begin{itemize}
\item Let $R \B T$ and $R \goesto{\alpha} R'$ with $\alpha\mathbin\in A \cup \{\tau\}$.
I have to find a $T'$ with $T \goesto{\alpha} T'$ and $P' \bis{r}\, \B \,\bis{r} Q'\!$.
Assume that $R \mathbin= H[\RS[\vec{P}]]$ and $T\mathbin=H[\RS[\vec{Q}]]$---the case
that $R \mathbin= H[\RS[\vec{Q}]]$ will follow by symmetry.

Note that $H[\RS[\vec{P}]]$ can also be written as $H[\RS][\vec{P}]$.
Since the expressions $\RS_x$ for $x\in V_\RS$ have free variables from $V_\RS$ only, so does $H[\RS]$.
Moreover, since $\RS$ is manifestly guarded, the expression $H[\RS]$ must be guarded.
By \lem{9}, $R'$ must have the form $H'[\vec{P}]$, where $H'\in\IT$ has free variables in $V_\RS$ only.
Moreover, $T = H [\RS[\vec{Q}]] = H[\RS][\vec{Q}] \goesto\alpha H'[\vec{Q}] =: T'$. 
Furthermore, by \thm{congruence}, $H'[\vec{P}] \bis{r} H'[\RS[\vec{P}]]$ and $H'[\RS[\vec{Q}]] \bis{r} H'[\vec{Q}]$.
Thus, $R'\mathbin= H'[\vec{P}]\bis{r}\,\B\,\bis{r}H'[\vec{Q}] \mathbin= T'\!$.

\item Let $R \B T$, $\I(R)\cap (X\cup\{\tau\})=\emptyset$ and $R \goesto{\rt} R'$.
I have to find a $T'$ such that $T\goesto{\rt} T'$ and $\theta_{X}(R') \bis{\,}\, \B \,\bis{r} \theta_X(T')$.
The proof for this case proceeds exactly as that of the previous case, up to the last sentence; the condition
$\I(R)\cap (X\cup\{\tau\})=\emptyset$ is not even used.
Now from $R' = H'[\vec{P}] \bis{r}\, H'[\RS[\vec{P}]] \B  H'[\RS[\vec{Q}]]\,\bis{r} H'[\vec{Q}] = T'$
it follows that $$\theta_X(R') \bis{r}\, \theta_X(H'[\RS][\vec{P}]) \B \theta_X(H'[\RS][\vec{Q}]) \,\bis{r} \theta_X(T')$$
using \thm{congruence} and the observation that $\theta_X(H'[\RS[\vec{P}]]) = \theta_X(H')[\RS[\vec{P}]]$.
\qed
\end{itemize}
\end{proof}

\section{Complete axiomatisations}\label{sec:axioms}

Let $\textit{Ax}$ denote the collection of axioms from Tables~\ref{tab:axioms CCSP},~\ref{tab:axioms thetaX}
and~\ref{tab:reactive axioms}, $\textit{Ax}'$ the
ones from Tables~\ref{tab:axioms CCSP} and~\ref{tab:axioms thetaX}, and
$\textit{Ax}''$ merely the ones from \tab{axioms CCSP}.
Moreover, let $\textit{Ax}_f$, resp.\ $\textit{Ax}'_f$
and $\textit{Ax}''_f$, be same collections without the two
axioms using the recursion construct $\rec{x|\RS}$, RDP and RSP\@.
In this section I establish the following.
\\[-2pt]
Let $P$ and $Q$ be recursion-free CCSP$_\rt$ processes. Then
\begin{minipage}[b]{3in}
\begin{equation}\label{bisSCf}
P \bis{} Q ~~\Leftrightarrow~~ \textit{Ax}''_f \vdash P = Q .
\end{equation}
\end{minipage}\\[-2pt]
Let $P$ and $Q$ be CCSP$_\rt$ processes with guarded recursion. Then\hspace{-40pt}
\begin{minipage}[b]{3in}
\begin{equation}\label{bisSC}
P \bis{} Q ~~\Leftrightarrow~~ \textit{Ax}'' \vdash P = Q .
\end{equation}
\end{minipage}\\[-2pt]
Let $P$ and $Q$ be recursion-free $\CCSP$ processes. Then
\begin{minipage}[b]{3in}
\begin{equation}\label{bisSCthetaf}
P \bis{} Q ~~\Leftrightarrow~~ \textit{Ax}'_f \vdash P = Q .
\end{equation}
\end{minipage}\\[-2pt]
Let $P$ and $Q$ be $\CCSP$ processes with guarded recursion. Then\hspace{-40pt}
\begin{minipage}[b]{3in}
\begin{equation}\label{bisSCtheta}
P \bis{} Q ~~\Leftrightarrow~~ \textit{Ax}' \vdash P = Q .
\end{equation}
\end{minipage}\\[-2pt]
Let $P$ and $Q$ be recursion-free $\CCSP$ processes. Then
\begin{minipage}[b]{3in}
\begin{equation}\label{reacSCf}
P \bis{r} Q ~~\Leftrightarrow~~ \textit{Ax}_f \vdash P = Q .
\end{equation}
\end{minipage}\\[-2pt]
Let $P$ and $Q$ be $\CCSP$ processes with guarded recursion. Then\hspace{-40pt}
\begin{minipage}[b]{3in}
\begin{equation}\label{reacSC}
P \bis{r} Q ~~\Leftrightarrow~~ \textit{Ax} \vdash P = Q .
\end{equation}
\end{minipage}\\
In each of these cases ``$\Leftarrow$'' states the soundness of the axiomatisation and ``$\Rightarrow$'' completeness.

\Sec{axioms strong CCSP} recalls (\ref{bisSC}), which stems from \cite{GM20}, and (\ref{bisSCf}), which is folklore.
Then \Sec{axioms strong} extends the existing proofs of (\ref{bisSC}) and (\ref{bisSCf}) to obtain
(\ref{bisSCtheta}) and (\ref{bisSCthetaf}).
In \Sec{axioms reactive} I move from strong bisimilarity to strong reactive bisimilarity;
I discuss the merits of the axiom RA from \tab{reactive axioms}, and
establish its soundness, thereby obtaining direction ``$\Leftarrow$'' of (\ref{reacSC}) and (\ref{reacSCf}).
I prove the completeness of $\textit{Ax}_f$ for recursion-free processes---direction ``$\Rightarrow$'' of
(\ref{reacSCf})---in \Sec{finite}. Sections~\ref{sec:method}--\ref{sec:infinite} deal with
the completeness of $\textit{Ax}$ for guarded $\CCSP$---direction ``$\Rightarrow$'' of (\ref{reacSC}).
\Sec{choice} explains why I need the axiom of choice for the latter result.

\subsection[A complete axiomatisation of strong bisimilarity on guarded CCSP]
           {A complete axiomatisation of strong bisimilarity on guarded CCSP$_\rt$}\label{sec:axioms strong CCSP}

\begin{table}[t]
\vspace{-6pt}
\caption{A complete axiomatisation of strong bisimilarity on guarded CCSP$_\rt$}
\label{tab:axioms CCSP}
\begin{center}
\framebox{$\begin{array}{r@{~=~}l@{\qquad}r@{~=~}l@{}l r@{~=~}l@{}}
    x+(y+z) & (x+y)+z  & \tau_I(x+y) & \tau_I(x) + \tau_I(y)   && \Rn(x+y) & \Rn(x) + \Rn(y)
\\
x+y & y+x          & \tau_I(\alpha.x) & \alpha.\tau_I(x) & \mbox{\small if $\alpha\mathbin{\notin} I$}
                   & \Rn(\tau.x) & \tau.\Rn(x)
\\
x+x & x            & \tau_I(\alpha.x) & \tau.\tau_I(x) & \mbox{\small if $\alpha\mathbin\in I$}
                   & \Rn(\rt.x) & \rt.\Rn(x)
\\
x+0 & 0            & \rec{x|\RS} & \rec{\RS_x | \RS} & \mbox{(RDP)}
                   & \Rn(a.x) & \plat{\hspace{-1em}$\displaystyle\sum_{\{b\mid (a,b)\in \Rn\}}\hspace{-1em} b.\Rn(x)$}
\\
\multicolumn{7}{l}{\mbox{If $\rule{0pt}{15pt}\displaystyle P= \sum_{i\in I}\alpha_i.P_i$ and $\displaystyle Q= \sum_{j\in J}\beta_j.Q_j$ then}}\\
\multicolumn{7}{c}{\displaystyle P \spar{S} Q = \sum_{i\in I,~\alpha_i \notin S}(\alpha_i.P_i \spar{S} Q) + \sum_{j\in J,~\beta_j\notin S}(P \spar{S} \beta_j.Q_j)
+ \!\!\!\sum_{i\in I,~j\in J,~ \alpha_i=\beta_j\in S}\!\!\! \alpha_i.(P_i \spar{S} Q_j)}\\
\hline
\multicolumn{7}{l}{\mbox{Recursive Specification Principle (RSP)} \qquad\qquad \rule{0pt}{15pt}\RS
  \Rightarrow x = \rec{x|\RS}} \qquad\qquad \mbox{($\RS$ guarded)}
\end{array}$}
\end{center}
\end{table}
The well-known axioms of \tab{axioms CCSP} are \emph{sound} for strong bisimilarity, meaning that
writing $\bis{\,}$ for $=$, and substituting arbitrary expressions for the free variables $x,y,z$,
or the meta-variables $P_i$ and $Q_j$, turns them into true statements. In these axioms
$\alpha,\beta$ range over $Act$ and $a,b$ over $A$.  All axioms involving variables are
equations. The axiom involving $P$ and $Q$ is a template that stands for a family of equations, one
for each fitting choice of $P$ and $Q$. This is the CCSP$_\rt$ version of the \emph{expansion law}
from \cite{Mi90ccs}. The axiom RDP ($\rec{x|\RS} = \rec{\RS_x | \RS}$) says that recursively defined
processes $\rec{x|\RS}$ satisfy their set of defining equations $\RS$. As discussed in the previous
section, this entails that each recursive specification has a solution. The axiom RSP \cite{BW90,Fok00} is a
conditional equation with the equations of a guarded recursive specification $\RS$ as antecedents.
It says that the $x$-component of any solution of $\RS$---a vector of processes substituted
for the variables $V_\RS$---equals $\rec{x | \RS}$. In other words, each solution of $\RS$ equals
the default solution. This is a compact way of saying that solutions of guarded recursive
specifications are unique.

\begin{theorem}{completeness}
For CCSP$_\rt$ processes $P,Q\in\IP$ with guarded recursion, one has $P \bis{\,} Q$, that is, $P$ and $Q$ are
strongly bisimilar, iff $P=Q$ is derivable from the axioms of \tab{axioms CCSP}.
\end{theorem}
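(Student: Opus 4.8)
\emph{This statement is the content of (\ref{bisSC}), due to \cite{GM20}; I would reprove it by treating soundness and completeness separately.}

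\emph{Soundness} (``$\Leftarrow$''). It suffices to verify that each axiom of \tab{axioms CCSP} becomes a valid identity when $=$ is read as $\bis{}$ and arbitrary processes are substituted for the variables and metavariables, since $\bis{}$ is a full congruence for $\CCSP$ by \thm{bisimilarity congruence}. For the equational axioms --- the sum axioms, the laws moving $\tau_I$ and $\Rn$ past $+$ and past action prefixing, and the expansion law --- this is a direct inspection of the rules of \tab{sos CCSP}. RDP is immediate from the recursion rule, since $\rec{x|\RS}$ and $\rec{\RS_x|\RS}$ have literally the same outgoing transitions. For RSP I would establish the strong-bisimilarity counterpart of \thm{RSP} --- guarded recursive specifications have a unique solution up to $\bis{}$ --- whose proof is the classical one of \cite{Mi90ccs} and does not need the $\theta_X$-apparatus: the symmetric closure of $\{(H[\RS[\vec P]],H[\RS[\vec Q]]) \mid H\in\IT$ guarded with free variables in $V_\RS\}$ is a strong bisimulation, using \lem{8} and \lem{9}.

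\emph{Completeness} (``$\Rightarrow$''), part 1: a normal-form lemma. For a guarded CCSP$_\rt$ process $R$, let $\RS^R$ be the recursive specification whose bound variables $z_S$ range over the processes $S$ reachable from $R$ --- countably many, finitely branching, by \pr{countably branching} and \pr{finitely branching} --- with $\RS^R_{z_S} := \sum_{(\alpha,S')\in\textit{next}(S)}\alpha.z_{S'}$, a finite sum of prefixed variables, hence manifestly guarded. I would first derive $\textit{Ax}''\vdash S = \sum_{(\alpha,S')\in\textit{next}(S)}\alpha.S'$ for each reachable $S$, by induction on the well-founded measure $e(S)$ from the proof of \pr{finitely branching}: the cases $S=0$ and $S=\alpha.S'$ are trivial; $S=S_1+S_2$ uses associativity, commutativity and idempotence of $+$; $S=S_1\spar{T}S_2$ uses the expansion law; $S=\tau_I(S_1)$ and $S=\Rn(S_1)$ use the respective distributivity axioms (the renaming sum is finite since $\{b\mid(a,b)\in\Rn\}$ is); and $S=\rec{x|\RS'}$ uses RDP to pass to $\rec{\RS'_x|\RS'}$, which has strictly smaller $e$ and the same transitions. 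This says exactly that the substitution $z_S\mapsto S$ solves $\RS^R$ up to $\textit{Ax}''$, so RSP gives $\textit{Ax}''\vdash S=\rec{z_S|\RS^R}$ for all reachable $S$; with $S=R$, $\textit{Ax}''\vdash R=\rec{z_R|\RS^R}$. Part 2: merging via a canonical solution. Given $P\bis{} Q$, I would form the specification $\RS^c$ whose bound variables $v_C$ range over the $\bis{}$-classes $C$ of processes reachable from $P$ or $Q$, with $\RS^c_{v_C} := \sum_{(\alpha,C')}\alpha.v_{C'}$ where $(\alpha,C')$ ranges over all pairs such that $S\goesto\alpha S'$ with $S'\in C'$ for some reachable $S\in C$; the sum is finite by finite branching, and $\RS^c$ is well defined and manifestly guarded because bisimilar reachable processes have the same transitions up to $\bis{}$, so that this set of pairs equals $\{(\alpha,[S'])\mid S\goesto\alpha S'\}$ for every reachable $S\in C$. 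Then, for $S$ reachable from $P$, the substitution $z_S\mapsto\rec{v_{[S]}|\RS^c}$ solves the normal-form specification $\RS^P$ of $P$ up to $\textit{Ax}''$: by RDP, $\rec{v_{[S]}|\RS^c}$ equals the finite sum $\sum_{(\alpha,C')}\alpha.\rec{v_{C'}|\RS^c}$, whose set of summands coincides with that of $\sum_{(\alpha,S')\in\textit{next}(S)}\alpha.\rec{v_{[S']}|\RS^c}$, so the two are equal by the associativity, commutativity and idempotence axioms. RSP then yields $\textit{Ax}''\vdash\rec{v_{[S]}|\RS^c}=\rec{z_S|\RS^P}$ for every such $S$, and with the normal-form lemma $\textit{Ax}''\vdash P=\rec{z_P|\RS^P}=\rec{v_{[P]}|\RS^c}$; symmetrically $\textit{Ax}''\vdash Q=\rec{v_{[Q]}|\RS^c}$. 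Since $P\bis{} Q$ gives $[P]=[Q]$, these are the same term, so $\textit{Ax}''\vdash P=Q$.

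\emph{Main obstacle.} The soundness check and the expansion-law bookkeeping are routine; the real content --- and what goes beyond the classical finite-state proof --- is making the two uses of RSP legitimate for \emph{arbitrary} guarded processes, i.e.\ with countably infinite bound-variable sets, which the general formulation of RSP in \tab{axioms CCSP} permits (this is the point of \cite{GM20}), while keeping $\RS^R$ and $\RS^c$ genuinely guarded. The one subtle structural step is the well-definedness of $\RS^c$, which is precisely where the bisimulation hypothesis enters. (For strong bisimilarity one could instead merge $\RS^P$ and $\RS^Q$ directly, in the classical way of \cite{Mi90ccs}; I phrase it through canonical solutions \cite{LY20} because that formulation is what the reactive case treated later in the paper requires.)
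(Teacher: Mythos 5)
Your proof is correct, and its completeness half takes a genuinely different route from the one the paper leans on. The paper gives no detailed proof of \thm{completeness}: it reduces the problem to head normalisation (the CCSP$_\rt$ instance of \pr{hnf}) and then defers to the classical RSP argument, documented for arbitrary guarded recursion in \cite{GM20}, which \emph{merges} the head-normal-form specifications of $P$ and $Q$ into one guarded specification indexed by pairs of bisimilar reachable states and exhibits $P$ and $Q$ as two of its solutions. You instead quotient: your $\RS^c$ is indexed by $\bis{\,}$-classes of reachable states, and both $P$ and $Q$ solve it via $z_S\mapsto\rec{v_{[S]}|\RS^c}$. That is the canonical-solutions method of \cite{LY20}, which the paper deploys only for the reactive case (\Sec{method}); for strong bisimilarity both methods work, and yours buys uniformity with the later reactive proof, at the price of departing from the cited literature. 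Your observation that $\{(\alpha,[S'])\mid S\goesto{\alpha}S'\}$ is the same finite set for every member $S$ of a class---so that $\RS^c$ is well defined and finitely branching with no appeal to the axiom of choice, and no cardinality problems arise because you index only over reachable states---is exactly the point that fails for $\bis{r}$ and necessitates the machinery of \Sec{choice}. One small repair on the soundness side: the relation $\{(H[\RS[\vec P]],H[\RS[\vec Q]])\}$ is a strong bisimulation \emph{up to} $\bis{\,}$ rather than a bisimulation outright, since a transition lands in a pair $(H'[\vec P],H'[\vec Q])$ with $H'$ possibly unguarded, and one must use $\vec P\bis{\,}\RS[\vec P]$ together with the congruence property to re-enter the relation; either invoke the up-to technique (sound for strong bisimilarity) or saturate the relation with $\bis{\,}$ on both sides, as the paper does in proving \thm{RSP} for the reactive analogue.
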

In this theorem, ``if'', the \emph{soundness} of the axiomatisation of \tab{axioms CCSP}, is an immediate
consequence of the soundness of the individual axioms.
``Only if'' states the \emph{completeness} of the axiomatisation.

A crucial tool in its proof is the simple observation that the axioms from the first box of
\tab{axioms CCSP} allow any CCSP$_\rt$ process with guarded recursion to be brought in the form
$\sum_{i\in I}\alpha_i.P_i$---a \emph{head normal form}.
Using this, the rest of the proof is a standard argument employing RSP, independent of the choice of
the specific process algebra. It can be found in \cite{Mi84,Mi90ccs}, \cite{BW90}, \cite{Fok00} and many other places.
However, in the literature this completeness theorem was always stated and proved for a small
fragment of the process algebra, allowing only guarded recursive specifications with a finite number
of equations, and whose right-hand sides $\RS_y$ involve only the basic operators inaction, action
prefixing and choice. Since the set of true statements $P \bis{\,} Q$, with $P$ and $Q$ processes
in a process algebra like guarded CCSP$_\rt$, is well-known to be undecidable, and even not recursively
enumerable, it was widely believed that no sound and complete finitely presented
axiomatisation of strong bisimilarity could exist.
Only in March 2017, Kees Middelburg observed (in the setting of the process algebra ACP \cite{BW90,Fok00})
that the standard proof applies almost verbatim to arbitrary processes with guarded recursion, although one has to be a
bit careful in dealing with the infinite nature of recursive specifications.
The argument has been carefully documented in \cite{GM20}, in the setting of the process algebra ACP\@.
This result does not contradict the non-enumerability of the set of true statements $P \bis{} Q$,
due to the fact that RSP is a proof rule with infinitely many premises.

A well-known simplification of \thm{completeness} and its proof also yields completeness without recursion:

\begin{theorem}{completeness for finite processes}
For CCSP$_\rt$ processes $P,Q\in\IP$ without recursion, one has $P \bis{\,} Q$ iff $P=Q$ is
derivable from the axioms of \tab{axioms CCSP} minus RDP and RSP\@.
\end{theorem}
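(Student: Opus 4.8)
The plan is the classical normal-form argument for axiomatising strong bisimilarity over finite synchronisation trees, i.e.\ the recursion-free counterpart of \thm{completeness}; in this setting it is even simpler, since one can fully normalise and never needs RSP. The ``if'' direction (soundness) is immediate: each axiom of \tab{axioms CCSP} other than RDP and RSP is sound for $\bis{}$ under arbitrary substitution of expressions for its variables, and of summands for the meta-variables $P_i,Q_j$ in the expansion law, while $\bis{}$ is a congruence for all CCSP$_\rt$ operators by \thm{bisimilarity congruence}; hence every derivable equation holds. For the ``only if'' direction I would proceed in two steps: first, every recursion-free CCSP$_\rt$ process provably equals a \emph{normal form}, namely a term built only from $0$, action prefixing and $+$; second, any two bisimilar normal forms are provably equal.

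For the first step, argue by structural induction on the process. The cases $0$ and $\alpha.P'$ are immediate (normalise $P'$), and $P_1+P_2$ is handled by normalising the summands. For $\tau_I(P')$ and $\Rn(P')$, first normalise $P'$ to a finite sum of prefixes, then drive the operator inwards using the distributive laws $\tau_I(x+y)=\tau_I(x)+\tau_I(y)$ and $\Rn(x+y)=\Rn(x)+\Rn(y)$ together with the prefix axioms for $\tau_I$ and $\Rn$, including $\Rn(a.x)=\sum_{\{b\mid(a,b)\in\Rn\}}b.\Rn(x)$, which is a \emph{finite} sum by the standing requirement that all sets $\{b\mid(a,b)\in\Rn\}$ be finite; in each case the operator ends up applied only to $0$, where it disappears, or to a prefix, where a prefix axiom fires. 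For $P_1\spar{S}P_2$, normalise both arguments to head normal forms $\sum_i\alpha_i.P_i$ and $\sum_j\beta_j.Q_j$ and apply the expansion law, which yields a finite sum of prefixed terms whose $\spar{S}$-subterms have strictly smaller depth; then recurse. Since $\bis{}$ is a congruence these rewrites may be carried out inside any context, and a well-founded measure on terms---say the number of occurrences of $\spar{S}$, $\tau_I$ and $\Rn$, with summed syntactic depth as a tie-breaker---witnesses termination, so a finite normal form is reached.

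For the second step, induct on the combined depth of the normal forms $P=\sum_{i\in I}\alpha_i.P_i$ and $Q=\sum_{j\in J}\beta_j.Q_j$, with $I$ and $J$ finite. For each $i\in I$ one has $P \goesto{\alpha_i} P_i$, so $P \bis{} Q$ supplies a $j\in J$ with $\beta_j=\alpha_i$ and $P_i \bis{} Q_j$; since $P_i$ and $Q_j$ are normal forms of smaller depth, the induction hypothesis gives $P_i=Q_j$ derivable, hence $\alpha_i.P_i=\beta_j.Q_j$, and by associativity, commutativity and idempotence of $+$ one derives $Q+\alpha_i.P_i=Q$. Ranging over the finitely many $i\in I$ gives $Q+P=Q$; symmetrically $P+Q=P$; commutativity of $+$ then yields $P=Q$. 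Combining the two steps establishes completeness. I expect the only genuine work to lie in the bookkeeping of the first step: verifying that the rewriting terminates (the $\spar{S}$ case reintroduces $\spar{S}$ on strictly smaller arguments, the $\Rn$ case reintroduces $\Rn$ on strict subterms), that every intermediate term stays a \emph{finite} sum---which it does by construction, consistent with \pr{finitely branching}---and that the degenerate instances ($\tau_I$, $\Rn$ and $\spar{S}$ applied to $0$, and transition-less normal forms reducing to $0$) are covered. The second step is the familiar matching argument for bisimilarity on finite synchronisation trees, and dropping recursion---together with the operators $\theta_L^U$ and $\psi_X$---is precisely what makes the iterated head-normal-form construction and RSP superfluous here.
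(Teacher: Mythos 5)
Your proposal is correct and coincides with what the paper intends: the paper gives no explicit proof of this theorem, merely noting it is the ``well-known simplification'' of \thm{completeness}, namely fully normalising recursion-free terms to finite sums of prefixes (eliminating $\spar{S}$, $\tau_I$ and $\Rn$ via the expansion law and the distributive/prefix axioms) and then proving bisimilar normal forms equal by the standard absorption argument with induction on depth, which is exactly your two steps. The one blemish is your termination measure for normalisation: a single application of the expansion law can strictly increase the number of occurrences of $\spar{S}$, so the lexicographic measure you name fails as stated, although the repair is routine (e.g.\ take the multiset of depths of the maximal $\spar{S}$-, $\tau_I$- and $\Rn$-headed subterms, or prove by an inner induction on the sum of the depths of two normal forms that their parallel composition provably equals a normal form), and you have already identified the relevant invariant that the operator is only reintroduced on arguments of strictly smaller combined depth.
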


\subsection[A complete axiomatisation of strong bisimilarity]
           {A complete axiomatisation of strong bisimilarity on guarded $\CCSP$}\label{sec:axioms strong}

\begin{table}[t]
\vspace{-6pt}
\caption{A complete axiomatisation of strong bisimilarity on guarded $\CCSP$}
\label{tab:axioms thetaX}
\begin{center}
\framebox{$\begin{array}{ll}
\theta_L^U(\sum_{i\in I} \alpha_i.x_i) = \sum_{i\in I} \alpha_i.x_i &
              (\alpha_i\notin L\cup\{\tau\} \mbox{~for all~}i\mathbin\in I)\\[2pt]
\theta_L^U(x+\alpha.y+\beta.z) = \theta_L^U(x + \alpha.y) &(\alpha \in L\cup\{\tau\} \wedge \beta\notin U\cup\{\tau\})\\[2pt]
\theta_L^U(x+\alpha.y+\beta.z) = \theta_L^U(x + \alpha.y) + \theta_L^U(\beta.z) &
            (\alpha \in L\cup\{\tau\} \wedge \beta\in U\cup\{\tau\})\\[2pt]
\theta_L^U(\beta.x) = \beta.x  &(\beta\neq \tau)\\[2pt]
\theta_L^U(\tau.x) = \tau.\theta_L^U(x) \\[8pt]
\psi_X(x+\alpha.z) = \psi_X(x) + \alpha.z & (\alpha \notin X\cup\{\tau,\rt\})\\[2pt]
\psi_X(x+\alpha.y+\rt.z) = \psi_X(x + \alpha.y) &(\alpha \in X\cup\{\tau\})\\[2pt]
\psi_X(x+\alpha.y+\beta.z) = \psi_X(x + \alpha.y) + \beta.z &
            (\alpha,\beta \in X\cup\{\tau\})\\[2pt]
\psi_X(\alpha.x) = \alpha.x  &(\alpha\neq \rt)\\[2pt]
\psi_X(\sum_{j\in I}\rt.y_i) = \sum_{j\in I}\rt.\theta_X(y_j) \\[2pt]
\end{array}$}
\end{center}
\end{table}

\tab{axioms thetaX} extends \tab{axioms CCSP} with axioms for the auxiliary operators $\theta_L^U$ and $\psi_X$.
With \tab{sos CCSP} it is straightforward to check the soundness of these axioms.
The fourth axiom, for instance, follows from the second or third rule for $\theta_L^U$ in \tab{sos CCSP},
depending on whether $\beta \in L\cup\{\rt\}$.
Moreover, a straightforward induction shows that these axioms suffice to convert each $\CCSP$ process with
guarded recursion into the form $\sum_{I\in I}\!\alpha_i.P_i$---a head normal form.
The below proposition sharpens this observation by pointing out that one can take the processes $P_i$
for $i\mathbin\in I$ to be exactly the ones that are reachable by one $\alpha_i$-labelled transition from $P$.
\begin{definition}{hnf}
Given a $\CCSP$ process $P \in \IP$, let \plat{$\widehat P := \sum_{\{(\alpha,Q) \mid P \goesto{\scriptscriptstyle\alpha} Q\}}\alpha.Q$}.
\end{definition}
By \pr{countably branching}, $P$ is countably branching, so using \pr{absolute expressiveness}
$\widehat P$ is a valid $\CCSP$ process.
In case $P \mathbin\in \IP$ is a process with only guarded recursion, then $P$ is finitely branching by
\pr{finitely branching}, so also $\widehat P$ is a valid $\CCSP$ process with only guarded recursion.

\begin{proposition}{hnf}
Let $P \in \IP$ have guarded recursion only. Then $\textit{Ax}' \vdash P = \widehat P$.
The conditional equation RSP is not even needed here.
\end{proposition}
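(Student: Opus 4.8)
The plan is to establish $\textit{Ax}' \vdash P = \widehat P$ by induction on the measure $e(P)$ defined in the proof of \pr{finitely branching} --- the length of the longest $\goesto{u}$-chain out of $P$, finite exactly because $P$ uses guarded recursion only --- with a case distinction on the outermost construct of $P$. For the leaf cases $P = 0$ and $P = \alpha.Q$ there is nothing to do: $0$ has no transitions, so $\widehat 0$ is the empty sum $0$, and the only transition of $\alpha.Q$ is $\alpha.Q \goesto{\alpha} Q$, so $\widehat{\alpha.Q}$ is literally $\alpha.Q$. For $P = f(P_1,\dots,P_k)$ with $f$ one of $+$, $\spar{S}$, $\tau_I$, $\Rn$, $\theta_L^U$, $\psi_X$ one has $f(P_1,\dots,P_k) \goesto{u} P_i$, hence $e(P_i)<e(P)$, and the induction hypothesis gives $\textit{Ax}' \vdash P_i = \widehat{P_i}$ for each $i$ (each $\widehat{P_i}$ a finite sum by \pr{finitely branching}). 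For $P = \rec{x|\RS}$ one has $\rec{x|\RS}\goesto{u}\rec{\RS_x|\RS}$; since the operational rule for recursion makes $\rec{x|\RS}$ and $\rec{\RS_x|\RS}$ have exactly the same transitions (same labels, same targets), $\widehat{\rec{x|\RS}}$ and $\widehat{\rec{\RS_x|\RS}}$ are the same expression, so RDP together with the induction hypothesis gives $\textit{Ax}'\vdash \rec{x|\RS} = \rec{\RS_x|\RS} = \widehat{\rec{\RS_x|\RS}} = \widehat{\rec{x|\RS}}$. In particular RSP is never invoked.

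In each operator case the recipe is identical. Using that equational derivability is a congruence, replace every argument $P_i$ by $\widehat{P_i} = \sum_{\{(\alpha,Q)\mid P_i\goesto{\alpha}Q\}}\alpha.Q$; then apply the pertinent axioms to push $f$ through the choice and onto the prefixed (and $0$) summands; finally read off the rules of \tab{sos CCSP} that the resulting finite sum of prefixed terms is exactly $\widehat P$, up to associativity, commutativity and idempotence of $+$ (idempotence being needed precisely when two summands coincide). For $+$ this uses only the choice laws; for $\tau_I$ and $\Rn$ the homomorphism and renaming laws of \tab{axioms CCSP}, where the standing requirement that each $\{b\mid(a,b)\in\Rn\}$ be finite keeps the renaming of a prefix a finite, hence valid, term; and for $\spar{S}$ the expansion law, after which one substitutes $\widehat{P_i}$ back to $P_i$ inside the continuations --- so that $\alpha.(Q\spar{S}\widehat{P_2})$ becomes $\alpha.(Q\spar{S}P_2)$, the very term that $\widehat{P_1\spar{S}P_2}$ pairs with the transition $P_1\spar{S}P_2\goesto{\alpha}Q\spar{S}P_2$, and symmetrically for the interleaving and synchronisation summands. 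The degenerate sub-cases with empty argument, such as $\tau_I(0)=0$, are straightforward and present no real difficulty.

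The real work, and the step I expect to be the main obstacle, is the cases $P = \theta_L^U(P')$ and $P = \psi_X(P')$, where the axioms of \tab{axioms thetaX} are conditional and the surviving transitions depend on $\I(P')$. Having rewritten $P' = \widehat{P'} = \sum_i \alpha_i.Q_i$, split on whether some $\alpha_i$ lies in $L\cup\{\tau\}$ (respectively in $X\cup\{\tau\}$). If none does, the first $\theta_L^U$-axiom (respectively the $\psi_X$-laws) give $\theta_L^U(\widehat{P'}) = \widehat{P'}$, and the operational rules confirm $\widehat{\theta_L^U(P')} = \widehat{P'}$ as well, since then every $\alpha_i$-transition is preserved with its target unchanged; for $\psi_X$, when moreover $P'$ has neither a $\tau$-transition nor an $X$-labelled visible transition, the law $\psi_X(\sum_j \rt.y_j) = \sum_j \rt.\theta_X(y_j)$, applied after the visible summands have been pulled out, produces exactly the $\rt$-summands of $\widehat{\psi_X(P')}$, whose targets the relevant rule wraps in $\theta_X$. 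If some $\alpha_i$ does lie in $L\cup\{\tau\}$ (respectively $X\cup\{\tau\}$), then, using that summand as a witness, the second and third $\theta_L^U$-axioms (respectively the analogous $\psi_X$-axioms) are applied repeatedly to discard every summand whose label is outside $U\cup\{\tau\}$ (respectively every $\rt$-summand) and to peel every remaining summand off under its own copy of the operator, whereupon $\theta_L^U(\beta.x)=\beta.x$ and $\theta_L^U(\tau.x)=\tau.\theta_L^U(x)$ (respectively $\psi_X(\beta.x)=\beta.x$) turn each peeled summand into the prefixed term that \tab{sos CCSP} pairs with its transition. In all of this the only genuine content is the systematic matching of ``which summands the axioms keep or transform'' against ``which transitions the rules derive, and with which targets'', together with an explicit treatment of each label class ($\tau$; a visible action in $X$; a visible action outside $X$; $\rt$) and of the edge cases (empty argument, single-summand argument); no further ideas are required.
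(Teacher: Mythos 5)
Your proof is correct and follows essentially the same route as the paper's: induction on the measure $e(P)$ from the proof of \pr{finitely branching}, with RDP (but not RSP) handling the recursion case via the observation that $\rec{x|\RS}$ and $\rec{\RS_x|\RS}$ have identical transitions, and the axioms of Tables~\ref{tab:axioms CCSP} and~\ref{tab:axioms thetaX} pushing each operator through the head normal forms of its arguments. The paper writes out only the recursion and $\theta_L^U$ cases and declares the rest ``equally straightforward''; your more detailed treatment of the remaining operators, including the case split on $\I(P')\cap(L\cup\{\tau\})$ resp.\ $\I(P')\cap(X\cup\{\tau\})$, matches what the paper does for $\theta_L^U$ and intends for $\psi_X$.
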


\begin{proof}
The proof is by induction on the measure $e(P)$, defined in the proof of \pr{finitely branching}.

Let $P = \rec{x|\RS}$. Axiom RDP yields $\textit{Ax} \vdash P = \rec{x|\RS} = \rec{\RS_x|\RS}$.
Moreover, $e(\rec{\RS_x|\RS}) < e(\rec{x|\RS})$. So by induction, 
\plat{$\textit{Ax} \vdash \rec{\RS_x|\RS} = \widehat{\raisebox{0pt}[8pt]{$\rec{\RS_x|\RS}$}}$}.
Moreover, $\{(\alpha,Q) \mid \rec{\RS_x|\RS} \goesto{\scriptscriptstyle\alpha} Q\} =
\{(\alpha,Q) \mid \rec{x|\RS} \goesto{\scriptscriptstyle\alpha} Q\}$,\vspace{1pt} so
\plat{$\widehat{\raisebox{0pt}[8pt]{$\rec{\RS_x|\RS}$}} = \widehat{\raisebox{0pt}[8pt]{$\rec{x|\RS}$}} = \widehat P$}.
Thus $\textit{Ax} \vdash P = \widehat P$.

Let $P= \theta_L^U(P')$. Using that $e(P')<e(P)$, by induction
$\textit{Ax} \vdash P' = \widehat {P'}$ so  $\textit{Ax} \vdash P = \theta_L^U(\widehat {P'})$.
Let $$\widehat {P'} = \sum_{h\in H} \tau.P_h +
\sum_{i\in I} a_i.Q_i + \sum_{j\in J} b_j.R_j + \sum_{k\in K} \gamma_k.T_k\;,$$
where $a_i\in L$ for all $i \mathbin\in I$, $b_j\in U{\setminus}L$ for all $j \mathbin\in J$,
and $\gamma_k\notin U\cup\{\tau\}$ for all $k \mathbin\in K$. (So $\gamma_k$ may be $\rt$.)

\noindent
In case $H\cup I=\emptyset$, one has $\textit{Ax} \vdash P =  \theta_L^U(\widehat {P'}) = \widehat {P'} = \widehat P$, using the first
axiom for $\theta_L^U$.
Otherwise $$\textit{Ax} \vdash P = \sum_{h\in H} \tau.\theta_L^U(P_h) + \sum_{i\in I} a_i.Q_i + \sum_{j\in J} b_j.R_j$$
by the remaining four axioms for $\theta_L^U$. The right-hand side is $\widehat P$.

The cases for the remaining operators are equally straightforward.
\end{proof}
In the special case that $P$ is a recursion-free process, also the axiom RDP is not needed for this result.

Once we have head normalisation, the proofs of Theorems~\ref{thm:completeness}
and~\ref{thm:completeness for finite processes} are independent of the precise syntax of the process
algebra in question. Using \pr{hnf} we immediately obtain (\ref{bisSCtheta}) and (\ref{bisSCthetaf}):
\begin{theorem}{completeness thetaX}
For $\CCSP$ processes $P,Q\in\IP$ with guarded recursion, one has $P \bis{\,} Q$ iff $P=Q$ is derivable from the
axioms of Tables~\ref{tab:axioms CCSP} and~\ref{tab:axioms thetaX}.
\qed
\end{theorem}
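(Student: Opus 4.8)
The plan is to split the biconditional into soundness (``if'') and completeness (``only if''), and to obtain the latter as a corollary of the already-established completeness result \thm{completeness} for CCSP$_\rt$, with \pr{hnf} as the only new ingredient.

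For soundness it suffices to verify that each individual axiom in Tables~\ref{tab:axioms CCSP} and~\ref{tab:axioms thetaX} is sound for $\bis{}$: since $\bis{}$ is a full congruence for $\CCSP$ by \thm{bisimilarity congruence}, it is closed under the inference rules of equational logic, so soundness of the axioms propagates to all derivable equations. The axioms of Table~\ref{tab:axioms CCSP} are sound by \thm{completeness}; for the ten axioms of Table~\ref{tab:axioms thetaX} I would, for each, exhibit a small strong bisimulation (essentially the identity together with the pair of the two sides and its ``derivatives''), reading the matching transitions directly off the operational rules for $\theta_L^U$ and $\psi_X$ in \tab{sos CCSP}, with the occasional case split noted in the text (e.g.\ on whether $\beta\in L\cup\{\rt\}$ for the fourth $\theta_L^U$-axiom).

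For completeness, suppose $P\bis{} Q$ with $P,Q\in\IP$ having guarded recursion. First I would apply \pr{hnf} to get $\textit{Ax}'\vdash P=\widehat P$ and $\textit{Ax}'\vdash Q=\widehat Q$, so it remains to derive $\widehat P = \widehat Q$; by \pr{finitely branching} every process reachable from $P$ or $Q$ is again a finitely branching $\CCSP$ process with guarded recursion, hence by \pr{hnf} again head-normalisable using $\textit{Ax}'$. At this point the operators $\theta_L^U$ and $\psi_X$ no longer play any role: what remains is exactly the RSP-based argument recalled for CCSP$_\rt$ in the discussion following \thm{completeness} (following \cite{GM20})---build a guarded recursive specification with one equation per bisimulation-equivalence class of states reachable from $P$ and $Q$, show both $P$ and $Q$ are solutions, and conclude $\textit{Ax}'\vdash P=Q$ by RSP (in fact only the $+$-axioms of Table~\ref{tab:axioms CCSP} together with RDP and RSP are used). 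Since that argument never inspects the syntactic shape of the operators beyond head-normalisability, it transfers verbatim, yielding (\ref{bisSCtheta}).

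The hard part has, in effect, already been done: the delicacy lies in handling the infinitary recursive specifications and the infinitary rule RSP, and this is entirely inherited from \thm{completeness}/\cite{GM20} and is not worsened by the extra operators, while the one genuinely new step---that $\textit{Ax}'$ suffices to head-normalise terms over the enlarged signature, and without even invoking RSP---is precisely \pr{hnf}. So, modulo the routine soundness check of the ten new axioms, the theorem is an immediate corollary of \pr{hnf} and \thm{completeness}.
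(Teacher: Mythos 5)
Your proposal is correct and follows essentially the same route as the paper: the paper likewise treats soundness of the new axioms as a direct check against the operational rules for $\theta_L^U$ and $\psi_X$, and obtains completeness by observing that once \pr{hnf} supplies head normal forms over the enlarged signature, the RSP-based argument behind \thm{completeness} is independent of the precise syntax of the process algebra and transfers verbatim. Nothing further is needed.
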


\begin{theorem}{completeness thetaX finite}
For $\CCSP$ processes $P,Q\in\IP$ without recursion, one has $P \bis{\,} Q$ iff $P=Q$ is
derivable from the axioms of Tables~\ref{tab:axioms CCSP} and~\ref{tab:axioms thetaX} minus RDP and RSP\@.
\end{theorem}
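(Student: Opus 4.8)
The plan is as follows. \textbf{Soundness} (``if'') is immediate: every individual axiom of Tables~\ref{tab:axioms CCSP} and~\ref{tab:axioms thetaX} has already been checked to be sound for $\bis{}$ (using \thm{bisimilarity congruence}, that $\bis{}$ is a congruence), and the inference rules of equational logic preserve soundness. For \textbf{completeness} (``only if'') the idea is to reduce, provably in $\textit{Ax}'_f$ (the axioms of the two tables minus RDP and RSP), every recursion-free $\CCSP$ process to a term of the sublanguage built from $0$, action prefixing $\alpha.\_\!\_\,$ and choice $+$ only, and then to appeal to \thm{completeness for finite processes}, which is precisely the completeness of $\textit{Ax}''_f$ (the axioms of \tab{axioms CCSP} minus RDP and RSP) for recursion-free CCSP$_\rt$.

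The first step I would carry out is: for every recursion-free $P\in\IP$ there is a term $P^\ast$ built using only $0$, $\alpha.\_\!\_\,$ and $+$ with $\textit{Ax}'_f \vdash P = P^\ast$. Since recursion-free $\CCSP$ processes are finite, the processes reachable from $P$ form a finite, loop-free set, so the maximal length $d(P)$ of a transition sequence starting in $P$ is a well-defined natural number; moreover, by a trivial structural induction as in the proof of \pr{finitely branching}, every process reachable from $P$ is again a recursion-free $\CCSP$ process, and $P\goesto\alpha Q$ implies $d(Q)<d(P)$. I then proceed by induction on $d(P)$. By \pr{hnf}, in its recursion-free instance where RDP is not needed, $\textit{Ax}'_f \vdash P = \widehat P$, where $\widehat P$ is the finite sum $\sum_i \alpha_i.Q_i$ over all transitions $P\goesto{\alpha_i}Q_i$. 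Each successor $Q_i$ has $d(Q_i)<d(P)$, so the induction hypothesis gives terms $Q_i^\ast$ in the sublanguage with $\textit{Ax}'_f \vdash Q_i = Q_i^\ast$; substituting $Q_i^\ast$ for $Q_i$ inside $\widehat P$, which is sound in equational logic, yields $\textit{Ax}'_f \vdash P = P^\ast$ with $P^\ast := \sum_i\alpha_i.Q_i^\ast$ in the sublanguage.

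Now, given recursion-free $\CCSP$ processes $P,Q\in\IP$ with $P \bis{} Q$, the previous step produces $P^\ast$ and $Q^\ast$ in the sublanguage $\{0,\alpha.\_\!\_\,,+\}$ with $\textit{Ax}'_f \vdash P = P^\ast$ and $\textit{Ax}'_f \vdash Q = Q^\ast$; in particular $P^\ast$ and $Q^\ast$ are recursion-free CCSP$_\rt$ processes. By soundness, $P^\ast \bis{} P \bis{} Q \bis{} Q^\ast$, so \thm{completeness for finite processes} gives $\textit{Ax}''_f \vdash P^\ast = Q^\ast$, and hence $\textit{Ax}'_f \vdash P^\ast = Q^\ast$ because $\textit{Ax}''_f\subseteq\textit{Ax}'_f$. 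Chaining the derivations, $\textit{Ax}'_f \vdash P = P^\ast = Q^\ast = Q$, which is what is claimed.

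The only point that needs care is the iterated head normalisation: one must use a well-founded measure that decreases along $\goesto{}$ rather than along derivations (the measure $e(\cdot)$ from the proof of \pr{finitely branching} bounds derivation depth, not transition depth, which is why I take $d(P)$, available precisely because recursion-free processes are finite and loop-free), and one must invoke closure of equational provability under substitution into contexts to replace the successors of $P$ by their normal forms inside $\widehat P$. Everything else is routine equational bookkeeping, exactly as in the textbook completeness proof for finite process terms underlying \thm{completeness for finite processes}.
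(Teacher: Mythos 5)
Your proof is correct, but it is organised differently from the paper's. The paper does not reduce to the basic fragment at all: it observes that once \pr{hnf} provides head normalisation for the full language (with RDP not needed in the recursion-free case), the \emph{generic} completeness argument underlying \thm{completeness for finite processes}---induction on depth, head-normalise both sides, absorb each summand of one head normal form into the other using the transfer property and $x+x=x$---``is independent of the precise syntax of the process algebra'' and can simply be re-run on $\CCSP$ terms containing $\theta_L^U$ and $\psi_X$. You instead iterate head normalisation along the well-founded transition depth $d(P)$ to \emph{eliminate} all operators other than $0$, $\alpha.\_\!\_\,$ and $+$, proving $\textit{Ax}'_f \vdash P = P^\ast$ with $P^\ast$ in the basic sublanguage, and then invoke \thm{completeness for finite processes} as a black box on $P^\ast$ and $Q^\ast$. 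Your route buys genuine modularity (the earlier theorem is cited rather than its proof repeated) at the cost of an extra normalisation lemma; it is sound because recursion-free $\CCSP$ processes are finite and loop-free, so $d(P)$ is well defined and successors are again recursion-free, and because equational provability is closed under replacement in contexts. The one thing to note is that this full-unfolding strategy is specific to the recursion-free case---it cannot be adapted to \thm{completeness thetaX}, which is why the paper phrases its argument the way it does---but for the statement at hand it is a perfectly valid and arguably cleaner decomposition.
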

A law that turns out to be particularly useful in verifications modulo strong reactive bisimilarity is
\begin{center}
 \framebox{$\theta_K^V(\theta_L^U(x)) \bis{} \theta_{K\cup L}^{V\cap U}(x)$\qquad 
 {\small provided $U\mathbin=V$ or $K\mathbin=L$ or $K\mathbin\subseteq L \mathbin\subseteq U \mathbin\subseteq V$
 or $L\mathbin\subseteq K \mathbin\subseteq V \mathbin\subseteq U$}\qquad  (L1)}\;.
\end{center}
Note that the right-hand side only exists if $(K\cup L)\subseteq(V\cap U)$.
This law is sound for strong bisimilarity, as demonstrated by the following proposition.
Yet it is not needed to add it to \tab{axioms thetaX}, as all its closed instances are derivable.
In fact, this is a consequence of the above completeness theorems.

\begin{proposition}{theta collapse}
$\theta_K^V(\theta_L^U(P)) \bis{} \theta_{K\cup L}^{V\cap U}(P)$, provided $(K\cup L)\subseteq(V\cap U)$
  and either $U\mathbin=V$ or $K\mathbin=L$ or $K\mathbin\subseteq L \mathbin\subseteq U \mathbin\subseteq V$
 or $L\mathbin\subseteq K \mathbin\subseteq V \mathbin\subseteq U$.
\end{proposition}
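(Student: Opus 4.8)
The plan is to exhibit one explicit strong bisimulation and then to check, by a case analysis on transition labels, that the two sides have exactly the same transitions, the side condition being needed at a single point. Write $M := K\cup L$ and $N := V\cap U$, so that the hypothesis $(K\cup L)\subseteq(V\cap U)$ says precisely that $\theta_M^N$ is a well-formed operator, and the claim is $\theta_K^V(\theta_L^U(P)) \bis{} \theta_M^N(P)$. I would show that the symmetric closure of
$${\B} := {\bis{}} \;\cup\; \{(\theta_K^V(\theta_L^U(R)),\;\theta_M^N(R)) \mid R\in\IP\}$$
is a strong bisimulation; since $\bis{}$ is reflexive, taking $R:=P$ then yields the proposition. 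Note that no appeal to the congruence property of $\bis{}$ is needed.

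First I would record, directly from the rules of \tab{sos CCSP}, the behaviour of a single $\theta$-operator: $\theta_L^U(x)\goesto\tau z$ iff $z=\theta_L^U(y)$ for some $y$ with $x\goesto\tau y$; for $a\in A$, $\theta_L^U(x)\goesto a z$ iff $x\goesto a z$ and ($a\in U$ or $\I(x)\cap(L\cup\{\tau\})=\emptyset$); and $\theta_L^U(x)\goesto\rt z$ iff $x\goesto\rt z$ and $\I(x)\cap(L\cup\{\tau\})=\emptyset$. In particular $\I(\theta_L^U(R))=\I(R)$ when $\I(R)\cap(L\cup\{\tau\})=\emptyset$, while otherwise $\tau\notin\I(\theta_L^U(R))$ and the visible part of $\I(\theta_L^U(R))$ is $\I(R)\cap U$; the same holds for $\theta_K^V$ and $\theta_M^N$. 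I would then check the transfer property at a pair $(\theta_K^V(\theta_L^U(R)),\theta_M^N(R))$. A $\tau$-transition passes both $\theta$-layers on both sides, producing $(\theta_K^V(\theta_L^U(R')),\theta_M^N(R'))\in{\B}$ for the common $R'$ with $R\goesto\tau R'$. An $a$-transition ($a\in A$) or an $\rt$-transition, when enabled, strips both $\theta$-layers on both sides and lands in the same successor $R'$, hence in $\bis{}$; so the only thing to verify is that such a transition is enabled on the left exactly when it is enabled on the right.

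Unfolding the single-$\theta$ facts, an $a$-transition $R\goesto a R'$ is inherited by $\theta_K^V(\theta_L^U(R))$ iff $\bigl[a\in U$ or $\I(R)\cap(L\cup\{\tau\})=\emptyset\bigr]$ and $\bigl[a\in V$ or $\I(\theta_L^U(R))\cap(K\cup\{\tau\})=\emptyset\bigr]$, and by $\theta_M^N(R)$ iff $\bigl[a\in V\cap U$ or $\I(R)\cap(M\cup\{\tau\})=\emptyset\bigr]$; the $\rt$-case is the same two conditions with the "$a\in\cdots$" disjuncts deleted. I would prove these equivalent by splitting on whether $\I(R)$ meets $L\cup\{\tau\}$ and whether it meets $K\cup\{\tau\}$, evaluating $\I(\theta_L^U(R))$ in each case, and noting that $\I(R)\cap(M\cup\{\tau\})=\emptyset$ iff $\I(R)$ meets neither $L\cup\{\tau\}$ nor $K\cup\{\tau\}$. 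The $\rt$-case then needs no side condition at all: when $\I(R)$ meets $L\cup\{\tau\}$ both sides are disabled, and when it does not, both left and right reduce to $\I(R)\cap(K\cup\{\tau\})=\emptyset$.

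The main obstacle — and the only use of the side condition — is the "left enabled $\Rightarrow$ right enabled" direction of the $a$-step equivalence when $a\notin V$. In the subcase where $\I(R)$ meets $L\cup\{\tau\}$ (so $\theta_L^U$ has already inhibited the non-$U$ behaviour of $R$), "left enabled" forces $a\in U$, $\tau\notin\I(R)$ and $\I(R)\cap U\cap K=\emptyset$; picking $b\in\I(R)\cap L$ and using $L\subseteq U$ gives $b\in\I(R)\cap U$ with $b\notin K$. In the subcase where $\I(R)$ is disjoint from $L\cup\{\tau\}$ but meets $K\cup\{\tau\}$, "left enabled" forces $a\in V$ (contradicting $a\notin V$) unless instead... in fact it forces $a\in V$ directly, and one picks $c\in\I(R)\cap K$ to analyse whether $a\in U$. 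Under the side conditions $K=L$ or $L\subseteq K\subseteq V\subseteq U$ the witness $b$ (resp.\ $c$) lies in $K$ (resp.\ in $L$), contradicting $b\notin K$ (resp.\ $\I(R)\cap L=\emptyset$), so these subcases are vacuous; under $U=V$ or $K\subseteq L\subseteq U\subseteq V$ one has $U\subseteq V$, so $a\in U$ gives $a\in V$, and "right enabled" follows. When $\I(R)$ is disjoint from both $L\cup\{\tau\}$ and $K\cup\{\tau\}$, "left enabled" plus $a\notin V$ gives $\I(R)\cap(M\cup\{\tau\})=\emptyset$, so "right enabled" holds outright. The reverse direction and all remaining cases need no side condition, since each disjunct available on the right ($a\in V\cap U$, or $\I(R)$ disjoint from $M\cup\{\tau\}$, hence from both $L\cup\{\tau\}$ and $K\cup\{\tau\}$) directly supplies the two disjuncts required on the left. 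This completes the verification that $\B$ is a strong bisimulation.
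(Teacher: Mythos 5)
Your proposal is correct and takes essentially the same route as the paper's own proof: the same relation pairing $\theta_K^V(\theta_L^U(R))$ with $\theta_{K\cup L}^{V\cap U}(R)$, the same separate treatment of $\tau$, the same reduction to matching the enabling conditions for $a$- and $\rt$-transitions, and the side condition invoked at exactly the same point (combining the two per-layer disjunctions into the single one for $\theta_{K\cup L}^{V\cap U}$). One slip in your recorded single-$\theta$ facts: when $\I(R)\cap(L\cup\{\tau\})\neq\emptyset$ it is \emph{not} true that $\tau\notin\I(\theta_L^U(R))$ --- the $\tau$-rule for $\theta_L^U$ has no side condition, so $\tau\in\I(R)$ always yields $\tau\in\I(\theta_L^U(R))$; your later case analysis in fact relies on the correct version of this fact (you infer $\tau\notin\I(R)$ from $\I(\theta_L^U(R))\cap(K\cup\{\tau\})=\emptyset$, which is only valid because $\tau$ does pass through), so the argument itself stands.
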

\begin{proof}
For given $K,L,U,V\subseteq A$ with $(K\cup L)\subseteq(V\cap U)$ and either $U\mathbin=V$ or
$K\mathbin=L$ or $K\mathbin\subseteq L \mathbin\subseteq U \mathbin\subseteq V$
or $L\mathbin\subseteq K \mathbin\subseteq V \mathbin\subseteq U$, let\vspace{-2.5ex}
  $${\B} :=\textit{Id} \cup \left\{\big(\theta_K^V(\theta_L^U(P)), \theta_{K\cup L}^{V\cap U}(P)\big) \mid P \in \IP\right\}\vspace{-1ex}.$$
  It suffices to show that the symmetric closure $\widetilde\B$ of $\B$ is a strong bisimulation.
  So let $R \mathrel{\widetilde\B} T$ and $R \goesto{\alpha} R'$ with $\alpha\in A \cup \{\tau,\rt\}$.
  I have to find a $T'$ with $T \goesto{\alpha} T'$ and $R' \mathrel{\widetilde\B} T'$.
\begin{itemize}
\item
The case that $R = T$ is trivial.
\item
Let $R = \theta_K^V(\theta_L^U(P))$ and $T = \theta_{K\cup L}^{V\cap U}(P)$.

First assume $\alpha=\tau$. Then $P\goesto\tau P'$ for some $P'$ such that $R' = \theta_K^V(\theta_L^U(P'))$.\\
Hence $T \mathbin= \theta_{K\cup L}^{V\cap U}(P) \goesto\tau \theta_{K\cup L}^{V\cap U}(P') =: T'$, and $R' \B T'$.

Now assume $\alpha \mathbin\in A \cup\{\rt\}$.
Then $\theta_L^U(P) \goesto{\alpha} R'$ and either $\alpha \in V$ or $\theta_L^U(P){\ngoesto\beta}$
for all $\beta\in K\cup\{\tau\}$. Using that $K\subseteq U$, this implies that either $\alpha \in V$ or $P{\ngoesto\beta}$
for all $\beta\in K\cup\{\tau\}$. Moreover, $P \goesto{\alpha} R'$ and either $\alpha \in U$ or
$P{\ngoesto\beta}$ for all $\beta\in L\cup\{\tau\}$. It follows that either $\alpha \in V\cap U$
or $P{\ngoesto\beta}$ for all $\beta\in K\cup L\cup\{\tau\}$. (Here I use that either $U\mathbin=V$
or $K\mathbin=L$ or $K\mathbin\subseteq L \mathbin\subseteq U \mathbin\subseteq V$
or $L\mathbin\subseteq K \mathbin\subseteq V \mathbin\subseteq U$.)\,
Consequently, $T = \theta_{K\cup L}^{V\cap U}(P) \goesto\alpha R'$.

\item
Let $R = \theta_{K\cup L}^{V\cap U}(P)$ and $T = \theta_K^V(\theta_L^U(P))$.

First assume $\alpha=\tau$. Then $P\goesto\tau P'$ for some $P'$ such that $R' =\theta_{K\cup L}^{V\cap U}(P')$.\\
Hence $T \mathbin= \theta_K^V(\theta_L^U(P)) \goesto\tau \theta_K^V(\theta_L^U(P')) =: T'$,
and \plat{$R' \mathrel{\widetilde\B} T'$}.

Now assume $\alpha \mathbin\in A \cup\{\rt\}$.
Then $P\goesto{\alpha} R'$ and either $\alpha \in V\cap U$
or $P{\ngoesto\beta}$ for all $\beta\in K\cup L\cup\{\tau\}$.
Consequently, $\theta_L^U(P) \goesto\alpha R'$ and thus $T = \theta_K^V(\theta_L^U(P)) \goesto\alpha R'$.
\qed
\end{itemize}
\end{proof}
The side condition to L1 cannot be dropped, for
$\theta_{\{c\}}^{\{a,c\}}\theta_\emptyset^{\{c\}}(a.0+c.0) \goesto{a} 0$, yet $\theta_{\{c\}}^{\{c\}}(a.0+c.0) \ngoesto{a}$.

\subsection[A complete axiomatisation of strong reactive bisimilarity]
           {A complete axiomatisation of strong reactive bisimilarity on guarded $\CCSP$}\label{sec:axioms reactive}

To obtain a sound and complete axiomatisation of strong reactive bisimilarity for $\CCSP$ with guarded recursion,
one needs to combine the axioms of Tables~\ref{tab:axioms CCSP},~\ref{tab:axioms thetaX} and~\ref{tab:reactive axioms}.
\begin{table}[t]
\vspace{-6pt}
\caption{A complete axiomatisation of strong reactive bisimilarity on guarded $\CCSP$}
\label{tab:reactive axioms}
\begin{center}
\framebox{$\begin{array}{c@{\qquad}c}
\displaystyle\frac{\psi_X(x) = \psi_X(y) \mbox{~for all~} X\subseteq A}{x=y} & \mbox{(RA)}
\end{array}$}
\end{center}
\end{table}
These axioms are useful only in combination with the full congruence
property of strong reactive bisimilarity, \thm{full congruence}. This is what allows us to apply
these axioms within subexpressions of a given expression.
Since ${\bis{\,}} \subseteq {\bis{r}}$\,, the soundness of all equational axioms for strong reactive
bisimilarity follows from their soundness for strong bisimilarity. The soundness of RSP has been
established as \thm{RSP}.
The soundness of RA, the \emph{reactive approximation axiom}, is contributed by the following proposition.

\begin{proposition}{reactive approximation}
Let $P,Q\in\IP$. If $\psi_X(P) \bis{r} \psi_X(Q)$ for all
$X\subseteq A$, then $P \bis{r} Q$.
\end{proposition}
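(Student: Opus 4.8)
The plan is to construct a strong time-out bisimulation (\df{time-out bisimulation}) relating $P$ and $Q$ out of the given family of reactive bisimilarities, and then to invoke \pr{time-out bisimulation}. Define a relation $\B$ on $\IP$ by declaring $P'\B Q'$ iff $\psi_X(P')\bis{r}\psi_X(Q')$ for every $X\subseteq A$. Then $\B$ is symmetric (because $\bis{r}$ is), and ${\bis{r}}\subseteq{\B}$, since $\bis{r}$ is a congruence for each operator $\psi_X$ by \thm{full congruence}; moreover $P\B Q$ by hypothesis. So it suffices to check that $\B$ satisfies the two clauses of \df{time-out bisimulation}. Throughout I use that $\bis{r}$, viewed as a relation on $\IP$, is itself a strong time-out bisimulation, together with three immediate consequences of the operational rules for $\psi_X$: that $\psi_X(x)\goesto{\alpha}y$ with $\alpha\in A\cup\{\tau\}$ iff $x\goesto{\alpha}y$; that $\psi_X(x)\goesto{\rt}z$ iff $z=\theta_X(y)$ for some $y$ with $x\goesto{\rt}y$ and $\I(x)\cap(X\cup\{\tau\})=\emptyset$; and hence that $\I(\psi_X(x))=\I(x)$.

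For the first clause, let $P'\B Q'$ and $P'\goesto{\alpha}P''$ with $\alpha\in A\cup\{\tau\}$. Then $\psi_\emptyset(P')\goesto{\alpha}P''$, so the time-out bisimulation property of $\bis{r}$ applied to $\psi_\emptyset(P')\bis{r}\psi_\emptyset(Q')$ yields an $R$ with $\psi_\emptyset(Q')\goesto{\alpha}R$ and $P''\bis{r}R$; hence $Q'\goesto{\alpha}R$ and, since ${\bis{r}}\subseteq{\B}$, also $P''\B R$. For the second clause, let $P'\B Q'$, $\I(P')\cap(X\cup\{\tau\})=\emptyset$ and $P'\goesto{\rt}P''$. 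Then $\psi_X(P')\goesto{\rt}\theta_X(P'')$, and since $\I(\psi_X(P'))=\I(P')$ this $\rt$-transition can be matched, using $\psi_X(P')\bis{r}\psi_X(Q')$, by some $\psi_X(Q')\goesto{\rt}S$ with $\theta_X(\theta_X(P''))\bis{r}\theta_X(S)$. By the third fact above, $S=\theta_X(Q'')$ for some $Q''$ with $Q'\goesto{\rt}Q''$. Collapsing the doubled environment operator via \pr{theta collapse} in the diagonal case $K=L=U=V=X$ (so $\theta_X(\theta_X(\cdot))\bis{}\theta_X(\cdot)$, and ${\bis{}}\subseteq{\bis{r}}$), we get $\theta_X(P'')\bis{r}\theta_X(Q'')$, hence $\theta_X(P'')\B\theta_X(Q'')$. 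Thus $\B$ is a strong time-out bisimulation, and \pr{time-out bisimulation} gives $P\bis{r}Q$.

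Most of the work is routine rule inversion plus the bookkeeping identity $\I(\psi_X(P'))=\I(P')$; the single load-bearing algebraic fact is the cancellation $\theta_X(\theta_X(\cdot))\bis{}\theta_X(\cdot)$ supplied by \pr{theta collapse}. Conceptually, the point is that $\psi_X$ remembers the environment $X$ only across the immediately following time-out step, after which it degenerates into $\theta_X$; consequently no single $\psi_X$ determines $\bis{r}$ (it is blind to time-outs that require a different environment), but ranging over all $X$ provides exactly the data demanded by the time-out clause of \df{time-out bisimulation}. I expect no genuine obstacle here; the only mild care needed is to produce the matching successor in the first clause uniformly, which is why I fix $X=\emptyset$ there.
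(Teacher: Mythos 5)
Your proof is correct and follows essentially the same route as the paper: both match an $\alpha$-transition of $P$ through some $\psi_X(P)\bis{r}\psi_X(Q)$, match a $\rt$-transition through the rule $\psi_X(P)\goesto{\rt}\theta_X(P')$ together with the identity $\I(\psi_X(P))=\I(P)$, and then cancel the doubled environment operator via \pr{theta collapse}. The only (harmless) difference is your choice of the larger relation $\{(P',Q')\mid\forall X.\ \psi_X(P')\bis{r}\psi_X(Q')\}$ in place of the paper's ${\bis{r}}\cup\{(P,Q),(Q,P)\}$, which costs you an extra appeal to the congruence of $\bis{r}$ for $\psi_X$ to see that it contains $\bis{r}$.
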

\begin{proof}
Given $P,Q\mathbin\in\IP$ with $\psi_X(P) \bis{r} \psi_X(Q)$ for all $X\mathbin\subseteq A$,
I show that ${\B} := {\bis{r}}\, \cup \{(P,Q),(Q,P)\}$ is a strong time-out bisimulation.

Let $P\goesto\alpha P'$ with $\alpha\in A\cup\{\tau\}$.
Take any $X\subseteq A$. Then $\psi_X(P)\goesto\alpha P'$.
Since $\psi_X(P) \bis{r} \psi_X(Q)$, this implies $\psi_X(Q)\goesto\alpha Q'$ for some $Q'$ with
$P' \bis{r} Q'$, and hence $Q\goesto\alpha Q'$.

Let $P\goesto\rt P'$ and $\I(P)\cap (X\cup \{\tau\}) = \emptyset$.
Then $\psi_X(P)\goesto\rt \theta_X(P')$ and $\I(\psi_X(P)) \cap\linebreak[2] (X\cup \{\tau\}) \mathbin= \emptyset$.
Since $\psi_X(P) \bis{r} \psi_X(Q)$, this implies $\psi_X(Q)\goesto\rt Q''$ for some $Q''$ with
$\theta_X(\theta_X(P')) \bis{r} \theta_X(Q'')$. It must be that $Q \goesto\rt Q'$ for some $Q'$
with $Q'' = \theta_X(Q')$. By \pr{theta collapse}, $\theta_X(\theta_X(R)) \bis{\,} \theta_X(R)$
for all $R\mathbin\in\IP$. Thus
$\theta_X(P') \bis{\,} \theta_X(\theta_X(P')) \bis{r} \theta_X(\theta_X(Q')) \bis{\,} \theta_X(Q')$,
which had to be shown.
\end{proof}

\noindent
At first sight it appears that axiom RA is not very handy, as, in case the alphabet $A$ of
visible actions is finite, the number of premises to verify is exponential in the size $A$.
In case $A$ is infinite, there are even
uncountably many premises. However, in practical verifications this is hardly an issue, as one uses
a partition of the premises into a small number of equivalence classes, each of which
requires only one common proof. This technique will be illustrated on three examples below.
Furthermore, one could calculate the set of visible actions $\mathcal{J}(P)$ of a process $P$ that can
be encountered as initial actions after one $\rt$-transition followed by a sequence of
$\tau$-transitions. For large classes of processes, $\mathcal{J}(P)$ will be a finite set.
Now axiom RA can be modified by changing $X \subseteq A$ into $X \subseteq \mathcal{J}(P)\cup \mathcal{J}(Q)$.
This preserves the soundness of the axiom, because only the actions in $\mathcal{J}(P)$ play any r\^ole in evaluating $\psi_X(P)$.

A crucial property of strong reactive bisimilarity was mentioned in the introduction:
\begin{center}
 \framebox{$\tau.P + \rt.Q = \tau.P$ \qquad (L2)}\;.
\end{center}
It is an immediate consequence of RA, since
$\psi_X(\tau.P + \rt.Q) = \psi_X(\tau.P)$ for any $X\subseteq A$, by \tab{axioms thetaX}.
\hypertarget{L3}{Another useful law in verifications modulo strong reactive bisimilarity is
\begin{center}
 \framebox{$\sum_{i\in I}a_i. x_i + \rt.y = \sum_{i\in I}a_i .x_i + \rt.\theta_\emptyset^{A\setminus \it In}(y)$,
 where ${\it In} = \{a_i\mid i\in I\}$.
 \qquad (L3)}
\end{center}
Its soundness is intuitively obvious: the $\rt$-transition to $y$ will be taken only in an
environment $X$ with $X \cap {\it In} = \emptyset$. Hence one can just as well restrict the behaviour of $y$
to those transitions that are allowed in one such environment. This law was one of the prime reasons
for extending the family of operators $\theta_X$ ($= \theta_X^X$), which were needed to establish the
key theorems of this paper, to the larger family $\theta_L^U$.
Law L3 for finite $I$ is effortlessly derivable from its simple instance
\begin{center}
 \framebox{$a.x + \rt. y = a.x + \rt.\theta_\emptyset^{A\setminus \{a\}}(y)$. \qquad (L3$'$)}
\end{center}
in combination with L1.} I now show how to derive L3 from RA. For this proof I need to partition the set of premises
of RA in only two equivalence classes.

First let $X\cap {\it In} \neq \emptyset$. Then
$\psi_X(\sum_{i\in I}a_i. x_i + \rt.y) = \sum_{i\in I}a_i. x_i = \psi_X(\sum_{i\in I}a_i .x_i + \rt.\theta_\emptyset^{A\setminus \it In}(y))$.

Next let $X\cap {\it In} = \emptyset$. Then
$\psi_X(\sum_{i\in I}a_i. x_i + \rt.y) \begin{array}[t]{@{~=~}l}\sum_{i\in I}a_i. x_i + \rt.\theta_X(y) \\
\sum_{i\in I}a_i .x_i + \rt.\theta_X(\theta_\emptyset^{A\setminus \it In}(y)) \\
\psi_X(\sum_{i\in I}a_i .x_i + \rt.\theta_\emptyset^{A\setminus \it In}(y))\;,
\end{array}$\\
where the second step is an application of L1.

As an application of L3$'$ one obtains the law from \cite{vG21} that was justified in the
introduction:
\[a.P + \rt.(Q + \tau.R + a.S) \begin{array}[t]{@{~=~}l}
                              a.P + \rt.\theta_\emptyset^{A\setminus \{a\}}(Q + \tau.R + a.S) \\
                              a.P + \rt.\theta_\emptyset^{A\setminus \{a\}}(Q + \tau.R) \\
                              a.P + \rt.(Q + \tau.R)\;.
\end{array}\]

As a third illustration of the use of RA I derive an equational law that does not follow from
L1, L2 and L3, namely\vspace{-1ex}
\[b.P + \rt.\textcolor{red}(a.Q+\tau.(b.R+a.S)\textcolor{red}) + \rt.\tau.a.S  =
  b.P + \rt.\textcolor{blue}(a.Q+\tau.a.S\textcolor{blue}) + \rt.\tau.\textcolor{magenta}(b.R+a.S\textcolor{magenta})\]
These are the systems depicted in \fig{L3}.
These systems are surely not strongly bisimilar.
Moreover, L3 does not help in proving them equivalent, as applying
\plat{$\theta_\emptyset^{A\setminus \{b\}}$} to any of the four targets of a $\rt$-transition does not
kill any of the transitions of those processes. In particular, \plat{$\theta_\emptyset^{A\setminus \{b\}}(b.R+a.S)=b.R+a.S$}.
To derive this law from RA, I partition $\Pow(A)$ into three equivalence classes.

First let $b \in X$. Then $\begin{array}[t]{@{~=~}l}
\multicolumn{1}{l}{\psi_X(b.P + \rt.\textcolor{red}(a.Q+\tau.(b.R+a.S)\textcolor{red}) + \rt.\tau.a.S)} \\
b.P \\
\psi_X(b.P + \rt.\textcolor{blue}(a.Q+\tau.a.S\textcolor{blue}) + \rt.\tau.\textcolor{magenta}(b.R+a.S\textcolor{magenta})).
\end{array}$

Next let $b\notin X$ and $a \in X$. Then
\[
\begin{array}{c@{~=~}l}
\multicolumn{2}{l}{\psi_X\big(b.P + \rt.\textcolor{red}(a.Q+\tau.(b.R+a.S)\textcolor{red}) + \rt.\tau.a.S\big)}\\
& b.P + \rt.\theta_X\textcolor{red}{\big(}a.Q+\tau.(b.R+a.S)\textcolor{red}{\big)} + \rt.\theta_X\big(\tau.a.S\big) \\
& b.P + \rt.\textcolor{red}(a.Q+\tau.\theta_X(b.R+a.S)\textcolor{red}) + \rt.\tau.\theta_X(a.S)  \\
& b.P + \rt.(a.Q+\tau.a.S) + \rt.\tau.a.S \\
& b.P + \rt.\textcolor{blue}(a.Q+\tau.\theta_X(a.S)\textcolor{blue}) +
       \rt.\tau.\theta_X\textcolor{magenta}{\big(}b.R+a.S\textcolor{magenta}{\big)} \\
& b.P + \rt.\theta_X\textcolor{blue}{\big(}a.Q+\tau.a.S\textcolor{blue}{\big)} +
       \rt.\theta_X\big(\tau.\textcolor{magenta}(b.R+a.S\textcolor{magenta})\big) \\
& \psi_X\big( b.P + \rt.\textcolor{blue}(a.Q+\tau.a.S\textcolor{blue}) + \rt.\tau.\textcolor{magenta}(b.R+a.S\textcolor{magenta})\big)\;.
\end{array}\]

Finally let $a,b \notin X$. Then
\[
\begin{array}{c@{~=~}l}
\multicolumn{2}{l}{\psi_X\big(b.P + \rt.\textcolor{red}(a.Q+\tau.(b.R+a.S)\textcolor{red}) + \rt.\tau.a.S\big)}\\
& b.P + \rt.\theta_X\textcolor{red}{\big(}a.Q+\tau.(b.R+a.S)\textcolor{red}{\big)} + \rt.\theta_X\big(\tau.a.S\big) \\
& b.P + \rt.\tau.\theta_X(b.R+a.S) + \rt.\tau.\theta_X(a.S) \\
& b.P + \rt.\tau.(b.R+a.S) + \rt.\tau.a.S \\
& b.P + \rt.\tau.a.S + \rt.\tau.\textcolor{magenta}(b.R+a.S\textcolor{magenta}) \\
& b.P + \rt.\tau.\theta_X(a.S) + \rt.\tau.\theta_X\textcolor{magenta}{\big(}b.R+a.S\textcolor{magenta}{\big)} \\
& b.P + \rt.\theta_X\textcolor{blue}{\big(}a.Q+\tau.a.S\textcolor{blue}{\big)} +
       \rt.\theta_X\big(\tau.\textcolor{magenta}(b.R+a.S\textcolor{magenta})\big) \\
& \psi_X\big( b.P + \rt.\textcolor{blue}(a.Q+\tau.a.S\textcolor{blue}) + \rt.\tau.\textcolor{magenta}(b.R+a.S\textcolor{magenta})\big)\;.
\end{array}\]

\subsection{Completeness for finite processes}\label{sec:finite}

\begin{theorem}{finite}
Let $P$ and $Q$ be closed recursion-free $\CCSP$ expressions. Then
$P \bis{r} Q \Rightarrow \textit{Ax}_f \vdash P \mathbin= Q$.
\end{theorem}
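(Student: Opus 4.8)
The plan is to reduce completeness for $\bis{r}$ on finite $\CCSP$ expressions to completeness for $\bis{}$ on a suitable class of terms, namely \thm{completeness thetaX finite}, by extracting a \emph{reactive normal form} for every recursion-free $\CCSP$ process and showing two things: (i) every $P$ is provably equal to a process in normal form, and (ii) two normal forms that are $\bis{r}$-equivalent are already $\bis{}$-equivalent. Given (i) and (ii), the completeness argument runs as follows: if $P \bis{r} Q$, first derive $\textit{Ax}_f \vdash P = P^\circ$ and $\textit{Ax}_f \vdash Q = Q^\circ$ with $P^\circ,Q^\circ$ in normal form; by soundness $P^\circ \bis{r} P \bis{r} Q \bis{r} Q^\circ$, so by (ii) $P^\circ \bis{} Q^\circ$; then \thm{completeness thetaX finite} gives $\textit{Ax}'_f \vdash P^\circ = Q^\circ$, and—since the normal form avoids $\theta_L^U$ and $\psi_X$ except in the controlled way handled by Table~\ref{tab:axioms thetaX}, or one keeps those axioms available—this already sits inside $\textit{Ax}_f$, closing the loop.

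The key technical device for (i) and (ii) is law \hyperlink{L3}{L3} (derivable from RA, as shown just above the theorem), which lets us replace any $\rt$-summand $\rt.y$ in a head normal form $\sum_{i\in I}a_i.x_i + \sum_{h\in H}\tau.x_h + \sum_{\text{$\rt$-summands}}\rt.y$ by $\rt.\theta_\emptyset^{A\setminus\mathit{In}}(y)$ where $\mathit{In}=\{a_i\mid i\in I\}$; together with L2 ($\tau.P+\rt.Q=\tau.P$, i.e. all $\rt$-summands vanish in the presence of a $\tau$-summand) this says the body of each reachable $\rt$-transition has already been ``restricted to'' the environments in which that transition can fire. I would define the reactive normal form by structural recursion: head-normalise (using Table~\ref{tab:axioms CCSP} plus Table~\ref{tab:axioms thetaX} to eliminate $\theta_L^U,\psi_X$), recursively normalise all the $a_i.x_i$ and $\tau.x_h$ bodies, delete the $\rt$-summands if any $\tau$-summand is present (L2), otherwise apply L3 to each $\rt$-summand and recursively normalise the resulting body $\theta_\emptyset^{A\setminus\mathit{In}}(y)$ (which, having no $\tau$- or $a_i$-initial actions with $a_i\in\mathit{In}$, stays well-behaved). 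Termination is by the obvious size/depth measure since we only ever descend into proper subterms. This yields (i) by construction.

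For (ii), the substance of the proof, I would show: if $P^\circ$, $Q^\circ$ are in reactive normal form and $P^\circ \bis{r} Q^\circ$, then $P^\circ \bis{} Q^\circ$. The proof is by induction on the combined size. Using \thm{modal char} or, more directly, \df{time-out bisimulation} via \pr{time-out bisimulation}, from $P^\circ \bis{r} Q^\circ$ one has $\I(P^\circ)=\I(Q^\circ)$, matching $a$- and $\tau$-transitions lead to $\bis{r}$-related (hence, by induction, $\bis{}$-related) normalised targets, and for each $\rt$-transition $P^\circ \goesto{\rt}P'$—which exists only when $\I(P^\circ)\cap\{\tau\}=\emptyset$, and whose presence forces, by the normal form, that there are no $\tau$-summands—one uses the time-out clause with $X=\mathit{In}=\I(P^\circ)\cap A$: since $P'$ is already of the shape $\theta_\emptyset^{A\setminus\mathit{In}}(\cdots)$ one has $P'\bis{}\theta_{\mathit{In}}(P')$ (by \lem{theta}, as $\I(P')\cap(\mathit{In}\cup\{\tau\})=\emptyset$), so the required $\theta_{\mathit{In}}(P')\bis{r}\theta_{\mathit{In}}(Q')$ degenerates to $P'\bis{r}Q'$, hence $P'\bis{}Q'$ by induction. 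Matching these transfer properties against \df{bisimulation} shows $\{(P^\circ,Q^\circ)\}$ extends to a strong bisimulation.

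The main obstacle I anticipate is the bookkeeping in (ii) around the $\rt$-transitions: one must argue carefully that for normal forms the parametrised conditions of \df{time-out bisimulation} (which quantify over all $X\subseteq A$ with $\I(P)\cap(X\cup\{\tau\})=\emptyset$) collapse, for processes whose $\rt$-bodies have been pre-restricted by L3, to an unparametrised matching condition that is exactly the $\rt$-clause of an ordinary strong bisimulation—and that this collapse is robust under the recursion into subterms, where $\theta_X$-wrapping can accumulate (this is where \pr{theta collapse} and \lem{theta} do the heavy lifting). A secondary subtlety is making sure the normal form is genuinely closed under the recursive calls—that normalising $\theta_\emptyset^{A\setminus\mathit{In}}(y)$ does not reintroduce $\rt$-summands whose guards are incompatible with the already-applied restriction—which follows because $\theta$ only ever removes transitions and the axioms of Table~\ref{tab:axioms thetaX} push it inward until it disappears or lands on a prefixed term.
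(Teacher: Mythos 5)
Your reduction fails at its central claim (ii): it is \emph{not} true that two ``reactive normal forms'' that are strongly reactive bisimilar are strongly bisimilar, and the paper's own running example refutes it. Take $P=b.P_0 + \rt.(a.Q_0+\tau.(b.R_0+a.S_0)) + \rt.\tau.a.S_0$ and $Q=b.P_0 + \rt.(a.Q_0+\tau.a.S_0) + \rt.\tau.(b.R_0+a.S_0)$ from Page~\pageref{illustration}. Here L2 does not apply (no top-level $\tau$-summand), and L3 with ${\it In}=\{b\}$ is a no-op, since $\theta_\emptyset^{A\setminus\{b\}}$ applied to any of the four $\rt$-targets kills no transitions (e.g.\ $\theta_\emptyset^{A\setminus\{b\}}(b.R_0+a.S_0)=b.R_0+a.S_0$). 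So both processes are already in your normal form, they satisfy $P\bis{r}Q$, yet they are not strongly bisimilar. The underlying reason is that the $\rt$-clause of \df{time-out bisimulation} does not collapse to an unparametrised matching: which $\rt$-successor of $Q$ matches a given $\rt$-successor of $P$ genuinely depends on the environment $X$ (in the example, for $a\in X$ the left branch matches the left, for $a\notin X$ the left matches the right). Your argument for the collapse also contains two local errors: you instantiate the time-out clause with $X={\it In}=\I(P^\circ)\cap A$, which violates its side condition $\I(P^\circ)\cap(X\cup\{\tau\})=\emptyset$; and the appeal to \lem{theta} needs $\I(P')\cap(X\cup\{\tau\})=\emptyset$, which fails because a $\rt$-successor in your normal form can still have initial $\tau$'s and initial actions from ${\it In}$ (as in the example, where $\I(a.Q_0+\tau.(b.R_0+a.S_0))=\{a,\tau\}$). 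Note that if (ii) were true, RA would be redundant for recursion-free processes, whereas the paper's third illustration of RA exists precisely to show that L1--L3 do not suffice.

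The paper's proof avoids this trap by never leaving the reactive world: it inducts on $\max(d(P),d(Q))$, head-normalises via \pr{hnf}, and proves $\textit{Ax}_f\vdash\psi_X(\widehat P)=\psi_X(\widehat Q)$ \emph{separately for each} $X\subseteq A$, where the matching of $\rt$-summands (its Claim~2) is allowed to depend on $X$ and produces goals of the form $\textit{Ax}_f\vdash\theta_X(P''_j)=\theta_X(Q''_h)$; these are discharged by the depth induction using $d(\theta_X(P''_j))\le d(P''_j)<d(P)$, and RA then assembles the $X$-indexed family of derivations into $\widehat P=\widehat Q$. If you want to salvage your plan, the per-$X$ quantification has to survive into whatever normal form you use; a single $X$-independent normal form cannot absorb it.
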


\begin{proof}
Let the \emph{length} of a path $P \goesto{\alpha_1} P_1 \goesto{\alpha_2} \dots \goesto{\alpha_n} P_n$
of a processes $P$ be $n$. Let $d(P)$, the \emph{depth} of $P$, be the length of its longest path;
it is guaranteed to exists when $P$ is a closed recursion-free $\CCSP$ expression.
I prove the theorem with induction on $\max(d(P),d(Q))$.

Suppose $P \bis{r} Q$. By \pr{hnf} one has $\textit{Ax}_f \vdash P = \widehat P$ and $\textit{Ax}_f \vdash Q = \widehat Q$.
I will show that $\textit{Ax}_f \vdash \psi_X(\widehat P) = \psi_X(\widehat Q)$ for all $X\subseteq A$.
This will suffice, as then Axiom RA yields $\textit{Ax}_f \vdash \widehat P = \widehat Q$ and
thus $\textit{Ax}_f \vdash P = Q$. So pick $X \subseteq A$. Let
$$\widehat P = \sum_{i\in I} \alpha_i.P'_i + \sum_{j\in J} \rt.P''_j
\qquad\mbox{and}\qquad\widehat Q = \sum_{k\in K} \beta_k.Q'_k + \sum_{h\in H} \rt.Q''_h$$
with $\alpha_j,\beta_k \mathbin\in A\cup\{\tau\}$ for all $i\mathbin\in I$ and $k\mathbin\in K$.
The following two claims are the crucial part of the proof.%
\vspace{1ex}

\noindent
\textit{Claim 1:} For each $i \mathbin\in I$ there is a $k \mathbin\in K$ with $\alpha_i=\beta_k$
and $\textit{Ax}_f \vdash P'_i = Q'_k$.

\noindent
\textit{Claim 2:} If $\I(P) \cap (X \cup\{\tau\}) = \emptyset$, then for each $j \mathbin\in J$ there is a $h \mathbin\in H$ with
$\textit{Ax}_f \vdash \theta_X(P''_j) = \theta_X(Q''_h)$.
\vspace{1ex}

\noindent
With these claims, the rest of the proof is straightforward.
Since $P \bis{r} Q$, one has $\I(P) = \I(\widehat P) = \{\alpha_i \mid i \mathbin\in I\} =\
\{\beta_k \mid k \mathbin\in K\} = \I(\widehat Q) = \I(Q)$.
First suppose that $\I(P) \cap (X \cup\{\tau\}) = \emptyset$. Then
$$\psi_X(\widehat P) = \sum_{i\in I} \alpha_i.P'_i + \sum_{j\in J} \rt.\theta_X(P''_j)
\qquad\mbox{and}\qquad \psi_X(\widehat Q) = \sum_{k\in K} \beta_k.Q'_k + \sum_{h\in H} \rt.\theta_X(Q''_h)\;.$$
Claim 1 yields
$\textit{Ax}_f \vdash \psi_X(\widehat Q) = \psi_X(\widehat Q) + \alpha_i.P'_i$ for each $i \mathbin\in I$.
Likewise, Claim 2 yields
$\textit{Ax}_f \vdash \psi_X(\widehat Q) = \psi_X(\widehat Q) + \rt.\theta_X(P''_j)$ for each $j \mathbin\in J$.
Together this yields  $\textit{Ax}_f \vdash \psi_X(\widehat Q) = \psi_X(\widehat Q) + \psi_X(\widehat P)$.
By symmetry one obtains $\textit{Ax}_f \vdash \psi_X(\widehat P) = \psi_X(\widehat P) + \psi_X(\widehat Q)$
and thus $\textit{Ax}_f \vdash \psi_X(\widehat P) = \psi_X(\widehat Q)$.
\vspace{1ex}

\noindent
Next suppose $\I(P) \cap (X \cup\{\tau\}) \neq \emptyset$. Then
$\psi_X(\widehat P) = \sum_{i\in I} \alpha_i.P'_i$ and $\psi_X(\widehat Q) = \sum_{k\in K} \beta_k.Q'_k$.
The proof proceeds just as above, but without the need for Claim 2.
\vspace{1ex}

\noindent
\textit{Proof of Claim 1:} Pick $i \mathbin\in I$. Then $\widehat P \goesto{\alpha_i} P'_i$.
So  $\widehat Q \goesto{\alpha_i} Q'$ for some $Q'$ with $P'_i \bis{r} Q'$.
Hence there is a $k \mathbin\in K$ with $\alpha_i=\beta_k$ and $Q'=Q'_k$.
Using that $d(P'_i) < d(P)$ and $d(Q'_i) < d(Q)$, by induction $\textit{Ax}_f \vdash P'_i = Q'_k$.
\vspace{1ex}

\noindent
\textit{Proof of Claim 2:} Pick $j \mathbin\in J$. Then \plat{$\widehat P \goesto{\rt} P''_j$}.
Since $\I(\widehat P) \cap (X \cup\{\tau\}) = \emptyset$, there is a $Q''$ such that 
$\widehat Q \goesto{\rt} Q''$ and $\theta_X(P''_j) \bis{r} \theta_X(Q'')$.
Hence there is a $h \mathbin\in H$ with $Q''=Q''_h$.
Using that $d(\theta_X(P''_j)) \leq d(P''_j) < d(P)$ and $d(\theta_X(Q''_h)) \leq d(Q''_h) < d(Q)$,
by induction $\textit{Ax}_f \vdash \theta_X(P''_j) = \theta_X(Q''_h)$.
\end{proof}

\subsection{The method of canonical representatives}\label{sec:method}

The classic technique of proving completeness of axiomatisations for process algebras with recursion
involves \emph{merging guarded recursive equations} \cite{Mi84,Mi89a,Wa90,vG93a,LDH05}.
In essence it proves two bisimilar systems $P$ and $Q$ equivalent by equating both to an
intermediate variant that is essentially a \emph{product} of $P$ and $Q$.
I tried so hard, and in vain, to apply this technique to obtain (\ref{reacSC}), that I came to
believe that it fundamentally does not work for this axiomatisation.
\begin{figure}[t]
\input{product}
\centerline{\box\graph}\vspace{1ex}
\caption{A failed product construction}
\label{fig:product}\vspace{-1ex}
\end{figure}
The problem is illustrated in \fig{product}. Here, similar to the example of \fig{L3},
the processes $1$ and $6$ are strongly reactive bisimilar. The merging technique constructs a
transition system whose states are pairs of states reachable from $1$ and $6$.
There is a transition $(s,t) \goesto\alpha (s',t')$ iff both
$s \goesto\alpha s'$ and $t \goesto\alpha t'$.
Normally, only those pairs $(s,t)$ satisfying $s \bis{} t$ are included.
Here the requirement $s \bis{r} t$ would be to strong. Namely, although $1 \bis{r} 6$, one has neither
$2 \bis{r} 7$ nor
$2 \bis{r} 8$ nor
$3 \bis{r} 7$ nor
$3 \bis{r} 8$, so there would be no outgoing $\rt$-transitions from $(1,6)$.
Hence one has to include states $(s,t)$ with $s \rbis{X}{r} t$ for some set $X$.
Note that $2 \rbis{X}{r} 7$ and $3 \rbis{X}{r} 8$ when $a\in X$ and $b \notin X$,
whereas $2 \rbis{X}{r} 8$ and $3 \rbis{X}{r} 7$ when $a\notin X$. This yields the product depicted in
\fig{product}.

In the reactive bisimulation game, the transition $1 \goesto\rt 2$ will be matched by $6 \goesto\rt 8$
only in an environment $X$ with $a \not\in X$. Hence intuitively the state $(2,8)$ in the product
should only be visited in such an environment. Yet, when aiming to show that $1 \bis{r} (1,6) \bis{r} 6$,
one cannot prevent taking the transition $(1,6) \goesto\rt (2,8)$ in an environment $X$ with $a\in X$
and $b \notin X$. However, since $(2,8) \ngoesto{a}$, this $\rt$-transition cannot be simulated by process $2$.

It may be possible to repair the construction, for instance by adding a transition $(2,8) \goesto{a} Q$
or $(2,8) \goesto{a} T$ after all, but not both. However, each such ad hoc repair that I tried gave
raise to further problems, making the solution more and more complicated without sight on success.

Therefore, I here employ the novel method of \emph{canonical solutions} \cite{GF20,LY20}, which equates
both $P$ and $Q$ to a canonical representative within the bisimulation equivalence class of $P$ and $Q$---one
that has only one reachable state for each bisimulation equivalence class of states of $P$ and $Q$.
Moreover, my proof employs the axiom of choice \cite{Zermelo08} in defining the transition
relation on my canonical representative, in order to keep this process finitely branching.

To illustrate his technique on the example from \fig{product}, the states $1$ and $6$, being strongly
reactive bisimilar, form one new state $\{1,6\}$ of the canonical representative.
Likewise, there will be states $\{4,9\}$ and $\{5,10\}$. However, the states $2$, $3$, $7$ and $8$ remain separate.
Within the new state $\{1,6\}$ my construction chooses an arbitrary element, say $1$. Based on this
choice, the outgoing transitions of $\{1,6\}$ are dictated by $1$, and thus go to $P$, $\{2\}$ and $\{3\}$.
As a result, the canonical representative will look just like the left-hand process.
It could however be the case that $S \bis{r} P$, in which case the initial states of these
subprocesses are merged in the canonical representative, and again an element in the resulting
equivalence class will be chosen that dictates its outgoing transitions.

\subsection{The canonical representative}\label{sec:minimisation}

Let $\IP^g$ denote the set of $\CCSP$ processes with guarded recursion.
Let $[P] := \{Q \in \IP^g \mid Q \bis{r} P\}$\linebreak[4] be the strong reactive bisimulation equivalence class
of a process $P\in\IP^g$.
Below, by ``abstract process'' I will mean such an equivalence class.
Choose a function $\chi$ that selects an element out of each $\bis{r}$\,-equivalence class of $\CCSP$
processes with guarded recursion---this is possible by the axiom of choice~\cite{Zermelo08}.
Define the transition relations $\goesto{\alpha}$, for $\alpha\in Act$, between abstract processes by
\begin{equation}\label{normal form}
R \goesto\alpha R' ~~\Leftrightarrow~~ \exists P'\in R'.~ \chi(R) \goesto\alpha P'\;.
\end{equation}
I will show that $P \bis{r} [P]$ for all $P \in \IP^g$.
Formally, $\bis{r}$ has been defined only between processes belonging to the same LTS
$\IP$, and here $[P]\notin\IP$. However, this restriction is not material: two processes $P \in \IP$
and $Q \in \IQ$ from different LTSs can be compared by considering $\bis{r}$ on the disjoint union $\IP \uplus \IQ$.

\begin{lemma}{minimisation activities}
Let $\alpha\in A \cup\{\tau\}$. Then $[P] \goesto\alpha R'$ iff $P \goesto\alpha P'$ for some $P'$ with $R'=[P']$.
\end{lemma}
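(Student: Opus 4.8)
The plan is to unfold the definition (\ref{normal form}) of the transition relation on abstract processes and reduce both implications to the transfer property of strong reactive bisimilarity. The two facts I would lean on are: $\chi([P]) \bis{r} P$, which holds because $\chi$ selects a representative of each $\bis{r}$-equivalence class; and that $\bis{r}$ is itself a strong time-out bisimulation (as observed right after \pr{time-out bisimulation}), so its first clause, in \df{time-out bisimulation}, lets $\alpha$-transitions with $\alpha\in A\cup\{\tau\}$ be matched directly, with no $\theta$-wrapping.

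For the ``only if'' direction I would argue as follows. Assume $[P] \goesto\alpha R'$. By (\ref{normal form}) there is a process $P''\in R'$ with $\chi([P]) \goesto\alpha P''$. Since $\chi([P]) \bis{r} P$, the first clause of \df{time-out bisimulation} (using $\alpha\in A\cup\{\tau\}$) supplies a $P'$ with $P \goesto\alpha P'$ and $P'' \bis{r} P'$. Because $R'$ is a $\bis{r}$-equivalence class and $P''\in R'$, we have $R' = [P''] = [P']$, so $R' = [P']$ with $P \goesto\alpha P'$, as required. For the ``if'' direction, assume $P \goesto\alpha P'$ and $R' = [P']$. Using the symmetry of $\bis{r}$ together with $\chi([P]) \bis{r} P$, the same clause of \df{time-out bisimulation} gives a $P''$ with $\chi([P]) \goesto\alpha P''$ and $P' \bis{r} P''$, i.e.\ $P'' \in [P'] = R'$; hence $[P] \goesto\alpha R'$ by (\ref{normal form}).

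I do not expect a genuine obstacle here; the only points requiring care are bookkeeping. First, one must check that the process $P'$ produced in the forward direction is again a $\CCSP$ process with guarded recursion, so that $[P']$ is a legitimate abstract process; this follows from the observation in the proof of \pr{finitely branching} that guarded recursion is inherited by reachable processes. Second, one should be explicit that an abstract process $R'$, being a $\bis{r}$-equivalence class, satisfies $R' = [P'']$ for any member $P''$ of it. Finally, it is worth noting that the side condition $\alpha\in A\cup\{\tau\}$ is precisely what makes the first clause of \df{time-out bisimulation} directly applicable; the analogous statement for $\rt$-transitions is more delicate and is presumably handled by a separate lemma.
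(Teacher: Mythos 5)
Your proof is correct and follows essentially the same route as the paper's: both directions unfold definition (\ref{normal form}) and appeal to $P \bis{r} \chi([P])$ together with the first clause of \df{time-out bisimulation}, concluding via the fact that an equivalence class equals $[P'']$ for any of its members. The extra bookkeeping remarks (guardedness of reachable processes, the restriction to $\alpha\in A\cup\{\tau\}$) are accurate but not needed beyond what the paper already records.
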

\begin{proof}
Let $P \goesto\alpha P'$ with $\alpha \mathbin\in A \cup \{\tau\}$.
Since $P \bis{r} \chi([P])$, by \df{time-out bisimulation} there is a $Q'$ such that
$\chi([P]) \goesto\alpha Q'$ and $P' \bis{r} Q'$. Hence $[P] \goesto\alpha [Q']$ by (\ref{normal form}).
Moreover, $P' \mathbin\in [Q']{=}[P']$.

Let $[P] \goesto\alpha R'$ with $\alpha \in A \cup \{\tau\}$.
Then $\chi([P]) \goesto\alpha Q'$ for some $Q'\in R'$.
Since $\chi([P]) \bis{r} P$, there is a $P'$ such that $P \goesto\alpha P'$ and $Q' \bis{r} P'$.
Hence $P'\in R'$ and thus $R'\mathbin=[P']$.
\end{proof}

\begin{corollary}{initials minimisation}
$\I([P]) = \I(P)$ for all $P \in \IP^g$.
\qed
\end{corollary}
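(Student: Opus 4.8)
The plan is to obtain this immediately from \lem{minimisation activities}. Recall that for any process $R$ one has $\I(R) = \{\alpha\in A\cup\{\tau\} \mid R{\goesto\alpha}\}$, where $R{\goesto\alpha}$ abbreviates ``$R\goesto\alpha R'$ for some $R'$''; in the case $R=[P]$ the transition relation in question is the one on abstract processes given by~(\ref{normal form}).

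First I would fix an arbitrary $\alpha\in A\cup\{\tau\}$ and establish the biconditional $\alpha\in\I([P])\Leftrightarrow\alpha\in\I(P)$. By \lem{minimisation activities}, there is an abstract process $R'$ with $[P]\goesto\alpha R'$ if and only if $P\goesto\alpha P'$ for some $P'$ (namely one with $R'=[P']$, although the identity of the witness $R'$ plays no role in membership of $\I$). Thus $[P]{\goesto\alpha}$ iff $P{\goesto\alpha}$, i.e.\ $\alpha\in\I([P])$ iff $\alpha\in\I(P)$. Since $\alpha$ was arbitrary, $\I([P])=\I(P)$, which is what had to be shown.

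I do not expect any real obstacle here: the entire content resides in \lem{minimisation activities}, which already performs the relevant case analysis using the defining reactive bisimulation $P\bis{r}\chi([P])$ together with~(\ref{normal form}). The only point worth stating explicitly is that $\I$ ranges over $A\cup\{\tau\}$ only, so the $\rt$-transitions of $[P]$ -- handled by the separate clause of \lem{minimisation activities} and involving the operators $\theta_X$ -- need not be considered for this corollary.
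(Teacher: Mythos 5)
Your proof is correct and matches the paper's treatment: the corollary is stated with an immediate \qed precisely because it follows directly from \lem{minimisation activities}, exactly as you argue. Your remark that only the $A\cup\{\tau\}$ clause matters (so the $\rt$-clause of \lem{minimisation timeout} is irrelevant here) is the right observation.
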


\begin{lemma}{minimisation timeout}
If $\I(P)\cap(X\cup\{\tau\})=\emptyset$ and $P \goesto\rt P'$
then $[P] \goesto\rt [Q']$ for a $Q'$ with $\theta_X(P') \bis{r} \theta_X(Q')$.
Moreover, if $\I(P)\cap(X\cup\{\tau\})=\emptyset$ and $[P] \goesto\rt [Q']$
then $P \goesto\rt P'$ for a $P'$ with $\theta_X(Q') \bis{r} \theta_X(P')$.
\end{lemma}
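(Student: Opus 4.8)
The plan is to argue directly from the fact, noted just after \pr{time-out bisimulation}, that $\bis{r}$ is itself a strong time-out bisimulation, together with the defining equivalence \eqref{normal form} for $\rt$-transitions between abstract processes, the basic facts $P \bis{r} \chi([P])$ and $\I([P]) = \I(P)$ (\cor{initials minimisation}, or more basically $\bis{r}\subseteq{=_\I}$), and the congruence of $\bis{r}$ for $\theta_X$ (\thm{full congruence}). The structure is parallel to the proof of \lem{minimisation activities}, but it uses the second clause of \df{time-out bisimulation} instead of the first, and it needs the $\theta_X$-congruence at the end.

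For the first implication I would assume $\I(P)\cap(X\cup\{\tau\}) = \emptyset$ and $P \goesto\rt P'$. Since $P \bis{r} \chi([P])$ and $\bis{r}$ is a strong time-out bisimulation, the second clause of \df{time-out bisimulation} supplies a $Q'$ with $\chi([P]) \goesto\rt Q'$ and $\theta_X(P') \bis{r} \theta_X(Q')$. Here $Q'$ is a $\CCSP$ process with guarded recursion, being reachable in one step from $\chi([P])\in\IP^g$, so $[Q']$ is a legitimate abstract process. As $Q'\in[Q']$, equation \eqref{normal form} gives $[P] \goesto\rt [Q']$, as required. For the converse I would assume $\I(P)\cap(X\cup\{\tau\}) = \emptyset$ and $[P] \goesto\rt [Q']$. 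By \eqref{normal form} there is a $Q''\in[Q']$, i.e.\ $Q'' \bis{r} Q'$, with $\chi([P]) \goesto\rt Q''$. From $\chi([P]) \bis{r} P$ I get $\I(\chi([P])) = \I(P)$, hence $\I(\chi([P]))\cap(X\cup\{\tau\}) = \emptyset$; so the second clause of \df{time-out bisimulation} applied to $\chi([P]) \bis{r} P$ yields a $P'$ with $P \goesto\rt P'$ and $\theta_X(Q'') \bis{r} \theta_X(P')$. Finally, since $\bis{r}$ is a congruence for $\theta_X$, $Q'' \bis{r} Q'$ gives $\theta_X(Q') \bis{r} \theta_X(Q'')$, and transitivity of $\bis{r}$ yields $\theta_X(Q') \bis{r} \theta_X(P')$.

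I do not expect any genuine obstacle here; the proof is essentially bookkeeping. The two places that need a moment's care are: checking the initials side condition $\I(\chi([P]))\cap(X\cup\{\tau\})=\emptyset$ before invoking the transfer clause of \df{time-out bisimulation} in the converse direction (this is exactly where $\bis{r}\subseteq{=_\I}$, hence \cor{initials minimisation}, is used), and recalling that guarded recursion is inherited by reachable processes so that $[Q']$ denotes a well-defined abstract process.
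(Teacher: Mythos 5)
Your proof is correct and follows essentially the same route as the paper's: both directions use that $\bis{r}$ is a strong time-out bisimulation between $P$ and $\chi([P])$, invoke the second clause of \df{time-out bisimulation} (checking the initials condition via $\I(\chi([P]))=\I(P)$ in the converse direction), and finish with the congruence of $\bis{r}$ for $\theta_X$ plus transitivity. Your $Q''$ is exactly the paper's $R'$; there is no substantive difference.
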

\begin{proof}
Let $\I(P)\cap(X\cup\{\tau\})=\emptyset$ and $P \goesto\rt P'$.
Since $P \bis{r} \chi([P])$, by \df{time-out bisimulation} there is a $Q'$ such that
$\chi([P]) \goesto\rt Q'$ and $\theta_X(P') \bis{r} \theta_X(Q')$.
Hence $[P] \goesto\rt [Q']$ by (\ref{normal form}).

Let $\I(P)\cap(X\cup\{\tau\})=\emptyset$ and $[P] \goesto\rt [Q']$.
Then $\chi([P]) \goesto\rt R'$ for some $R'\in [Q']$.
Since $\chi([P]) \bis{r} P$ (so $\I(\chi([P])) = \I(P)$), there is a $P'$ such that $P \goesto\rt P'$
and $\theta_X(R') \bis{r} \theta_X(P')$. 
As $\bis{r}$ is a congruence for $\theta_X$, one has $\theta_X(Q') \bis{r} \theta_X(R')$,
and thus $\theta_X(Q') \bis{r} \theta_X(P')$.
\end{proof}

\begin{definition}{uptoRT}
Let $\B^*  := \{(R,T) \mid \exists n\geq 0. ~ \exists R_0,\dots, R_n.~ R = R_0 \B R_1 \B \dots \B R_n = T\}$
denote the reflexive and transitive closure of a binary relation $\B$.
A \emph{strong time-out bisimulation up to reflexivity and transitivity}
is a symmetric relation ${\B} \subseteq \IP \times \IP$, such that, for $P\B Q$,
\begin{itemize}\itemsep 0pt \parsep 0pt
\item if $P \goesto{\alpha} P'$ with $\alpha\mathbin\in A\cup\{\tau\}$, then $\exists Q'$ such that
  $Q\goesto{\alpha} Q'$ and $P' \B^* Q'$,
\item if $\I(P)\cap (X\cup\{\tau\})=\emptyset$ and $P \goesto{\rt} P'$, then $\exists Q'$ with $Q\goesto{\rt} Q'$
  and $\theta_{X}(P') \B^* \theta_X(Q')$.
\end{itemize}
\end{definition}

\begin{proposition}{uptoRT}
If $P \B Q$ for a strong time-out bisimulation $\B$ up to reflexivity and transitivity, then $P \bis{r} Q$.
\end{proposition}
\begin{proof}
It suffices to show that $\B^*$ is a strong time-out bisimulation.
Clearly this relation is symmetric.
\begin{itemize}
\item
Suppose $R_0 \B R_1 \B \dots \B R_n$ for some $n\geq 0$ and $R_0 \goesto\alpha R'_0$ with $\alpha\in A\cup\{\tau\}$.
I have to find an $R'_n$ such that $R_n \goesto\alpha R'_n$ and $R'_0 \B^* R'_n$.
I proceed with induction on $n$. The case $n=0$ is trivial.
Fixing an $n>0$, by \df{uptoRT} there is an $R'_1$ such that $R_1 \goesto\alpha R'_1$ and $R'_0 \B^* R'_1$.
Now by induction there is an $R'_n$ such that $R_n \goesto\alpha R'_n$ and $R'_1 \B^* R'_n$. Hence $R'_0 \B^* R'_n$.
\item
Suppose $R_0 \B R_1 \B \dots \B R_n$ for some $n\geq 0$, $\I(R_0)\cap(X\cup\{\tau\})=\emptyset$
and $R_0 \goesto\rt R'_0$. By \df{uptoRT} $\I(R_0) = \I(R_1) = \dots = \I(R_n)$.
I have to find an $R'_n$ such that $R_n \goesto\rt R'_n$ and $\theta_X(R'_0) \B^* \theta_X(R'_n)$.
This proceeds exactly as for the case above.
\qed
\end{itemize}
\end{proof}

\begin{lemma}{minimisation}
$\theta_X([P]) \bis{r} [\theta_X(P)]$  for all $P \in \IP^g$ and $X \subseteq A$.
\end{lemma}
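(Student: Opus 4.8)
The plan is to exhibit one relation that simultaneously witnesses $\theta_X([P]) \bis{r} [\theta_X(P)]$ for every $P\in\IP^g$ and $X\subseteq A$. Since $\theta_X$ introduces no recursion, $\theta_X(P)\in\IP^g$, so $[\theta_X(P)]$ is a legitimate abstract process; and, as in the remark after~(\ref{normal form}), I work in the disjoint union of $\IP^g$, the LTS of abstract processes, and its closure under the operators $\theta_X$. Take $\B$ to be the symmetric closure of $\{(\theta_X([P]),[\theta_X(P)]) \mid P\in\IP^g,\ X\subseteq A\}$. It then suffices to prove that $\B$ is a strong time-out bisimulation up to reflexivity and transitivity, because \pr{uptoRT} converts this into $\B\subseteq{\bis{r}}$.

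For the transfer of moves with label $\alpha\in A\cup\{\tau\}$, I would take a transition of $\theta_X([P])$, strip off the relevant $\theta_X$-rule of \tab{sos CCSP} (leaving a move of $[P]$), apply \lem{minimisation activities} to turn it into a move $P\goesto\alpha P'$ of $P$ with target class $[P']$, and then run this back up: the corresponding $\theta_X$-rule for the \emph{concrete} process $\theta_X(P)$ gives $\theta_X(P)\goesto\alpha\,(\cdot)$, and a second use of \lem{minimisation activities} yields the matching move of $[\theta_X(P)]$. The two resulting targets are either literally equal (when $\alpha\in X$, or when $\alpha\in A\setminus X$ and the third $\theta_X$-rule applies, the wrapper is dropped) or of the shape $(\theta_X([P']),[\theta_X(P')])$ (when $\alpha=\tau$), so in every case they lie in $\B^{*}$. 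The moves of $[\theta_X(P)]$ are matched by $\theta_X([P])$ by the same computation read in the other direction; here one uses \cor{initials minimisation} to see that the negative premise $x{\ngoesto\beta}$ for $\beta\in X\cup\{\tau\}$ in the third $\theta_X$-rule holds on the concrete side exactly when it holds on the abstract side.

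The $\rt$-clause is the heart of the argument. Suppose $\theta_X([P])\goesto\rt R'$ with $\I(\theta_X([P]))\cap(Y\cup\{\tau\})=\emptyset$. The negative premise of the $\theta_X$-rule forces $\I(P)\cap(X\cup\{\tau\})=\emptyset$; in this situation neither $\theta_X$ nor the passage to $[P]$ alters the set of initials (inspection of the $\theta_X$-rules together with \cor{initials minimisation}), so $\I(\theta_X([P]))=\I(P)$ and hence $\I(P)\cap(Y\cup\{\tau\})=\emptyset$ as well. Writing $R'=[P'']$, \lem{minimisation timeout} applied to $P$ in environment $Y$ gives $P\goesto\rt\widehat P'$ with $\theta_Y(P'')\bis{r}\theta_Y(\widehat P')$; because $\I(P)\cap(X\cup\{\tau\})=\emptyset$, this is also a transition $\theta_X(P)\goesto\rt\widehat P'$, and a second application of \lem{minimisation timeout}, this time to the concrete process $\theta_X(P)$ in environment $Y$, yields $[\theta_X(P)]\goesto\rt[\widehat Q']$ with $\theta_Y(\widehat P')\bis{r}\theta_Y(\widehat Q')$. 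Therefore $\theta_Y(P'')\bis{r}\theta_Y(\widehat Q')$, so $[\theta_Y(P'')]$ and $[\theta_Y(\widehat Q')]$ are the \emph{same} abstract process, and I close the square through it:
\[
\theta_Y(R') = \theta_Y([P'']) \B [\theta_Y(P'')] = [\theta_Y(\widehat Q')] \B \theta_Y([\widehat Q']) = \theta_Y(T'),
\]
with $T':=[\widehat Q']$, using the symmetry of $\B$ for the last step. The mirror-image case, an $\rt$-move of $[\theta_X(P)]$ in environment $Y$, is handled identically: \lem{minimisation timeout} first pulls the move back to a transition $\theta_X(P)\goesto\rt P'$, which forces $\I(P)\cap(X\cup\{\tau\})=\emptyset$ and thereby licenses the matching move $\theta_X([P])\goesto\rt(\cdot)$, and the square is again closed through a common abstract process.

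The main obstacle is precisely this $\rt$-case: one must keep the two environments $X$ and $Y$ straight while moving between concrete processes, their $\bis{r}$-equivalence classes, and the $\theta$-images of both, and one must invoke \lem{minimisation timeout} twice with the \emph{same} environment $Y$ and then chain through an equality of abstract processes. That chaining is exactly what forces the use of the ``up to reflexivity and transitivity'' version \pr{uptoRT} rather than a bare strong time-out bisimulation; everything else is a routine rule-by-rule check driven by \lem{minimisation activities} and \cor{initials minimisation}.
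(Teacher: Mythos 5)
Your proof is correct and follows essentially the same route as the paper's: the same relation $\{(\theta_X([P]),[\theta_X(P)])\}$ (symmetrically closed), shown to be a strong time-out bisimulation up to reflexivity and transitivity via \pr{uptoRT}, with \lem{minimisation activities}, \lem{minimisation timeout} and \cor{initials minimisation} used to shuttle between the concrete and abstract levels, and the $\rt$-square closed through the identity of abstract processes $[\theta_Y(P'')]=[\theta_Y(\widehat Q')]$. If anything, your treatment is slightly more scrupulous than the paper's own write-up in the time-out clause, where you keep the quantified environment $Y$ distinct from the operator subscript $X$ rather than reusing the same letter for both.
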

\begin{proof}
I show that the symmetric closure of
${\B} := \{(\theta_X([P]),[\theta_X(P)])\mid P \in \IP^g \wedge X \subseteq A\}$
is a strong time-out bisimulation up to reflexivity and transitivity.
\begin{itemize}
\item
Let $\theta_X([P]) \goesto\tau R'$.
Then $[P]\goesto \tau Q'$ for some $Q'$ with $R'=\theta_X(Q')$.
By \lem{minimisation activities},
$P \goesto \tau P'$ for some $P'$ with $Q'=[P']$.
Hence $\theta_X(P) \goesto\tau \theta_X(P')$ and thus $[\theta_X(P)] \goesto\tau [\theta_X(P')]$ by
\lem{minimisation activities}.
Moreover, $R' = \theta_X([P']) \B [\theta_X(P')]$.
\item
Let $[\theta_X(P)] \goesto\tau R'$.
By \lem{minimisation activities},
$\theta_X(P) \goesto\tau Q'$ for some $Q'$ with $R'=[Q']$.
Thus $P \goesto \tau P'$ for some $P'$ with $Q'=\theta_X(P')$.
Now $[P] \goesto \tau [P']$ by \lem{minimisation activities}, and thus $\theta_X([P]) \goesto \tau \theta_X([P'])$.
Moreover, $R' = [\theta_X(P')] \B^{-1} \theta_X([P'])$.
\item
Let $\theta_X([P]) \goesto a R'$ with $a \in A$.
Then $[P]\goesto a R'$ and either $a \in \I([P])$ or $[P] {\ngoesto\beta}$ for all $\beta \in X \cup \{\tau\}$.
Thus either $a \in \I(P)$ or $P {\ngoesto\beta}$ for all $\beta \in X \cup \{\tau\}$, using \cor{initials minimisation}.
By \lem{minimisation activities},
$P \goesto a P'$ for some $P'$ with $R'=[P']$.
Hence $\theta_X(P) \goesto a P'$ and thus $[\theta_X(P)] \goesto a [P'] = R'$.
\item
Let $[\theta_X(P)] \goesto a R'$ with $a \in A$.
By \lem{minimisation activities},
$\theta_X(P) \goesto a P'$ for some $P'$ with $R'=[P']$.
Thus $P \goesto a P'$ and either $a \in \I(P)$ or $P {\ngoesto\beta}$ for all $\beta \in X \cup \{\tau\}$.
Therefore either $a \in \I([P])$ or $[P] {\ngoesto\beta}$ for all $\beta \in X \cup \{\tau\}$, using
\cor{initials minimisation}.
Moreover, $[P] \goesto a [P']$ by \lem{minimisation activities}. It follows that $\theta_X([P]) \goesto a [P'] = R'$.
\item
Let $\I(\theta_X([P])) \cap(X \cup\{\tau\})=\emptyset$ and $\theta_X([P]) \goesto\rt R'=[Q']$.
Then $[P]\goesto\rt [Q']$ and $[P] {\ngoesto\beta}$ for all $\beta \in X \cup \{\tau\}$.
Thus $P {\ngoesto\beta}$ for all $\beta \in X \cup \{\tau\}$, using \cor{initials minimisation},
so $\I(P) \cap(X \cup\{\tau\})=\emptyset$ and $\I(\theta_X(P)) \cap(X \cup\{\tau\})=\emptyset$.
By \lem{minimisation timeout}, $P \goesto\rt P'$ for some $P'$ with $\theta_X(Q') \bis{r} \theta_X(P')$.
Hence $\theta_X(P) \goesto\rt P'$ and thus, again applying \lem{minimisation timeout},
$[\theta_X(P)] \goesto\rt [T']$ for some $T'$ with $\theta_X(P') \bis{r} \theta_X(T')$.
Moreover, $\theta_X(R') = \theta_X([Q']) \B [\theta_X(Q')] = [\theta_X(T')] \B^{-1} \theta_X([T'])$.
\item
Let $\I([\theta_X(P)])\cap(X\cup\{\tau\})=\emptyset$ and $[\theta_X(P)] \goesto\rt [Q']$.
By \lem{minimisation timeout}, $\theta_X(P)\goesto\rt P'$ for a $P'$ with $\theta_X(Q') \bis{r} \theta_X(P')$.
Hence $P \goesto\rt P'$ and $P {\ngoesto\beta}$ for all $\beta \in X \cup \{\tau\}$,
so $\I(P)\cap(X\cup\{\tau\})=\emptyset$.
Hence, by \lem{minimisation timeout}, $[P] \goesto\rt [T']$ for a $T'$ with $\theta_X(P') \bis{r} \theta_X(T')$.
By \cor{initials minimisation}, $[P] {\ngoesto\beta}$ for all $\beta \mathbin\in X \cup \{\tau\}$.
So $\theta_X([P]) \goesto\rt [T']$.
Moreover, $\theta_X([Q']) \B [\theta_X(Q')] = [\theta_X(T')] \B^{-1} \theta_X([T'])$.
\\\mbox{}\qed
\end{itemize}
\end{proof}

\begin{proposition}{canonical}
$P \bis{r} [P]$ for all $P \in \IP^g$.
\end{proposition}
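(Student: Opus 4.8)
The plan is to exhibit a strong time-out bisimulation up to $\bis{r}$ relating each process with its abstract image and then invoke \pr{upto r}. Concretely, I would work in the disjoint union of $\IP^g$ with the LTS of abstract processes, both taken to be closed under the operators $\theta_X$ with the operational rules of \Sec{timeout bisimulations}, and consider the symmetric relation
$$\B := \{(P,[P]) \mid P\in\IP^g\} \cup \{([P],P) \mid P\in\IP^g\}\,.$$
Since $\B$ relates $P$ and $[P]$, once $\B$ is shown to be a strong time-out bisimulation up to $\bis{r}$ on this combined LTS, \pr{upto r} immediately yields $P \bis{r} [P]$ for all $P\in\IP^g$. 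All the real content is already packaged in Lemmas~\ref{lem:minimisation activities},~\ref{lem:minimisation timeout} and~\ref{lem:minimisation} and \cor{initials minimisation}; the proof is essentially the assembly of these.

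For the first clause of \df{upto}: a transition $P \goesto\alpha P'$ with $\alpha\in A\cup\{\tau\}$ is matched, via \lem{minimisation activities}, by $[P]\goesto\alpha[P']$, and $P' \B [P']$ witnesses $P' \bis{r}\B\bis{r} [P']$ by reflexivity of $\bis{r}$; conversely a transition $[P]\goesto\alpha R'$ is, again by \lem{minimisation activities}, of the form $R'=[P']$ with $P\goesto\alpha P'$, and $P'\B[P']=R'$. For the second clause: given $P\goesto\rt P'$ with $\I(P)\cap(X\cup\{\tau\})=\emptyset$, \lem{minimisation timeout} supplies $[P]\goesto\rt[Q']$ with $\theta_X(P')\bis{r}\theta_X(Q')$; combining this with $\theta_X(Q')\B[\theta_X(Q')]$ — legitimate since $\theta_X(Q')\in\IP^g$ — and with $[\theta_X(Q')]\bis{r}\theta_X([Q'])$ from \lem{minimisation}, I obtain
$$\theta_X(P')\bis{r}\theta_X(Q')\mathrel{\B}[\theta_X(Q')]\bis{r}\theta_X([Q'])\,,$$
that is $\theta_X(P')\bis{r}\B\bis{r}\theta_X([Q'])$, as required. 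The mirror-image case starts from $[P]\goesto\rt R'$ with $\I([P])\cap(X\cup\{\tau\})=\emptyset$ and $R'=[Q']$; here \cor{initials minimisation} first rewrites the side condition as $\I(P)\cap(X\cup\{\tau\})=\emptyset$, after which the converse halves of \lem{minimisation timeout} and of \lem{minimisation} close the argument symmetrically.

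The work is thus mostly bookkeeping, and the closest thing to an obstacle is a set-up issue rather than a mathematical one: $\bis{r}$, \pr{upto r}, and the congruence of $\bis{r}$ for $\theta_X$ were all stated for a fixed LTS $\IP$, whereas $[P]$ belongs to a different LTS. I would dispose of this exactly as flagged in the text preceding \lem{minimisation activities}, by carrying out the entire argument on the disjoint union of the two LTSs (closed under $\theta$): the operational rules for $\theta_X$ there are unchanged, so $\bis{r}$ is still a congruence for $\theta_X$ and \pr{upto r} still applies verbatim. A secondary point worth stressing is why the up-to-$\bis{r}$ formulation is indispensable here: in the $\rt$-clause one cannot absorb $\theta_X([Q'])$ into a plain strong time-out bisimulation, since it is only $\bis{r}$-equivalent to $[\theta_X(Q')]$, not equal to it — the $\bis{r}$-wrappers on either side of $\B$ are precisely what bridge this gap.
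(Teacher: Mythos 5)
Your proposal is correct and follows essentially the same route as the paper: both show that the symmetric closure of $\{(P,[P])\mid P\in\IP^g\}$ is a strong time-out bisimulation up to $\bis{r}$ on an LTS closed under $\theta$ containing the abstract processes, assembling Lemmas~\ref{lem:minimisation activities},~\ref{lem:minimisation timeout} and~\ref{lem:minimisation} and then invoking \pr{upto r}. The only cosmetic difference is in the $\rt$-clause, where the paper chains $\theta_X(P')\B[\theta_X(P')]=[\theta_X(Q')]\bis{r}\theta_X([Q'])$ while you chain $\theta_X(P')\bis{r}\theta_X(Q')\B[\theta_X(Q')]\bis{r}\theta_X([Q'])$; both yield the required $\bis{r}\B\bis{r}$ pattern.
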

\begin{proof}
Using \pr{upto r}, I show that the symmetric closure of the relation ${\B} := \{(P,[P])\mid P \mathbin\in \IP^g\}$ is
a strong time-out bisimulation up to $\bis{r}$\,. Here the right-hand side processes come from an LTS
that is closed under $\theta$ and contains the processes $[P]$ for $P \in \IP^g$.
\begin{itemize}
\item
Let $P \goesto\alpha P'$ with $\alpha \mathbin\in A \cup \{\tau\}$.
Then $[P] \goesto\alpha [P']$ by \lem{minimisation activities}, and $P' \B [P']$.
\item
Let $[P] \goesto\alpha R'$ with $\alpha \in A \cup \{\tau\}$.
Then, by \lem{minimisation activities}, $P \goesto\alpha P'$ for some $P'$ with $R'=[P']$.
Moreover, $R' \B^{-1} P'$.
\item
Let $\I(P)\cap(X\cup\{\tau\})=\emptyset$ and $P \goesto\rt P'$.
By \lem{minimisation timeout}, $[P] \goesto\rt [Q']$ for some $Q'$ such that $\theta_X(P') \bis{r} \theta_X(Q')$.
Moreover,  using \lem{minimisation}, $\theta_X(P') \B [\theta_X(P')] = [\theta_X(Q')] \bis{r} \theta_X([Q'])$.
\item
Let $\I([P])\cap(X\cup\{\tau\})=\emptyset$ and $[P] \goesto\rt [Q']$.
Then, by \lem{minimisation timeout}, $P \goesto\rt P'$ for some $P'$ with $\theta_X(Q') \bis{r} \theta_X(P')$.
By \lem{minimisation},
$\theta_X([Q']) \bis{r} [\theta_X(Q')] = [\theta_X(P')] \B^{-1} \theta_X(P')$.
\qed
\end{itemize}
\end{proof}
By \pr{finitely branching} each $P\in\IP^g$ is finitely branching.
By construction, so is $[P]$.

No two states reachable from $[P]$ are strongly reactive bisimilar.  Hence the process $[P]$ with
its above-generated transition relation can be seen as a version of $P$ were each equivalence class of
reachable states is collapsed into a single state---a kind of minimisation. But it is not exactly a
minimisation, as not all states reachable from $[P]$ need be strongly reactive bisimilar with
reachable states of $P$. This is illustrated by Process $6$ of \fig{product}, when $\chi(\{1,6\})=1$.
Now $\{2\}$ and $\{3\}$ are  reachable from $[P]$, but not strongly reactive bisimilar with
reachable states of $6$.

\subsection{Completeness for finitely branching processes}\label{sec:infinite}

I will now give a syntactic representation of each process $[P]$, for $P \in\IP^g$, as a $\CCSP$
process with guarded recursion. Take a different variable $x_R$ for each $\bis{r}$-equivalence
class $R$ of $\CCSP$ processes with guarded recursion. Let $V_\RS$ be the set of all those variables,
and define the recursive specification $\RS$ by\vspace{-1ex}
$$x_{R} = \sum_{R\goesto{\alpha} R'} \alpha.x_{R'}\;.$$
By construction, $R \bis{} \rec{x_{R} | \RS}$, that is, the process
$\rep{P} := \rec{x_{[P]} | \RS} \in \IP^g$ is strongly bisimilar to $[P]$.\linebreak[3]
In fact, the symmetric closure of the relation $\{([P],\rep{P}) \mid P \in \IP^g\}$ is a strong bisimulation.
Thus, $\rep{P}$ serves as a normal form within the $\bis{r}$-equivalence class of $P\in \IP^g$.

The above construction will not work when there are not as many variables as equivalences classes of 
$\CCSP$ processes with guarded recursion. Note that each real number in the interval $[0,1)$ can be
represented as an infinite sequence of $0$s and $1$s, and thus as a $\CCSP$ processes with guarded
recursion employing the finite alphabet $A=\{0,1\}$. Hence there are uncountably many 
equivalences classes of $\CCSP$ processes with guarded recursion.

To solve this problem, one starts here already with the proof of (\ref{reacSC}), and fixes two processes
$P_0$ and $Q_0 \in\IP^g$ with $P_0 \bis{r} Q_0$. The task is to prove $\textit{Ax} \vdash P_0 = Q_0$.
Now call an equivalence class $R$ of $\CCSP$ processes with guarded recursion \emph{relevant}
if either $R$ is reachable from $[P_0] = [Q_0]$, or a member of $R$ is reachable from $P_0$ or $Q_0$.
There are only countably many relevant equivalence classes. It suffices to take a variable
$x_R$ only for relevant $R$. Below, I will call a process $P \in \IP^g$ \emph{relevant}
if it is a member of a relevant equivalence class; in case we had enough variables to start with, all
processes $P \in \IP^g$ may be called relevant.

\begin{lemma}{collapse of normal form}
Let $P,Q\in\IP^g$ be relevant. Then $\rep{P} \bis{r} \rep{Q} \Rightarrow \rep{P}=\rep{Q}$.
\end{lemma}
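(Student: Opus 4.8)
The plan is to show that the hypothesis $\rep{P} \bis{r} \rep{Q}$ forces $[P] = [Q]$ as abstract processes, at which point $\rep{P}$ and $\rep{Q}$ are literally the same $\CCSP$ expression, so the conclusion holds a fortiori. All the real content has already been established in \pr{canonical} and \lem{minimisation}; here I only have to chase equivalences.

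First I would record the chain $\rep{P} \bis{} [P] \bis{r} P$. The left bisimilarity is by construction of $\RS$: the symmetric closure of $\{([R],\rep{R}) \mid R \in \IP^g\}$ is a strong bisimulation, so $\rep{P} \bis{} [P]$, and hence $\rep{P} \bis{r} [P]$ since ${\bis{}} \subseteq {\bis{r}}$ (noted after \df{bisimulation}). The right bisimilarity is \pr{canonical} together with symmetry of $\bis{r}$. By transitivity of $\bis{r}$ (\pr{equivalence}) this yields $\rep{P} \bis{r} P$, and symmetrically $\rep{Q} \bis{r} Q$. Combining these with the hypothesis $\rep{P} \bis{r} \rep{Q}$ and transitivity once more gives $P \bis{r} Q$.

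Since $[R]$ is by definition the set $\{S \in \IP^g \mid S \bis{r} R\}$ and $\bis{r}$ is an equivalence relation, $P \bis{r} Q$ implies $[P] = [Q]$; as $P$ and $Q$ are relevant, the common variable $x_{[P]} = x_{[Q]}$ lies in $V_\RS$, so $\rep{P} = \rec{x_{[P]} \mid \RS}$ and $\rep{Q} = \rec{x_{[Q]} \mid \RS}$ are syntactically identical $\CCSP$ expressions; in particular $\textit{Ax} \vdash \rep{P} = \rep{Q}$. I do not expect a genuine obstacle here: the only point requiring care is that the normal-form construction $\rep{\cdot}$ depends on its argument only through its $\bis{r}$-equivalence class, which is immediate from the definition $\rep{P} := \rec{x_{[P]} \mid \RS}$, plus the already-recorded inclusion ${\bis{}} \subseteq {\bis{r}}$.
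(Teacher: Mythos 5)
Your proof is correct and follows essentially the same route as the paper's: the paper likewise chains $P \bis{r} \rep{P} \bis{r} \rep{Q} \bis{r} Q$ (using \pr{canonical} and the strong bisimulation between $[R]$ and $\rep{R}$) to conclude $[P]=[Q]$ and hence syntactic identity of $\rep{P}$ and $\rep{Q}$. You merely spell out the intermediate steps, including the (correct) observation that relevance guarantees the shared variable $x_{[P]}=x_{[Q]}$ lies in $V_\RS$.
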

\begin{proof}
Suppose $\rep{P} \bis{r} \rep{Q}$. Then $P \bis{r} \rep{P} \bis{r} \rep{Q} \bis{r} Q$, so $[P]=[Q]$,
and hence $\rep{P}=\rep{Q}$.
\end{proof}

\begin{lemma}{bisimulation collapse}
Let $P,Q\in\IP^g$ be relevant. Then
$\theta_X(\rep{P}) \bis{r} \theta_X(\rep{Q}) \Rightarrow \theta_X(\rep{P}) \bis{\,} \theta_X(\rep{Q})$.
\end{lemma}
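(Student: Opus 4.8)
The plan is to exhibit a strong bisimulation directly. First note the shape of the states reachable from $\theta_X(\rep{P})$: by the rules for $\theta_X$ ($=\theta_X^X$), a $\tau$-step of $\theta_X(\rep{R})$ leads to $\theta_X(T)$ for a $\tau$-successor $T$ of $\rep{R}$, while every visible step and every $\rt$-step ``escapes'' the wrapper and leads to an unwrapped successor of $\rep{R}$, which is again one of the normal forms $\rec{x_{V}\mid\RS}$; and from such a normal form one only ever reaches normal forms. So I would work with the (already symmetric) relation
\[
\B := \{(\theta_X(\rep{R}),\theta_X(\rep{R'})) \mid \theta_X(\rep{R})\bis{r}\theta_X(\rep{R'})\}\ \cup\ \{(\rep{R},\rep{R}) \mid \rep{R}\text{ a relevant normal form}\}
\]
and show it is a strong bisimulation; it contains $(\theta_X(\rep{P}),\theta_X(\rep{Q}))$ by hypothesis, so this suffices. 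The identity pairs cause no trouble, so everything reduces to analysing a transition $\theta_X(\rep{R})\goesto\alpha S$ in the presence of $\theta_X(\rep{R})\bis{r}\theta_X(\rep{R'})$, using that $\bis{r}$ is witnessed by a strong time-out bisimulation (\pr{time-out bisimulation}).

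For $\alpha=\tau$ one has $S=\theta_X(T)$ with $\rep{R}\goesto\tau T$; the first clause of the time-out bisimulation yields $\theta_X(\rep{R'})\goesto\tau\theta_X(T')$ with $\rep{R'}\goesto\tau T'$ and $\theta_X(T)\bis{r}\theta_X(T')$, and $T,T'$ are normal forms, so $(\theta_X(T),\theta_X(T'))\in\B$. For $\alpha=a\in A$ the transition escapes, so $\rep{R}\goesto a S$ with $S$ a normal form; the first clause gives $\theta_X(\rep{R'})\goesto a S'$ with $S\bis{r}S'$ and $S'$ a normal form, whence $S=S'$ by \lem{collapse of normal form} and $(S,S')\in\B$ via the identity part. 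The case $\alpha=\rt$ is the crux: here $\rep{R}\goesto\rt S$ with $\I(\rep{R})\cap(X\cup\{\tau\})=\emptyset$ and $S$ a normal form, and since $S$ lands in the identity part of $\B$ I must produce $\theta_X(\rep{R'})\goesto\rt S$ with the \emph{same} $S$.

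For this I would first observe, from $\I(\rep{R})\cap(X\cup\{\tau\})=\emptyset$ and the rules for $\theta_X$, that $\I(\theta_X(\rep{R}))=\I(\rep{R})$ (and likewise for $\rep{R'}$, using $\I(\rep{R'})=\I(\rep{R})$); hence the second clause of the time-out bisimulation is available for \emph{every} parameter $Y$ disjoint from $\I(\rep{R})$, and since an $\rt$-transition of $\theta_X(\rep{R'})$ can only come from $\rep{R'}\goesto\rt{}$, for each such $Y$ there is an $\rt$-successor of $\rep{R'}$ — there are finitely many, $T_1,\dots,T_n$, by \pr{finitely branching} — with $\theta_Y(S)\bis{r}\theta_Y(T_i)$. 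Choosing $Y:=A\setminus\bigl(\I(\rep{R})\cup\I(S)\cup\bigcup_{j}\I(T_j)\bigr)$, \lem{theta} makes the wrappers transparent, i.e.\ $\theta_Y(S)\bis{}S$ and $\theta_Y(T_j)\bis{}T_j$, so $\theta_Y(S)\bis{r}\theta_Y(T_{i_0})$ forces $S\bis{r}T_{i_0}$ and therefore $S=T_{i_0}$ by \lem{collapse of normal form}; thus $S$ is an $\rt$-successor of $\rep{R'}$, and by symmetry $\rep{R}$ and $\rep{R'}$ have exactly the same set of $\rt$-successors. I expect the genuinely delicate point to be the last step: \lem{theta} needs $S$ and the $T_j$ to have no $\tau$-initial, so one must either argue that the $\rt$-successors arising here are $\tau$-free (exploiting laws L2 and L3, since $\rep{R}$ itself is $\tau$-free and idling) or, failing that, replace the appeal to \lem{theta} by an induction on the modal depth of a formula distinguishing $S$ from a given $T_j$, carried out along exactly the $\tau$- and $Y$-labelled transitions that $\theta_Y$ leaves intact — this is where the bulk of the technical care will go.
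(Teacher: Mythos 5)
Your relation and your handling of the $\tau$- and visible-action cases coincide with the paper's proof. The genuine gap is in the $\rt$-case, and it sits exactly where you flagged it: the inference ``$\theta_Y(S)\bis{r}\theta_Y(T_{i_0})$ forces $S\bis{r}T_{i_0}$'' is unsound. \lem{theta} requires $\I(S)\cap(Y\cup\{\tau\})=\emptyset$, and an $\rt$-successor of a normal form may well have $\tau\in\I(S)$ (e.g.\ the normal form of $a.0+\rt.\tau.b.0$ has $\rep{\tau.b.0}$ as $\rt$-successor); your first fallback is therefore false --- L2 removes $\rt$-summands next to $\tau$-summands, and L3 prunes visible initial actions of $\rt$-successors, but neither removes their $\tau$'s. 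Worse, when $\tau\in\I(S)$ your choice of $Y$ as the complement of all initials makes $\theta_Y$ maximally destructive rather than transparent: it prunes every initial visible action of $S$. The example on Page~\pageref{illustration} shows the step genuinely fails: $\theta_\emptyset\big(a.Q+\tau.(b.R+a.S)\big)\bis{r}\theta_\emptyset\big(\tau.(b.R+a.S)\big)$ although the unwrapped processes have different initial action sets, and which $\rt$-successor of the second process matches a given $\rt$-successor of the first depends on $Y$ ($a.Q+\tau.(b.R+a.S)$ matches $a.Q+\tau.a.S$ when $Y=\{a\}$ but $\tau.(b.R+a.S)$ when $Y=\emptyset$), so no single $T_{i_0}$ serves all environments. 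Your second fallback targets a statement --- that $\rep{R}$ and $\rep{R'}$ have the same $\rt$-successors --- which cannot be extracted from the $Y$-indexed matching data alone.

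What rescues the lemma, and what the paper does, is that the $\rt$-case only arises when it is trivial. A transition $\theta_X(\rep{R})\goesto{\rt}S$ forces $\I(\rep{R})\cap(X\cup\{\tau\})=\emptyset$, so \lem{theta} applies to $\rep{R}$ and $\rep{R'}$ \emph{themselves} rather than to their successors, giving $\rep{R}\bis{}\theta_X(\rep{R})\bis{r}\theta_X(\rep{R'})\bis{}\rep{R'}$; \lem{collapse of normal form} then yields $\rep{R}=\rep{R'}$, and the $\rt$-transition is matched by the identical transition of the identical process. (The paper performs this dichotomy up front: either $\I(\rep{P})\cap(X\cup\{\tau\})=\emptyset$ and the two wrapped terms are literally equal, or some $\beta\in X\cup\{\tau\}$ is initial and $\theta_X(\rep{P})$ has no $\rt$-transitions at all.) You had every ingredient --- \lem{theta}, \lem{collapse of normal form}, and the observation $\I(\rep{R'})=\I(\rep{R})$ --- but applied \lem{theta} one level too deep.
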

\begin{proof}
I show that ${\B} := \textit{Id} \cup \{(\theta_X(\rep{P}), \theta_X(\rep{Q})) \mid \theta_X(\rep{P}) \bis{r} \theta_X(\rep{Q})\}$
is a strong bisimulation.

Suppose $\theta_X(\rep{P}) \bis{r} \theta_X(\rep{Q})$. 
Then $\rep{P} {\ngoesto\beta}$ for all $\beta\in X\cup\{\tau\}$ iff $\rep{Q} {\ngoesto\beta}$ for all $\beta\in X\cup\{\tau\}$,
since $\I(\rep{P}) \cap (X\cup\{\tau\}) = \I(\theta_X(\rep{P})) \cap (X\cup\{\tau\})
= \I(\theta_X(\rep{Q})) \cap (X\cup\{\tau\}) = \I(\rep{Q}) \cap (X\cup\{\tau\})$.

First consider the case that $\rep{P} {\ngoesto\beta}$ for all $\beta\in X\cup\{\tau\}$.
Then $\theta_X(\rep{P}) \bis{\,} \rep{P}$ and $\theta_X(\rep{Q}) \bis{\,} \rep{Q}$.
Hence $\rep{P} \bis{\,} \theta_X(\rep{P}) \bis{r} \theta_X(\rep{Q}) \bis{\,} \rep{Q}$.
So by \lem{collapse of normal form}, $\rep{P} \mathbin= \rep{Q}$, and thus
$\theta_X(\rep{P}) \mathrel{\textit{Id}} \theta_X(\rep{Q})$.

Henceforth I suppose that $\rep{P} {\goesto\beta}$ for some $\beta\in X\cup\{\tau\}$.
So $\rep{P} {\ngoesto\rt}$ and $\rep{Q} {\ngoesto\rt}$.
\begin{itemize}
\item Let $\theta_X(\rep{P}) \goesto{a} P''$ with $a \in A$.
  Then $\theta_X(\rep{Q}) \goesto{a} Q''$ for some $Q''$ with $P'' \bis{r} Q''$.
  One has $\rep{P} \goesto{a} P''$ and $\rep{Q} \goesto{a} Q''$.
  The process $P''$ must have the form $\rep{P'}$, and likewise $Q''=\rep{Q'}$.
  Since $\rep{P'} \bis{r} \rep{Q'}$, \lem{collapse of normal form} yields $\rep{P'} = \rep{Q'}$.
\item Let $\theta_X(\rep{P}) \goesto{\tau} P''$.
  Then $\theta_X(\rep{Q}) \goesto{\tau} Q''$ for some $Q''$ with $P'' \bis{r} Q''$.
  The process $P''$ must have the form $\theta_X(\rep{P'})$, and likewise $Q''=\theta_X(\rep{Q'})$.
  Hence $P'' \B Q''$.
\qed
\end{itemize}
\end{proof}

\begin{definition}{hnfrep}
Given a relevant $\CCSP$ process $P \in \IP^g$, let
\plat{$\widetilde P := \sum_{\{(\alpha,Q) \mid P \goesto{\scriptscriptstyle\alpha} Q\}}\alpha.\rep{Q}$}.
\end{definition}
Thus, $\widetilde P$ is defined like the head-normal form $\widehat P$ of $P \in\IP^g$,
except that all processes $Q$ reachable from $P$ by performing one transition are 
replaced by the normal form within their $\bis{r}$-equivalence class.\vspace{-1pt}
So $P \bis{\,} \widehat P \bis{r} \widetilde P$.
Note that $\rep{P} = \widetilde{\chi([P])}$ is provable through a single application of the axiom RDP\@.

The following step is the only one where the reactive approximation axiom (RA) is used.

\begin{proposition}{infinite}
Let $P,Q\in\IP^g$ be relevant. Then
$P \bis{r} Q \Rightarrow \textit{Ax} \vdash \widetilde P = \widetilde Q$.
\end{proposition}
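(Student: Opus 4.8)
The plan is to use the reactive approximation axiom RA: since $P \bis{r} Q$ is given, it suffices to prove $\textit{Ax} \vdash \psi_X(\widetilde P) = \psi_X(\widetilde Q)$ for every $X \subseteq A$, and RA then yields $\textit{Ax} \vdash \widetilde P = \widetilde Q$. The structural fact that makes everything work without any induction on a size measure (which is unavailable, as $\widetilde P$, $\widetilde Q$ can have infinitely many reachable states) is that the normal forms are literally syntactically equal exactly when the classes coincide: if $R_1 \bis{r} R_2$ then $[R_1]=[R_2]$, hence $\rep{R_1}=\rec{x_{[R_1]}|\RS}=\rec{x_{[R_2]}|\RS}=\rep{R_2}$ as $\CCSP$ expressions. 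Note also that $\widetilde P \bis{r} P$ and $\widetilde Q \bis{r} Q$ (as recorded after \df{hnfrep}), so $\I(\widetilde P)=\I(P)=\I(Q)=\I(\widetilde Q)$.

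First I would fix $X$ and head-normalise under $\psi_X$. Since $P,Q\in\IP^g$ are finitely branching by \pr{finitely branching}, I may write $\widetilde P = \sum_{i\in I}\alpha_i.\rep{P_i} + \sum_{j\in J}\rt.\rep{P''_j}$ with $I,J$ finite, $\alpha_i\in A\cup\{\tau\}$, $P\goesto{\alpha_i}P_i$, $P\goesto{\rt}P''_j$, and likewise $\widetilde Q = \sum_{k\in K}\beta_k.\rep{Q_k} + \sum_{h\in H}\rt.\rep{Q''_h}$. A routine repeated application of the axioms for $\psi_X$ from \tab{axioms thetaX} gives $\textit{Ax} \vdash \psi_X(\widetilde P) = \sum_{i\in I}\alpha_i.\rep{P_i}$ when $\I(P)\cap(X\cup\{\tau\})\neq\emptyset$, and $\textit{Ax} \vdash \psi_X(\widetilde P) = \sum_{i\in I}\alpha_i.\rep{P_i} + \sum_{j\in J}\rt.\theta_X(\rep{P''_j})$ when $\I(P)\cap(X\cup\{\tau\})=\emptyset$ (with the obvious degenerate reading when $\widetilde P=0$); the same equations hold for $\widetilde Q$, and by the initials observation above both processes fall into the same case.

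Next I would match summands. For a visible-or-$\tau$ summand $\alpha_i.\rep{P_i}$, the transition $P\goesto{\alpha_i}P_i$ is matched, via \df{time-out bisimulation} applied to the time-out bisimulation $\bis{r}$, by some $Q\goesto{\alpha_i}Q_k$ with $P_i\bis{r}Q_k$; hence $\rep{P_i}=\rep{Q_k}$, so $\alpha_i.\rep{P_i}$ is, up to the associativity, commutativity and idempotence laws for $+$, a summand of $\psi_X(\widetilde Q)$. For a time-out summand $\rt.\theta_X(\rep{P''_j})$ (present only when $\I(P)\cap(X\cup\{\tau\})=\emptyset$), the second clause of \df{time-out bisimulation} gives $Q\goesto{\rt}Q''_h$ with $\theta_X(P''_j)\bis{r}\theta_X(Q''_h)$. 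Using \pr{canonical} (so $\rep{P''_j}\bis{r}P''_j$ and $\rep{Q''_h}\bis{r}Q''_h$) and \thm{congruence} for $\theta_X$, I obtain $\theta_X(\rep{P''_j})\bis{r}\theta_X(\rep{Q''_h})$; \lem{bisimulation collapse} upgrades this to $\theta_X(\rep{P''_j})\bis{}\theta_X(\rep{Q''_h})$ (after the routine check that $P''_j$ and $Q''_h$ are again relevant, reachability from $P_0$, $Q_0$ or $[P_0]$ being preserved), and completeness of strong bisimilarity, \thm{completeness thetaX}, yields $\textit{Ax}'\vdash\theta_X(\rep{P''_j})=\theta_X(\rep{Q''_h})$, hence $\textit{Ax}\vdash\rt.\theta_X(\rep{P''_j})=\rt.\theta_X(\rep{Q''_h})$. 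So every summand of $\psi_X(\widetilde P)$ is provably equal to a summand of $\psi_X(\widetilde Q)$; absorbing them one by one with idempotence of $+$ gives $\textit{Ax}\vdash\psi_X(\widetilde Q)=\psi_X(\widetilde Q)+\psi_X(\widetilde P)$, and symmetrically $\textit{Ax}\vdash\psi_X(\widetilde P)=\psi_X(\widetilde P)+\psi_X(\widetilde Q)$, whence $\textit{Ax}\vdash\psi_X(\widetilde P)=\psi_X(\widetilde Q)$. Applying RA concludes the proof.

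I expect the main obstacle to be exactly the time-out summands: bridging from the semantic statement $\theta_X(P''_j)\bis{r}\theta_X(Q''_h)$ to $\textit{Ax}$-provable equality of $\theta_X(\rep{P''_j})$ and $\theta_X(\rep{Q''_h})$. This is precisely what \lem{bisimulation collapse} was designed for—it lets one descend to strong bisimilarity on the normal forms and then invoke the already-established completeness theorem—while the rest (head-normalising under $\psi_X$, checking that relevance is inherited, the $\widetilde P=0$ corner case) is bookkeeping.
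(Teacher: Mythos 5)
Your proposal is correct and follows essentially the same route as the paper's proof: reduce to $\psi_X$-instances via RA, head-normalise, match the $A\cup\{\tau\}$-summands using the syntactic coincidence of normal forms (\lem{collapse of normal form}) and the $\rt$-summands by descending to strong bisimilarity via \lem{bisimulation collapse} and invoking \thm{completeness thetaX}, then absorb summands by idempotence. The only differences are cosmetic bookkeeping (you phrase the transitions on $P,Q$ rather than on $\widetilde P,\widetilde Q$ and make the relevance and finite-branching checks explicit), which does not change the argument.
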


\begin{proof}
Suppose $P \bis{r} Q$. Then $\widetilde P \bis{r} P \bis{r} Q \bis{r} \widetilde Q$.\vspace{1pt}
With Axiom RA it suffices to show that $\textit{Ax}_f \vdash \psi_X(\widetilde P) = \psi_X(\widetilde Q)$
for all $X\subseteq A$. So pick $X \subseteq A$. Let
$$\widetilde P = \sum_{i\in I} \alpha_i.P'_i + \sum_{j\in J} \rt.P''_j
\qquad\mbox{and}\qquad\widetilde Q = \sum_{k\in K} \beta_k.Q'_k + \sum_{h\in H} \rt.Q''_h$$
with $\alpha_j,\beta_k \mathbin\in A\cup\{\tau\}$ for all $i\mathbin\in I$ and $k\mathbin\in K$.
As for \thm{finite}, the following two claims are crucial.
\vspace{1ex}

\noindent
\textit{Claim 1:} For each $i \mathbin\in I$ there is a $k \mathbin\in K$ with $\alpha_i=\beta_k$
and $\textit{Ax} \vdash P'_i = Q'_k$.

\noindent
\textit{Claim 2:}  If $\I(P) \cap (X \cup\{\tau\}) = \emptyset$, then for each $j \mathbin\in J$ there is a $h \mathbin\in H$ with
$\textit{Ax} \vdash \theta_X(P''_j) = \theta_X(Q''_h)$.
\vspace{1ex}

\noindent
With these claims the proof proceeds exactly as the one of \thm{finite}.
\vspace{1ex}

\noindent
\textit{Proof of Claim 1:} Pick $i \mathbin\in I$. Then $\widetilde P \goesto{\alpha_i} P'_i$.
So  $\widetilde Q \goesto{\alpha_i} Q'$ for some $Q'$ with $P'_i \bis{r} Q'$.
Hence there is a $k \mathbin\in K$ with $\alpha_i=\beta_k$ and $Q'=Q'_k$.
The processes $P'_i$ and $Q'_k$ must have the form $\rep{P'}$ and $\rep{Q'}$ for some $P',Q'\in\IP^g$.
Hence, by \lem{collapse of normal form}, $P'_i = Q'_k$, and thus certainly $\textit{Ax} \vdash P'_i = Q'_k$.
\vspace{1ex}

\noindent
\textit{Proof of Claim 2:} Pick $j \mathbin\in J$. Then \plat{$\widetilde P \goesto{\rt} P''_j$}.
Since $\I(\widetilde P) \cap (X \cup\{\tau\}) = \emptyset$, there is a $Q''$ such that 
$\widetilde Q \goesto{\rt} Q''$ and $\theta_X(P''_j) \bis{r} \theta_X(Q'')$.
Hence there is a $h \mathbin\in H$ with $Q''=Q''_h$.
The processes $P''_j$ and $Q''_h$ have the form $\rep{P''}$ and $\rep{Q''}$ for some $P'',Q''\in\IP^g$.
So by \lem{bisimulation collapse}, $\theta_X(P''_j) \bis{\,} \theta_X(Q'')$.
The completeness of \textit{Ax} for strong bisimilarity (\thm{completeness thetaX}) now yields
$\textit{Ax} \vdash \theta_X(P''_j) = \theta_X(Q'')$.
\end{proof}

\begin{theorem}{completeness normal form}
Let $P\in\IP^g$ be relevant. Then $\textit{Ax} \vdash P = \rep{P}$.
\end{theorem}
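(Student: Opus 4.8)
The plan is to deduce the theorem from two facts, both available by now: the equation $\textit{Ax}\vdash\widetilde R=\rep R$ for every relevant $R\in\IP^g$, which repackages the only genuinely new ingredient (the reactive‑approximation argument), together with one essentially classical application of RSP that folds the reachable‑state graph of $P$ back into a recursion.

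\textbf{First ingredient: $\textit{Ax}\vdash\widetilde R=\rep R$ for relevant $R$.} The chosen representative $\chi([R])$ belongs to the class $[R]$, hence $R\bis{r}\chi([R])$, and $\chi([R])$ is relevant (a member of a relevant class). So \pr{infinite} gives $\textit{Ax}\vdash\widetilde R=\widetilde{\chi([R])}$. On the other hand the single RDP‑unfolding noted immediately before the theorem shows $\textit{Ax}\vdash\widetilde{\chi([R])}=\rep R$: unfolding $\rep R=\rec{x_{[R]}|\RS}$ once produces $\sum_{[R]\goesto{\alpha}R'}\alpha.\rec{x_{R'}|\RS}$, and by the definition~(\ref{normal form}) of the transition relation on abstract processes its set of summands coincides, up to the axioms for $+$, with the set $\{\alpha.\rep Q\mid\chi([R])\goesto{\alpha}Q\}$ of summands of $\widetilde{\chi([R])}$. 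Chaining the two derivations yields the claim.

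\textbf{Second ingredient: RSP.} Let $P$ be relevant and let $\mathcal P$ be the set of processes reachable from $P$. It is countable (each member lies in $\IP^g$ and is finitely branching by \pr{finitely branching}) and, by the same closure property of relevance that already makes $\RS$ and the operator $\widetilde{\cdot}$ well defined, all its members are relevant. Take distinct variables $z_{P'}$ for $P'\in\mathcal P$ and let $\RS^P$ be the manifestly guarded specification with $\RS^P_{z_{P'}}=\sum_{P'\goesto{\alpha}Q}\alpha.z_Q$ (a finite sum, so $\rec{z_P|\RS^P}\in\IP^g$). I would then present two provably valid solutions. With $\mu(z_{P'})=P'$ one has $\RS^P_{z_{P'}}[\mu]=\widehat{P'}$, and $\textit{Ax}\vdash P'=\widehat{P'}$ by \pr{hnf}; so applying RSP with the chosen variable $z_P$ gives $\textit{Ax}\vdash P=\rec{z_P|\RS^P}$. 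With $\nu(z_{P'})=\rep{P'}$ one has $\RS^P_{z_{P'}}[\nu]=\sum_{P'\goesto{\alpha}Q}\alpha.\rep Q=\widetilde{P'}$ by \df{hnfrep}, and $\textit{Ax}\vdash\rep{P'}=\widetilde{P'}$ by the first ingredient; so RSP gives $\textit{Ax}\vdash\rep P=\rec{z_P|\RS^P}$. Since both derivations terminate at the same term, $\textit{Ax}\vdash P=\rep P$.

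\textbf{Where the difficulty sits.} The conceptually delicate matter---matching $\rt$‑transitions, where $P$ and its canonical representative agree only after applying $\theta_X$, and where the reactive approximation axiom is truly needed---is entirely absorbed into \pr{infinite}, so nothing of that flavour recurs here. The points that still need care are bookkeeping: that $\RS^P$, despite possibly having infinitely many equations, is a legitimate guarded recursive specification whose left‑hand side $\rec{z_P|\RS^P}$ is closed, so that RSP is applicable with its correspondingly infinite family of premises (all of which we have discharged); and, above all, the handling of the word \emph{relevant}---one must check that every process reachable from a relevant process is again relevant, so that $\rep{P'}$ and $\widetilde{P'}$ are defined and \pr{infinite} applies for every $P'\in\mathcal P$. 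I expect this last point to be the fiddliest; it rests on the observation, already implicit in the well‑definedness of the constructions of \Sec{minimisation} and~\Sec{infinite}, that the relevant classes are closed under the transition relation~(\ref{normal form}) and that reachable states of relevant processes lie in relevant classes.
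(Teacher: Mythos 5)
Your proposal is correct and follows essentially the same route as the paper's own proof: both exhibit the recursive specification over the states reachable from $P$, show that $P'\mapsto P'$ and $P'\mapsto\rep{P'}$ are solutions via \pr{hnf} and via $\textit{Ax}\vdash\rep{P'}=\widetilde{\chi([P'])}=\widetilde{P'}$ (one RDP unfolding plus \pr{infinite}), and conclude by RSP. The bookkeeping points you flag (legitimacy of the infinite guarded specification, closure of relevance under reachability) are exactly the ones the paper treats implicitly.
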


\begin{proof}
Let $\textit{reach}(P)$ be the set of processes reachable from $P$.
Take a different variable $z_R$ for each $R \in \textit{reach}(P)$,
and define the recursive specification $\RS'$ by $V_{\RS'} := \{z_R\mid R \in \textit{reach}(P)\}$
and $$z_{R} = \sum_{R\goesto{\alpha} R'} \alpha.z_{R'}\;.$$
By construction, $R \bis{} \rec{x_{R} | \RS}$.
In fact, the symmetric closure of $\{(R,\rec{x_{R} | \RS}) \mid R \in \textit{reach}(P)\}$ is a strong bisimulation.
To establish \thm{completeness normal form} through an application of RSP, I show that both $P$ and
$\rep{P}$ are $x_P$-components of solutions of $\RS'$. So I show
\[\textit{Ax} \vdash R = \sum_{R\goesto{\alpha} R'} \alpha.R'
\qquad\mbox{and}\qquad
\textit{Ax} \vdash \rep{R} = \sum_{R\goesto{\alpha} R'} \alpha.\rep{R'}\]
for all $R \in \textit{reach}(P)$.
The first of these statements is a direct application of \pr{hnf}.
The second statement can be reformulated as $\textit{Ax} \vdash \rep{R} = \widetilde R$.
As remarked above, $\textit{Ax} \vdash \rep{R} = \widetilde{\chi([R])}$ through a single application of RDP\@.
Hence I need to show that $\textit{Ax} \vdash \widetilde{\chi([R])} = \widetilde R$.
Considering that $\chi([R]) \bis{r} R$, this is a consequence of \pr{infinite}.
\end{proof}

\begin{corollary}{completeness}
Let $P,Q\in\IP^g$ be relevant. Then $P \bis{r} Q \Rightarrow \textit{Ax} \vdash P = Q$.
\end{corollary}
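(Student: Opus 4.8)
The plan is to assemble this corollary directly from the two preceding results, \thm{completeness normal form} and \lem{collapse of normal form}, which between them already do all the work; the corollary is essentially a one-line deduction once the normal-form machinery is in place.

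First I would invoke \thm{completeness normal form} twice, once for $P$ and once for $Q$ (both of which are relevant by hypothesis), to obtain $\textit{Ax} \vdash P = \rep{P}$ and $\textit{Ax} \vdash Q = \rep{Q}$. Next I would observe that $\rep{P} \bis{r} P$ and $\rep{Q} \bis{r} Q$: indeed, by \pr{canonical} one has $P \bis{r} [P]$, and by the construction of $\rep{P}$ in \Sec{infinite} the symmetric closure of $\{([P],\rep{P})\mid P\in\IP^g\}$ is a strong bisimulation, so $\rep{P} \bis{\,} [P]$ and hence $\rep{P} \bis{r} [P] \bis{r} P$ by transitivity of $\bis{r}$ (\pr{equivalence}) together with the inclusion ${\bis{\,}}\subseteq{\bis{r}}$ noted in \Sec{strong}; likewise for $Q$.

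Then, chaining these with the hypothesis $P \bis{r} Q$, transitivity gives $\rep{P} \bis{r} \rep{Q}$. Applying \lem{collapse of normal form} to the relevant processes $P$ and $Q$ yields the syntactic identity $\rep{P} = \rep{Q}$. Finally I would combine the derivations: $\textit{Ax} \vdash P = \rep{P} = \rep{Q} = Q$, using that $\rep{P}$ and $\rep{Q}$ are literally the same expression so no axiom is needed for the middle step. I do not anticipate any genuine obstacle here; the only thing requiring a moment's care is spelling out why $\rep{P} \bis{r} P$ (it goes via the strong bisimulation between $[P]$ and $\rep{P}$ and \pr{canonical}), since the earlier text states $\rep{P}\bis{\,}[P]$ and $P\bis{r}[P]$ rather than $\rep{P}\bis{r}P$ outright.
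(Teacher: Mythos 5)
Your proposal is correct and follows essentially the same route as the paper: the paper's own proof is the one-liner ``$P \bis{r} Q$ gives $\rep{P}=\rep{Q}$ by \lem{collapse of normal form}, hence $\textit{Ax} \vdash P = \rep{P} = \rep{Q} = Q$ via \thm{completeness normal form}.'' Your extra care in establishing $\rep{P}\bis{r}P$ (via $\rep{P}\bis{\,}[P]$ and \pr{canonical}) so that the lemma's hypothesis $\rep{P}\bis{r}\rep{Q}$ is literally met is exactly the chain the lemma's own proof uses internally, so nothing is different in substance.
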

\begin{proof}
Let $P \bis{r} Q$. Then $\rep{P}=\rep{Q}$ by \lem{collapse of normal form}, so
$\textit{Ax} \vdash P = \rep{P} = \rep{Q} = Q$.
\end{proof}

\subsection{Necessity of the axiom of choice}\label{sec:choice}

At first glance it may look like the above proof can be simplified so as to avoid using the axiom
of choice, namely by changing (\ref{normal form}) into
\[ R \goesto\alpha R' ~~\Leftrightarrow~~ \exists P\in R, P'\in R'.~ P \goesto\alpha P'\;.\]
However, this would make some processes $[P]$ infinitely branching, even when $P$ is finitely branching.
\fig{uncountable} shows an uncountable collection of strongly reactive bisimilar finitely branching processes.
Here each pair of a dashed $b$-transition and the dotted one right below it constitutes a design choice:
either the dashed or the dotted $b$-transition is present, but not both.
Since there is this binary choice for infinitely many pairs of $b$-transitions, this figure
represents an uncountable collection of processes. All of them are strongly reactive bisimilar,
because the $\rt$-transition will only be taken in an environment that blocks $b$.
In case $a$ is blocked as well, all the $a$-transitions from a state with an outgoing
$\tau$-transition can be dropped, and the difference between these processes disappears.
In case $a$ is allowed by the environment, all $b$ transitions can be dropped, and again 
the difference between these processes disappears. Hence the above alternative definition would
yield uncountably many outgoing $\rt$-transitions from the equivalence class of all these
processes. This would make it impossible to represent such a ``minimised'' process in $\CCSP$.

\begin{figure}\input{uncountable}
\hfill \raisebox{3ex}{\box\graph} \hfill\mbox{}\vspace{3ex}
\caption{An uncountable variety of strongly reactive bisimilar processes}
\label{fig:uncountable}
\end{figure}

\section{Concluding remarks}\label{sec:conclusion}

This paper laid the foundations of the proper analogue of strong bisimulation semantics for a
process algebra with time-outs. This makes it possible to specify systems in this setting and verify
their correctness properties. The addition of time-outs comes with considerable gains in expressive
power. An illustration of this is mutual exclusion.

As shown in \cite{GH15b}, it is fundamentally impossible to correctly specify mutual exclusion
protocols in standard process algebras, such as CCS \cite{Mi90ccs}, CSP \cite{BHR84,Ho85}, ACP
\cite{BW90,Fok00} or CCSP\@, unless the correctness of the specified protocol hinges on a fairness
assumption. The latter, in the view of \cite{GH15b}, does not provide an adequate solution, as
fairness assumptions are in many situations unwarranted and lead to false conclusions. In
\cite{EPTCS255.2} a correct process-algebraic rendering of mutual exclusion is given, but only after
making two important modifications to standard process algebra.  The first involves making a
justness assumption. Here \emph{justness} \cite{GH19} is an alternative to fairness, in some sense a
much weaker form of fairness---meaning weaker than weak fairness.  Unlike (strong or weak) fairness,
its use typically is warranted and does not lead to false conclusions.  The second modification is
the addition of a new construct---\emph{signals}---to CCS, or any other standard process algebra.
Interestingly, both modifications are necessary; just using justness, or just adding signals, is
insufficient. Bouwman \cite{Bou18,BLW20} points out that since the justness requirement was fairly new, and
needed to be carefully defined to describe its interaction with signals anyway, it is possible to
specify mutual exclusion without adding signals to the language at all, instead reformulating the
justness requirement in such a way that it effectively turns some actions into signals. Yet justness
is essential in all these approaches. This may be seen as problematic, because large parts of the
foundations of process algebra are incompatible with justness, and hence need to be thoroughly
reformulated in a justness-friendly way. This is pointed out in~\cite{vG19c}.

The addition of time-outs to standard process algebra makes it possible to specify mutual exclusion
without assuming justness! Instead, one should make the assumption called \emph{progress} in
\cite{GH19}, which is weaker than justness, uncontroversial, unproblematic, and made (explicitly or
implicitly) in virtually all papers dealing with issues like mutual exclusion.
This claim is substantiated in \cite{vG21b}.

Besides applications to protocol verification, future work includes adapting the work done here to a
form of reactive bisimilarity that abstracts from hidden actions, that is, to provide a counterpart
for process algebras with time-outs of, for instance, branching bisimilarity \cite{GW96}, weak
bisimilarity \cite{Mi90ccs} or coupled similarity \cite{PS92,vG93,BNP20}. Other topics worth
exploring are the extension to probabilistic processes, and especially the relations with timed
process algebras.
Davies \& Schneider in \cite{DS93}, for instance, added a construct with a quantified time-out to the process
algebra CSP \cite{BHR84,Ho85}, elaborating the timed model of CSP presented by Reed \& Roscoe in \cite{RR88}.

\paragraph{Acknowledgement.} $\!$My thanks to the CONCUR'20 and Acta Informatica referees for helpful feedback.

\bibliographystyle{eptcs}
\bibliography{../../../../Stanford/lib/abbreviations,../../../../Stanford/lib/new,../../../../Stanford/lib/dbase,glabbeek}

\begin{thebibliography}{10}
\providecommand{\bibitemdeclare}[2]{}
\providecommand{\surnamestart}{}
\providecommand{\surnameend}{}
\providecommand{\urlprefix}{Available at }
\providecommand{\url}[1]{\texttt{#1}}
\providecommand{\href}[2]{\texttt{#2}}
\providecommand{\urlalt}[2]{\href{#1}{#2}}
\providecommand{\aurl}[1]{\href{#1}{#1}}
\providecommand{\doi}[1]{doi:\urlalt{http://dx.doi.org/#1}{#1}}
\providecommand{\bibinfo}[2]{#2}

\bibitemdeclare{article}{BBK86}
\bibitem{BBK86}
\bibinfo{author}{J.C.M. \surnamestart Baeten\surnameend}, \bibinfo{author}{J.A.
  \surnamestart Bergstra\surnameend} \& \bibinfo{author}{J.W. \surnamestart
  Klop\surnameend} (\bibinfo{year}{1986}): \emph{\bibinfo{title}{Syntax and
  defining equations for an interrupt mechanism in process algebra}}.
\newblock {\sl \bibinfo{journal}{Fundamenta Informaticae}}
  \bibinfo{volume}{IX}(\bibinfo{number}{2}), pp. \bibinfo{pages}{127--168},
  \doi{10.3233/FI-1986-9202}.

\bibitemdeclare{book}{BW90}
\bibitem{BW90}
\bibinfo{author}{J.C.M. \surnamestart Baeten\surnameend} \&
  \bibinfo{author}{W.P. \surnamestart Weijland\surnameend}
  (\bibinfo{year}{1990}): \emph{\bibinfo{title}{Process Algebra}}.
\newblock \bibinfo{series}{Cambridge Tracts in Theoretical Computer Science
  18}, \bibinfo{publisher}{Cambridge University Press},
  \doi{10.1017/CBO9780511624193}.

\bibitemdeclare{article}{BNP20}
\bibitem{BNP20}
\bibinfo{author}{B.~\surnamestart Bisping\surnameend},
  \bibinfo{author}{U.~\surnamestart Nestmann\surnameend} \&
  \bibinfo{author}{K.~\surnamestart Peters\surnameend} (\bibinfo{year}{2020}):
  \emph{\bibinfo{title}{Coupled similarity: the first 32 years}}.
\newblock {\sl \bibinfo{journal}{Acta Informatica}}
  \bibinfo{volume}{57}(\bibinfo{number}{3-5}), pp. \bibinfo{pages}{439--463},
  \doi{10.1007/s00236-019-00356-4}.

\bibitemdeclare{techreport}{Bou18}
\bibitem{Bou18}
\bibinfo{author}{M.S. \surnamestart Bouwman\surnameend} (\bibinfo{year}{2018}):
  \emph{\bibinfo{title}{Liveness analysis in process algebra: simpler
  techniques to model mutex algorithms}}.
\newblock \bibinfo{type}{Technical Report}, \bibinfo{institution}{Eindhoven
  University of Technology}.
\newblock
  \urlprefix\url{http://www.win.tue.nl/~timw/downloads/bouwman_seminar.pdf}.

\bibitemdeclare{article}{BLW20}
\bibitem{BLW20}
\bibinfo{author}{M.S. \surnamestart Bouwman\surnameend},
  \bibinfo{author}{B.~\surnamestart Luttik\surnameend} \&
  \bibinfo{author}{T.A.C. \surnamestart Willemse\surnameend}
  (\bibinfo{year}{2020}): \emph{\bibinfo{title}{Off-the-shelf automated
  analysis of liveness properties for just paths}}.
\newblock {\sl \bibinfo{journal}{Acta Informatica}}
  \bibinfo{volume}{57}(\bibinfo{number}{3-5}), pp. \bibinfo{pages}{551--590},
  \doi{10.1007/s00236-020-00371-w}.

\bibitemdeclare{article}{BHR84}
\bibitem{BHR84}
\bibinfo{author}{S.D. \surnamestart Brookes\surnameend},
  \bibinfo{author}{C.A.R. \surnamestart Hoare\surnameend} \&
  \bibinfo{author}{A.W. \surnamestart Roscoe\surnameend}
  (\bibinfo{year}{1984}): \emph{\bibinfo{title}{A theory of communicating
  sequential processes}}.
\newblock {\sl \bibinfo{journal}{Journal of the ACM}}
  \bibinfo{volume}{31}(\bibinfo{number}{3}), pp. \bibinfo{pages}{560--599},
  \doi{10.1145/828.833}.

\bibitemdeclare{article}{DS93}
\bibitem{DS93}
\bibinfo{author}{J.~\surnamestart Davies\surnameend} \&
  \bibinfo{author}{S.~\surnamestart Schneider\surnameend}
  (\bibinfo{year}{1993}): \emph{\bibinfo{title}{Recursion Induction for
  Real-Time Processes}}.
\newblock {\sl \bibinfo{journal}{Formal Aspects of Computing}}
  \bibinfo{volume}{5}(\bibinfo{number}{6}), pp. \bibinfo{pages}{530--553},
  \doi{10.1007/BF01211248}.

\bibitemdeclare{article}{DH84}
\bibitem{DH84}
\bibinfo{author}{R.~\surnamestart De~Nicola\surnameend} \&
  \bibinfo{author}{M.~\surnamestart Hennessy\surnameend}
  (\bibinfo{year}{1984}): \emph{\bibinfo{title}{Testing equivalences for
  processes}}.
\newblock {\sl \bibinfo{journal}{Theoretical Computer Science}}
  \bibinfo{volume}{34}, pp. \bibinfo{pages}{83--133},
  \doi{10.1016/0304-3975(84)90113-0}.

\bibitemdeclare{inproceedings}{EPTCS255.2}
\bibitem{EPTCS255.2}
\bibinfo{author}{V.~\surnamestart Dyseryn\surnameend},
  \bibinfo{author}{R.J.~van \surnamestart Glabbeek\surnameend} \&
  \bibinfo{author}{P.~\surnamestart H\"ofner\surnameend}
  (\bibinfo{year}{2017}): \emph{\bibinfo{title}{Analysing Mutual Exclusion
  using Process Algebra with Signals}}.
\newblock In \bibinfo{editor}{K.~\surnamestart Peters\surnameend} \&
  \bibinfo{editor}{S.~\surnamestart Tini\surnameend}, editors: {\sl
  \bibinfo{booktitle}{{\rm Proceedings Combined 24th International Workshop on}
  Expressiveness in Concurrency {\rm and 14th Workshop on} Structural
  Operational Semantics}}, {\sl \bibinfo{series}{Electronic Proceedings in
  Theoretical Computer Science}} \bibinfo{volume}{255},
  \bibinfo{publisher}{Open Publishing Association}, pp.
  \bibinfo{pages}{18--34}, \doi{10.4204/EPTCS.255.2}.

\bibitemdeclare{book}{Fok00}
\bibitem{Fok00}
\bibinfo{author}{W.~J. \surnamestart Fokkink\surnameend}
  (\bibinfo{year}{2000}): \emph{\bibinfo{title}{Introduction to Process
  Algebra}}.
\newblock \bibinfo{series}{Texts in Theoretical Computer Science, An EATCS
  Series}, \bibinfo{publisher}{Springer}, \doi{10.1007/978-3-662-04293-9}.

\bibitemdeclare{inproceedings}{vG93a}
\bibitem{vG93a}
\bibinfo{author}{R.J.~van \surnamestart Glabbeek\surnameend}
  (\bibinfo{year}{1993}): \emph{\bibinfo{title}{A complete axiomatization for
  branching bisimulation congruence of finite-state behaviours}}.
\newblock In \bibinfo{editor}{A.M. \surnamestart Borzyszkowski\surnameend} \&
  \bibinfo{editor}{S.~Soko\l \surnamestart owski\surnameend}, editors: {\sl
  \bibinfo{booktitle}{{\rm Proceedings 18$^{\it th}$ International Symposium
  on} Mathematical Foundations of Computer Science, {\rm MFCS '93}}}, {\sl
  \bibinfo{series}{\rm LNCS}} \bibinfo{volume}{711},
  \bibinfo{publisher}{Springer}, pp. \bibinfo{pages}{473--484},
  \doi{10.1007/3-540-57182-5\_39}.

\bibitemdeclare{inproceedings}{vG93}
\bibitem{vG93}
\bibinfo{author}{R.J.~van \surnamestart Glabbeek\surnameend}
  (\bibinfo{year}{1993}): \emph{\bibinfo{title}{The Linear Time -- Branching
  Time Spectrum {II}; The semantics of sequential systems with silent moves}}.
\newblock In \bibinfo{editor}{E.~\surnamestart Best\surnameend}, editor: {\sl
  \bibinfo{booktitle}{{\rm Proceedings 4$^{\it th}$ International Conference
  on} Concurrency Theory, {\rm CONCUR'93}}}, {\sl \bibinfo{series}{\rm LNCS}}
  \bibinfo{volume}{715}, \bibinfo{publisher}{Springer}, pp.
  \bibinfo{pages}{66--81}, \doi{10.1007/3-540-57208-2\_6}.

\bibitemdeclare{inproceedings}{vG94a}
\bibitem{vG94a}
\bibinfo{author}{R.J.~van \surnamestart Glabbeek\surnameend}
  (\bibinfo{year}{1994}): \emph{\bibinfo{title}{On the expressiveness of {ACP}
  (extended abstract)}}.
\newblock In \bibinfo{editor}{A.~\surnamestart Ponse\surnameend},
  \bibinfo{editor}{C.~\surnamestart Verhoef\surnameend} \&
  \bibinfo{editor}{S.F.M. \surnamestart van Vlijmen\surnameend}, editors: {\sl
  \bibinfo{booktitle}{{\rm Proceedings First Workshop on the} Algebra of
  Communicating Processes, {\rm ACP'94}}}, \bibinfo{series}{Workshops in
  Computing}, \bibinfo{publisher}{Springer}, pp. \bibinfo{pages}{188--217},
  \doi{10.1007/978-1-4471-2120-6\_8}.

\bibitemdeclare{incollection}{vG01}
\bibitem{vG01}
\bibinfo{author}{R.J.~van \surnamestart Glabbeek\surnameend}
  (\bibinfo{year}{2001}): \emph{\bibinfo{title}{The Linear Time -- Branching
  Time Spectrum {I}; The Semantics of Concrete, Sequential Processes}}.
\newblock In \bibinfo{editor}{J.A. \surnamestart Bergstra\surnameend},
  \bibinfo{editor}{A.~\surnamestart Ponse\surnameend} \& \bibinfo{editor}{S.A.
  \surnamestart Smolka\surnameend}, editors: {\sl \bibinfo{booktitle}{Handbook
  of Process Algebra}}, chapter~\bibinfo{chapter}{1},
  \bibinfo{publisher}{Elsevier}, pp. \bibinfo{pages}{3--99},
  \doi{10.1016/B978-044482830-9/50019-9}.

\bibitemdeclare{article}{vG04}
\bibitem{vG04}
\bibinfo{author}{R.J.~van \surnamestart Glabbeek\surnameend}
  (\bibinfo{year}{2004}): \emph{\bibinfo{title}{The Meaning of Negative
  Premises in Transition System Specifications {II}}}.
\newblock {\sl \bibinfo{journal}{Journal of Logic and Algebraic Programming}}
  \bibinfo{volume}{60--61}, pp. \bibinfo{pages}{229--258},
  \doi{10.1016/j.jlap.2004.03.007}.

\bibitemdeclare{inproceedings}{vG17b}
\bibitem{vG17b}
\bibinfo{author}{R.J.~van \surnamestart Glabbeek\surnameend}
  (\bibinfo{year}{2017}): \emph{\bibinfo{title}{Lean and Full Congruence
  Formats for Recursion}}.
\newblock In: {\sl \bibinfo{booktitle}{{\rm Proceedings $32^{nd}$ Annual
  ACM/IEEE Symposium on} Logic in Computer Science, {\rm LICS'17}}},
  \bibinfo{publisher}{IEEE Computer Society Press},
  \doi{10.1109/LICS.2017.8005142}.
\newblock \urlprefix\url{https://arxiv.org/abs/1704.03160}.

\bibitemdeclare{article}{vG19c}
\bibitem{vG19c}
\bibinfo{author}{R.J.~van \surnamestart Glabbeek\surnameend}
  (\bibinfo{year}{2019}): \emph{\bibinfo{title}{Ensuring liveness properties of
  distributed systems: Open problems}}.
\newblock {\sl \bibinfo{journal}{Journal of Logical and Algebraic Methods in
  Programming}} \bibinfo{volume}{109}:\bibinfo{eid}{100480},
  \doi{10.1016/j.jlamp.2019.100480}.
\newblock \urlprefix\url{http://arxiv.org/abs/1912.05616}.

\bibitemdeclare{article}{vG21}
\bibitem{vG21}
\bibinfo{author}{R.J.~van \surnamestart Glabbeek\surnameend}
  (\bibinfo{year}{2021}): \emph{\bibinfo{title}{Failure Trace Semantics for a
  Process Algebra with Time-outs}}.
\newblock {\sl \bibinfo{journal}{Logical Methods in Computer Science}}
  \bibinfo{volume}{17}(\bibinfo{number}{2}):\bibinfo{eid}{11},
  \doi{10.23638/LMCS-17(2:11)2021}.

\bibitemdeclare{misc}{vG21b}
\bibitem{vG21b}
\bibinfo{author}{R.J.~van \surnamestart Glabbeek\surnameend}
  (\bibinfo{year}{2021}): \emph{\bibinfo{title}{Modelling Mutual Exclusion in a
  Process Algebra with Time-outs}}.
\newblock \urlprefix\url{https://arxiv.org/abs/2106.12785}.

\bibitemdeclare{article}{GH15b}
\bibitem{GH15b}
\bibinfo{author}{R.J.~van \surnamestart Glabbeek\surnameend} \&
  \bibinfo{author}{P.~\surnamestart H{\"{o}}fner\surnameend}
  (\bibinfo{year}{2015}): \emph{\bibinfo{title}{{CCS:} It's not fair! {F}air
  schedulers cannot be implemented in {CCS}-like languages even under progress
  and certain fairness assumptions}}.
\newblock {\sl \bibinfo{journal}{Acta Informatica}}
  \bibinfo{volume}{52}(\bibinfo{number}{2-3}), pp. \bibinfo{pages}{175--205},
  \doi{10.1007/s00236-015-0221-6}.
\newblock \urlprefix\url{http://arxiv.org/abs/1505.05964}.

\bibitemdeclare{article}{GH19}
\bibitem{GH19}
\bibinfo{author}{R.J.~van \surnamestart Glabbeek\surnameend} \&
  \bibinfo{author}{P.~\surnamestart H{\"o}fner\surnameend}
  (\bibinfo{year}{2019}): \emph{\bibinfo{title}{Progress, Justness and
  Fairness}}.
\newblock {\sl \bibinfo{journal}{ACM Computing Surveys}}
  \bibinfo{volume}{52}(\bibinfo{number}{4}):\bibinfo{eid}{69},
  \doi{10.1145/3329125}.
\newblock \urlprefix\url{https://arxiv.org/abs/1810.07414}.

\bibitemdeclare{misc}{GM20}
\bibitem{GM20}
\bibinfo{author}{R.J.~van \surnamestart Glabbeek\surnameend} \&
  \bibinfo{author}{C.A. \surnamestart Middelburg\surnameend}
  (\bibinfo{year}{2020}): \emph{\bibinfo{title}{On Infinite Guarded Recursive
  Specifications in Process Algebra}}.
\newblock \urlprefix\url{http://arxiv.org/abs/2005.00746}.

\bibitemdeclare{article}{GW96}
\bibitem{GW96}
\bibinfo{author}{R.J.~van \surnamestart Glabbeek\surnameend} \&
  \bibinfo{author}{W.P. \surnamestart Weijland\surnameend}
  (\bibinfo{year}{1996}): \emph{\bibinfo{title}{Branching Time and Abstraction
  in Bisimulation Semantics}}.
\newblock {\sl \bibinfo{journal}{Journal of the ACM}}
  \bibinfo{volume}{43}(\bibinfo{number}{3}), pp. \bibinfo{pages}{555--600},
  \doi{10.1145/233551.233556}.

\bibitemdeclare{inproceedings}{GF20}
\bibitem{GF20}
\bibinfo{author}{C.~\surnamestart Grabmayer\surnameend} \&
  \bibinfo{author}{W.J. \surnamestart Fokkink\surnameend}
  (\bibinfo{year}{2020}): \emph{\bibinfo{title}{A Complete Proof System for
  1-Free Regular Expressions Modulo Bisimilarity}}.
\newblock In \bibinfo{editor}{H.~\surnamestart Hermanns\surnameend},
  \bibinfo{editor}{L.~\surnamestart Zhang\surnameend},
  \bibinfo{editor}{N.~\surnamestart Kobayashi\surnameend} \&
  \bibinfo{editor}{D.~\surnamestart Miller\surnameend}, editors: {\sl
  \bibinfo{booktitle}{{\rm Proc.\ 35th Annual {ACM/IEEE} Symposium on} Logic in
  Computer Science, {\rm {LICS}'20}}}, \bibinfo{publisher}{{ACM}}, pp.
  \bibinfo{pages}{465--478}, \doi{10.1145/3373718.3394744}.

\bibitemdeclare{article}{Gr93}
\bibitem{Gr93}
\bibinfo{author}{J.F. \surnamestart Groote\surnameend} (\bibinfo{year}{1993}):
  \emph{\bibinfo{title}{Transition System Specifications with Negative
  Premises}}.
\newblock {\sl \bibinfo{journal}{Theoretical Computer Science}}
  \bibinfo{volume}{118}, pp. \bibinfo{pages}{263--299},
  \doi{10.1016/0304-3975(93)90111-6}.

\bibitemdeclare{article}{GrV92}
\bibitem{GrV92}
\bibinfo{author}{J.F. \surnamestart Groote\surnameend} \& \bibinfo{author}{F.W.
  \surnamestart Vaandrager\surnameend} (\bibinfo{year}{1992}):
  \emph{\bibinfo{title}{Structured Operational Semantics and Bisimulation as a
  Congruence}}.
\newblock {\sl \bibinfo{journal}{Information and Computation}}
  \bibinfo{volume}{100}(\bibinfo{number}{2}), pp. \bibinfo{pages}{202--260},
  \doi{10.1016/0890-5401(92)90013-6}.

\bibitemdeclare{article}{HM85}
\bibitem{HM85}
\bibinfo{author}{M.~\surnamestart Hennessy\surnameend} \&
  \bibinfo{author}{R.~\surnamestart Milner\surnameend} (\bibinfo{year}{1985}):
  \emph{\bibinfo{title}{Algebraic laws for nondeterminism and concurrency}}.
\newblock {\sl \bibinfo{journal}{Journal of the ACM}}
  \bibinfo{volume}{32}(\bibinfo{number}{1}), pp. \bibinfo{pages}{137--161},
  \doi{10.1145/2455.2460}.

\bibitemdeclare{book}{Ho85}
\bibitem{Ho85}
\bibinfo{author}{C.A.R. \surnamestart Hoare\surnameend} (\bibinfo{year}{1985}):
  \emph{\bibinfo{title}{Communicating {S}equential {P}rocesses}}.
\newblock \bibinfo{publisher}{Prentice Hall}, \bibinfo{address}{Englewood
  Cliffs}.

\bibitemdeclare{inproceedings}{LY20}
\bibitem{LY20}
\bibinfo{author}{X.~\surnamestart Liu\surnameend} \&
  \bibinfo{author}{\surnamestart T.Yu\surnameend} (\bibinfo{year}{2020}):
  \emph{\bibinfo{title}{Canonical Solutions to Recursive Equations and
  Completeness of Equational Axiomatisations}}.
\newblock In \bibinfo{editor}{I.~\surnamestart Konnov\surnameend} \&
  \bibinfo{editor}{L.~\surnamestart Kovacs\surnameend}, editors: {\sl
  \bibinfo{booktitle}{{\rm Proceedings 31st International Conference on}
  Concurrency Theory {\rm (CONCUR 2020)}}}, {\sl \bibinfo{series}{Leibniz
  International Proceedings in Informatics (LIPIcs)}} \bibinfo{volume}{171},
  \bibinfo{publisher}{Schloss Dagstuhl - Leibniz-Zentrum f{\"{u}}r Informatik},
  \doi{10.4230/LIPIcs.CONCUR.2020.35}.

\bibitemdeclare{article}{LDH05}
\bibitem{LDH05}
\bibinfo{author}{M.~\surnamestart Lohrey\surnameend}, \bibinfo{author}{P.R.
  \surnamestart D'Argenio\surnameend} \& \bibinfo{author}{H.~\surnamestart
  Hermanns\surnameend} (\bibinfo{year}{2005}):
  \emph{\bibinfo{title}{Axiomatising divergence}}.
\newblock {\sl \bibinfo{journal}{Information and Computation}}
  \bibinfo{volume}{203}(\bibinfo{number}{2}), pp. \bibinfo{pages}{115--144},
  \doi{10.1016/j.ic.2005.05.007}.

\bibitemdeclare{article}{Mi84}
\bibitem{Mi84}
\bibinfo{author}{R.~\surnamestart Milner\surnameend} (\bibinfo{year}{1984}):
  \emph{\bibinfo{title}{A complete inference system for a class of regular
  behaviours}}.
\newblock {\sl \bibinfo{journal}{Journal of Computer and System Sciences}}
  \bibinfo{volume}{28}, pp. \bibinfo{pages}{439--466},
  \doi{10.1016/0022-0000(84)90023-0}.

\bibitemdeclare{article}{Mi89a}
\bibitem{Mi89a}
\bibinfo{author}{R.~\surnamestart Milner\surnameend} (\bibinfo{year}{1989}):
  \emph{\bibinfo{title}{A Complete Axiomatisation for Observational Congruence
  of Finite-State Behaviors}}.
\newblock {\sl \bibinfo{journal}{Information and Computation}}
  \bibinfo{volume}{81}(\bibinfo{number}{2}), pp. \bibinfo{pages}{227--247},
  \doi{10.1016/0890-5401(89)90070-9}.

\bibitemdeclare{incollection}{Mi90ccs}
\bibitem{Mi90ccs}
\bibinfo{author}{R.~\surnamestart Milner\surnameend} (\bibinfo{year}{1990}):
  \emph{\bibinfo{title}{Operational and algebraic semantics of concurrent
  processes}}.
\newblock In \bibinfo{editor}{J.~\surnamestart van Leeuwen\surnameend}, editor:
  {\sl \bibinfo{booktitle}{Handbook of Theoretical Computer Science}},
  chapter~\bibinfo{chapter}{19}, \bibinfo{publisher}{Elsevier Science
  Publishers B.V. (North-Holland)}, pp. \bibinfo{pages}{1201--1242}.
\newblock \bibinfo{note}{Alternatively see{ \em Communication and Concurrency},
  Prentice-Hall, Englewood Cliffs, 1989, of which an earlier version appeared
  as{ \em A Calculus of Communicating Systems}, LNCS 92, Springer, 1980,
  doi:\href{http:dx.doi.org/10.1007/3-540-10235-3}{10.1007/3-540-10235-3}}.

\bibitemdeclare{inproceedings}{Ol87}
\bibitem{Ol87}
\bibinfo{author}{E.-R. \surnamestart Olderog\surnameend}
  (\bibinfo{year}{1987}): \emph{\bibinfo{title}{Operational {Petri} net
  semantics for {CCSP}}}.
\newblock In \bibinfo{editor}{G.~\surnamestart Rozenberg\surnameend}, editor:
  {\sl \bibinfo{booktitle}{Advances in Petri Nets 1987}}, {\sl
  \bibinfo{series}{\rm LNCS}} \bibinfo{volume}{266},
  \bibinfo{publisher}{Springer}, pp. \bibinfo{pages}{196--223},
  \doi{10.1007/3-540-18086-9\_27}.

\bibitemdeclare{article}{OH86}
\bibitem{OH86}
\bibinfo{author}{E.-R. \surnamestart Olderog\surnameend} \&
  \bibinfo{author}{C.A.R. \surnamestart Hoare\surnameend}
  (\bibinfo{year}{1986}): \emph{\bibinfo{title}{Specification-oriented
  semantics for communicating processes}}.
\newblock {\sl \bibinfo{journal}{Acta Informatica}} \bibinfo{volume}{23}, pp.
  \bibinfo{pages}{9--66}, \doi{10.1007/BF00268075}.

\bibitemdeclare{inproceedings}{PS92}
\bibitem{PS92}
\bibinfo{author}{J.~\surnamestart Parrow\surnameend} \&
  \bibinfo{author}{P.~\surnamestart Sj\"odin\surnameend}
  (\bibinfo{year}{1992}): \emph{\bibinfo{title}{Multiway synchronization
  verified with coupled simulation}}.
\newblock In \bibinfo{editor}{W.R. \surnamestart Cleaveland\surnameend},
  editor: {\sl \bibinfo{booktitle}{Proceedings CONCUR 92, {\rm Stony Brook, NY,
  USA}}}, {\sl \bibinfo{series}{\rm LNCS}} \bibinfo{volume}{630},
  \bibinfo{publisher}{Springer}, pp. \bibinfo{pages}{518--533},
  \doi{10.1007/BFb0084813}.

\bibitemdeclare{mastersthesis}{Po21}
\bibitem{Po21}
\bibinfo{author}{M.~\surnamestart Pohlmann\surnameend} (\bibinfo{year}{2021}):
  \emph{\bibinfo{title}{Reducing Strong Reactive Bisimilarity to Strong
  Bisimilarity}}.
\newblock \bibinfo{type}{Bachelor's thesis}, \bibinfo{school}{TU Berlin}.
\newblock
  \urlprefix\aurl{https://maxpohlmann.github.io/Reducing-Reactive-to-Strong-Bisimilarity/thesis.pdf}.

\bibitemdeclare{article}{RR88}
\bibitem{RR88}
\bibinfo{author}{G.M. \surnamestart Reed\surnameend} \& \bibinfo{author}{A.W.
  \surnamestart Roscoe\surnameend} (\bibinfo{year}{1988}):
  \emph{\bibinfo{title}{A Timed Model for Communicating Sequential Processes}}.
\newblock {\sl \bibinfo{journal}{Theoretical Computer Science}}
  \bibinfo{volume}{58}, pp. \bibinfo{pages}{249--261},
  \doi{10.1016/0304-3975(88)90030-8}.

\bibitemdeclare{inproceedings}{Va93}
\bibitem{Va93}
\bibinfo{author}{F.W. \surnamestart Vaandrager\surnameend}
  (\bibinfo{year}{1993}): \emph{\bibinfo{title}{Expressiveness Results for
  Process Algebras}}.
\newblock In \bibinfo{editor}{J.W. \surnamestart de~Bakker\surnameend},
  \bibinfo{editor}{W.P.~de \surnamestart Roever\surnameend} \&
  \bibinfo{editor}{G.~\surnamestart Rozenberg\surnameend}, editors: {\sl
  \bibinfo{booktitle}{Proceedings REX Workshop on Semantics: Foundations and
  Applications, {\rm Beekbergen, The Netherlands, 1992}}}, {\sl
  \bibinfo{series}{\rm LNCS}} \bibinfo{volume}{666},
  \bibinfo{publisher}{Springer}, pp. \bibinfo{pages}{609--638},
  \doi{10.1007/3-540-56596-5\_49}.

\bibitemdeclare{article}{Wa90}
\bibitem{Wa90}
\bibinfo{author}{D.J. \surnamestart Walker\surnameend} (\bibinfo{year}{1990}):
  \emph{\bibinfo{title}{Bisimulation and divergence}}.
\newblock {\sl \bibinfo{journal}{Information and Computation}}
  \bibinfo{volume}{85}(\bibinfo{number}{2}), pp. \bibinfo{pages}{202--241},
  \doi{10.1016/0890-5401(90)90048-M}.

\bibitemdeclare{article}{Zermelo08}
\bibitem{Zermelo08}
\bibinfo{author}{Ernst \surnamestart Zermelo\surnameend}
  (\bibinfo{year}{1908}): \emph{\bibinfo{title}{{Untersuchungen \"uber die
  Grundlagen der Mengenlehre I}}}.
\newblock {\sl \bibinfo{journal}{Mathematische Annalen}}
  \bibinfo{volume}{65}(\bibinfo{number}{2}), pp. \bibinfo{pages}{261--281},
  \doi{10.1007/bf01449999}.

\end{thebibliography}

\appendix
\newpage

\section{Initials congruence}\label{initials congruence}

This appendix contains the proofs of two facts about initials equivalence I need in this paper,
namely that it is a full congruence for $\CCSP$, and that it is not affected by which processes are
substituted for variables whose free occurrences are guarded.

\begin{trivlist} \item[\hspace{\labelsep}\bf \thm{initials congruence}]
Initials equivalence is a full congruence for $\CCSP$.
\end{trivlist}

\begin{proof}
Let $\mathord{\B}\subseteq \IP \times \IP$ be the smallest relation satisfying
\begin{itemize} 
\item if $\RS$ and $\RS'$ are recursive specifications with $x \in V_\RS = V_{\RS'}$ and
      $\rec{x|\RS},\rec{x|\RS'}\in\IP$, such that $\RS_y =_I \RS'_y$ for all $y\in V_\RS$,
      then $\rec{x|\RS} \B \rec{x|\RS'}$,
\item if $P =_\I Q$, then $P \B Q$,
\item if $P \B Q$ and $\alpha\in A\cup\{\tau,\rt\}$, then $\alpha.P \B \alpha.Q$,
\item if $P_1 \B Q_1$ and $P_2 \B Q_2$, then $P_1 + P_2 \B Q_1 + Q_2$,
\item if $P_1 \B Q_1$, $P_2 \B Q_2$ and $S\subseteq A$, then $P_1\spar{S}P_2 \B Q_1\spar{S}Q_2$,
\item if $P \B Q$ and $I\subseteq A$, then $\tau_I(P) \B \tau_I(Q)$,
\item if $P \B Q$ and $\Rn \subseteq A \times A$, then $\Rn(P) \B \Rn(Q)$,
\item if $P \B Q$, $L\subseteq U \subseteq A$ and $X\subseteq A$, then $\theta_L^U(P) \B \theta_L^U(Q)$ and $\psi_X(P) \B \psi_X(Q)$,
\item if $\RS$ is a recursive specification with $z \in V_\RS$,
      and $\rho,\nu:\Var\setminus V_\RS \rightarrow\IP$ are substitutions satisfying $\rho(x) \B \nu(x)$ for all
      $x\in\Var\setminus V_\RS$, then $\rec{z|\RS}[\rho] \B \rec{z|\RS}[\nu]$.
\end{itemize}
A trivial structural induction on $E\in\IT$ (not using the first two clauses) shows that\\[5pt]
\mbox{}\hfill if $\rho,\nu:\Var \rightarrow\IP$ satisfy
$\rho(x) \mathrel\B \nu(x)$ for all $x\in\Var$, then $E[\rho] \mathbin\B E[\nu]$.\hfill ({\color{red}*})\\[5pt]
For $\RS$ a recursive specification and $\rho:\Var\setminus V_\RS \rightarrow\IP$, let $\rho_\RS: \Var\rightarrow\IP$ be the
closed substitution given by $\rho_\RS(x):= \rec{x|\RS}[\rho]$ if $x\in V_\RS$ and $\rho_\RS(x):=\rho(x)$ otherwise.
Then $\rec{E|\RS}[\rho] = E[\rho_\RS]$ for all $E\in\IT$.\linebreak[3]
Hence an application of ({\color{red}*}) with $\rho_\RS$ and $\nu_\RS$ yields that under the
conditions of the last clause for $\B$ above one even has 
$\rec{E|\RS}[\rho] \mathrel\B \rec{E|\RS}[\nu]$ for all expressions $E\in\IT$,\hfill (\$)\\
and likewise, in the first clause, $\rec{E|\RS} \B \rec{E|\RS'}$ for all $E\in\IT$ with variables from $V_\RS$.\hfill (\#)

It suffices to show that $P \B Q \Rightarrow P =_\I Q$, because then ${\B} = {=_\I}$,
and ({\color{red}*}) implies that $\B$ is a lean congruence.
Moreover, the clauses for $\B$ (not needing the last) then imply that $=_\I$ is a full congruence.
This I will do by induction on the \emph{stratum} $(\lambda_R,\kappa_R)$ of processes $R\in\IP$,
as defined in \Sec{stratification}.
So pick a stratum $(\lambda,\kappa)$ and assume that $P' \B Q' \Rightarrow P' =_\I Q'$ for all $P',Q'\in\IP$ with
$(\lambda_P,\kappa_P) < (\lambda,\kappa)$ and $(\lambda_Q,\kappa_Q) < (\lambda,\kappa)$.
I need to show that $P \B Q \Rightarrow P =_\I Q$ for all $P,Q\in\IP$ with
$(\lambda_P,\kappa_P) \leq (\lambda,\kappa)$ and $(\lambda_Q,\kappa_Q) \leq (\lambda,\kappa)$.

Because $=_I$ is symmetric, so is $\B$. Hence, it suffices to show that
$P \B Q \wedge P{\goesto\alpha} \Rightarrow Q{\goesto\alpha}$ for all $P,Q\in\IP$
with $(\lambda_P,\kappa_P), (\lambda_Q,\kappa_Q) \leq (\lambda,\kappa)$ and all $\alpha\in A \cup\{\tau\}$.
This I will do by structural induction on the proof $\pi$ of $P{\goesto{\alpha}}$ from the rules of
\tab{sos CCSP}. I make a case distinction based on the derivation of $P \B Q$.
So assume $P \B Q$, $(\lambda_P,\kappa_P), (\lambda_Q,\kappa_Q) \leq (\lambda,\kappa)$, and
$P{\goesto\alpha}$ with $\alpha\in A \cup\{\tau\}$.

\begin{itemize}
\item
Let $P=\rec{x|\RS}\in\IP$ and $Q=\rec{x|\RS'}\in \IP$ where
$\RS$ and $\RS'$ are recursive specifications with $x \in V_\RS = V_{\RS'}$,
such that $\RS_y =_\I \RS'_y$ for all $y\in V_\RS$, meaning that
for all $y\in V_\RS$ and $\sigma:V_\RS\rightarrow \IP$ one has
$\RS_y[\sigma] =_\I \RS'_y[\sigma]$.
\\
By \tab{sos CCSP} the transition $\rec{\RS_x|\RS}{\goesto{\alpha}}$ is provable by means of a strict subproof of $\pi$.
By (\#) above one has $\rec{\RS_x|\RS} \B \rec{\RS_x|\RS'}$.
So by induction $\rec{\RS_x|\RS'}{\goesto{\alpha}}$.
Since $\rec{\_\!\_\, | \RS'}$ is the application of a substitution of the form $\sigma:V_{\RS'} \rightarrow\IP$,
one has $\rec{\RS_x|\RS'} =_\I \rec{\RS'_x|\RS'}$.
Hence $\rec{\RS'_x|\RS'}{\mathbin{\goesto{\alpha}}}$.
By \tab{sos CCSP}, $Q \mathbin= \rec{x|\RS'} {\goesto{\alpha}}$.

\item The case $P =_\I Q$ is trivial.

\item Let $P = \beta.P^\dagger$ and $Q = \beta.Q^\dagger$ with $\beta\in A\cup\{\tau,\rt\}$ and $P^\dagger \B Q^\dagger$.
  Then $\alpha=\beta$ and $Q {\goesto{\alpha}}$.

\item Let $P = P_1 + P_2$ and $Q = Q_1 + Q_2$ with $P_1 \B Q_1$ and $P_2 \B Q_2$.
  I consider the first rule from \tab{sos CCSP} that could have been responsible for the derivation
  of $P{\goesto{\alpha}}$; the other proceeds symmetrically.
  So suppose that $P_1{\goesto{\alpha}}$. Then by induction $Q_1{\goesto{\alpha}}$.
  By the same rule, $Q{\goesto{\alpha}}$.

\item Let $P = P_1 \spar{S} P_2$ and $Q = Q_1 \spar{S} Q_2$ with $P_1 \B Q_1$ and $P_2 \B Q_2$.
  I consider the three rules from \tab{sos CCSP} that could have been responsible for the derivation
  of $P{\goesto{\alpha}}$.

First suppose that $\alpha \notin S$, and $P_1{\goesto{\alpha}}$.
By induction, $Q_1{\goesto{\alpha}}$.
Consequently, $Q_1 \spar{S} Q_2{\goesto{\alpha}}$.

Next suppose that $\alpha \mathbin\in S$, $P_1{\goesto{\alpha}}$ and $P_2{\goesto{\alpha}}$.
By induction, $Q_1{\goesto{\alpha}}$ and $Q_2 {\goesto{\alpha}}$.
So $Q_1 \spar{S} Q_2 {\goesto{\alpha}}$.

The remaining case proceeds symmetrically to the first.

\item Let $P = \tau_I(P^\dagger)$ and $Q= \tau_I(Q^\dagger)$ with $I\subseteq A$ and $P^\dagger \B Q^\dagger$.
Then $P^\dagger{\goesto{\beta}}$ and either $\beta = \alpha \notin I$, or $\beta\in I$ and $\alpha=\tau$.
By induction, $Q^\dagger{\goesto{\beta}}$.
Consequently, $Q = \tau_I(Q^\dagger){\goesto{\alpha}}$.

\item Let $P = \Rn(P^\dagger)$ and $Q= \Rn(Q^\dagger)$ with $\Rn \subseteq A \times A$ and $P^\dagger \B Q^\dagger$.
Then $P^\dagger{\goesto{\beta}}$ and either $(\beta,\alpha) \in \Rn$ or $\beta=\alpha = \tau$.
By induction, $Q^\dagger {\goesto{\beta}}$.
Consequently, $Q = \Rn(Q^\dagger){\goesto{\alpha}}$.

\item Let $P=\theta_L^U(P^\dagger)$, $Q=\theta_L^U(Q^\dagger)$ and $P^\dagger \B Q^\dagger$.
  Then $(\lambda_{P^\dagger},\kappa_{P^\dagger}) \mathbin< (\lambda,\kappa)$ and $(\lambda_{Q^\dagger},\kappa_{Q^\dagger}) \mathbin< (\lambda,\kappa)$,
  as remarked in \Sec{stratification}. So by induction $P^\dagger =_\I Q^\dagger$.
  (This is the only use of stratum induction.)

  Since $\theta_L^U(P^\dagger){\goesto{\alpha}}$, it must be that $P^\dagger{\goesto{\alpha}}$ and
  either $\alpha\in U\cup\{\tau\}$ or $P^\dagger {\ngoesto\beta}$ for all $\beta\in L\cup\{\tau\}$.
  In the latter case, $Q^\dagger {\ngoesto\beta}$ for all $\beta\mathbin\in L\mathop\cup\{\tau\}$.
  Moreover, $Q^\dagger{\goesto{\alpha}}$.
  So, in both cases, $Q=\theta_L^U(Q^\dagger){\goesto{\alpha}}$.

\item Let $P=\psi_X(P^\dagger)$, $Q=\psi_X(Q^\dagger)$ and $P^\dagger \B Q^\dagger$.
  Since $\psi_X(P^\dagger){\goesto{\alpha}}$, one has $P^\dagger{\goesto{\alpha}}$.
  By induction $Q^\dagger{\goesto{\alpha}}$.
  So $Q=\psi_X(Q^\dagger){\goesto{\alpha}}$.

\item
Let $P\mathbin=\rec{z|\RS}[\rho]\mathbin=\rec{z|\RS[\rho]}$ and
$Q\mathbin=\rec{z|\RS}[\nu]\mathbin=\rec{z|\RS[\nu]}$ where
$\RS$ is a recursive specification with $z \mathbin\in V_\RS$,
and $\rho,\nu:\Var\setminus V_\RS \rightarrow\IP$ satisfy $\rho(x) \B \nu(x)$ for all $x\mathbin\in\Var\setminus V_\RS$.
By \tab{sos CCSP} the transition $\rec{\RS_z|\RS[\rho]} {\goesto{\alpha}}$ is provable by means of a strict subproof of the
proof $\pi$ of $\rec{z|\RS}[\rho]{\goesto{\alpha}}$.
By (\$) above one has $\rec{\RS_z|\RS[\rho]} \B \rec{\RS_z|\RS[\nu]}$.
So by induction, $\rec{\RS_z|\RS[\nu]}{\goesto{a}}$.
By \tab{sos CCSP}, $Q = \rec{z|\RS[\nu]}{\goesto{\alpha}}$.
\qed
\end{itemize}
\end{proof}

\begin{trivlist} \item[\hspace{\labelsep}\bf \lem{8}]
Let $H\in\IT$ be guarded and have free variables from $W\subseteq \Var$ only, and let $\vec{P},\vec{Q}\in\IP^W$.
Then $\I(H[\vec{P}]) = \I(H[\vec{Q}])$.
\end{trivlist}

\begin{proof}
\lem{8} can be strengthened as follows.
\begin{quote}
Let $H\in\IT$ be such that all free occurrences of variables from $W\subseteq \Var$ in $H$ are guarded,
and let $\vec{P},\vec{Q}\in\IP^W$. Then $H[\vec{P}] =_\I H[\vec{Q}]$.
\end{quote}
The proof proceeds with structural induction on $H$.
\begin{itemize} 
\item Let $H = \rec{x|\RS}$, so that $H[\vec{P}] = \rec{x|\RS[\vec{P}^\dagger]}$, where $\vec{P}^\dagger$ is the
$W {\setminus} V_\RS$-tuple that is left of $\vec{P}$ after deleting the $y$-components, for $y\in V_\RS$,
and $H[\vec{Q}] = \rec{x|\RS[\vec{Q}^\dagger]}$.
For each $y \in V_\RS$, all free occurrences of variables from $W{\setminus} V_\RS$ in $\RS_y$ are guarded.
Thus, by induction, $\RS_y[\vec{P}^\dagger] =_\I \RS_y[\vec{Q}^\dagger]$.
Since $=_\I$ is a full congruence for $\CCSP$, it follows that 
$H[\vec{P}] = \rec{x|\RS[\vec{P}^\dagger]} =_\I \rec{x|\RS[\vec{Q}^\dagger]} = H[\vec{Q}]$.

\item Let $H \mathbin= \alpha.H'$ for some $\alpha\in Act$. Then $\I(H[\vec{P}]) \mathbin= \I(H[\vec{Q}])$ (namely
  $\emptyset$ if $\alpha\mathbin=\rt$ and $\{ \alpha \}$ otherwise).

\item Let $H = H_1 \spar{S} H_2$. Since all free occurrences of variables from $W\subseteq \Var$ in
  $H$ are guarded, so are those in $H_1$ and $H_2$. Thus, by induction, $H_1[\vec{P}] =_\I H_1[\vec{Q}]$
  and $H_2[\vec{P}] =_\I H_2[\vec{Q}]$. Since $=_\I$ is a full congruence for $\space{S}$ it follows
  that $H[\vec{P}] =_\I H[\vec{Q}]$.

\item The cases for all other operators go exactly like the case for $\spar{S}$.
\qed
\end{itemize}
\end{proof}

\section{Proofs of lemmas on \texorpdfstring{$\theta_X$}{theta_X} and strong bisimilarity from Section~\ref{sec:strong}}\label{strong proofs}

The following lemmas on the relation between $\theta_X$ and the other operators of $\CCSP$ deal with
strong bisimilarity, but are needed in the congruence proof for strong reactive bisimilarity.

\begin{lemma}{theta}
If $\I(Q)\cap (Y\cup\{\tau\}) = \emptyset$ then $\theta_Y(Q) \bis{} Q$.
\end{lemma}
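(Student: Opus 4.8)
The statement is essentially immediate once one reads off which of the three operational rules for $\theta_Y=\theta_Y^Y$ can fire under the hypothesis. The plan is to exhibit an explicit strong bisimulation, namely
\[
  \B := \{(P,P)\mid P\in\IP\}\ \cup\ \{(\theta_Y(Q),Q),(Q,\theta_Y(Q))\mid Q\in\IP \wedge \I(Q)\cap(Y\cup\{\tau\})=\emptyset\}.
\]
Note $\theta_Y(Q)\in\IP$ by the closure assumption on $\IP$ (equivalently: $Q$ closed makes $\theta_Y^Y(Q)$ a closed valid expression), so $\B$ is a well-defined symmetric relation on $\IP$. Since the diagonal pairs are trivially matched, it suffices to verify the transfer property of \df{bisimulation} for the pairs $(\theta_Y(Q),Q)$ and $(Q,\theta_Y(Q))$.

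\textbf{Key step.} I would first observe that the hypothesis $\I(Q)\cap(Y\cup\{\tau\})=\emptyset$ pins down the behaviour of $\theta_Y(Q)$ completely: it means $Q{\ngoesto\tau}$ and $Q{\ngoesto a}$ for every $a\in Y$, so the first two rules for $\theta_Y$ never apply to $Q$; and it means $Q{\ngoesto\beta}$ for all $\beta\in Y\cup\{\tau\}$, so the negative premise of the third rule is satisfied. Moreover every outgoing transition $Q\goesto\alpha R$ must then have $\alpha\in(A\setminus Y)\cup\{\rt\}\subseteq A\cup\{\rt\}$. Consequently $\theta_Y(Q)\goesto\alpha R$ holds if and only if $Q\goesto\alpha R$, and crucially the target is the very same state $R$ (the third rule discards the $\theta_Y$ wrapper rather than propagating it). This is the only point requiring care, and it is a direct inspection of \tab{sos CCSP}; there is no real obstacle.

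\textbf{Conclusion.} Given this, the transfer property is routine: for $\theta_Y(Q)\goesto\alpha R$ we have $Q\goesto\alpha R$ with $(R,R)\in\B$, and symmetrically for $Q\goesto\alpha R$ we have $\theta_Y(Q)\goesto\alpha R$ with $(R,R)\in\B$; diagonal pairs are matched by themselves. Hence $\B$ is a strong bisimulation, so $\theta_Y(Q)\bis{\,}Q$.
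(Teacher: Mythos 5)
Your proof is correct and is exactly the argument the paper leaves implicit: its entire proof reads ``This follows immediately from the operational rules for $\theta_X$'', and your explicit bisimulation together with the observation that under the hypothesis only the third rule for $\theta_Y$ can fire (discarding the wrapper) is precisely the routine verification being elided.
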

\begin{proof}
This follows immediately from the operational rules for $\theta_Y$.
\end{proof}

\begin{trivlist} \item[\hspace{\labelsep}\bf \lem{theta-Par}]
If $P{\ngoesto\tau}$, $\I(P) \cap X \subseteq S$ and $Y\mathbin=X\setminus(S\setminus \I(P))$,
then $\theta_X(P \spar{S} Q) \bis{} \theta_X(P \spar{S}\theta_Y(Q))$.
\end{trivlist}
\begin{proof}
  Let $P \in \IP$ and $S,X,Y \subseteq A$ be as indicated in the lemma.
  Let $${\B} :={\bis{}\,} \cup \{(\theta_X(P \spar{S} Q), \theta_X(P \spar{S}\theta_Y(Q))) \mid Q \in \IP\}$$
  It suffices to show that the symmetric closure $\widetilde\B$ of $\B$ is a strong bisimulation.\\
  So let $R \mathrel{\widetilde\B} T$ and $R \goesto{\alpha} R'$ with $\alpha\in A \cup \{\tau,\rt\}$.
I have to find a $T'$ with $T \goesto{\alpha} T'$ and $R' \mathrel{\widetilde\B} T'$.
\begin{itemize}
\item
The case that $R \bis{} T$ is trivial.
\item
  Let $R = \theta_X(P \spar{S} Q)$ and $T = \theta_X(P \spar{S}\theta_Y(Q))$, for some $Q\in \IP$.

  First assume $\alpha=\tau$. Then $Q \goesto\tau Q'$ for some
  $Q'$ with $R' = \theta_X(P\spar{S}Q')$.
  Consequently, $T = \theta_X(P\spar{S}\theta_Y(Q)) \goesto\tau \theta_X(P\spar{S}\theta_Y(Q')) =: T'$ and $R' \B T'$.

  Now assume $\alpha \in A\cup\{\rt\}$. Then $P\spar{S}Q \goesto\alpha R'$.
  I first deal with the case that $\alpha\in X$, and
  consider the three rules from \tab{sos CCSP} that could have derived $P\spar{S}Q \goesto\alpha R'$.
\begin{itemize}
\item
  The case that $\alpha \notin S$ and $P \goesto{\alpha} P'$ cannot occur, because $\I(P) \cap X \subseteq S$.
\item
  Let $\alpha \in S$, $P \goesto{\alpha} P'$, $Q \goesto{\alpha} Q'$ and $R' =  P' \spar{S} Q'$.
  Then $\alpha \in \I(P)$, so $\alpha \notin S\setminus \I(P)$ and thus $\alpha \in Y$.
  Hence $\theta_Y(Q) \goesto{\alpha} Q'$.
  Now $T = \theta_X(P\spar{S}\theta_Y(Q)) \goesto\alpha P'\spar{S}Q' = R'$.
\item
  Let $\alpha \mathbin{\notin} S$, $Q \goesto{\alpha} Q'$ and $R'\mathbin=  P \spar{S} Q'$.
  Then $\alpha \mathbin\in Y$, so $\theta_Y(Q) \goesto{\alpha} Q'$.
  Therefore, $P\spar{S}\theta_Y(Q) \goesto\alpha P\spar{S}Q'$
  and thus $T =  \theta_X(P\spar{S}\theta_Y(Q)) \goesto\alpha P\spar{S}Q' = R'$.
\end{itemize}
  Finally, assume $\alpha \in (A\cup\{\rt\})\setminus X$.
  In that case $P\spar{S}Q \ngoesto\beta$ for all $\beta\in X\cup \{\tau\}$.
  Therefore, $Q \ngoesto\beta$ for all $\beta\in (X\setminus S)\cup \{\tau\}$, and for all
  $\beta\in X\cap S \cap \I(P)$, and thus for all $\beta\in Y\cup \{\tau\}$.
  By \lem{theta}, $\theta_Y(Q) =_\I Q$, and hence $P\spar{S}\theta_Y(Q) \ngoesto\beta$ for all
  $\beta\in X\cup \{\tau\}$.
  Again, I consider the three rules from \tab{sos CCSP} that could have derived $P\spar{S}Q \goesto\alpha R'$.
\begin{itemize}
\item
  Let $\alpha \mathbin{\notin} S$, $P \goesto{\alpha} P'$ and $R'\mathbin=  P' \spar{S} Q$.
  Then $P\spar{S}\theta_Y(Q) \goesto\alpha P'\spar{S}\theta_Y(Q)$
  and thus $T =  \theta_X(P\spar{S}\theta_Y(Q)) \goesto\alpha P'\spar{S}\theta_Y(Q) =: T'$.
  By \lem{theta}, $\theta_Y(Q) \bis{} Q$. Since $\bis{}$ is a congruence for $\spar{S}$, it follows
  that $R' =  P' \spar{S} Q \bis{} P' \spar{S} \theta_Y(Q) = T'$.
\item
  Let $\alpha \in S$, $P \goesto{\alpha} P'$, $Q \goesto{\alpha} Q'$ and $R' =  P' \spar{S} Q'$.
  Then $\theta_Y(Q) \goesto{\alpha} Q'$ and therefore
  $P\spar{S}\theta_Y(Q) \goesto\alpha P'\spar{S}Q'$ and
  $T = \theta_X(P\spar{S}\theta_Y(Q)) \goesto\alpha P'\spar{S}Q' = R'$.
\item
  Let $\alpha \mathbin{\notin} S$, $Q \goesto{\alpha} Q'$ and $R'\mathbin=  P \spar{S} Q'$.
  Then $\theta_Y(Q) \goesto{\alpha} Q'$, so $P\spar{S}\theta_Y(Q) \goesto\alpha P\spar{S}Q'$
  and thus $T =  \theta_X(P\spar{S}\theta_Y(Q)) \goesto\alpha P\spar{S}Q' = R'$.
\end{itemize}
\item
  Let $R \mathbin= \theta_X(P \spar{S}\theta_Y(Q))$ and $T \mathbin= \theta_X(P \spar{S} Q)$, for some $Q\mathbin\in \IP$.

  First assume $\alpha=\tau$. Then $Q \goesto\tau Q'$ for some
  $Q'$ with $R' = \theta_X(P\spar{S}\theta_Y(Q'))$.
  Consequently, $T = \theta_X(P\spar{S}Q) \goesto\tau \theta_X(P\spar{S} Q') =: T'$ and $R' \mathrel{\widetilde\B} T'$.

  Now assume $\alpha \in A\cup\{\rt\}$. Then $P\spar{S}\theta_Y(Q) \goesto\alpha R'$ and either $\alpha \in X$
  or $P\spar{S}\theta_Y(Q) \ngoesto\beta$ for all $\beta\in X\cup \{\tau\}$.
  In the latter case one obtains  $\theta_Y(Q) \ngoesto\beta$ for all $\beta\in Y\cup \{\tau\}$ (as
  above), and thus $Q \ngoesto\beta$ for all $\beta\in Y\cup \{\tau\}$, that is,
  $\I(Q)\cap (Y\cup\{\tau\}) = \emptyset$. Furthermore, this implies that
  $P\spar{S}Q \ngoesto\beta$ for all $\beta\in X\cup \{\tau\}$.

  I consider the three rules from \tab{sos CCSP} that could have derived $P\spar{S}Q \goesto\alpha R'$.
\begin{itemize}
\item
  Let $\alpha \mathbin{\notin} S$, $P \goesto{\alpha} P'$ and $R'\mathbin=  P' \spar{S} \theta_Y(Q)$.
  Then $a \notin X$, because $\I(P) \cap X \subseteq S$.\\
  Hence $P\spar{S}\theta_Y(Q) \ngoesto\beta$ for all $\beta\in X\cup \{\tau\}$, so
  $\I(Q)\cap (Y\cup\{\tau\}) = \emptyset$.\\
  Now $T = \theta_X(P\spar{S}Q) \goesto\alpha P'\spar{S}Q =: T'$ and $R' \bis{}\, T'$, using \lem{theta}.
\item
  Let $\alpha \in S$, $P \goesto{\alpha} P'$, $\theta_Y(Q) \goesto{\alpha} Q'$ and $R' =  P' \spar{S} Q'$.
  Then $Q \goesto{\alpha} Q'$.\\
  Hence $P\spar{S}Q \goesto\alpha P'\spar{S}Q'$ and thus $T = \theta_X(P\spar{S}Q) \goesto\alpha P'\spar{S}Q' = R'$.
\item
  Let $\alpha \notin S$, $\theta_Y(Q) \goesto{\alpha} Q'$ and $R'=  P \spar{S} Q'$.
  Then $Q \goesto{\alpha} Q'$.\\
  Consequently, $P\spar{S}Q \goesto\alpha P\spar{S}Q'$
  and thus $T =  \theta_X(P\spar{S}Q) \goesto\alpha P\spar{S}Q' = R'$.
\qed
\end{itemize}
\end{itemize}
\end{proof}

\begin{trivlist} \item[\hspace{\labelsep}\bf \lem{theta-tau}]
$\theta_X(\tau_I(P)) \bis{} \theta_X(\tau_I(\theta_{X\cup I}(P)))$.
\end{trivlist}
\begin{proof}
  For given $X$ and $I$, let
  ${\B} :=\textit{Id} \cup \{(\theta_X(\tau_I(P)), \theta_X(\tau_I(\theta_{X\cup I}(P)))) \mid P \in \IP\}$.
  It suffices to show that the symmetric closure $\widetilde\B$ of $\B$ is a strong bisimulation.
  So let $R \mathrel{\widetilde\B} T$ and $R \goesto{\alpha} R'$ with $\alpha\in A \cup \{\tau,\rt\}$.
  I have to find a $T'$ with $T \goesto{\alpha} T'$ and $R' \mathrel{\widetilde\B} T'$.
\begin{itemize}
\item
The case that $R = T$ is trivial.
\item
  Let $R = \theta_X(\tau_I(P))$ and $T = \theta_X(\tau_I(\theta_{X\cup I}(P)))$, for some $P\in \IP$.

  First assume $\alpha=\tau$. Then $\tau_I(P)\goesto\tau R''$ for some $R''$ such that $R' = \theta_X(R'')$.
  Therefore, $P \goesto\beta P'$ for some $\beta\in I \cup\{\tau\}$ and some $P'$ with $R''=\tau_I(P')$.
  In case $\beta=\tau$, it turns out that
  $T = \theta_X(\tau_I(\theta_{X\cup I}(P))) \goesto\tau \theta_X(\tau_I(\theta_{X\cup I}(P'))) =:T'$.
  Moreover, $R' \B T'$.
  In case $\beta\in I$, $\theta_{X\cup I}(P) \goesto\beta P'$,
  so $\tau_I(\theta_{X\cup I}(P)) \goesto\tau \tau_I(P')$
  and $T \mathbin= \theta_X(\tau_I(\theta_{X\cup I}(P))) \goesto\tau \theta_X(\tau_I(P')) = R'$.

  Now assume $\alpha \in A \cup\{\rt\}$. Then $\tau_I(P)\goesto\alpha R'$ and either $\alpha\in X$ or
  $\tau_I(P) \ngoesto\beta$ for all $\beta\in X\cup \{\tau\}$.
  It follows that $\alpha\notin I$ and $P \goesto\alpha P'$ for some $P'$ with $R'=\tau_I(P')$.
  Moreover, in case $\alpha\notin X$ one has $P \ngoesto\beta$ for all $\beta\in X\cup I \cup \{\tau\}$,
  and hence also $\theta_{X\cup I}(P)\ngoesto\beta$ for all $\beta\in X\cup I \cup \{\tau\}$,
  and thus $\tau_I(\theta_{X\cup I}(P)) \ngoesto\beta$ for all $\beta\in X\cup \{\tau\}$.
  Now $\theta_{X\cup I}(P) \goesto\alpha P'$, 
  so $\tau_I(\theta_{X\cup I}(P)) \goesto\alpha \tau_I(P')$ and thus
  $T = \theta_X(\tau_I(\theta_{X\cup I}(P))) \goesto\alpha \tau_I(P') = R'$.

\item
  Let $R = \theta_X(\tau_I(\theta_{X\cup I}(P)))$ and $T = \theta_X(\tau_I(P))$, for some $P\in \IP$.

  First assume $\alpha=\tau$. Then $\tau_I(\theta_{X\cup I}(P))\goesto\tau R''$ for some $R''$ such that $R' = \theta_X(R'')$.
  Therefore, $\theta_{X\cup I}(P) \goesto\beta P'$ for some $\beta\in I \cup\{\tau\}$ and some $P'$ with $R''=\tau_I(P')$.
  In case $\beta=\tau$, it turns out that $P \goesto\tau P''$ for some $P''$ such that $P' \mathbin= \theta_{X\cup I}(P'')$.
  So $T = \theta_X(\tau_I(P)) \goesto\tau \theta_X(\tau_I(P'')) =:T'\!$, and $R' \mathrel{\widetilde\B} T'$.
  In case $\beta\in I$, one has $P \goesto\beta P'$, so $\tau_I(P)\goesto\tau \tau_I(P')$ and
  $T = \theta_X(\tau_I(P)) \goesto\tau \theta_X(\tau_I(P')) = R'$.

  Now assume $\alpha \in A \cup\{\rt\}$.
  Then $\tau_I(\theta_{X\cup I}(P))\goesto\alpha R'$,
  so $\alpha \notin I$ and $\theta_{X\cup I}(P) \goesto\alpha P'$ for some $P'$ such that $R' = \tau_I(P')$.
  Thus $P \goesto\alpha P'$ and either $\alpha\in X$ or $P \ngoesto\beta$ for all $\beta\in X\cup I \cup \{\tau\}$.
  In the latter case $\tau_I(P) \ngoesto\beta$ for all $\beta\in X \cup \{\tau\}$.
  Now $\tau_I(P) \goesto\alpha \tau_I(P')$ and consequently
  $T = \theta_X(\tau_I(P)) \goesto\alpha \tau_I(P') = R'$.
\qed
\end{itemize}
\end{proof}

\begin{trivlist} \item[\hspace{\labelsep}\bf \lem{theta-R}]
$\theta_X(\Rn(P)) \bis{} \theta_X(\Rn(\theta_{\Rn^{-1}(X)}(P)))$.
\end{trivlist}
\begin{proof}
  For given $X\subseteq A$ and $\Rn\subseteq A\times A$, let
  ${\B} :=\textit{Id} \cup \{(\theta_X(\Rn(P)), \theta_X(\Rn(\theta_{\Rn^{-1}(X)}(P)))) \mid P \in \IP\}$.
  It suffices to show that the symmetric closure $\widetilde\B$ of $\B$ is a strong bisimulation.
  So let $R \mathrel{\widetilde\B} T$ and $R \goesto{\alpha} R'$ with $\alpha\in A \cup \{\tau,\rt\}$.
  I have to find a $T'$ with $T \goesto{\alpha} T'$ and $R' \mathrel{\widetilde\B} T'$.
\begin{itemize}
\item
The case that $R = T$ is trivial.
\item
Let $R= \theta_X(\Rn(P))$ and $T=\theta_X(\Rn(\theta_{\Rn^{-1}(X)}(P)))$, for some $P \in \IP$.

First assume $\alpha=\tau$. Then $P\goesto\tau P'$ for some $P'$ such that $R' = \theta_X(\Rn(P'))$.\\
Hence $T \mathbin= \theta_X(\Rn(\theta_{\Rn^{-1}(X)}(P))) \goesto\tau \theta_X(\Rn(\theta_{\Rn^{-1}(X)}(P'))) =: T'$,
and $R' \B T'$.

Now assume $\alpha \mathbin\in A \cup\{\rt\}$. Then $\Rn(P)\goesto{\alpha} R'$, and
either $\alpha\mathbin\in X$ or $\Rn(P) \ngoesto\beta$ for all $\beta\mathbin\in X \cup \{\tau\}$.
In the latter case, $P \ngoesto\beta$ for all $\beta\mathbin\in \Rn^{-1}(X) \cup \{\tau\}$.
Moreover, $P \goesto{\gamma} P'$, for some $\gamma$ with $\gamma=\rt=\alpha$ or $(\gamma,\alpha)\in\Rn$,
and some $P'$ with $R' = \Rn(P')$. In case $\alpha \in X$, one has $\gamma \in \Rn^{-1}(X)$.
Therefore, $\theta_{\Rn^{-1}(X)}(P)  \goesto{\gamma} P'$, and thus
$\Rn(\theta_{\Rn^{-1}(X)}(P)) \goesto\alpha \Rn(P')$.

Either $\alpha\mathbin{\in} X$ or
$\theta_{\Rn^{-1}(X)}(P) \ngoesto\beta$ for all $\beta\mathbin\in \Rn^{-1}(X) \cup \{\tau\}$,
in which case $\Rn(\theta_{\Rn^{-1}(X)}(P)) \ngoesto\beta$ for all $\beta\mathbin\in X \cup \{\tau\}$.
Consequently, $T \mathbin= \theta_X(\Rn(\theta_{\Rn^{-1}(X)}(P))) \goesto\alpha  \Rn(P') = R'$.

\item Let $R= \theta_X(\Rn(\theta_{\Rn^{-1}(X)}(P)))$ and $T = \theta_X(\Rn(P))$, for some $P \in \IP$.

First assume $\alpha=\tau$. Then $P\goesto\tau P'$ for some $P'$ such that \plat{$R' = \theta_X(\Rn(\theta_{\Rn^{-1}(X)}(P')))$}.\\
Hence $T \mathbin= \theta_X(\Rn(P)) \goesto\tau \theta_X(\Rn(P')) =: T'$, and \plat{$R' \mathrel{\widetilde\B} T'$.}

Now assume $\alpha \mathbin\in A \cup\{\rt\}$.
Then $\Rn(\theta_{\Rn^{-1}(X)}(P))\goesto{\alpha} R'$ and either $\alpha\in X$ or 
$\Rn(\theta_{\Rn^{-1}(X)}(P)) \ngoesto\beta$ for all $\beta\mathbin\in X \cup \{\tau\}$.
Therefore, \plat{$\theta_{\Rn^{-1}(X)}(P) \goesto{\gamma} P'$} for some $\gamma$ with $\gamma=\rt=\alpha$
or $(\gamma,\alpha)\in\Rn$, and some $P'$ such that $R' = \Rn(P')$.
Hence $P \goesto{\gamma} P'$, and thus $\Rn(P) \goesto{\alpha} \Rn(P')$.
In case $\alpha\notin X$, one has $\theta_{\Rn^{-1}(X)}(P) \ngoesto\beta$ for all $\beta\mathbin\in \Rn^{-1}(X) \cup \{\tau\}$,
and thus $P \ngoesto\beta$ for all $\beta\mathbin\in \Rn^{-1}(X) \cup \{\tau\}$, so
\plat{$\Rn(P)\ngoesto\beta$} for all $\beta\mathbin\in X \cup \{\tau\}$.
Hence $T = \theta_X(\Rn(P)) \goesto{\alpha} \Rn(P') = R'$.
\qed
\end{itemize}
\end{proof}

\section{Reducing Strong Reactive Bisimilarity to Strong Bisimilarity}\label{reduction}

Pohlmann \cite{Po21} introduces unary operators $\vartheta$ and $\vartheta_X$ for $X\subseteq A$
that model placing their argument process in an environment that is triggered to change, or allows
exactly the actions in $X$, respectively. Although inspired by my operators $\theta_X$ from
\Sec{timeout bisimulations},\footnote{Pohlmann~\cite{Po21} follows the original, 2020, version of
  this paper; this appendix was added in September 2021.} their semantics is different,
and given by the following structural operational rules (for all $X \subseteq A$).
\[\begin{array}{c@{\qquad}c}
\displaystyle\frac{x \goesto{\tau} y}{\vartheta(x) \goesto{\tau} \vartheta(y)} &
\displaystyle\frac{}{\vartheta(x) \goesto{\epsilon_X} \vartheta_X(x)}
 \\[1.5em]
\displaystyle\frac{x \goesto{a} y}{\vartheta_X(x) \goesto{a} \vartheta(y)}~(a \in X) &
\displaystyle\frac{x \goesto{\tau} y}{\vartheta_X(x) \goesto{\tau} \vartheta_X(y)}
 \\[1.5em]
\displaystyle\frac{x {\ngoesto\alpha}~\mbox{for all}~\alpha\mathbin\in X\cup\{\tau\}}
{\vartheta_X(x) \goesto{\rt_\epsilon} \vartheta(y)}
&
\displaystyle\frac{x \goesto{\rt} y \quad x {\ngoesto\alpha}~\mbox{for all}~\alpha\mathbin\in X\cup\{\tau\}}
{\vartheta_X(x) \goesto{\rt} \vartheta_X(y)}
\\[1.5em]
\end{array}\]
Here the actions $\rt_\epsilon\notin A$ and $\epsilon_X\notin A$ for $X \subseteq A$ are generated by the new
operators, but may not be used by processes substituted for their arguments $x$.
They model a time-out action taken by the environment, and the stabilisation of an environment into
one that allows exactly the set of actions $X$, respectively.

These rules mirror the clauses of \df{reactive bisimilarity} of a strong reactive bisimulation.
\setlist[itemize]{nosep}
\begin{itemize}
\item $\tau$-transitions can be performed regardless of the environment,
\item triggered environments can stabilise into arbitrary stable
     environments $X$ for $X \subseteq A$,
\item allowed visible transitions can be performed and can trigger a change
     in the environment,
\item $\tau$-transitions cannot be observed by the environment and hence cannot
     trigger a change,
\item if the underlying system is idle, the environment may time-out and
     become triggered to change,
\item if the underlying system is idle, it can perform a $\rt$-transition,
     not observed by the environment.
\end{itemize}
The main result from \cite{Po21} reduces strong reactive bisimilarity to strong bisimilarity:

\begin{theorem}{reduction}
Let $P,\!Q\mathbin\in\IP\!$, $X\mathbin\subseteq A$. Then $P \bis{r} Q$ iff $\vartheta(P) \bis{} \vartheta(Q)$,
and $P \rbis{X}{r} Q$ iff $\vartheta_X(P) \bis{} \vartheta_X(Q)$.
\end{theorem}
\begin{proof}
If $\R$ is a strong reactive bisimulation, then
\[\B := \{(\vartheta(P),\vartheta(Q))\mid (P,Q)\in\R\} \cup \{(\vartheta(P),\vartheta(Q))\mid (P,X,Q)\in\R\}\]
is a strong bisimulation. Moreover,
\[\R := \{(P,Q) \mid \vartheta(P) \bis{} \vartheta(Q)\} \cup \{(P,X,Q) \mid \vartheta_X(P) \bis{} \vartheta_X(Q)\}\]
is a strong reactive bisimulation. Both statements follows directly from the definitions, and they
imply the theorem. This proof stems from \cite{Po21}, where it is formalised in Isabelle. 
\end{proof}
Another notable result from \cite{Po21} is a function $\varsigma$ that turns any formula $\phi$ from
my extension of the Hennessy-Milner logic into a formula $\varsigma(\phi)$ in the regular Hennessy-Milner logic,
such that $P \models \phi$ iff $\vartheta(P) \models \varsigma(\phi)$
and $P \models_X \phi$ iff $\vartheta_X(P) \models \varsigma(\phi)$.

Interestingly, the operators $\vartheta$ and $\vartheta_X$ from \cite{Po21} can be expressed in
terms of (fairly) standard process algebra operators. Define the universal environment $\E$  as
the recursive specification
\[ \{ U = \sum_{X\subseteq A} \epsilon_X . X \} ~~\cup~~ \{X = \rt_\epsilon.U + \sum_{a \in X} a.U \mid X \subseteq A\}.\]
In case $A$ is infinite, this requires an infinite choice operator $\sum$, which was not included in
the syntax of CCSP$_\rt$ used in \Sec{ccsp}.
Here $V_\E= \{U\}\cup \{X \mid X \subseteq A\}$ are the bound variables of $\E$.
The process $\rec{U|\E}$ denotes an environment that is triggered to change, and
$\rec{X|\E}$ one that allows exactly the actions in $X$.
The only actions that $\rec{U|\E}$ can do are stabilising into any $\rec{X|\E}$.
The process $\rec{X|\E}$ can either synchronise on any action $a \in X$ or perform a time-out,
in both cases returning to the state $\rec{U|\E}$.

If we now drop the negative premises from the structural operational rules of the operators
$\vartheta_X$, and add a rule $\frac{x \goesto{\rt} y}{\vartheta(x) \goesto{\rt} \vartheta(y)}$,
then $\vartheta(P) \bis{}  \rec{U|\E} \spar{A} P$ and
$\vartheta_X(P) \bis{} \rec{X|\E} \spar{A} P$. Here the operator $\|_A$ enforces synchronisation on
all visible actions $a\in A$, although actions $\epsilon_X$ and $\rt_\epsilon$ can occur when the
environment is ready do do them, and actions $\tau$ and $\rt$ can be triggered by just the process $P$.
Checking strong bisimilarity between $\vartheta(P)$ and $\rec{U|\E} \spar{A} P$, and
between $\vartheta_X(P)$ and $\rec{X|\E} \spar{A} P$, is straightforward.

To obtain the real process $\vartheta(P)$ from $\rec{U|\E} \spar{A} P$, or $\vartheta_X(P)$ from
$\rec{X|\E} \spar{A} P$, all one has to do is to inhibit any $\rt$- or $\rt_\epsilon$-transition
when a transition with a label in $A \cup \{\tau\} \cup \{\epsilon_X\mid X \subseteq A\}$ is
possible.  This can be achieved with the priority operator of Baeten, Bergstra \& Klop \cite{BBK86}.
This unary operator $\Theta$ is parametrised by a partial order $<$ on the set of actions, the
\emph{priority} order, and passes through a transition of its argument process only if no transition
with a higher priority is possible. Its operational semantic is given by\vspace{-1ex}
\[\displaystyle\frac{x \goesto{\alpha} y \quad x {\ngoesto\beta}~\mbox{for all}~\beta > \alpha}
{\Theta(x) \goesto{\alpha} \Theta(y)}\;.\]
For the present application I take
${<} := \{(\rt,\alpha),(\rt_\epsilon,\alpha)\mid \alpha \in Act{\setminus}\{\rt,\rt_\epsilon\}\}$,
thus giving $\rt$ and $\rt_\epsilon$ a lower priority than all other actions.
This yields the desired properties
\[\vartheta(P) \bis{} \Theta(\rec{U|\E} \spar{A} P) \qquad\mbox{and}\qquad
\vartheta_X(P) \bis{} \Theta(\rec{X|\E} \spar{A} P) \;.\]
\end{document}